\theoremstyle{plain}
\newtheorem{theorem}{Theorem}[section]
\newtheorem{proposition}[theorem]{Proposition}
\newtheorem{corollary}[theorem]{Corollary}
\newtheorem{lemma}[theorem]{Lemma}
\theoremstyle{remark}
\newtheorem{remark}[theorem]{Remark}
\newtheorem{assumption}[theorem]{Assumption}
\newtheorem{example}[theorem]{Example}
\newcommand{\C}{\mathbb{C}}
\newcommand{\E}{\mathbb{E}}
\newcommand{\N}{\mathbb{N}}
\newcommand{\R}{\mathbb{R}}
\newcommand{\PP}{\mathbb{P}}
\newcommand{\XX}{\mathbf{X}}
\newcommand{\ZZ}{\mathbf{Z}}
\newcommand{\xx}{\mathbf{x}}
\newcommand{\zz}{\mathbf{z}}
\newcommand{\cA}{\mathcal{A}}
\newcommand{\cU}{\mathcal{U}}
\newcommand{\Ran}{\mathsf{Ran}}
\newcommand{\eqd}{\stackrel{\mathrm{d}}=}
\newcommand{\sgn}{\mathsf{sgn}}
\newcommand{\Var}{\mathrm{Var}}
\newcommand{\de}{\mathrm{\,d}}
\newcommand{\supp}{\mathop{\mathrm{supp}}}
\definecolor{light-gray}{gray}{0.95}
\definecolor{darkblue}{rgb}{0,0,.5}
\definecolor{foxred}{rgb}{0.7, 0.11, 0.11}
\newcommand{\Mail}[1]{{\color{foxred} #1}}
\begin{document}


\title{\bf A new coefficient of separation}

\author{%
	Sebastian Fuchs\footnote{University of Salzburg, Austria, Email: \Mail{sebastian.fuchs@plus.ac.at}}
	\qquad
	Carsten Limbach\footnote{University of Salzburg, Austria, Email: \Mail{carsten.limbach@plus.ac.at}} 
	\qquad
	Patrick B. Langthaler\footnote{University of Salzburg, Austria, Email: \Mail{patrick.langthaler@plus.ac.at}}
	\vspace{5mm}
}

\maketitle

\begin{abstract}
A coefficient is introduced that quantifies the extent of separation of a random variable $Y$ relative to a number of variables $\mathbf{X} = (X_1, \dots, X_p)$ by skillfully assessing the sensitivity of the relative effects of the conditional distributions. 
The coefficient is as simple as classical dependence coefficients such as Kendall's tau, also requires no distributional assumptions, 
and consistently estimates an intuitive and easily interpretable measure, which is $0$ if and only if $Y$ is stochastically comparable relative to $\mathbf{X}$, that is, the values of $Y$ show no location effect relative to $\XX$, and $1$ if and only if $Y$ is completely separated relative to $\mathbf{X}$.
As a true generalization of the classical relative effect, in applications such as medicine and the social sciences the coefficient facilitates comparing the distributions of any number of treatment groups or categories. It hence avoids the sometimes artificial 
grouping of variable values such as patient's age into just a few categories, which is known to cause inaccuracy and bias in the data analysis.
The mentioned benefits are exemplified using synthetic and real data sets.
\end{abstract}

\noindent%
{\it Keywords:}  
conditional distributions, 
complete separation, 
relative effect,
sensitivity, 
stochastic comparability


\section{Introduction} \label{Sec.Intro}
In statistics, the problem of measuring the effect size arises in various situations and has a long and fruitful history.
A popular tool for comparing the distributions of two treatment groups / populations, represented by the random variables $Z_1$ and $Z_2$ (usually assumed to be independent), 
consists in evaluating the functional 
\begin{equation} \label{Def.Rel.Eff}
  \Psi(\PP^{Z_1}, \PP^{Z_2}) 
  := \int_{\mathbb{R}} \PP(Z_1 < z) + \frac{1}{2} \; \PP(Z_1 = z) \; \de \PP^{Z_2}(z)\,,
\end{equation} 
introduced for the first time in \citet{wilcoxon1945} and \citet{mann1947}.
Over time different authors investigated $\Psi$ under various names, for example relative effect~\citep{birnbaum1957}, common language effect~\citep{mcgraw1992} and winning probability or statistical preference~\citep{deschuymer2003}. 
Here we follow the lead of~\citet{birnbaum1957} and \citet{Brunner2019} and use the term \emph{(nonparametric) relative effect}.
The relative effect has been used extensively for statistical designs involving simple group comparisons or factorial designs for independent groups~\citep{rankFD}, repeated measures designs~\citep{nparLD} and multivariate data~\citep{npmv}. 
Areas of application include medicine~\citep{schimke2022severe, kedor2022prospective, agathocleous2017ascorbate, tasdogan2020metabolic, schiroli2019precise}, anthropology~\citep{lucquin2018impact}, ecology~\citep{page2020comprehensive, augusto2022tree}, and the social sciences~\citep{seidel2014modeling, zhang2021spread, richetin2015should}.

The relative effect $\Psi$ in \eqref{Def.Rel.Eff} quantifies the stochastic tendency of one variable to take on greater values than the other, and can therefore be employed for describing the degree of separation or conformity of two treatment groups / populations: 
If \(\Psi(\PP^{Z_1}, \PP^{Z_2})\) equals $1/2$, then $Z_1$ and $Z_2$ are said to be \emph{stochastically comparable} \citep{Brunner2019};
if, instead, \(\Psi(\PP^{Z_1}, \PP^{Z_2}) \in \{0,1\}\), then $Z_1$ and $Z_2$ are said to be \emph{completely separated} \citep{Brunner2019,navarro2023}.

Looking at \eqref{Def.Rel.Eff} from a different perspective and replacing the pair $(Z_1,Z_2)$ with $(X,Y)$, where $Y$ is a merge of $Z_1$ and $Z_2$ and the (discrete, two-valued) random variable $X$ determines the belonging to one of two treatment groups $x_1$ and $x_2$, the comparison of $Z_1$ and $Z_2$ converts to a comparison of the two conditional distributions $Y|X=x_1$ and $Y|X=x_2$.
Therefore, the initial problem in \eqref{Def.Rel.Eff} translates to quantifying the degree of separation of $Y$ relative to $X \in \{x_1,x_2\}$ which allows an extension of \eqref{Def.Rel.Eff} for comparing \emph{any} number of treatment groups instead of just two.
Incorporating vector-valued $\XX = (X_1, \dots, X_p)$ even offers the possibility of comparing \emph{any} number of treatment groups arranged according to different criteria or covariates.

The main contribution of this paper is a novel measure of separation \(\Lambda = \Lambda(Y|\XX)\) for a random variable $Y$ relative to a set of other variables $X_1, \dots, X_p$.
The extent of separation of $Y$ relative to $\XX$ is determined by how sensitive the relative effect $\Psi(\PP^{Y|\XX=\xx_1}, \PP^{Y|\XX=\xx_2})$ of the conditional distributions $\PP^{Y|\XX=\xx_1}$ and $\PP^{Y|\XX=\xx_2}$ is on $\xx_1$ and $\xx_2$, and then averaging over all possible values of $\xx_1$ and $\xx_2$. 
The main features of our measure \(\Lambda\) are the following:
\begin{enumerate}[({A}1)]
\item \label{REM:MainProp:1} 
\(0 \leq \Lambda(Y|\XX) \leq 1\).
\item \label{REM:MainProp:2}
\(\Lambda(Y|\XX)=0\) if and only if $Y$ is \emph{stochastically comparable relative to} $\XX$,
i.e. \linebreak 
$ \Psi \big(\PP^{Y|\XX=\xx_1}, \PP^{Y|\XX=\xx_2}\big) = \frac{1}{2}$
for all almost all $(\xx_1,\xx_2)$ with $\xx_1 \neq \xx_2$.
\item \label{REM:MainProp:3}
\(\Lambda(Y|\XX)=1\) if and only if \(Y\) is \emph{completely separated relative to} \(\XX\), i.e. \linebreak
$\Psi \big(\PP^{Y|\XX=\xx_1}, \PP^{Y|\XX=\xx_2}\big) \in \{0,1\}$ for almost all $(\xx_1,\xx_2)$ with $\xx_1 \neq \xx_2$.
\end{enumerate}
The measure $\Lambda$ is a direct and natural generalization of the relative effect in \eqref{Def.Rel.Eff} and the first of its kind to quantify the extent of separation of $Y$ relative to vector-valued $\XX$. 
As a demonstration of how $\Lambda$ works, the extent of separation measured by $\Lambda$ is illustrated in Remark \ref{Rem:Illust.Lambda} with the help of $X \in \{x_1,x_2\}$ and two normal distributions $Y|X=x_1$ and $Y|X=x_2$ with varying difference between mean values (Figure \ref{Ex.BehrensFisher.Pic1}), and by means of a normally distributed vector $(\XX,Y)$ (Figure \ref{Ex.Norm.Pic}).

\begin{figure}
    \centering
    \includegraphics[width=\linewidth]{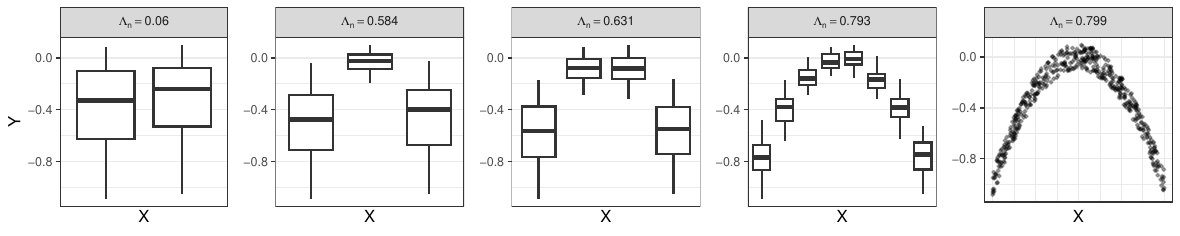}
    \caption{The rightmost panel shows a scatterplot of sample size $512$ depicting the pair $(X,Y)$ with $X \sim \mathcal{U}(-1, 1)$ and $Y = -X^2 + \varepsilon$, $\varepsilon \sim \mathcal{U}(-0.1, 0.1)$. Boxplots for a discretized $X$-variable with 2, 3, 4, and 8 categories respectively are shown in the first four panels. For each level of discretization the estimated values of $\Lambda$ are given at the top of the panel.}
    \label{fig:intro}
\end{figure}

The benefit of the measure $\Lambda$ over the relative effect in \eqref{Def.Rel.Eff} when quantifying the degree of separation of $Y$ relative to $\XX$ is demonstrated in Figure \ref{fig:intro} with synthetic data. Here, as is often the case in applications such as medicine, the variable $X$ with arbitrary value range (continuous or discrete), such as patient's age, BMI, blood pressure or weight, is grouped into two (left panel) or a few categories (mid panels) as part of a pre-processing step. The estimated degrees of separation range from $0.060$ in the leftmost panel with just two groups to $0.799$ in the rightmost panel for the original data without grouping. 
It is worth noting that the relative effect in \eqref{Def.Rel.Eff} can only be determined for two groups, whereas $\Lambda$ is applicable to all five variants of grouping while producing only a single score in each case.
Figure \ref{fig:intro} demonstrates that 
arbitrary grouping can lead to a considerable inaccuracy or distortion when estimating the degree of separation and is not needed for our 
$\Lambda$. To put it another way, when applied to the original data, $\Lambda$ avoids the sometimes artificial grouping of the variable values into just a few categories and thus reduces researchers' degree of freedom and avoids problems inherent with grouping of continuous variables.
We refer to Section \ref{Sec.RealDataEx} for a real data example in which the effects of discretizing patient's age are examined.

This paper presents a true generalization $\Lambda$ of the relative effect in \eqref{Def.Rel.Eff} that is capable of quantifying the extent of separation of a random variable $Y$ relative to a set of random variables $X_1, \dots, X_p$ and that requires no distributional assumptions.
We present closed-form expressions for $\Lambda$ in various settings (Section \ref{SubSec:Coeff:Normla}), prove invariance properties for $\Lambda$ (Section \ref{SubSec:Coeff:Invariance}) and show that, under additional continuity assumptions on $Y$ or $\XX$, the concept complete separation is closely related to or even coincides with that of perfect dependence, that is, $Y$ is almost surely a function of $\XX$ (Section \ref{Sec:PS.PD}). In the latter case, $\Lambda$ qualifies as a measure quantifying the extent of functional dependence (Section \ref{SubSec:PS.PD:MoD}), making it suitable for a variable selection method.
Since $\Lambda$ evaluates conditional distributions, it is not surprising that $\Lambda$ generally fails to be continuous with respect to weak convergence. 
In Section \ref{Sec:Cont} it is then shown that $\Lambda$ is instead continuous with respect to conditional weak convergence introduced in \citep{Sweeting-1989} yielding robustness of $\Lambda$ against small pertubations of $Y$.
In Section \ref{Sec:Estimation} a strongly consistent estimator $\Lambda_n$ of $\Lambda$ is presented that relies on the graph-based estimation principle introduced in \citep{chatterjee2021AoS}.
The coefficient $\Lambda_n$ exhibits a simple expression, is fully non-parametric, has no tuning parameters and can be computed in $O(n \log n)$ time. 
For comparative purposes, a consistent estimator $\widehat{\Lambda}_n$ for $\Lambda$ is discussed that is based on the classical relative effect estimation \citep{Brunner2019}, but only proves to perform well for $\XX$ being discrete. 
Synthetic data is used to evaluate the overall performance of \(\Lambda_n\) in various scenarios and when compared to the classical relative effect estimator $\widehat{\Lambda}_n$ in discrete settings.
Further simulation studies address and demonstrate robustness of $\Lambda$ against small perturbations of $Y$, highlight certain phenomena for vector-valued $\XX$, perform a variable selection based on $\Lambda$, identify contrasts with Chatterjee's correlation coefficient \citep{chatterjee2021AoS} and illustrate their potential in the examination of heteroscedasticity (Section \ref{Sec.Sim}).

All proofs and additional results are available in the online Supplementary Material.

Throughout the paper, let \(Y\) be a non-degenerate random variable and \(\XX=(X_1,\ldots,X_p)\) be a \(p\)-dimensional random vector, \(p \in \N = \{1,2,\dots\}\) being arbitrary, with at least one non-degenerate coordinate, both defined on a common probability space \((\Omega,\cA,P)\).
We refer to \(Y\) as the response variable and \(\XX\) as the vector of predictor variables. Variables not in bold type refer to one-dimensional real-valued quantities.

\section{The coefficient} \label{Sec:Coeff}

We propose the following quantity as a measure of the degree of separation of $Y$ relative to the random vector \(\XX\):
\begin{equation} \label{Def:REM}
  \Lambda(Y|\XX)  
  := \alpha^{-1} \, \int_{\R^p \times \R^p} \big( \Psi \big(\PP^{Y|\XX=\xx_1}, \PP^{Y|\XX=\xx_2}\big)             - \Psi \big(\PP^{Y|\XX=\xx_2}, \PP^{Y|\XX=\xx_1}\big) \big)^2 \de (\PP^\XX \otimes \PP^\XX) (\xx_1,\xx_2) 
\end{equation}
where
$\alpha := \alpha(\XX) := 1 - \PP(\XX = \XX^\ast)$
with \(\XX^\ast\) denoting an independent copy of \(\XX\).
Since it is assumed that at least one coordinate of \(\XX\) is non-degenerate, the normalizing constant fulfills $0 < \alpha(\XX) \leq 1$ and the quantity \(\Lambda(Y|\XX) \) is well-defined.
We note in passing that \(\alpha(\XX) = 1\) if at least one coordinate of \(\XX\) has a continuous cdf.
The value \(1-\alpha(\XX)\) can be interpreted as the degree of discrete mass concentration of \(\XX\).

As a first main result, the following theorem encapsulates the key characteristics of \(\Lambda\) and provides alternative representations.

\begin{theorem}[Measure of separation]~~ \label{Thm:REM:Main}
\(\Lambda\) defined in \eqref{Def:REM} equals
\begin{eqnarray}
  \Lambda(Y|\XX)
  & = & 4 \, \alpha^{-1} \, \left( \int_{\R^p \times \R^p} \left( \Psi \big(\PP^{Y|\XX=\xx_1}, \PP^{Y|\XX=\xx_2}\big) - \frac{1}{2} \right)^2 
      \de (\PP^\XX \otimes \PP^\XX) (\xx_1,\xx_2) \right)
  \label{Thm:REM:Main:Eq1}
  \\
  & = & 4 \, \alpha^{-1} \left( \int_{\R^p \times \R^p} \Psi \big(\PP^{Y|\XX=\xx_1}, \PP^{Y|\XX=\xx_2}\big)^2 \de (\PP^\XX \otimes \PP^\XX) (\xx_1,\xx_2) - \frac{1}{4} \right)
  \label{Thm:REM:Main:Eq2}
\end{eqnarray}
and fulfills the properties (A\ref{REM:MainProp:1}), (A\ref{REM:MainProp:2}), and (A\ref{REM:MainProp:3}), i.e.
\begin{enumerate}[({A}1)]
\item \(0 \leq \Lambda(Y|\XX) \leq 1\).
\item \(\Lambda(Y|\XX)=0\) if and only if $Y$ is stochastically comparable relative to $\XX$.
\item \(\Lambda(Y|\XX)=1\) if and only if $Y$ is completely separated relative to $\XX$.
\end{enumerate}
\end{theorem}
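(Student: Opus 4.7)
The cornerstone of the proof will be the elementary antisymmetry identity
\begin{equation*}
\Psi(\PP^{Y|\XX=\xx_1}, \PP^{Y|\XX=\xx_2}) + \Psi(\PP^{Y|\XX=\xx_2}, \PP^{Y|\XX=\xx_1}) = 1,
\end{equation*}
valid for every pair $(\xx_1, \xx_2)$. This identity follows directly from \eqref{Def.Rel.Eff} by decomposing $1 = \PP(Z_1 < Z_2) + \PP(Z_1 = Z_2) + \PP(Z_1 > Z_2)$ for independent copies distributed according to $\PP^{Y|\XX=\xx_1}$ and $\PP^{Y|\XX=\xx_2}$; the atomic correction term $\tfrac{1}{2}\PP(Z_1=z)$ in the definition of $\Psi$ is precisely what makes this hold without any continuity assumption. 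Setting $\Psi_1 := \Psi(\PP^{Y|\XX=\xx_1}, \PP^{Y|\XX=\xx_2})$ and $\Psi_2 := \Psi(\PP^{Y|\XX=\xx_2}, \PP^{Y|\XX=\xx_1})$, the identity yields $\Psi_1 - \Psi_2 = 2(\Psi_1 - \tfrac{1}{2})$ pointwise, so $(\Psi_1 - \Psi_2)^2 = 4(\Psi_1 - \tfrac{1}{2})^2$, and \eqref{Thm:REM:Main:Eq1} is immediate from \eqref{Def:REM}.

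For \eqref{Thm:REM:Main:Eq2}, I will expand $4(\Psi_1 - \tfrac{1}{2})^2 = 4\Psi_1^2 - 4\Psi_1 + 1$ and integrate termwise against $\PP^\XX \otimes \PP^\XX$. The only non-trivial step is the evaluation of $\int \Psi_1 \, \de(\PP^\XX \otimes \PP^\XX)$; by relabeling integration variables (the product measure is symmetric) this integral equals $\int \Psi_2 \, \de(\PP^\XX \otimes \PP^\XX)$, and since $\Psi_1 + \Psi_2 \equiv 1$ each of the two must equal $\tfrac{1}{2}$. Substituting gives \eqref{Thm:REM:Main:Eq2}.

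For the properties, non-negativity in (A\ref{REM:MainProp:1}) is evident from the square in \eqref{Def:REM}. The upper bound rests on two observations: the integrand $(\Psi_1 - \Psi_2)^2$ vanishes on the diagonal $\{\xx_1 = \xx_2\}$ (there $\Psi_1 = \Psi_2 = \tfrac{1}{2}$), while on the off-diagonal it is bounded by $1$ because $\Psi_1, \Psi_2 \in [0,1]$. Together with $(\PP^\XX \otimes \PP^\XX)(\{\xx_1 \neq \xx_2\}) = \PP(\XX \neq \XX^\ast) = \alpha$, this yields $\Lambda(Y|\XX) \leq \alpha^{-1}\cdot \alpha = 1$. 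For (A\ref{REM:MainProp:2}), $\Lambda(Y|\XX) = 0$ forces the non-negative integrand to vanish $(\PP^\XX \otimes \PP^\XX)$-almost everywhere, which by the antisymmetry identity is equivalent to $\Psi_1 = \tfrac{1}{2}$ a.e.\ — exactly stochastic comparability relative to $\XX$. For (A\ref{REM:MainProp:3}), $\Lambda(Y|\XX) = 1$ forces $(\Psi_1 - \Psi_2)^2 = 1$ a.e.\ on the off-diagonal (otherwise the upper bound above would be strict), and using $\Psi_1 + \Psi_2 = 1$ once more this is equivalent to $\Psi_1 \in \{0,1\}$ a.e.\ off-diagonal, i.e.\ complete separation.

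The argument is essentially a bookkeeping exercise once the antisymmetry identity is in place, so the only genuine subtleties — and, to my mind, the main potential pitfalls — are verifying the identity in the general case where $\PP^{Y|\XX=\xx}$ may carry atoms, and correctly accounting for how the diagonal mass $1-\alpha$ interacts with the normalizing constant in \eqref{Def:REM}. Beyond that, no machinery is needed apart from non-negativity of the integrand and symmetry of the product measure.
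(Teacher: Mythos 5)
Your proposal is correct and follows essentially the same route as the paper: the antisymmetry $\Psi(\PP^{Y|\XX=\xx_1},\PP^{Y|\XX=\xx_2})+\Psi(\PP^{Y|\XX=\xx_2},\PP^{Y|\XX=\xx_1})=1$ (established via Fubini in the Supplementary Material), the observation that the integrand vanishes on the diagonal and is bounded by $\mathds{1}_{\{\xx_1\neq\xx_2\}}$ off it, and the almost-everywhere equality argument for the extreme cases, which the paper packages into Lemma \ref{Lemma:Char.PerfSep}. The only cosmetic difference is that the paper's lemma opens with a redundant Jensen-inequality step for the trivial lower bound, which you rightly omit.
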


We shall also say that the random vector \((\XX,Y)\) is \emph{stochastically comparable} if $Y$ is stochastically comparable relative to $\XX$, and is \emph{completely separated} if \(Y\) is completely separated relative to \(\XX\).

\begin{remark}[Interpreting the coefficient]\label{Rem:REM:Main}
\begin{enumerate}
\item In light of \eqref{Thm:REM:Main:Eq1}, \(\Lambda\) relates the relative effects of the conditional distributions \(\Psi (\PP^{Y|\XX=\xx_1}, \PP^{Y|\XX=\xx_2})\) to that of the unconditional distribution \(\Psi(\PP^{Y},\PP^{Y}) = 1/2\). 

\item The definition of \(\Lambda\) in \eqref{Def:REM} suggests interpreting \(\Lambda(Y|\XX)\) as a measure quantifying the sensitivity of the relative effects \(\Psi (\PP^{Y|\XX=\xx_1}, \PP^{Y|\XX=\xx_2})\). $\Lambda$ hence is inspired by but considerably differs from a similar construction principle involving the sensitivity of the conditional distributions proposed in \citep{ansari2023DepM}.

\item \eqref{Thm:REM:Main:Eq2} motivates considering $\Lambda$ as a measure quantifying the variability of the relative effects \(\Psi (\PP^{Y|\XX=\xx_1}, \PP^{Y|\XX=\xx_2})\) as it fulfills
\begin{equation*}
  \Lambda(Y|\XX)
  = \frac{4 \, \Var \left( \E \left( \mathds{1}_{\{Y < Y^\ast\}} + \frac{1}{2} \mathds{1}_{\{Y = Y^\ast\}} \, | \, \XX,\XX^\ast \right) \right)}{1 - \PP(\XX = \XX^\ast)}\,,
\end{equation*}
where $(\XX^\ast,Y^\ast)$ denotes an independent copy of $(\XX,Y)$.
$\Lambda$ hence is related but, again, considerably differs from similar functionals introduced in \citep{emura2021,limbach2024,shih2024}.
\end{enumerate}
\end{remark}

\medskip
\begin{remark}[Extreme cases]\label{Rem:REM:Main2} 
\begin{enumerate}
\item (\emph{Stochastic comparability and independence}). \label{Rem:REM:Main.3}
If $Y$ and $\XX$ are independent, then the conditional distributions are stochastically comparable implying \(\Psi (\PP^{Y|\XX=\xx_1}, \PP^{Y|\XX=\xx_2}) = 1/2\) for almost all $(\xx_1, \xx_2) \in \R^p \times \R^p$ and hence $\Lambda(Y|\XX)=0$.
The converse direction does not hold, in general. For instance, the Fr{\'e}chet copula discussed in Example \ref{Ex:REM:Kendall2} in the Supplementary Material with parameter values $\alpha = \beta > 0$ fulfills $\Lambda(Y|X)=0$ but $X$ and $Y$ fail to be independent.

\item (\emph{Complete separation and perfect dependence}). \label{Rem:REM:Main.4}
If $Y$ is completely separated relative to $\XX$, then knowledge about the value of $\XX$ provides some information 
about the value of $Y$.
In other words, if the treatment groups are completely separated, then knowledge about the belonging to a certain treatment group provides information about the value of the response in this group.
Complete separation is thus closely related to the concept of perfect dependence: $Y$ is said to be perfectly dependent on $\XX$ if there exists some measurable function $f$ such that $Y = f(\XX)$ almost surely.
The relation between complete separation and perfect dependence is being investigated in Subsection \ref{Sec:PS.PD}.
It turns out that the two dependence concepts are generally not connected (Example \ref{Ex.REM.PS.vs.PD} and Figure \ref{Ex.REM.PS.vs.PD.Pic} in the Supplementary Material), but depending on the degree of continuity of \(Y\) and \(\XX\), stronger and stronger connections occur, with the extreme case of equivalence in the presence of continuous data.

\item (\emph{Attainability of the maximum value}). \label{Rem:REM:Main.5}
It should be noted that not every Fr{\'e}chet class is equipped with a completely separated random vector \((\XX,Y)\), with the result that the maximum value for $\Lambda(Y|\XX)$ in such Fr{\'e}chet classes is strictly smaller than $1$; see Example \ref{Ex.REFF.Not1} in the Supplementary Material for an illustration.
\end{enumerate}
\end{remark}

\medskip
\begin{remark}[Discrete predictor variables]~~ \label{Rem:REM:Main.2}
For a discrete predictor vector with finite range, covering the situation of a finite number of treatment groups, $\Lambda$ in \eqref{Def:REM} considerably simplifies:
Suppose \(\XX\) is discrete with finite range \(\{\xx_1 \dots, \xx_m\}\), \(m \geq 2\), such that \(\PP(\XX=\xx_i) = q_i\), \(i \in \{1,\dots,m\}\).
Then $\alpha = 1-\sum_{i=1}^m q_i^2$ and
\begin{equation}\label{formula_discrete_x}
  \Lambda(Y|\XX)
  = \frac{2 \, \sum_{i=1}^{m} \sum_{j = i+1}^{m} q_i \, q_j \, \left( 2 \, \Psi \big(\PP^{Y|\XX=\xx_i}, \PP^{Y|\XX=\xx_j}\big) - 1 \right)^2}
           {1-\sum_{i=1}^m q_i^2} \,.
\end{equation}
In particular, if the predictor vector takes on only two distinct values, then \(\Lambda(Y|\XX)\) mimics the relative effect of two distributions, and can therefore be understood as a true generalization of the relative effect defined in \eqref{Def.Rel.Eff}:
Suppose \(m=2\), then $\alpha = 1-q_1^2-(1-q_1)^2 = 2 \, q_1 (1-q_1)$ and hence
\begin{align}\label{special_REM}
  \Lambda(Y|\XX)
  & = \left( 2\, \Psi \big(\PP^{Y|\XX=\xx_1}, \PP^{Y|\XX=\xx_2}\big) - 1 \right)^2 
  \\
  & = \left( \Psi \big(\PP^{Y|\XX=\xx_1}, \PP^{Y|\XX=\xx_2}\big) - \Psi \big(\PP^{Y|\XX=\xx_2}, \PP^{Y|\XX=\xx_1}\big) \right)^2\,, \notag
\end{align}
where the second identity is due to \eqref{def_REFF} in the Supplementary Material, noting that both expressions are independent of the choice of \(q_1\) and therefore of the concrete splitting of the \(\xx\)-values.
It should be noted that while for $m = 2$ the value of $\Lambda(Y|\XX)$ does not depend on $q_1$ and $q_2$, i.e.~the distribution of $\XX$, this is not the case in general. This situation is discussed in more detail in Remark \ref{Rem:REM:Main.2App} in the Supplementary Material.
\end{remark}

\subsection{Closed-form expressions} \label{SubSec:Coeff:Normla}

The performance of \(\Lambda\) is now demonstrated by examining the case of two treatment groups, each of which is normally distributed (Example \ref{Ex:REM:BehrensFisher}), and the case of a continuous random vector \((\XX,Y)\) following a multivariate normal distribution (Proposition \ref{Prop:REM:Kendall}).

The first example deals with the so-called \emph{parametric Behrens-Fisher} situation, a well-known example for illustrating the degree of separation (or the so-called `stochastic tendency') between two treatment groups (whose distributions are usually assumed to be independent and Gaussian distributed).

\begin{example}[Behrens-Fisher (BF) situation; Normal distribution]\label{Ex:REM:BehrensFisher}~~
Consider the random variables \(X\) and \(Y\) with $\PP(X = 1) = 1 - \PP(X = 2) = q > 0$ and the two treatment groups represented by the conditional distributions
$\PP^{Y \mid X=1} = N(\mu_1,\sigma^2_1)$ and $\PP^{Y \mid X=2} = N(\mu_2,\sigma^2_2)$.
Then straightforward calculation together with \eqref{special_REM} yields
\begin{equation} \label{Ex:REM:BehrensFisher:F}
  \Lambda(Y|X)
    = \left( 2 \, \Phi\left(\frac{\mu_1-\mu_2}{\sqrt{\sigma_1^2 + \sigma_2^2}}\right) - 1 \right)^2
    = \left( 2 \, \Phi\left(\frac{\mu_2-\mu_1}{\sqrt{\sigma_1^2 + \sigma_2^2}}\right) - 1 \right)^2\,,
\end{equation}
where \(\Phi\) denotes the cdf of the standard normal distribution.
Thus, the value \(\Lambda(Y|X)\) only depends on the absolute difference between the two parameters $\mu_1$ and $\mu_2$ and the variability of the two distributions.
More precisely, \(\Lambda(Y|X) = 0\) if and only if $\mu_1 = \mu_2$, and \(\Lambda(Y|X) < 1\).
Figure \ref{Ex.BehrensFisher.Pic1} illustrates the BF situation by means of two Gaussian distributions with varying degrees of separation.
Figure \ref{Ex.BehrensFisher.Pic} in the Supplementary Material depicts values for \(\Lambda(Y|X)\) and varying \(\sigma = \sigma_1 = \sigma_2 \in \{1,2,3,4\}\).
\end{example}

Example \ref{Ex:REM:BehrensFisher2} in the Supplementary Material provides further illustrating examples of the BF situation with different assumptions on the distributions of $Y|X=1$ and $Y|X=2$.

As for the BF situation and the comparison of two normally distributed treatment groups in Example \ref{Ex:REM:BehrensFisher}, $\Lambda(Y|\XX)$ also has a closed-form expression for \((\XX,Y)\) following a multivariate normal distribution, which is proven as a consequence of Remark \ref{Rem:REM:Kendall}.

\begin{proposition}[Closed-form expression for the multivariate normal distribution]~~\label{Prop:REM:Kendall}\\
Assume \((\XX,Y) \sim \mathcal{N}({\bf 0},\Sigma)\) has positive definite covariance matrix 
\(\Sigma = \left(\begin{smallmatrix} 
  \Sigma_{11} & \Sigma_{12} \\
  \Sigma_{21} & \sigma_Y^2  \end{smallmatrix}\right)\) with $\sigma_Y>0$.
Then 
\begin{equation}\label{Prop:REM:Kendall:Eq}
  \Lambda(Y|\XX)
  = \frac{2}{\pi} \, \arcsin(\rho^2)\,,
\end{equation}
with parameter \(\rho = \sqrt{\Sigma_{21} \Sigma_{11}^{-1} \Sigma_{12} / \sigma_Y^2}\). 
Thus, $\Lambda(Y|\XX)=0$ if and only if $\rho=0$ if and only if $Y$ and $\XX$ are independent, and $\Lambda(Y|\XX) < 1$.
In particular, $Y$ and $\XX$ are independent if and only if $Y$ is stochastically comparable relative to $\XX$.
\end{proposition}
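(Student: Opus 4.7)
My plan is to reduce the computation to a one-dimensional Gaussian expectation and then close it via Sheppard's arcsine formula for bivariate-normal orthant probabilities. First I would exploit the structure of the multivariate normal: conditional on $\XX = \xx$, the response satisfies $Y \sim \mathcal{N}(\mu(\xx),\tau^2)$ with regression function $\mu(\xx) = \Sigma_{21}\Sigma_{11}^{-1}\xx$ and residual variance $\tau^2 = \sigma_Y^2 - \Sigma_{21}\Sigma_{11}^{-1}\Sigma_{12}$, which crucially does \emph{not} depend on $\xx$. Since both conditional laws are continuous, the relative effect collapses to
\begin{equation*}
\Psi\bigl(\PP^{Y|\XX=\xx_1},\PP^{Y|\XX=\xx_2}\bigr) \;=\; \Phi\!\left(\tfrac{\mu(\xx_2)-\mu(\xx_1)}{\sqrt{2}\,\tau}\right).
\end{equation*}
Moreover $\XX$ has a continuous cdf, so $\alpha(\XX)=1$.

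Next I would invoke representation \eqref{Thm:REM:Main:Eq1} and replace the $(\xx_1,\xx_2)$-integral by an expectation over independent copies $\XX_1,\XX_2$ of $\XX$. This gives
\begin{equation*}
\Lambda(Y|\XX) \;=\; \E\!\left[\bigl(2\Phi(W)-1\bigr)^2\right], \qquad W \;:=\; \tfrac{\mu(\XX_2)-\mu(\XX_1)}{\sqrt{2}\,\tau},
\end{equation*}
where $W$ is centered Gaussian with variance $\sigma_W^2 = \Sigma_{21}\Sigma_{11}^{-1}\Sigma_{12}/\tau^2 = \rho^2/(1-\rho^2)$ (a short calculation using $\Var(\mu(\XX)) = \Sigma_{21}\Sigma_{11}^{-1}\Sigma_{12}$).

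The remaining task is to evaluate $\E[(2\Phi(W)-1)^2]$. I would introduce independent $N(0,1)$ variables $Z_1,Z_2$ independent of $W$ and write
\begin{equation*}
\E[\Phi(W)^2] \;=\; \PP(Z_1 < W,\, Z_2 < W) \;=\; \PP(Z_1 - W < 0,\, Z_2 - W < 0),
\end{equation*}
observing that $(Z_1-W, Z_2-W)$ is centered bivariate normal with common variance $1+\sigma_W^2$ and correlation $\sigma_W^2/(1+\sigma_W^2) = \rho^2$. Sheppard's formula then yields $\E[\Phi(W)^2] = \tfrac14 + \tfrac{1}{2\pi}\arcsin(\rho^2)$, and similarly $\E[\Phi(W)] = \tfrac12$ since $Z_1-W$ is symmetric. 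Combining these identities gives $\Lambda(Y|\XX) = \tfrac{2}{\pi}\arcsin(\rho^2)$.

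The remaining claims are immediate: $\rho^2 \in [0,1)$ by positive definiteness of $\Sigma$ (so $\Lambda(Y|\XX)<1$), and $\rho=0$ is equivalent to $\Sigma_{12}=0$, which in the Gaussian setting is equivalent to independence of $Y$ and $\XX$; combining this with characterization (A2) from Theorem \ref{Thm:REM:Main} gives the last assertion. The only slightly delicate step is the bookkeeping that transforms $\sigma_W^2 = \rho^2/(1-\rho^2)$ into the clean correlation $\rho^2$ appearing inside the arcsine — this is where the algebra has to be carried out carefully, but there is no genuine obstacle.
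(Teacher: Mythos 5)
Your proposal is correct, and it takes a genuinely different route from the paper. The paper's proof goes through the concordance representation of Theorem \ref{Thm:REM:Kendall}: it reduces the predictor to the scalar $S=A\XX$ with $A=\Sigma_{21}\Sigma_{11}^{-1}/(\Sigma_{21}\Sigma_{11}^{-1}\Sigma_{12})$, notes that $(S,Y)$ is bivariate normal with correlation $\rho$ and that $\PP^{Y|\XX=\xx}=\PP^{Y|S=A\xx}$, invokes the cited fact that the Markov product $(Y,Y')$ is then bivariate normal with correlation $\rho^2$, and concludes via the classical Kendall's tau formula $\tau=\tfrac{2}{\pi}\arcsin(\cdot)$ for Gaussian copulas. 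You instead compute directly from representation \eqref{Thm:REM:Main:Eq1}: the Gaussian conditional laws $Y|\XX=\xx\sim N(\Sigma_{21}\Sigma_{11}^{-1}\xx,\tau^2)$ with constant residual variance give the relative effect in closed form (the equal-variance BF formula), the problem collapses to $\E[(2\Phi(W)-1)^2]$ for a centered Gaussian $W$ with $\sigma_W^2=\rho^2/(1-\rho^2)$, and Sheppard's orthant formula applied to $(Z_1-W,Z_2-W)$, whose correlation is exactly $\sigma_W^2/(1+\sigma_W^2)=\rho^2$, yields $\tfrac{2}{\pi}\arcsin(\rho^2)$; your variance bookkeeping and the concluding claims ($\rho^2<1$ from the Schur complement, $\rho=0$ iff $\Sigma_{12}=0$ iff independence, combined with (A2)) are all sound. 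Your argument is more elementary and self-contained — it needs neither the Markov-product machinery nor external copula results, only the bivariate-normal orthant probability (which is in fact the same classical identity underlying the Gaussian Kendall's tau). What the paper's route buys is structural insight: it exhibits the dimension reduction $\Lambda(Y|\XX)=\Lambda(Y|S)$ and identifies the Markov product $(Y,Y')$ as bivariate normal with correlation $\rho^2$, tying the proposition to the concordance framework that the paper also uses for estimation.
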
{}

Example \ref{Ex:REM:Kendall2} in the Supplementary Material lists further closed-form expressions of \(\Lambda(Y|\XX)\) for various distributional assumptions on \((\XX,Y)\).

\begin{remark}[Illustrating the extent of separation]\label{Rem:Illust.Lambda}
For a better understanding, the extent of separation measured by $\Lambda$ is now explained in more detail by means of two normal distributions with varying difference between mean values (BF situation in Example \ref{Ex:REM:BehrensFisher} and Figure \ref{Ex.BehrensFisher.Pic1}), and by means of a normally distributed random vector $(X,Y)$ (Proposition \ref{Prop:REM:Kendall} and Figure \ref{Ex.Norm.Pic}):
In the left panel of Figure \ref{Ex.BehrensFisher.Pic1} the two distributions / treatment groups strongly overlap with no tendency of one group taking on greater values than the other, i.e.~they are stochastically comparable, whereas in the right panel the two distributions / treatment groups only overlap to a small extent, i.e.~they are strongly separated.
Instead, Figure \ref{Ex.Norm.Pic} covers the situation of comparing an infinite number of conditional distributions $Y|X=x$.
The higher the correlation $\rho$ between the variables $X$ and $Y$, the stronger the degree of separation of $Y$ relative to $X$, ranging from stochastic comparability with no location effect relative to $X$ in the left panel to strong separation in the panel on the right.
\begin{figure}[h]
  \centering
  \includegraphics[scale=0.18]{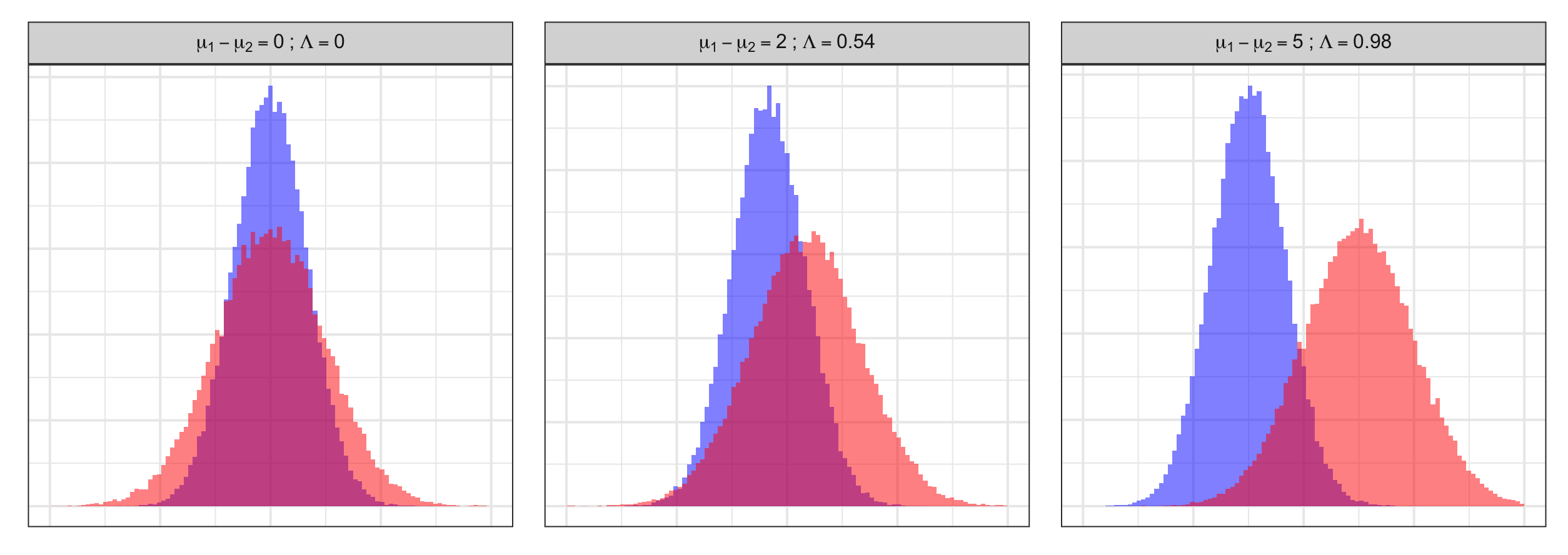}
  \caption{Histograms depicting two normal distributions $Y|X=1 \sim N(\mu_1,2.25)$ (red) and $Y|X=2 \sim N(\mu_2,1)$ (blue) having varying distance between mean values \(\mu_1 - \mu_2 \in \{0, 2, 5\}\). The values $\Lambda(Y|X)$ result from \eqref{Ex:REM:BehrensFisher:F} in Example \ref{Ex:REM:BehrensFisher}.
  }
  \label{Ex.BehrensFisher.Pic1}  
\end{figure}
\begin{figure}[h]
  \centering
  \includegraphics[scale=0.18]{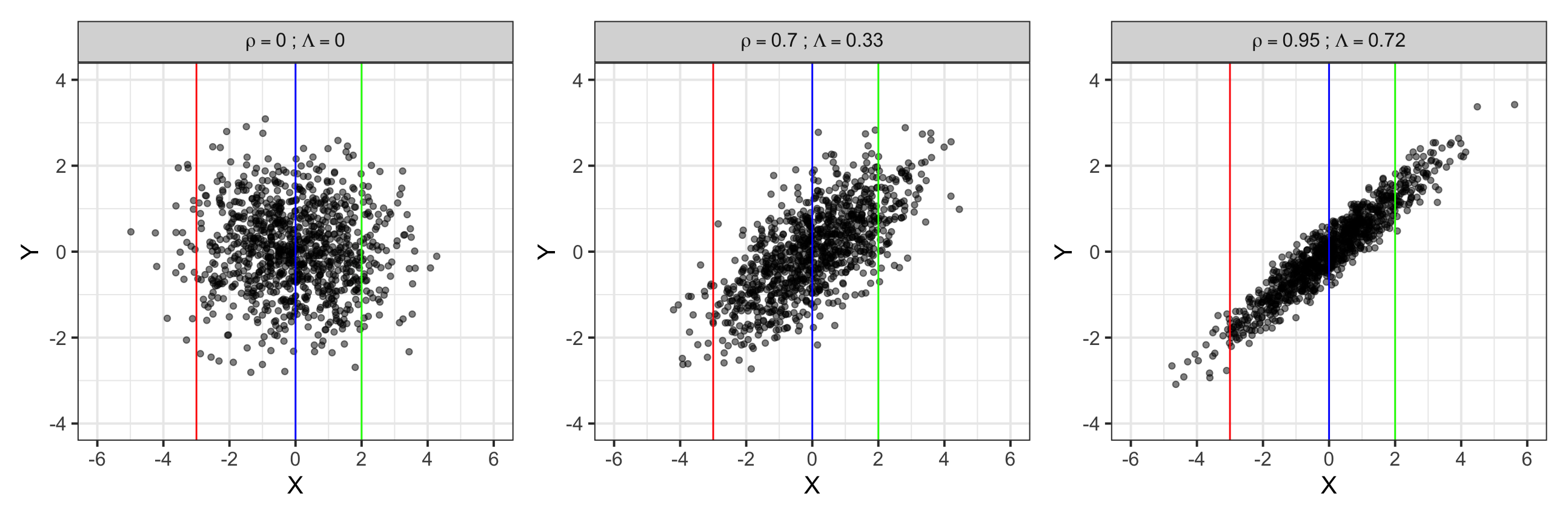}
  \caption{Scatterplots depicting normally distributed random vectors $(X,Y)$ with varying correlation parameter \(\rho \in \{0,0.7,0.95\}\). As representative examples,  the colored lines show the conditional distributions $Y|X=-3$ (red), $Y|X=0$ (blue) and $Y|X=2$ (green). The values $\Lambda(Y|X)$ result from \eqref{Prop:REM:Kendall:Eq} in Proposition \ref{Prop:REM:Kendall}.}
  \label{Ex.Norm.Pic}
\end{figure}
\end{remark}

\subsection{Invariance properties} \label{SubSec:Coeff:Invariance}

The investigation of invariance properties for $\Lambda$ is motivated by the fact that, in combination with the consistency of the proposed estimator in Section \ref{Sec:Estimation}, such transformation of the initial data has no effect on the (asymptotic) dependence value (cf. Remark \ref{Estimator.Remark.General} \eqref{Estimator.Remark.General:1}).

We first show that \(\Lambda(Y|\XX)\) remains unchanged when replacing the random variables by their individual distributional transforms.
Denote by \(F_Z\) the cdf of a random variable \(Z\).\pagebreak

\begin{proposition}[Distributional invariance]\label{Prop.REM:DI}
The map \(\Lambda\) fulfills
$ \Lambda(Y|\XX) = \Lambda \big(F_Y(Y)| {\bf F}_\XX(\XX) \big) $
where ${\bf F}_\XX = (F_{X_1}, \dots, F_{X_p})$.
\end{proposition}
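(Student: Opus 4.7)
The plan is to work from the variance characterisation in Remark~\ref{Rem:REM:Main}(3), namely
\[
  \Lambda(Y|\XX) \;=\; \frac{4\,\Var\bigl(\E(W \mid \XX,\XX^\ast)\bigr)}{1 - \PP(\XX = \XX^\ast)}, \qquad W \;:=\; \mathds{1}_{\{Y<Y^\ast\}} + \tfrac{1}{2}\,\mathds{1}_{\{Y=Y^\ast\}},
\]
and the analogous expression for $\Lambda(\tilde Y \mid \tilde\XX)$ with $(\tilde Y,\tilde\XX) := (F_Y(Y), \mathbf{F}_\XX(\XX))$ and the corresponding indicator $\tilde W$. The proof then reduces to three claims: (i) $W = \tilde W$ almost surely, (ii) the $\sigma$-algebras $\sigma(\XX,\XX^\ast)$ and $\sigma(\tilde\XX,\tilde\XX^\ast)$ coincide up to $\PP$-null sets, and (iii) $\PP(\tilde\XX = \tilde\XX^\ast) = \PP(\XX = \XX^\ast)$. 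Together, (i) and (ii) will force $\E(W \mid \XX,\XX^\ast) = \E(\tilde W \mid \tilde\XX,\tilde\XX^\ast)$ a.s., matching the numerators, while (iii) matches the denominators $\alpha(\XX)$ and $\alpha(\tilde\XX)$.

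All three ingredients rest on the same one-dimensional observation: if $Z, Z^\ast$ are i.i.d.\ real random variables with cdf $F$, then
\[
  \PP\bigl(F(Z) = F(Z^\ast),\; Z \neq Z^\ast\bigr) \;=\; 0,
\]
because for each fixed $z$ the set $\{z^\ast \neq z : F(z^\ast) = F(z)\}$ lies in an interval on which $F$ is constant, and such intervals carry no $\PP^Z$-mass; the full statement then follows by Fubini. Applied to $Y,Y^\ast$ this delivers (i) (together with the symmetric fact that $\{Y < Y^\ast\}$ and $\{F_Y(Y) < F_Y(Y^\ast)\}$ differ only on a null set); applied coordinatewise to $X_i,X_i^\ast$ and combined through a union bound it delivers (iii). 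For (ii), the same flat-interval argument shows $X_i = F_{X_i}^{-1}(F_{X_i}(X_i))$ almost surely, hence $\XX = \mathbf{F}_\XX^{-1}(\tilde\XX)$ a.s., so that $\XX$ is a measurable function of $\tilde\XX$ modulo a null set and the two completed $\sigma$-algebras agree.

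Combining the three ingredients through the variance representation yields $\Lambda(\tilde Y \mid \tilde\XX) = \Lambda(Y \mid \XX)$. The hard part will be the careful measure-theoretic bookkeeping in (ii): one must argue that the potential loss of information when passing from $\XX$ to its vector of marginal probability-integral transforms is confined to a $\PP$-null set, which the flat-interval argument handles componentwise using that the support of each $\PP^{X_i}$ cannot meet the interior of a flat region of $F_{X_i}$ with positive probability.
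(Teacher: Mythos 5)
Your proposal is correct, but it takes a genuinely different route from the paper. The paper splits the claim into three pieces: the normalizing constant is handled by a sandwich argument based on the a.s.\ identity $F_Z^{-1}\circ F_Z\circ Z = Z$; invariance in the predictor is deduced from the data processing inequality (Corollary \ref{CorApp.REM:DPI}, itself proved via disintegration and Jensen in Proposition \ref{Prop.REM:IGI_CI}); and invariance in the response is obtained from the concordance representation of Theorem \ref{Thm:REM:Kendall} by sandwiching the four orthant probabilities using monotonicity of $F_Y$ and $F_Y^{-1}$. You instead work with the variance representation of Remark \ref{Rem:REM:Main}(3) (equivalently \eqref{Thm:REM:Main:Eq2}) and dispose of numerator and denominator simultaneously through one null-set lemma, $\PP\bigl(F(Z)=F(Z^\ast),\,Z\neq Z^\ast\bigr)=0$, plus the observation that $\sigma(\XX,\XX^\ast)$ and $\sigma(\mathbf{F}_\XX(\XX),\mathbf{F}_\XX(\XX^\ast))$ agree modulo null sets, so that the conditional expectations of $W=\mathds{1}_{\{Y<Y^\ast\}}+\tfrac12\mathds{1}_{\{Y=Y^\ast\}}$ coincide a.s. This buys a more unified and self-contained argument (no appeal to the DPI machinery or to the copula-flavoured concordance representation), at the price of leaning on the variance formula, which the paper states in a remark without detailed proof (it is, however, immediate from \eqref{Thm:REM:Main:Eq2} once one identifies $\Psi(\PP^{Y|\XX=\xx_1},\PP^{Y|\XX=\xx_2})=\E(W\mid \XX=\xx_1,\XX^\ast=\xx_2)$ and notes the mean is $1/2$). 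Two small points of care: your one-line justification of the lemma is slightly imprecise, since a flat interval of $F$ \emph{can} carry mass when its left endpoint is an atom (e.g.\ $\PP(Z=0)=\tfrac12$, $Z$ uniform on $[1,2]$ otherwise, and $z\in(0,1)$); the conclusion survives because such exceptional $z$ form a $\PP^{Z^\ast}$-null set, or more directly because the lemma follows at once from the a.s.\ identity $F^{-1}\circ F(Z)=Z$ that you already invoke for the $\sigma$-algebra step — the same identity the paper builds its entire proof on.
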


Another important property of \(\Lambda(Y|\XX)\) is its invariance under strictly increasing transformations of \(Y\) and under bijective transformations of \(\XX\).

\begin{proposition}[Invariance under strictly increasing / bijective transformations]\label{Prop.REM:InvB}~~\\
The map \(\Lambda\) fulfills 
$ \Lambda(Y|\XX) = \Lambda(g(Y)|\bold{h}(\XX)) $
for every strictly increasing transformation \(g: \mathbb{R} \to \mathbb{R}\) and every bijective transformation \(\bold{h}: \R^p \to \R^p\). 
\end{proposition}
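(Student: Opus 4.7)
The plan is to prove the two invariances separately and then combine them, using the explicit form of $\Lambda$ in \eqref{Def:REM}. For the transformation of $Y$, the normalizer $\alpha$ depends only on the law of $\XX$ and is therefore untouched, so it suffices to show that the squared integrand in \eqref{Def:REM} is invariant under $Y \mapsto g(Y)$. For the transformation of $\XX$, I would show that both $\alpha$ and the full integral transform consistently under $\XX \mapsto \mathbf{h}(\XX)$.

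For invariance under a strictly increasing $g:\R \to \R$, I would first observe that a regular conditional distribution of $g(Y)$ given $\XX$ is the pushforward $g_\ast \PP^{Y|\XX=\xx}$ for $\PP^\XX$-almost every $\xx$. The key lemma is then a pointwise identity: for any two probability measures $\mu_1,\mu_2$ on $\R$ one has $\Psi(g_\ast \mu_1, g_\ast \mu_2) = \Psi(\mu_1, \mu_2)$. This follows directly from \eqref{Def.Rel.Eff}: since $g$ is strictly increasing, $(g_\ast\mu_1)((-\infty, g(z))) = \mu_1((-\infty, z))$ and $(g_\ast\mu_1)(\{g(z)\}) = \mu_1(\{z\})$, and the substitution $z \mapsto g(z)$ in the outer integration against $g_\ast \mu_2$ gives the equality. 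Applying this pointwise to $(\mu_1,\mu_2) = (\PP^{Y|\XX=\xx_1}, \PP^{Y|\XX=\xx_2})$ leaves the integrand in \eqref{Def:REM} unchanged, yielding $\Lambda(g(Y)|\XX) = \Lambda(Y|\XX)$.

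For invariance under a bi-measurable bijection $\mathbf{h}:\R^p \to \R^p$, I would use that $\PP^{Y \mid \mathbf{h}(\XX) = \yy}$ agrees with $\PP^{Y \mid \XX = \mathbf{h}^{-1}(\yy)}$ for $\PP^{\mathbf{h}(\XX)}$-almost every $\yy$, which follows from the defining property of regular conditional distributions together with the measurability of $\mathbf{h}^{-1}$. The change-of-variables formula applied to the product pushforward $\PP^{\mathbf{h}(\XX)} \otimes \PP^{\mathbf{h}(\XX)} = (\mathbf{h}\otimes \mathbf{h})_\ast(\PP^\XX \otimes \PP^\XX)$ then rewrites the integral in \eqref{Def:REM} for $\mathbf{h}(\XX)$ as the corresponding integral for $\XX$ via the substitution $\yy_i = \mathbf{h}(\xx_i)$. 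Injectivity of $\mathbf{h}$ additionally gives $\PP(\mathbf{h}(\XX) = \mathbf{h}(\XX^\ast)) = \PP(\XX = \XX^\ast)$, whence $\alpha(\mathbf{h}(\XX)) = \alpha(\XX)$. Combining the two reductions yields $\Lambda(g(Y)|\mathbf{h}(\XX)) = \Lambda(Y|\XX)$.

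The only real obstacle is a measure-theoretic subtlety, not a conceptual one: making sure that the pointwise identities between $g_\ast \PP^{Y|\XX=\xx}$ and $\PP^{g(Y)|\XX=\xx}$, and between $\PP^{Y|\mathbf{h}(\XX)=\yy}$ and $\PP^{Y|\XX=\mathbf{h}^{-1}(\yy)}$, hold outside suitable null sets in $\R^p$, and invoking that the product integrations in \eqref{Def:REM} ignore such null sets. All remaining calculations are routine applications of pushforwards and substitution.
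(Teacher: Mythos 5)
Your proposal is correct, and it proves the statement by a somewhat different route than the paper. The paper's own proof is a two-line appeal to existing machinery: invariance in the predictor is dispatched by noting that the $\sigma$-algebras generated by $\XX$ and $\bold{h}(\XX)$ coincide, and invariance in the response follows from the concordance representation \eqref{Thm:REM:Kendall.Eq1} of Theorem \ref{Thm:REM:Kendall}, where the $\sgn$-terms are obviously unchanged by a strictly increasing $g$. You instead work directly at the level of the definition \eqref{Def:REM}: your pointwise lemma $\Psi(g_\ast\mu_1,g_\ast\mu_2)=\Psi(\mu_1,\mu_2)$ (equivalently, the invariance of the double-integral form \eqref{def_REFF} under applying $g$ to both coordinates), together with the identification $\PP^{g(Y)|\XX=\xx}=g_\ast\PP^{Y|\XX=\xx}$, replaces the use of Theorem \ref{Thm:REM:Kendall}; and your explicit treatment of $\bold{h}$ — identification $\PP^{Y|\bold{h}(\XX)=\yy}=\PP^{Y|\XX=\bold{h}^{-1}(\yy)}$ a.e., change of variables under $(\bold{h}\otimes\bold{h})_\ast(\PP^\XX\otimes\PP^\XX)$, and $\alpha(\bold{h}(\XX))=\alpha(\XX)$ by injectivity — unpacks what the paper compresses into the $\sigma$-algebra remark (the invariance of $\alpha$ in particular deserves the explicit injectivity argument you give). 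What your route buys is self-containedness: it does not rely on the concordance representation, which the paper proves separately, and it makes the null-set bookkeeping visible. What the paper's route buys is brevity and reuse of a result needed anyway for estimation. One small remark: you assume $\bold{h}$ is bi-measurable, whereas the statement only says bijective; this is the same implicit assumption the paper makes (it is needed for $\sigma(\bold{h}(\XX))=\sigma(\XX)$), and it is in fact automatic for a Borel-measurable bijection of $\R^p$ by the Lusin--Souslin theorem, so nothing is lost.
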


\begin{remark}\label{Rem:REM:Inv}
\begin{enumerate}
\item (Invariance in the predictor vector).
Invariance of $\Lambda(Y|\XX)$ under bijective transformations of $\XX$ reflects the idea that the value $\Lambda(Y|\XX)$ only depends on the information that is contained in the $\sigma$-algebra generated by $\XX$.

\item (Invariance in the response). 
In contrast, bijective transformations of $Y$ allow to directly influence the extent of separation of $Y$ relative to $\XX$, making such invariance not a desirable property:  
For an illustration, consider the discrete random vector $(X,Y)$ with 
$\PP(X=1,Y=1) = \PP(X=1,Y=2) = \PP(X=2,Y=3) = 1/3$ and a bijection $h$ mapping $1$ to $1$, $2$ to $3$ and $3$ to $2$.
Then, $(X,Y)$ is completely separated (i.e. $\Lambda(Y|X) = 1$), while $(X,h(Y))$ is stochastically comparable (i.e. $\Lambda(h(Y)|X) = 0$).
To round off the discussion, it is worth mentioning that the stated invariance of $\Lambda(Y|\XX)$ in the response is a by-product resulting from the construction of $\Lambda$.
\end{enumerate}
\end{remark}

\subsection{Complete separation and perfect dependence} \label{Sec:PS.PD}

The relationship between complete separation and perfect dependence is being investigated as announced in Remark \ref{Rem:REM:Main2}\;\eqref{Rem:REM:Main.4}:
The two dependence concepts are generally not connected (Example \ref{Ex.REM.PS.vs.PD} in the Supplementary Material), but depending on the degree of continuity of \(Y\) and \(\XX\), stronger and stronger connections occur (Theorems \ref{Thm1.REM.PS.vs.PD} and \ref{Thm2.REM.PS.vs.PD}), with the extreme case of equivalence in the presence of continuous data (Corollary \ref{Cor.REM.PS.vs.PD}).

We first recall two simple observations: there exist Fr{\'e}chet classes with no completely separated representative (Example \ref{Ex.REFF.Not1} in the Supplementary Material) and, in general, neither perfect dependence implies complete separation nor vice versa (Example \ref{Ex.REM.PS.vs.PD} in the Supplementary Material). 
Interestingly, unlike the general case, 
if one of the coordinates of \(\XX\) has a continuous cdf, then complete separation implies perfect dependence.

\begin{theorem}[Continuous predictor variables]~~\label{Thm1.REM.PS.vs.PD}
Consider $(\XX,Y)$ and suppose that one of the coordinates of \(\XX\) has a continuous cdf.
Then, the following assertions hold:
\begin{enumerate}[(i)]
\item \label{Thm1.REM.PS.vs.PD:I1}
There exists a random vector \((\XX^\circ,Y^\circ)\) with \(\XX^\circ \eqd \XX\) and \(Y^\circ \eqd Y\) such that \(Y^\circ\) perfectly depends on \(\XX^\circ\).

\item \label{Thm1.REM.PS.vs.PD:I2}
A random vector \((\XX^\circ,Y^\circ)\) with \(\XX^\circ \eqd \XX\) and \(Y^\circ \eqd Y\) such that \(Y^\circ\) is completely separated relative to \(\XX^\circ\) does not necessarily exist.

\item \label{Thm1.REM.PS.vs.PD:I3}\label{Thm1.REM.PS.vs.PD:I4}
If \(Y\) is completely separated relative to \(\XX\),
then \(Y\) perfectly depends on \(\XX\) and its cdf is continuous.

\item \label{Thm1.REM.PS.vs.PD:I5}
Perfect dependence of \(Y\) on \(\XX\) does not imply complete separation of \(Y\) relative to \(\XX\).
\end{enumerate}
\end{theorem}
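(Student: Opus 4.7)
The plan is to prove (i) by direct construction, settle (iv) by explicit counterexample, derive (ii) as a corollary of (iii), and devote the main effort to (iii). For (i), let $X_j$ be a coordinate of $\XX$ with continuous cdf $F_{X_j}$; setting $\XX^\circ := \XX$ and $Y^\circ := F_Y^{-1}(F_{X_j}(X_j))$, continuity of $F_{X_j}$ yields $F_{X_j}(X_j) \sim \mathcal{U}(0,1)$, so the quantile transform gives $Y^\circ \eqd Y$ while $Y^\circ$ is manifestly a measurable function of $\XX^\circ$, i.e., $Y^\circ$ perfectly depends on $\XX^\circ$.

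The heart of the proof is (iii), where the key resource is that $\PP^\XX$ is atomless: every singleton $\{\xx\}$ is contained in the $\PP$-null event $\{X_j = x_j\}$. The strategy is to attach to each $\xx$ the endpoints $a(\xx) := \inf \supp(\PP^{Y|\XX=\xx})$ and $b(\xx) := \sup \supp(\PP^{Y|\XX=\xx})$ of the conditional support, and then to observe that the hypothesis $\Psi(\PP^{Y|\XX=\xx_1}, \PP^{Y|\XX=\xx_2}) \in \{0,1\}$ for $\PP^\XX \otimes \PP^\XX$-a.e.\ pair with $\xx_1\neq\xx_2$ is equivalent to the two closed intervals $[a(\xx_1),b(\xx_1)]$ and $[a(\xx_2),b(\xx_2)]$ overlapping in at most one point. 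The goal is to upgrade this almost-pairwise-disjointness to $a(\xx)=b(\xx)$ for $\PP^\XX$-a.s.\ $\xx$, which then yields $Y = h(\XX)$ almost surely with $h := a$ and hence perfect dependence.

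The main obstacle is this collapse-to-points step. I would argue by contradiction: if $S_n := \{\xx : b(\xx)-a(\xx) > 1/n\}$ has positive $\PP^\XX$-measure for some $n$, then the disjointness condition on $S_n \times S_n$ forces $|a(\xx_1)-a(\xx_2)| > 1/n$ for $\PP^\XX \otimes \PP^\XX$-a.e.\ pair of distinct points in $S_n$, because the interval with smaller $a$ has length exceeding $1/n$ and must end before the other begins. Partitioning $\R$ into the countable family $J_k := [k/(2n),(k+1)/(2n))$, $k \in \mathbb{Z}$, and letting $T_k := \{\xx \in S_n : a(\xx) \in J_k\}$, every pair in $T_k \times T_k$ satisfies $|a(\xx_1)-a(\xx_2)| < 1/(2n) < 1/n$ by construction, so $(T_k \times T_k) \setminus \Delta$ must be $\PP^\XX \otimes \PP^\XX$-null, where $\Delta := \{(\xx,\xx) : \xx \in \R^p\}$ denotes the diagonal. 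Atomlessness of $\PP^\XX$ kills $\Delta$ inside $T_k \times T_k$, giving $\PP^\XX(T_k)^2 = 0$ for every $k$ and hence $\PP^\XX(S_n) = 0$, a contradiction. Continuity of $F_Y$ then follows quickly: an atom $\PP(Y=y_0)>0$ would make $A := h^{-1}(\{y_0\})$ have positive $\PP^\XX$-measure, on whose off-diagonal (still positive measure by atomlessness) both conditionals equal $\delta_{y_0}$ and hence $\Psi = 1/2$, contradicting complete separation.

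Parts (ii) and (iv) then fall out quickly from (iii). For (ii), choose any ambient $(\XX,Y)$ in which $\XX$ has a continuous coordinate and $Y$'s cdf has an atom (e.g., $X \sim \mathcal{U}(0,1)$ and $Y \in \{0,1\}$ each with probability $1/2$); any candidate $(\XX^\circ,Y^\circ)$ with $\XX^\circ \eqd \XX$ and $Y^\circ \eqd Y$ inherits an atom in $Y^\circ$, so (iii) applied to $(\XX^\circ,Y^\circ)$ rules out complete separation. For (iv), take $X \sim \mathcal{U}(0,1)$ and $Y := \mathds{1}_{\{X>1/2\}}$: then $Y$ perfectly depends on $X$ while having atoms, so complete separation is forbidden by (iii).
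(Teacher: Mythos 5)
Your proposal is correct, but it takes a genuinely different route from the paper's, most visibly in part (iii). The paper handles (iii) with two separate arguments: continuity of \(F_Y\) comes first, from a Jensen-type bound showing \(\Lambda(Y|\XX)\le \PP(Y<Y^\ast)+\PP(Y>Y^\ast)\le 1\) whose equality case forces \(\PP(Y=Y^\ast)=0\); perfect dependence is then obtained by showing the open intervals \((l(\xx),u(\xx))\) spanned by the conditional supports are pairwise disjoint off a null set of pairs, and deriving a contradiction from a positive-measure set of non-degenerate conditionals via an enumeration of the rationals (some rational \(q_{k_0}\) lies in a positive-measure family of these intervals, which is incompatible with disjointness since the diagonal is \(\PP^\XX\otimes\PP^\XX\)-null by continuity of the coordinate \(X_i\)). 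You replace the rational-enumeration step by a quantitative stratification — by support length through \(S_n\) and by location through the grid \(J_k\) — and you then get continuity of \(F_Y\) essentially for free, since an atom of \(Y\) would produce a positive-measure set of \(\xx\) with identical Dirac conditionals and hence \(\Psi=1/2\) off the null diagonal. Both routes rest on the same resources (disjointness of conditional supports plus atomlessness of \(\PP^\XX\)); yours is somewhat more self-contained in that (ii) and (iv) then follow from (iii) by simply exhibiting a response with an atom, whereas the paper settles (ii), (iv) and its assertion \eqref{Thm1.REM.PS.vs.PD:I2} through the explicit computation \(\Lambda(Y|X)=1/2\) in Example \ref{Ex.REM.PS.vs.PD}\,(ii), which additionally identifies the maximal attainable value in that Fr\'echet class. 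Two small points to tidy: the claimed \emph{equivalence} between \(\Psi\in\{0,1\}\) and the closed supports meeting in at most one point holds only in the direction you actually use (the converse fails when both conditionals place an atom at the shared endpoint), and you should record that \(\xx\mapsto a(\xx),b(\xx)\) are measurable (e.g. \(a(\xx)=\sup\{q\in\mathbb{Q}\,:\,\PP(Y<q\mid\XX=\xx)=0\}\)), so that \(S_n\), \(T_k\) and \(h=a\) are well-defined measurable objects; neither point affects the validity of the argument.
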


If instead the response has a continuous cdf, then there always exists a representative in the Fr{\'e}chet class that is completely separated, and perfect dependence implies complete separation.

\begin{theorem}[Continuous response variable]~~\label{Thm2.REM.PS.vs.PD}
Consider $(\XX,Y)$ and suppose that \(Y\) has a continuous cdf.
Then, the following assertions hold:
\begin{enumerate}[(i)]
\item \label{Thm2.REM.PS.vs.PD:I1}
There exists a random vector \((\XX^\circ,Y^\circ)\) with \(\XX^\circ \eqd \XX\) and \(Y^\circ \eqd Y\) such that \(Y^\circ\) is completely separated relative to \(\XX^\circ\).

\item \label{Thm2.REM.PS.vs.PD:I2}
A random vector \((\XX^\circ,Y^\circ)\) with \(\XX^\circ \eqd \XX\) and \(Y^\circ \eqd Y\) such that \(Y^\circ\) perfectly depends on \(\XX^\circ\) does not necessarily exist.

\item \label{Thm2.REM.PS.vs.PD:I3}\label{Thm2.REM.PS.vs.PD:I4}
If \(Y\) perfectly depends on \(\XX\),
then \(Y\) is completely separated relative to \(\XX\) and one of the coordinates of $\XX$ has a continuous cdf.

\item \label{Thm2.REM.PS.vs.PD:I5}
Complete separation of \(Y\) relative to \(\XX\) does not imply perfect dependence of \(Y\) on \(\XX\).
\end{enumerate}
\end{theorem}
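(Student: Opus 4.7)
The plan is to handle the four assertions in turn, mirroring the companion Theorem \ref{Thm1.REM.PS.vs.PD} but interchanging the roles of $\XX$ and $Y$. Parts (i) and (ii) concern the existence or non-existence of couplings with prescribed marginals, so I would handle them by explicit construction and explicit counterexample respectively, while (iii) and (iv) are intrinsic statements about the given $(\XX,Y)$.

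For (i) I would construct the required coupling by combining a Borel isomorphism on $\XX$-space with the quantile transform on the response. Pick a measurable injection $h\colon\R^p\to\R$ (available since $\R^p$ and $\R$ are Borel-isomorphic), write $F$ for the cdf of $h(\XX)$, and let $W\sim\mathcal{U}(0,1)$ be independent of $\XX$. Set $\XX^\circ:=\XX$ and
\[
V := F(h(\XX)-) + W\bigl(F(h(\XX))-F(h(\XX)-)\bigr),\qquad Y^\circ := F_Y^{-1}(V).
\]
A standard randomization calculation gives $V\sim\mathcal{U}(0,1)$, so $Y^\circ\eqd Y$ by continuity of $F_Y$, while $\XX^\circ\eqd\XX$ is immediate. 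Conditional on $\XX^\circ=\xx$, the law of $V$ is supported in $[F(h(\xx)-),F(h(\xx))]$; injectivity of $h$ makes these intervals disjoint up to at most a shared endpoint as $\xx$ varies, so applying the non-decreasing $F_Y^{-1}$ and using continuity of $F_Y$ to discard the $\PP^Y$-null shared endpoints places $\PP^{Y^\circ|\XX^\circ=\xx_1}$ and $\PP^{Y^\circ|\XX^\circ=\xx_2}$ on disjoint regions of $\R$, forcing $\Psi\in\{0,1\}$.

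For (iii), perfect dependence gives $\PP^{Y|\XX=\xx}=\delta_{f(\xx)}$, so $\Psi\in\{0,1\}$ is equivalent to $f(\xx_1)\neq f(\xx_2)$. A Fubini step using continuity of $F_Y$ yields
\[
(\PP^\XX\otimes\PP^\XX)\bigl(\bigl\{(\xx_1,\xx_2)\colon f(\xx_1)=f(\xx_2)\bigr\}\bigr)=\int\PP(Y=f(\xx))\,\de\PP^\XX(\xx)=0,
\]
giving complete separation. The continuous-coordinate clause I would approach by contraposition: if every marginal $X_i$ were supported on a countable set, then $\XX$ itself would be countable-valued and hence $Y=f(\XX)$ would be discrete, contradicting continuity of $F_Y$. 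For (ii), any pair with $\XX$ finite-valued and $Y$ continuous suffices, since $f(\XX^\circ)$ must then take only finitely many values and cannot match a continuous $Y$ in distribution. For (iv), the example $X\in\{1,2\}$ with $\PP(X=1)=\PP(X=2)=1/2$ and $Y\mid X=1\sim\mathcal{U}(0,1)$, $Y\mid X=2\sim\mathcal{U}(2,3)$ produces a continuous $Y$, is visibly completely separated via the disjoint conditional supports, and has $\Var(Y\mid X)>0$, ruling out any functional relation.

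The main technical obstacle I foresee sits in the continuous-coordinate clause of (iii): the contrapositive step from ``no $X_i$ has a continuous cdf'' to ``$\XX$ lives on a countable set'' is not immediate, because a marginal can fail to be continuous without being purely atomic, and coordinate atoms need not combine into joint atoms of $\XX$. Closing this gap rigorously will likely require a Lebesgue decomposition of each $X_i$ into its discrete and continuous parts together with careful bookkeeping of which combinations contribute atoms to $Y=f(\XX)$, or an appeal to the fact that any joint atom of $\XX$ forces an atom of $Y$ of at least equal mass under any measurable $f$.
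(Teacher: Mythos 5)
Your overall route tracks the paper's quite closely: for (ii) and (iv) the paper likewise exhibits a two-valued $X$ with disjoint continuous conditional laws (its Example \ref{Ex.REM.PS.vs.PD}\,(i), with $U[0,1]$ and $U[1,2]$), and your counterexamples are the same in substance; for the separation part of (iii) your Fubini step with $\PP(Y=f(\xx))=0$ is the same use of continuity of $F_Y$ that the paper makes when it computes $\Lambda(Y|\XX)=1$ by change of coordinates and invokes Theorem \ref{Thm:REM:Main}. For (i) the core idea is also the paper's (randomized distributional transform composed with $F_Y^{-1}$), but you apply it to a Borel-injective scalarization $h(\XX)$ of the whole predictor vector, whereas the paper applies it to the first coordinate $X_1$ only. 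Your variant actually buys something: with the paper's construction the conditional law of $Y^\circ$ given $\XX^\circ=\xx$ depends on $\xx$ only through $x_1$, so pairs $\xx_1\neq\xx_2$ sharing the first coordinate receive identical conditional laws; your injection removes that issue. One step of yours needs tightening, though: ``disjoint up to a shared endpoint, then discard the $\PP^Y$-null endpoint'' does not cover the case where $h(\xx_1)$ and $h(\xx_2)$ are both non-atoms of $h(\XX)$ lying on the same flat stretch of $F$, in which case both conditional laws are the \emph{same} Dirac and $\Psi=1/2$. You need (a) when at least one endpoint is an atom of $h(\XX)$, the corresponding conditional law is the image of a uniform under the strictly increasing $F_Y^{-1}$ (strict increase is where continuity of $F_Y$ enters), hence atomless, so ties have probability zero; and (b) a short Fubini argument showing that, for $T=h(\XX)$ and an independent copy $T^\ast$, the event $\{T\neq T^\ast,\ F(T)=F(T^\ast),\ \text{both non-atoms}\}$ is null. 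With these two observations your construction of (i) is complete.

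The genuine gap is the one you flag yourself, the continuous-coordinate clause of (iii), and it is worse than a bookkeeping problem: it cannot be closed along the lines you sketch. Both of your proposed repairs (Lebesgue decomposition of the marginals, or ``a joint atom of $\XX$ forces an atom of $Y$'') only yield that $\PP(\XX=\xx)=0$ for every $\xx$, i.e.\ that the joint law of $\XX$ is atomless (equivalently $\alpha(\XX)=1$) --- and this is also all that the paper's own argument establishes before asserting the coordinate claim. For $p\geq 2$ atomlessness of the joint law does \emph{not} imply that some coordinate has a continuous cdf: put mass $1/2$ uniformly on $\{0\}\times[0,1]$ and mass $1/2$ uniformly on $[0,1]\times\{0\}$; then $\XX=(X_1,X_2)$ has no atoms while both marginal cdfs jump at $0$, and $Y:=X_2-X_1\sim U[-1,1]$ is a continuous response that perfectly depends on $\XX$. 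So no amount of bookkeeping will recover the clause as stated for general $p$; the statement that is actually provable (and is what is used downstream, e.g.\ to get $\alpha(\XX)=1$ in the computation $\Lambda(Y|\XX)=1$ and in Corollary \ref{Cor.REM.PS.vs.PD}) is ``$\PP(\XX=\xx)=0$ for all $\xx$'', which coincides with ``some $X_i$ has a continuous cdf'' only when $p=1$. I recommend you prove exactly that weaker assertion and note the restriction, rather than attempting to bridge from coordinatewise non-continuity to countable support --- that bridge does not exist.
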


A combination of Theorems \ref{Thm1.REM.PS.vs.PD} and \ref{Thm2.REM.PS.vs.PD} yields the following far reaching result.

\begin{corollary}[Continuous response and predictor variables]~~\label{Cor.REM.PS.vs.PD}
Consider $(\XX,Y)$ and suppose that \(Y\) and one of the coordinates of $\XX$ have a continuous cdf.
Then, 
\begin{enumerate}[(i)]
\item 
There exists a random vector \((\XX^\circ,Y^\circ)\) with \(\XX^\circ \eqd \XX\) and \(Y^\circ \eqd Y\) such that \(Y^\circ\) is completely separated relative to \(\XX^\circ\).

\item 
\(Y\) is completely separated relative to \(\XX\) if and only if \(Y\) perfectly depends on \(\XX\).
\end{enumerate}
\end{corollary}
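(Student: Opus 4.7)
The plan is to deduce the corollary as an immediate consequence of Theorems \ref{Thm1.REM.PS.vs.PD} and \ref{Thm2.REM.PS.vs.PD}. The hypothesis of the corollary is precisely the conjunction of the hypotheses of these two theorems: Theorem \ref{Thm1.REM.PS.vs.PD} requires only that at least one coordinate of \(\XX\) has a continuous cdf, while Theorem \ref{Thm2.REM.PS.vs.PD} requires only that \(Y\) has a continuous cdf. Under the stronger assumption that both conditions hold, each theorem applies in full, and the corollary follows by assembling the relevant parts.

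For assertion (i), the existence of a representative \((\XX^\circ,Y^\circ)\) in the Fr\'echet class of \((\XX,Y)\) such that \(Y^\circ\) is completely separated relative to \(\XX^\circ\) follows directly from Theorem \ref{Thm2.REM.PS.vs.PD}\eqref{Thm2.REM.PS.vs.PD:I1}, since the continuity of the cdf of \(Y\) is assumed.

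For assertion (ii), I argue both implications separately. Suppose first that \(Y\) is completely separated relative to \(\XX\). Since at least one coordinate of \(\XX\) has a continuous cdf, Theorem \ref{Thm1.REM.PS.vs.PD}\eqref{Thm1.REM.PS.vs.PD:I3} applies and yields that \(Y\) perfectly depends on \(\XX\). Conversely, suppose that \(Y\) perfectly depends on \(\XX\). Since \(Y\) has a continuous cdf, Theorem \ref{Thm2.REM.PS.vs.PD}\eqref{Thm2.REM.PS.vs.PD:I3} applies and yields that \(Y\) is completely separated relative to \(\XX\). Combining both directions gives the desired equivalence.

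There is no substantive obstacle in the corollary itself; all the work is encapsulated in the two preceding theorems. The only point worth checking is that the hypothesis sets match up correctly: the corollary's assumptions imply the hypotheses of both Theorem \ref{Thm1.REM.PS.vs.PD}\eqref{Thm1.REM.PS.vs.PD:I3} and Theorem \ref{Thm2.REM.PS.vs.PD}\eqref{Thm2.REM.PS.vs.PD:I3}, so both can be invoked without further conditions. Under these conditions the obstructions highlighted in Theorem \ref{Thm1.REM.PS.vs.PD}\eqref{Thm1.REM.PS.vs.PD:I5} and Theorem \ref{Thm2.REM.PS.vs.PD}\eqref{Thm2.REM.PS.vs.PD:I5}—where one direction of the implication fails in the absence of the respective continuity assumption—no longer occur, which is precisely what makes the two dependence concepts coincide in the fully continuous case.
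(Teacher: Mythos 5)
Your proposal is correct and matches the paper's own argument: the paper derives the corollary precisely as a combination of Theorems \ref{Thm1.REM.PS.vs.PD} and \ref{Thm2.REM.PS.vs.PD}, using part \eqref{Thm2.REM.PS.vs.PD:I1} of the latter for assertion (i) and parts \eqref{Thm1.REM.PS.vs.PD:I3} and \eqref{Thm2.REM.PS.vs.PD:I3} for the two directions of assertion (ii).
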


Under the continuity assumption of Corollary \ref{Cor.REM.PS.vs.PD}, $\Lambda(Y|\XX) = 1$ if and only if $Y$ perfectly depends on \(\XX\).
This motivates interpreting $\Lambda$ in this case as a measure quantifying the degree of functional dependence of $Y$ on $\XX$ (cf. Subsection \ref{SubSec:PS.PD:MoD}).

\subsection{Concordance representation} \label{SubSec:Coeff:Kendall}

Yet another representation of $\Lambda$, which will be of decisive importance for its estimation, is presented in Theorem \ref{Thm:REM:Kendall}. 
It states that \(\Lambda(Y|\XX)\) resembles the difference between the probability of concordance and the probability of discordance of the pair \((Y,Y')\) with $Y$ and $Y'$ having same conditional distribution and being conditionally independent given $\XX$, i.e.
\begin{align}\label{DefMarkovProduct}
  (Y|\XX = \xx) \stackrel{d}{=} (Y'|\XX=\xx) \textrm{ for } \PP^\XX\textrm{-almost all } \xx \in \mathbb{R}^p \textrm{ and } Y \perp Y' \,|\, \XX\,.
\end{align}
We refer to the reduced vector \((Y,Y')\) as the \emph{Markov product} of \((\XX,Y)\) \citep{fuchs2023JMVA}.
Recall that the difference between the probability of concordance and the probability of discordance can be determined using Kendall's tau  (see, e.g.~\citep{fuchs2021SPP}).

\begin{theorem}[Concordance representation]~~\label{Thm:REM:Kendall}
Let \(Y'\) be as in \eqref{DefMarkovProduct}.
Then
\begin{equation}\label{Thm:REM:Kendall.Eq1}
  \Lambda(Y|\XX)
  = \alpha^{-1} \Big[ \PP \big( (Y_1 - Y_2) (Y_1' - Y_2') > 0 \big) 
             - \PP \big( (Y_1 - Y_2) (Y_1' - Y_2') < 0 \big) \Big]
\end{equation}
where \((Y_1,Y_1')\) and \((Y_2,Y_2')\) are independent copies of \((Y,Y')\). 
\end{theorem}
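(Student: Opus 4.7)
The plan is to start from the right-hand side of \eqref{Thm:REM:Kendall.Eq1} and rewrite it as an integral of squared differences of relative effects, which matches the definition \eqref{Def:REM} of $\Lambda(Y|\XX)$.

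First I would use the sign-function identity
\begin{equation*}
\PP\bigl((Y_1-Y_2)(Y_1'-Y_2')>0\bigr)-\PP\bigl((Y_1-Y_2)(Y_1'-Y_2')<0\bigr)
= \E\bigl[\sgn(Y_1-Y_2)\,\sgn(Y_1'-Y_2')\bigr],
\end{equation*}
so the right-hand side becomes $\alpha^{-1}\E[\sgn(Y_1-Y_2)\,\sgn(Y_1'-Y_2')]$.

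Next, I would realise $(Y_1,Y_1')$ and $(Y_2,Y_2')$ via the Markov product construction: take independent copies $(\XX_1,Y_1,Y_1')$ and $(\XX_2,Y_2,Y_2')$ of $(\XX,Y,Y')$, so that conditionally on $(\XX_1,\XX_2)$ the four variables $Y_1,Y_1',Y_2,Y_2'$ are jointly independent with $Y_1,Y_1'\sim \PP^{Y|\XX=\XX_1}$ and $Y_2,Y_2'\sim \PP^{Y|\XX=\XX_2}$. Conditioning on $(\XX_1,\XX_2)$ and using the tower property together with this conditional independence gives
\begin{equation*}
\E\bigl[\sgn(Y_1-Y_2)\,\sgn(Y_1'-Y_2')\,\bigl|\,\XX_1,\XX_2\bigr]
= \bigl(\E[\sgn(Y_1-Y_2)\mid \XX_1,\XX_2]\bigr)^2,
\end{equation*}
because, conditionally on $(\XX_1,\XX_2)$, $(Y_1',Y_2')$ is an independent copy of $(Y_1,Y_2)$.

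The key link to $\Psi$ is then the identity
\begin{equation*}
\E[\sgn(Y_1-Y_2)\mid \XX_1=\xx_1,\XX_2=\xx_2]
= \Psi\bigl(\PP^{Y|\XX=\xx_2},\PP^{Y|\XX=\xx_1}\bigr) - \Psi\bigl(\PP^{Y|\XX=\xx_1},\PP^{Y|\XX=\xx_2}\bigr),
\end{equation*}
which follows from the definition \eqref{Def.Rel.Eff} of $\Psi$: both relative effects carry the same $\frac{1}{2}\PP(Y_1=Y_2\mid \XX_1,\XX_2)$ tie-contribution, so these cancel and leave $\PP(Y_1>Y_2\mid \XX_1,\XX_2)-\PP(Y_1<Y_2\mid \XX_1,\XX_2)$. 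Squaring, integrating against $\PP^{\XX_1}\otimes\PP^{\XX_2}=\PP^\XX\otimes\PP^\XX$, and comparing with \eqref{Def:REM} yields exactly $\alpha\,\Lambda(Y|\XX)$; dividing by $\alpha$ gives \eqref{Thm:REM:Kendall.Eq1}.

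The main obstacle I anticipate is the careful bookkeeping of ties, since $\Psi$ assigns mass $\tfrac12$ to $\{Y_1=Y_2\}$ while the sign function assigns $0$; verifying that these contributions cancel in the difference $\psi_{21}-\psi_{12}$ is what makes the sign-function representation coincide with the $\Psi$-difference in the integrand of \eqref{Def:REM}. A minor secondary point is to justify applying Fubini to replace $\E_{\XX_1,\XX_2}[(\psi_{21}-\psi_{12})^2]$ by the integral $\int(\psi_{12}-\psi_{21})^2\,d(\PP^\XX\otimes\PP^\XX)$, which is immediate from the independence of $\XX_1$ and $\XX_2$ and boundedness of the integrand.
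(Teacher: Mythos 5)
Your proposal is correct and follows essentially the same route as the paper's proof: the paper likewise cancels the tie terms via \eqref{def_REFF} to reduce the integrand to $\int \mathds{1}_{\{y_1<y_2\}}-\mathds{1}_{\{y_1>y_2\}}\de(\PP^{Y|\XX=\xx_1}\otimes\PP^{Y|\XX=\xx_2})$ and then uses the conditional i.i.d.\ structure of the Markov product (disintegration/change of coordinates) to identify the squared integral with the concordance--discordance probabilities, which is exactly your tower-property factorization run in the opposite direction. No gaps; the Fubini and tie-bookkeeping points you flag are handled just as you describe.
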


\begin{remark}[Continuous case]\label{Rem:REM:Kendall}
According to Theorem \ref{Thm:REM:Kendall},
for \((\XX,Y)\) with a continuous cdf and connecting copula \(A\),
\(\Lambda(Y|\XX)\) resembles Kendall's tau \(\tau\) of the transformed copula \(\psi(A)\) that is associated with the Markov product \((Y,Y')\), i.e.~$\Lambda(Y|\XX)= \tau (\psi(A))$. 
We refer to \citep{emura2021, limbach2024, shih2024} for previous work on applying Kendall's tau to $(Y,Y')$ in the case of continuous data and $p=1$, motivating the use of Kendall's tau for quantifying the degree of functional dependence of $Y$ on $X$. We address this type of use of $\Lambda$ in Subsection \ref{SubSec:PS.PD:MoD} below and demonstrate far-reaching additional properties.
\end{remark}

\subsection{$\Lambda$ as a measure of functional dependence} \label{SubSec:PS.PD:MoD}

Viewing \(\Lambda\) as a measure of functional dependence in the sense of Corollary \ref{Cor.REM.PS.vs.PD}, i.e. $\Lambda(Y|\XX) = 1$ if and only if $Y$ perfectly depends on \(\XX\), motivates investigating additional desirable properties such as the information gain inequality (Proposition \ref{Thm.REM:IGI_CI}) and the data processing inequality (Corollary \ref{Cor.REM:DPI}).
Taking the situation in Corollary \ref{Cor.REM.PS.vs.PD} as a basis, we require the following continuity assumption:

\begin{assumption} \label{Assumption.1}
The random vector $(\XX,Y)$ meets the condition that \(Y\) and one of the coordinates of $\XX$ have a continuous cdf.
\end{assumption}

Under Assumption \ref{Assumption.1} (cf.~\citep{emura2021,limbach2024,shih2024} for $p=1$)
\begin{enumerate}[(i)]
\item \(0 \leq \Lambda(Y|\XX) \leq 1\).
\item \(\Lambda(Y|\XX)=0\) if \(Y\) and \(\XX\) are independent.
\item \(\Lambda(Y|\XX)=1\) if and only if \(Y\) is perfectly dependent on \(\XX\).
\end{enumerate} 
In addition, $\Lambda$ fulfills the information gain inequality and the conditional independence property as follows:

\begin{proposition}[Information gain inequality; conditional independence property]~~\label{Thm.REM:IGI_CI}\\
Under Assumption \ref{Assumption.1},
\begin{enumerate}[(i)]
\item (Information gain inequality) 
$\Lambda(Y|\XX) \leq \Lambda(Y|(\XX,\ZZ))$
holds for all \(\XX,\ZZ\) and \(Y\).

\item (Conditional independence property)
$\Lambda(Y|\XX) = \Lambda(Y|(\XX,\ZZ))$
holds for all \(\XX,\ZZ\) and \(Y\) such that $Y \perp \ZZ \mid \XX$.
\end{enumerate}    
\end{proposition}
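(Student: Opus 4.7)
The plan is to reduce both $\Lambda(Y|\XX)$ and $\Lambda(Y|(\XX,\ZZ))$ to variances of conditional expectations through the representation given in Remark~\ref{Rem:REM:Main}, and then to conclude by the law of total variance for (i) and by a conditional-law computation for (ii). Let $(\XX^\ast,\ZZ^\ast,Y^\ast)$ denote an independent copy of $(\XX,\ZZ,Y)$ on the common probability space, and set $W := \mathds{1}_{\{Y < Y^\ast\}} + \tfrac{1}{2}\,\mathds{1}_{\{Y = Y^\ast\}}$. The representation in Remark~\ref{Rem:REM:Main} then reads
\begin{equation*}
\Lambda(Y|\XX) \;=\; \frac{4\,\Var\bigl(\E(W\,|\,\XX,\XX^\ast)\bigr)}{1 - \PP(\XX = \XX^\ast)}, \qquad
\Lambda(Y|(\XX,\ZZ)) \;=\; \frac{4\,\Var\bigl(\E(W\,|\,\XX,\XX^\ast,\ZZ,\ZZ^\ast)\bigr)}{1 - \PP((\XX,\ZZ) = (\XX^\ast,\ZZ^\ast))}.
\end{equation*}
Under Assumption~\ref{Assumption.1}, continuity of the cdf of $Y$ yields $\PP(Y = Y^\ast) = 0$, so the tie term in $W$ may be dropped; continuity of some coordinate of $\XX$ forces $\PP(\XX = \XX^\ast) = 0$ and a fortiori $\PP((\XX,\ZZ) = (\XX^\ast,\ZZ^\ast)) = 0$, so both denominators equal $1$.

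For statement (i), I would invoke the standard monotonicity of $\mathcal{G} \mapsto \Var\bigl(\E(W|\mathcal{G})\bigr)$ under refinement of the conditioning $\sigma$-algebra. Since $\sigma(\XX,\XX^\ast) \subseteq \sigma(\XX,\XX^\ast,\ZZ,\ZZ^\ast)$, the tower property combined with Jensen's inequality applied to $x \mapsto x^2$ gives $\Var\bigl(\E(W|\XX,\XX^\ast)\bigr) \le \Var\bigl(\E(W|\XX,\XX^\ast,\ZZ,\ZZ^\ast)\bigr)$, which is precisely the claim $\Lambda(Y|\XX) \le \Lambda(Y|(\XX,\ZZ))$.

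For statement (ii), I would show that the two inner conditional expectations agree almost surely. Since $(\XX,\ZZ,Y)$ and $(\XX^\ast,\ZZ^\ast,Y^\ast)$ are independent, and since the assumption $Y \perp \ZZ \mid \XX$ transfers to $Y^\ast \perp \ZZ^\ast \mid \XX^\ast$ for the independent copy, the conditional distribution of $(Y,Y^\ast)$ given $(\XX,\XX^\ast,\ZZ,\ZZ^\ast)$ factors as the product of the conditional law of $Y$ given $\XX$ and that of $Y^\ast$ given $\XX^\ast$, and in particular depends only on $(\XX,\XX^\ast)$. Consequently $\E(W|\XX,\XX^\ast,\ZZ,\ZZ^\ast) = \E(W|\XX,\XX^\ast)$ almost surely, and taking variances yields $\Lambda(Y|(\XX,\ZZ)) = \Lambda(Y|\XX)$. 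The only step beyond pure bookkeeping is this conditional-law identity, and even this reduces to a careful use of the independent-copy structure together with the assumed conditional independence; once it is in place, Jensen and the tower property finish the proof.
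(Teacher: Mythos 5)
Your proposal is correct and is essentially the paper's own argument in a different notation: the paper writes $\Lambda$ via $\int \Psi^2\,\mathrm d(\PP^\XX\otimes\PP^\XX)$, disintegrates the relative effect over $(\ZZ,\ZZ^\ast)$ and applies Jensen's inequality, which is exactly your tower-property-plus-Jensen step for $\Var\bigl(\E(W\,|\,\cdot)\bigr)$, and its conditional-independence step $\PP^{Y|\XX=\xx,\ZZ=\zz}=\PP^{Y|\XX=\xx}$ is your factorization of the conditional law of $(Y,Y^\ast)$. The only cosmetic difference is that the paper first proves the general version with the hypothesis $\alpha(\XX,\ZZ)=\alpha(\XX)$ and then specializes, whereas you use Assumption \ref{Assumption.1} directly to set both normalizing constants equal to $1$, which is fine for the statement as given.
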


The information gain inequality reflects the idea that additional information in terms of a larger number of predictor variables increases the extent of functional dependence.
The conditional independence property further states that the degree of functional dependence remains unchanged if no additional information is provided through adding further predictor variables. 
We refer to \citep{chatterjee2021AoS, deb2022, ansari2023MFOCI, fuchs2023JMVA, strothmann2022} for more information on these properties in the context of variable selection methods based on measures of functional dependence.

The data processing inequality \citep{cover2006} implies that a transformation of the predictor variables cannot improve the predictability, and self-equitability \citep{kinney2014} states that ``the statistic should give similar scores to equally noisy relationships of different types'' \citep{reshef2011}.

\begin{corollary}[Data processing inequality; self-equitability]~~\label{Cor.REM:DPI}
Under Assumption \ref{Assumption.1},
\begin{enumerate}
\item (Data processing inequality)
$ \Lambda(Y|{\bf h}(\XX)) \leq \Lambda(Y|\XX) $
holds for all $\XX,Y$ and all measurable functions ${\bf h}$ for which \(\alpha({\bf h}(\XX)) = 1\). 

\item (Self-equitability)
$ \Lambda(Y|{\bf h}(\XX)) = \Lambda(Y|\XX) $
holds for all $\XX,Y$ and all measurable functions ${\bf h}$ for which \(\alpha({\bf h}(\XX)) = 1\) and such that $Y \perp \XX \mid {\bf h}(\XX)$. 
\end{enumerate}    
\end{corollary}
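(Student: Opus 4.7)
The plan is to derive both claims from Proposition \ref{Thm.REM:IGI_CI} (the information gain inequality and the conditional independence property) combined with the bijective-transformation invariance of \(\Lambda\) (Proposition \ref{Prop.REM:InvB}), with no new machinery needed beyond careful book-keeping of Assumption \ref{Assumption.1}.

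First I would collect two preparatory observations. Because \({\bf h}(\XX)\) is a deterministic function of \(\XX\), the events \(\{(\XX, {\bf h}(\XX)) = (\XX^\ast, {\bf h}(\XX^\ast))\}\) and \(\{\XX = \XX^\ast\}\) coincide, so \(\alpha((\XX, {\bf h}(\XX))) = \alpha(\XX) > 0\); moreover Assumption \ref{Assumption.1} is preserved when enlarging \(\XX\) to \((\XX, {\bf h}(\XX))\), since the continuous coordinate of \(\XX\) is still a continuous coordinate of the enlargement, and for the reduced predictor \({\bf h}(\XX)\) Assumption \ref{Assumption.1} is precisely the explicit hypothesis \(\alpha({\bf h}(\XX)) = 1\). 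Writing \(k\) for the dimension of \({\bf h}(\XX)\), the map \(\R^{p+k}\to\R^{p+k}\) that swaps the first \(p\) coordinates with the last \(k\) is a bijection, so Proposition \ref{Prop.REM:InvB} yields the permutation identity
\begin{equation*}
   \Lambda(Y|(\XX, {\bf h}(\XX))) \;=\; \Lambda(Y|({\bf h}(\XX), \XX)).
\end{equation*}

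For the data processing inequality I would chain three steps. Since \({\bf h}(\XX)\) is a function of \(\XX\), the conditional independence \(Y \perp {\bf h}(\XX) \mid \XX\) holds trivially, so Proposition \ref{Thm.REM:IGI_CI}(ii) gives \(\Lambda(Y|\XX) = \Lambda(Y|(\XX,{\bf h}(\XX)))\); the permutation identity above rewrites this as \(\Lambda(Y|({\bf h}(\XX),\XX))\); finally, Proposition \ref{Thm.REM:IGI_CI}(i) applied with base predictor \({\bf h}(\XX)\) and supplementary vector \(\XX\) gives \(\Lambda(Y|{\bf h}(\XX)) \leq \Lambda(Y|({\bf h}(\XX), \XX))\). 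Chaining, I obtain \(\Lambda(Y|{\bf h}(\XX)) \leq \Lambda(Y|\XX)\).

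For self-equitability, the extra hypothesis \(Y \perp \XX \mid {\bf h}(\XX)\) lets me apply Proposition \ref{Thm.REM:IGI_CI}(ii) a second time, now with base predictor \({\bf h}(\XX)\) and supplementary vector \(\XX\), yielding \(\Lambda(Y|{\bf h}(\XX)) = \Lambda(Y|({\bf h}(\XX), \XX))\). Combined with the chain built in the previous paragraph this reads \(\Lambda(Y|{\bf h}(\XX)) = \Lambda(Y|\XX)\). The only non-routine part of the argument is the preparatory book-keeping: confirming that Assumption \ref{Assumption.1} survives each predictor modification, and that \(\alpha({\bf h}(\XX)) = 1\) is exactly the right hypothesis to enable Proposition \ref{Thm.REM:IGI_CI} for the reduced predictor. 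Once those verifications are in hand, the corollary follows immediately from the two earlier results.
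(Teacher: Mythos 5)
Your chain $\Lambda(Y|{\bf h}(\XX)) \leq \Lambda(Y|({\bf h}(\XX),\XX)) = \Lambda(Y|(\XX,{\bf h}(\XX))) = \Lambda(Y|\XX)$ is exactly the route the paper takes (it proves the more general Corollary \ref{CorApp.REM:DPI} this way), and your self-equitability step via a second application of the conditional-independence property also matches. The gap sits in the piece you yourself single out as the ``only non-routine part'': the claim that, for the reduced predictor, Assumption \ref{Assumption.1} ``is precisely'' the hypothesis $\alpha({\bf h}(\XX))=1$. This is false when ${\bf h}(\XX)$ is vector-valued: $\alpha({\bf h}(\XX))=1$ only says that the law of ${\bf h}(\XX)$ is atomless, which does not imply that some coordinate of ${\bf h}(\XX)$ has a continuous cdf. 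For instance, with $W$ Bernoulli$(1/2)$ and $U\sim\mathcal{U}(1,2)$ independent, the vector equal to $(0,U)$ when $W=0$ and to $(U,0)$ when $W=1$ has no atoms, yet both of its coordinates have an atom at $0$. Consequently you are not entitled to invoke the main-text Proposition \ref{Thm.REM:IGI_CI} with base predictor ${\bf h}(\XX)$, since that proposition is stated only under Assumption \ref{Assumption.1} for the base predictor, and that assumption need not hold for $({\bf h}(\XX),Y)$.

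The repair is immediate and is what the paper does: invoke the general version, Proposition \ref{Prop.REM:IGI_CI}, whose only hypothesis is $\alpha({\bf h}(\XX),\XX)=\alpha({\bf h}(\XX))$. Here $\alpha({\bf h}(\XX),\XX)=\alpha(\XX)=1$, because the event $\{({\bf h}(\XX),\XX)=({\bf h}(\XX^\ast),\XX^\ast)\}$ coincides with $\{\XX=\XX^\ast\}$ and Assumption \ref{Assumption.1} forces $\alpha(\XX)=1$, while $\alpha({\bf h}(\XX))=1$ holds by hypothesis; so the condition is met and your chain goes through verbatim — this is precisely the content of Corollary \ref{CorApp.REM:DPI}, of which the stated corollary is the special case. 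Your remaining steps are sound: the application of Proposition \ref{Thm.REM:IGI_CI}(ii) with base $\XX$ (where Assumption \ref{Assumption.1} does hold) is legitimate, the coordinate-permutation identity via Proposition \ref{Prop.REM:InvB} is a correct (and slightly more explicit) version of what the paper leaves implicit, and for scalar-valued ${\bf h}$ your identification of $\alpha({\bf h}(\XX))=1$ with continuity of the cdf of ${\bf h}(\XX)$ would indeed be correct.
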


Slightly more general versions of Proposition \ref{Thm.REM:IGI_CI} and Corollary \ref{Cor.REM:DPI} not assuming continuity are presented in Proposition \ref{Prop.REM:IGI_CI} and Corollary \ref{CorApp.REM:DPI} in the Supplementary Material.

Under Assumption \ref{Assumption.1}, $\Lambda$ hence qualifies as a measure of functional dependence, making it suitable for a variable selection method that is based on the degree of functional dependence of $Y$ on $\XX$. Section \ref{App.Sim.VarSel} in the Supplementary Material presents a simulation study in which $\Lambda$ is used as the underlying dependence measure in a model-free feature ranking and variable selection procedure.

\section{Continuity} \label{Sec:Cont}

The focus now lies on continuity of $\Lambda$ and thus on the robustness of $\Lambda(Y|\XX)$ against small perturbations of the distribution of $(\XX,Y)$.
Since $\Lambda$ evaluates conditional distributions, it is not continuous with respect to weak convergence of the unconditional distributions (see Proposition \ref{Cont.Prop.NWC} which mimics the situation in \cite[Corollary 1.1]{buecher2024} for Chatterjee's correlation coefficient). 
Therefore, continuity of $\Lambda$ requires a stronger notion of convergence, and a suitable candidate turns out to be the concept of conditional weak convergence introduced in \citep{Sweeting-1989}, which, together with weak convergence of the unconditional distributions, implies weak convergence of the corresponding Markov products as defined in \eqref{DefMarkovProduct}; see \citep{ansari2025Cont}.
The latter condition together with a convergence of the ranges of $\XX$ and $Y$ then yields the desired continuity of $\Lambda(Y|\XX)$.

We first demonstrate that, in general, $\Lambda$ fails to be continuous with respect to weak convergence of the unconditional distributions.

\begin{proposition} \label{Cont.Prop.NWC}
Suppose $(X,Y)$ has independent and continuous marginal cdfs and let $\lambda \in [0,1]$.
Then $\Lambda(Y|X) = 0$, however, there exists a sequence $(X_n,Y_n)_{n \in \N}$ with continuous marginal cdfs weakly converging to $(X,Y)$ and such that $\Lambda(Y_n|X_n) = \lambda$ for all $n \in \N$.
\end{proposition}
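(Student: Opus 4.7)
The first claim is immediate from Remark~\ref{Rem:REM:Main2}\,(\ref{Rem:REM:Main.3}): independence of $X$ and $Y$ implies stochastic comparability and hence $\Lambda(Y|X)=0$. For the second claim, using the distributional invariance of $\Lambda$ (Proposition~\ref{Prop.REM:DI}), it suffices to construct a sequence $(U_n,V_n)$ on $[0,1]^2$ with uniform marginals, weakly converging to a pair $(U,V)$ of independent uniform random variables, and satisfying $\Lambda(V_n|U_n)=\lambda$ for all $n$. Indeed, $(X_n,Y_n):=(F_X^{-1}(U_n),F_Y^{-1}(V_n))$ then has the same continuous marginals as $(X,Y)$; since $F_X,F_Y$ are continuous one has $F_X\circ F_X^{-1}=F_Y\circ F_Y^{-1}=\mathrm{id}$ on $(0,1)$, so $\Lambda(Y_n|X_n)=\Lambda(V_n|U_n)=\lambda$; and the a.s.-continuous mapping theorem (the discontinuity sets of the quantile functions being countable, hence null under the continuous uniform limit) gives $(X_n,Y_n)\Rightarrow(X,Y)$.

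To construct $(U_n,V_n)$, fix a copula $C^\ast$ on $[0,1]^2$ such that $(U^\ast,V^\ast)\sim C^\ast$ satisfies $\Lambda(V^\ast|U^\ast)=\lambda$: for $\lambda\in[0,1)$ take the bivariate normal copula with correlation $\rho$ chosen so that $(2/\pi)\arcsin(\rho^2)=\lambda$ (Proposition~\ref{Prop:REM:Kendall}); for $\lambda=1$ take $C^\ast=M$ (comonotonicity), giving $V^\ast=U^\ast$ a.s.\ and hence $\Lambda(V^\ast|U^\ast)=1$ by Corollary~\ref{Cor.REM.PS.vs.PD}. Then, with $K_n$ uniform on $\{1,\dots,n\}$ and independent of $(U^\ast,V^\ast)$, set
\begin{equation*}
U_n \;:=\; \frac{K_n-1+U^\ast}{n}, \qquad V_n \;:=\; V^\ast\,.
\end{equation*}
A direct verification gives $U_n,V_n\sim U(0,1)$.

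The two key computations are as follows. For $u\in[(k-1)/n,k/n]$, conditioning on $U_n=u$ forces $K_n=k$ and $U^\ast=\{nu\}$ (where $\{\,\cdot\,\}$ denotes the fractional part), so the conditional distribution of $V_n$ given $U_n=u$ equals that of $V^\ast$ given $U^\ast=\{nu\}$. By Theorem~\ref{Thm:REM:Main} (noting $\alpha(U_n)=1$) and the identity $\int_0^1 g(\{nu\})\,du=\int_0^1 g(u)\,du$,
\begin{align*}
\Lambda(V_n|U_n)
&= 4\int_{[0,1]^2}\!\left(\Psi\bigl(\PP^{V^\ast|U^\ast=\{nu_1\}},\PP^{V^\ast|U^\ast=\{nu_2\}}\bigr)-\tfrac{1}{2}\right)^{\!2}\! du_1\,du_2 \\
&= 4\int_{[0,1]^2}\!\left(\Psi\bigl(\PP^{V^\ast|U^\ast=u_1},\PP^{V^\ast|U^\ast=u_2}\bigr)-\tfrac{1}{2}\right)^{\!2}\! du_1\,du_2 \;=\; \Lambda(V^\ast|U^\ast) \;=\; \lambda\,.
\end{align*}
Weak convergence $(U_n,V_n)\Rightarrow(U,V^\ast)$ follows from $|U_n-K_n/n|\le 2/n\to 0$ together with $K_n/n\Rightarrow U\sim U(0,1)$: the independence $K_n\perp V^\ast$ passes to the limit (conditioning on $V^\ast$ and applying bounded convergence), yielding a limit in which $U$ is uniform and independent of $V^\ast\sim U(0,1)$.

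The only delicate point is the boundary case $\lambda=1$, where $C^\ast=M$ is singular and each $\PP^{V^\ast|U^\ast=u}=\delta_u$ is a point mass; however, the argument above is formulated entirely in terms of conditional distributions and the relative effect~\eqref{Def.Rel.Eff}, whose tie-correcting term $\tfrac{1}{2}\PP(Z_1=z)$ accommodates atoms, so no modification is required.
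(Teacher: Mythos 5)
Your proof is correct, but it follows a genuinely different route from the paper. The paper stays entirely within the copula/Markov-product calculus it has already set up: it takes $A_n=\lambda_0\,\Pi+(1-\lambda_0)C_n$ with $(C_n)$ shuffles of Min converging uniformly to $\Pi$, uses Remark~\ref{Rem:REM:Kendall} to write $\Lambda(Y_n|X_n)=\tau(\psi(A_n))$, computes $A_n^T\ast A_n=(1-\beta)\Pi+\beta M$ with $\beta=(1-\lambda_0)^2$, and evaluates Kendall's tau of that mixture, so the constant value $\lambda=\tfrac{(1-\lambda_0)^2}{3}\bigl(2+(1-\lambda_0)^2\bigr)$ is hit by inverting an explicit function of $\lambda_0$. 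You instead fix \emph{any} model attaining the exact value $\lambda$ (Gaussian copula via Proposition~\ref{Prop:REM:Kendall} for $\lambda<1$, comonotonicity for $\lambda=1$) and ``tile'' it into $n$ vertical strips via $U_n=(K_n-1+U^\ast)/n$, $V_n=V^\ast$: the measure-preserving fractional-part map leaves the integral in \eqref{Thm:REM:Main:Eq1} unchanged, so $\Lambda(V_n|U_n)=\lambda$ exactly, while the increasingly rapid oscillation washes out the dependence in the weak limit. Your approach buys direct attainment of every $\lambda\in[0,1]$ without inverting a formula and makes the mechanism of the discontinuity (oscillating conditional laws with fixed marginal behaviour) very transparent; the paper's approach buys an explicit closed-form value and avoids any manipulation of conditional distributions by working only with copulas and their Markov products. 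The only step you state somewhat informally is the identification $\PP^{V_n|U_n=u}=\PP^{V^\ast|U^\ast=\{nu\}}$ for Lebesgue-a.e.\ $u$; it is correct and is verified by a one-line change of variables in the disintegration $\PP(U_n\in B,\,V^\ast\in A)=\int_B \PP\bigl(V^\ast\in A\mid U^\ast=\{nu\}\bigr)\,\mathrm du$, using the independence of $K_n$ from $(U^\ast,V^\ast)$, but it should be recorded explicitly in a final write-up. The remaining ingredients (reduction to uniform marginals via Proposition~\ref{Prop.REM:DI} with $F\circ F^{-1}=\mathrm{id}$ for continuous $F$, weak convergence via $|U_n-K_n/n|\le 1/n$ and bounded convergence, and the continuous mapping theorem for the quantile transforms) are all sound.
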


For a cdf \(F\), we denote by \(F^{-1}\) its pseudo-inverse, i.e., 
$F^{-1}(t):=\inf \{x \in \mathbb{R} \, : \, F(x) \geq t\}$.

We now establish conditions for the convergence of the normalizing constant $\alpha(\XX)$ in \eqref{Def:REM}. 
The first result requires the presence of at least one continuous marginal cdf.

\begin{proposition} \label{Cont.Lemma.1C:1}
Consider the random vector $\XX = (X_1,...,X_p)$ and a sequence of random vectors \((\XX_n = (X_{1,n},...,X_{p,n}))_{n\in\mathbb{N}}\). If there is some \(k \in \{1,\dots,p\}\) such that
\begin{enumerate}[(i)]
\item \(F_{X_k}\) is continuous and 
\item \(F_{X_{k,n}} \circ F_{X_{k,n}}^{-1}(t) \to t = F_{X_k} \circ F_{X_k}^{-1}(t)\) for \(\lambda\)-almost all \(t \in (0,1)\),
\end{enumerate}
then $\lim_{n \to \infty} \alpha(\XX_n) = \alpha(\XX)$.
\end{proposition}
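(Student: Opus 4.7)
My strategy is to reduce the problem to a single coordinate and then express $\PP(X_{k,n} = X_{k,n}^\ast)$ as a simple Lebesgue integral that makes assumption (ii) directly applicable via bounded convergence.

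First, I would verify that the limiting constant equals $1$. Since $F_{X_k}$ is continuous, $X_k$ has no atoms, so $\PP(X_k = X_k^\ast) = 0$. Using the trivial bound $\PP(\XX = \XX^\ast) \leq \PP(X_k = X_k^\ast)$, this yields $\alpha(\XX) = 1$. The proposition therefore reduces to showing $\PP(\XX_n = \XX_n^\ast) \to 0$, and for this it is again enough, by the same coordinatewise bound, to prove $\PP(X_{k,n} = X_{k,n}^\ast) \to 0$.

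The key ingredient is a general identity: for any cdf $F$ on $\R$ with pseudo-inverse $F^{-1}$,
\begin{equation*}
  \int_0^1 \bigl(F(F^{-1}(t)) - t\bigr)\de t \;=\; \tfrac{1}{2}\sum_{x \in A(F)} (F(x) - F(x-))^2,
\end{equation*}
where $A(F)$ denotes the (at most countable) set of atoms of $F$. This is obtained by noting that $F(F^{-1}(t)) - t$ vanishes outside the disjoint union $\bigcup_{x \in A(F)}(F(x-), F(x)]$, on each such interval $F^{-1}(t) = x$ so the integrand equals $F(x) - t$, and a routine computation gives $\int_{F(x-)}^{F(x)}(F(x) - t)\de t = \tfrac{1}{2}(F(x)-F(x-))^2$. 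Since the right-hand sum equals $\tfrac{1}{2}\PP(Z = Z^\ast)$ for $Z \sim F$ with independent copy $Z^\ast$, we obtain the representation
\begin{equation*}
  \PP(X_{k,n} = X_{k,n}^\ast) \;=\; 2\int_0^1 \bigl(F_{X_{k,n}}(F_{X_{k,n}}^{-1}(t)) - t\bigr)\de t.
\end{equation*}

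With this in hand, the proof finishes quickly. The integrands $g_n(t) := F_{X_{k,n}}(F_{X_{k,n}}^{-1}(t)) - t$ are non-negative and bounded by $1$, and by assumption (ii) they converge to $0$ for $\lambda$-a.a.\ $t \in (0,1)$. The bounded convergence theorem gives $\int_0^1 g_n(t)\de t \to 0$, hence $\PP(X_{k,n} = X_{k,n}^\ast) \to 0$ and finally $\alpha(\XX_n) = 1 - \PP(\XX_n = \XX_n^\ast) \to 1 = \alpha(\XX)$.

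The only non-routine step is the integral identity above; once established it turns the pointwise convergence hypothesis into convergence of the atomic mass in a way that bypasses any need to argue directly about weak convergence of the measures $\PP^{X_{k,n}}$. Everything else is bookkeeping.
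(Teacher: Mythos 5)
Your proof is correct, and its overall skeleton coincides with the paper's: both arguments first note that continuity of $F_{X_k}$ gives $\alpha(\XX)=1$, reduce to the single coordinate via $\PP(\XX_n=\XX_n^\ast)\leq \PP(X_{k,n}=X_{k,n}^\ast)$, and then pass a quantile-transform integral over $(0,1)$ to the limit by dominated convergence. The difference is in the key identity. The paper routes the coordinatewise step through the more general Lemma~\ref{Cont.Lemma.1A}, which asserts $\PP(X_n=X_n^\ast)\to\PP(X=X^\ast)$ for an arbitrary (possibly atomic) limit; its proof uses the representation $\PP(X_n=X_n^\ast)=\int_{(0,1)}\bigl(F_{X_n}\circ F_{X_n}^{-1}(t)-F_{X_n}^{-}\circ F_{X_n}^{-1}(t)\bigr)\de\lambda(t)$ with $F^{-}$ the left-continuous version, and therefore also needs a.e.\ convergence of $F_{X_n}^{-}\circ F_{X_n}^{-1}$, which it imports from an external result (Ansari and R\"uschendorf 2021, Lemma~2.17(ii)). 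You instead use the identity $\PP(X_{k,n}=X_{k,n}^\ast)=2\int_0^1\bigl(F_{X_{k,n}}\circ F_{X_{k,n}}^{-1}(t)-t\bigr)\de t$, which is valid: up to a countable set the integrand is supported on the disjoint intervals $(F(x-),F(x))$ over the atoms $x$, each contributing $\tfrac12\bigl(F(x)-F(x-)\bigr)^2$, and the sum of these squares is $\tfrac12\PP(Z=Z^\ast)$. This lets hypothesis (ii) feed directly into bounded convergence with no auxiliary left-continuity lemma, so your argument is more self-contained and slightly shorter in this special case where the target is $0$. What the paper's formulation buys is the general convergence statement with an atomic limit, which it reuses later (e.g.\ for $\PP(Y_{n,1}=Y_{n,2})$ in the proof of Theorem~\ref{Cont.Thm.1} and in Proposition~\ref{Cont.Lemma.1C}); note, though, that your identity would also deliver that general version, since under the general hypothesis the same bounded-convergence argument gives $2\int_0^1\bigl(F_{X_n}\circ F_{X_n}^{-1}(t)-t\bigr)\de t\to 2\int_0^1\bigl(F_{X}\circ F_{X}^{-1}(t)-t\bigr)\de t=\PP(X=X^\ast)$.
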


We can relax the continuity assumption in Proposition \ref{Cont.Lemma.1C:1} at the cost of weak convergence of the joint distribution denoted by $\xrightarrow{d}$.

\begin{proposition} \label{Cont.Lemma.1C}
Consider the random vector $\XX = (X_1,...,X_p)$ and a sequence of random vectors \((\XX_n = (X_{1,n},...,X_{p,n}))_{n\in\mathbb{N}}\) with 
\begin{enumerate}[(i)]
    \item \(\XX_n \xrightarrow{d} \XX\) and
    \item for each \(k \in \{1,\dots,p\}\), \(F_{X_{k,n}} \circ F_{X_{k,n}}^{-1} (t) \to F_{X_k} \circ F_{X_k}^{-1}(t)\) for \(\lambda\)-almost all \(t \in (0,1)\).
\end{enumerate}
Then $\lim_{n \to \infty} \alpha(\XX_n) = \alpha(\XX)$.
\end{proposition}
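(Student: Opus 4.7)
The plan is to establish $\PP(\XX_n = \XX_n^\ast) \to \PP(\XX = \XX^\ast)$, which is equivalent to the claim; I treat matching upper and lower bounds separately. For the upper bound, assumption (i) together with the independence of $\XX_n^\ast$ from $\XX_n$ yields joint weak convergence $(\XX_n, \XX_n^\ast) \xrightarrow{d} (\XX, \XX^\ast)$ as products of the marginal distributions. Since the diagonal $D := \{(\xx_1, \xx_2) \in \R^p \times \R^p : \xx_1 = \xx_2\}$ is closed, the portmanteau theorem delivers $\limsup_n \PP(\XX_n = \XX_n^\ast) \leq \PP(\XX = \XX^\ast)$ without using (ii).

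For the matching lower bound I exploit the atomic decomposition $\PP(\XX = \XX^\ast) = \sum_j p_j^2$, where $(\xx^{(j)})_j$ enumerates the at most countable set of atoms of $\XX$ and $p_j := \PP(\XX = \xx^{(j)}) > 0$. The strategy is to construct, for each $j$, atoms $\xx_n^{(j)}$ of $\XX_n$ with $\xx_n^{(j)} \to \xx^{(j)}$ and $\PP(\XX_n = \xx_n^{(j)}) \to p_j$. Granted this, for any $N \in \N$ the atoms $\xx_n^{(1)}, \dots, \xx_n^{(N)}$ are pairwise distinct for large $n$, so $\PP(\XX_n = \XX_n^\ast) \geq \sum_{j=1}^N \PP(\XX_n = \xx_n^{(j)})^2 \to \sum_{j=1}^N p_j^2$, and letting $N \to \infty$ completes the proof.

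The coordinate-wise construction leverages (ii): each $x_k^{(j)}$ is an atom of $X_k$ with associated open interval $I_k^{(j)} := (F_{X_k}(x_k^{(j)}-), F_{X_k}(x_k^{(j)}))$ on which $F_{X_k} \circ F_{X_k}^{-1}$ equals $F_{X_k}(x_k^{(j)})$. Fixing a generic $t^* \in I_k^{(j)}$ at which (ii) holds and setting $x_{k,n}^{(j)} := F_{X_{k,n}}^{-1}(t^*)$, condition (ii) gives $F_{X_{k,n}}(x_{k,n}^{(j)}) \to F_{X_k}(x_k^{(j)})$, while weak convergence $X_{k,n} \xrightarrow{d} X_k$ (a consequence of (i)) together with a cdf comparison at continuity points forces $x_{k,n}^{(j)} \to x_k^{(j)}$; the pseudo-inverse identity $F_{X_{k,n}}(x_{k,n}^{(j)}-) \leq t^*$ yields $\PP(X_{k,n} = x_{k,n}^{(j)}) \geq F_{X_{k,n}}(x_{k,n}^{(j)}) - t^*$. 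A standard diagonal argument letting $t^* \downarrow F_{X_k}(x_k^{(j)}-)$ then delivers a single sequence satisfying $\PP(X_{k,n} = x_{k,n}^{(j)}) \to p_k^{(j)} := \PP(X_k = x_k^{(j)})$.

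Setting $\xx_n^{(j)} := (x_{1,n}^{(j)}, \dots, x_{p,n}^{(j)})$, the upper bound $\limsup_n \PP(\XX_n = \xx_n^{(j)}) \leq p_j$ follows from weak convergence on closed cubes about $\xx^{(j)}$ with $\PP^\XX$-null boundaries, and for the matching lower bound the union-bound estimate
\begin{equation*}
  \PP(\XX_n \in B_\delta(\xx^{(j)})) - \PP(\XX_n = \xx_n^{(j)}) \leq \sum_{k=1}^p \bigl[\PP(X_{k,n} \in [x_k^{(j)}-\delta, x_k^{(j)}+\delta]) - \PP(X_{k,n} = x_{k,n}^{(j)})\bigr]
\end{equation*}
(valid for $n$ large so each $x_{k,n}^{(j)}$ lies in the $k$-th interval) yields, after sending $n \to \infty$ at $\delta$-continuity points of the marginal cdfs and then $\delta \to 0$, that each summand on the right tends to $p_k^{(j)} - p_k^{(j)} = 0$ while the left-hand side tends to $p_j - \liminf_n \PP(\XX_n = \xx_n^{(j)})$. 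The main technical obstacle is the coordinate-wise construction of the approximating atoms together with its diagonal argument identifying a single approximating sequence; the subsequent multivariate cancellation then couples (ii) in each coordinate with the joint weak convergence (i) to deliver the desired atom-mass convergence.
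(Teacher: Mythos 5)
Your proof is correct, and its technical core coincides with the paper's: for every atom $\xx$ of $\XX$ you build approximating atoms $\xx_n\to\xx$ of $\XX_n$ coordinate-wise, by evaluating $F_{X_{k,n}}^{-1}$ at a point $t^\ast$ of the flat interval $(F_{X_k}(x_k-),F_{X_k}(x_k))$ where (ii) holds (this is exactly the mechanism of the paper's Lemma \ref{Cont.Lemma.1B}), and you then cancel the surrounding $\delta$-cube mass using the joint weak convergence from (i), which parallels the paper's interval argument for the atom-mass convergence. Where you genuinely deviate is in the surrounding bookkeeping, and each deviation is a simplification. For the upper bound you observe that $(\XX_n,\XX_n^\ast)\xrightarrow{d}(\XX,\XX^\ast)$ as product laws and apply the portmanteau theorem to the closed diagonal, giving $\limsup_n \PP(\XX_n=\XX_n^\ast)\le\PP(\XX=\XX^\ast)$ in one line without (ii); the paper instead sums portmanteau bounds over the countable union of atoms of all $\XX_n$ and interchanges $\limsup$ and summation, a reverse-Fatou step that requires more care. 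For the lower bound you truncate to finitely many atoms, use that the approximating atoms are pairwise distinct for large $n$, and let $N\to\infty$, which sidesteps the Fatou argument and the implicit non-collision issue in the paper's series manipulation; moreover your union-bound inequality over coordinates works directly for general $p$, whereas the paper proves $p=2$ and invokes induction. Finally, in the univariate step you replace the paper's appeal to convergence of the left-continuous compositions $F^-_{X_{k,n}}\circ F_{X_{k,n}}^{-1}$ (imported from the literature) by the elementary inequality $F^-_{X_{k,n}}\circ F_{X_{k,n}}^{-1}(t^\ast)\le t^\ast$ together with a diagonal extraction as $t^\ast\downarrow F_{X_k}(x_k-)$; this costs you the diagonal argument (where you should record that the extracted sequence still satisfies $x_{k,n}^{(j)}\to x_k^{(j)}$, which is immediate), but makes the proof more self-contained. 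Net effect: same key idea as the paper, with a cleaner $\limsup$ direction and lighter measure-theoretic bookkeeping.
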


Propositions \ref{Cont.Lemma.1C:1} and \ref{Cont.Lemma.1C} provide sufficient conditions for the convergence of the denominator.
For convergence of the numerator, we shall need the following definition:
For \(d\in \N\,,\) denote by \(\cU(\R^d)\) a class of bounded, continuous, weak convergence-determining functions mapping from \(\R^d\) to \(\C\,.\) \pagebreak
A sequence \((f_n)_{n\in \N}\) of functions mapping from \(\R^d\) to \(\C\) is said to be \emph{asymptotically equicontinuous} on an open set \(V\subset \R^d\,,\) if for all \(\varepsilon>0\) and \(\xx \in V\) there exist \(\delta(\xx,\varepsilon)>0\) and \(n(\xx,\varepsilon)\in \N\) such that whenever \(|\xx'-\xx|\leq \delta(\xx,\varepsilon)\) then \(|f_n(\xx')-f_n(\xx)|<\varepsilon\) for all \(n > n(\xx,\varepsilon)\,.\) Further, \((f_n)_{n\in \N}\) is said to be \emph{asymptotically uniformly equicontinuous} on \(V\) if it is asymptotically equicontinuous on \(V\) and the constants \(\delta(\varepsilon)=\delta(\xx,\varepsilon)\) and \(n(\varepsilon)=n(\xx,\varepsilon)\) do not depend on \(\xx\,.\)

The following result provides sufficient conditions for continuity of $\Lambda$ and is based on \cite[Theorem 3.1]{ansari2025Cont} and a characterization of conditional weak convergence in \citep{Sweeting-1989}.

\begin{theorem}[Continuity of \(\Lambda\)]~~\label{Cont.Thm.1}
Consider a \((p+1)\)-dimensional random vector \((\XX, Y)\) and a sequence \(( \XX_n, Y_n)_{n\in\mathbb{N}}\) of \((p+1)\)-dimensional random vectors. Let \(V \subseteq \mathbb{R}^p\) be open such that \(\PP(\XX \in V) = 1\). If
\begin{enumerate}[(i)]
\item\label{Cont.Thm.1.A1} \((\XX_n, Y_n) \xrightarrow{d} (\XX, Y)\), and
\item\label{Cont.Thm.1.A2} \((E[u(Y_n)|\XX_n = \xx])_{n\in\mathbb{N}}\) is asymptotically equicontinuous on \(V\) for all \(u \in \mathcal{U}(\mathbb{R})\), 
\end{enumerate}
then 
\begin{enumerate}[(i)]\setcounter{enumi}{2}
\item\label{Cont.Thm.1.A12} the sequence of Markov products \((Y_n,Y_n')_{n \in \mathbb{N}}\) of \((\XX_n,Y_n)_{n \in \mathbb{N}}\) converges weakly to the Markov product \((Y,Y')\) of \((\XX,Y)\).
\end{enumerate}
If condition \eqref{Cont.Thm.1.A12} holds and, additionally, 
\begin{enumerate}[(i)] \setcounter{enumi}{3}
\item\label{Cont.Thm.1.A3} \(F_{Y_n} \circ F_{Y_n}^{-1}(t) \to F_Y \circ F_Y^{-1}(t)\) for \(\lambda\)-almost all \(t \in (0,1)\), and 
\item\label{Cont.Thm.1.A4} $\alpha(\XX_n) \to  \alpha(\XX)$,
\end{enumerate}
then \(\lim_{n \to \infty} \Lambda(Y_n|\XX_n) = \Lambda(Y|\XX)\).
\end{theorem}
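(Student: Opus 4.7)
The plan is to prove the two implications of the theorem separately.

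For the implication (i) and (ii) $\Rightarrow$ (iii), I would appeal to two external results in succession. First, the asymptotic equicontinuity condition (ii) is, by the classical characterization of \citet{Sweeting-1989}, equivalent (given the marginal weak convergence $\XX_n \xrightarrow{d} \XX$ entailed by (i)) to conditional weak convergence of $\PP^{Y_n|\XX_n=\cdot}$ to $\PP^{Y|\XX=\cdot}$ on $V$. Second, this conditional weak convergence, together with the joint weak convergence in (i), yields weak convergence of the associated Markov products by \cite[Theorem 3.1]{ansari2025Cont}, which is exactly assertion (iii).

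For the implication (iii), (iv), (v) $\Rightarrow$ $\Lambda(Y_n|\XX_n) \to \Lambda(Y|\XX)$, I would start from the concordance representation of Theorem \ref{Thm:REM:Kendall}, namely $\alpha(\XX)\,\Lambda(Y|\XX) = \E[\sgn(Y_1-Y_2)\,\sgn(Y_1'-Y_2')]$, where $(Y_1,Y_1')$ and $(Y_2,Y_2')$ are independent copies of the Markov product. Condition (v) handles the normalizing constant, so it suffices to show convergence of the expectation on the right. From (iii), taking independent copies, the four-tuples $(Y_{1,n},Y_{1,n}',Y_{2,n},Y_{2,n}')$ converge in distribution to the corresponding limit, but the sign-product integrand is discontinuous on $\{y_1=y_2\}\cup\{y_1'=y_2'\}$, a set of positive limit mass whenever $Y$ has atoms.

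To bypass this I would pass to the distributional transform: on an enlarged probability space, introduce independent uniforms $V_1,V_1',V_2,V_2'$ on $(0,1)$ independent of everything else, and define $U_i := F_{Y,-}(Y_i) + V_i(F_Y(Y_i)-F_{Y,-}(Y_i))$, with the analogous $U_i'$, $U_{i,n}$, $U_{i,n}'$ built from $F_{Y_n}$. Conditioning on the events $\{Y_1 = Y_2\}$ and $\{Y_1' = Y_2'\}$ yields the key identity $\E[\sgn(U_1-U_2)\sgn(U_1'-U_2')] = \E[\sgn(Y_1-Y_2)\sgn(Y_1'-Y_2')]$ (and likewise for the $n$-version), while $\PP(U_i = U_j) = 0$ makes the sign-product functional continuous almost everywhere under the limit distribution of the $U$-tuples. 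Weak convergence of the $U$-tuples is then obtained from (iii) and independence of the $V$'s together with (iv), rewritten as $F_{Y_n}\circ F_{Y_n}^{-1}(t) \to F_Y\circ F_Y^{-1}(t)$ for almost all $t$, which is exactly what is needed for the distributional-transform construction to behave well at atoms in the limit. The Portmanteau theorem applied to the now-a.s.-continuous sign-product functional delivers convergence of the numerator, and combining with (v) finishes the proof.

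The main obstacle is the treatment of ties in the response distribution: whenever $Y$ has atoms, the discontinuity set of the sign-product integrand carries positive limit mass, so Portmanteau cannot be applied directly at the level of the $Y$-tuples. The distributional transform, supported by condition (iv) to align the atomic structure across $n$, is the key device that moves the problem onto a scale at which the integrand is almost everywhere continuous and standard weak-convergence machinery becomes available.
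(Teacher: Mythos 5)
Your treatment of the first implication is the same as the paper's: assertion \eqref{Cont.Thm.1.A12} is obtained by combining Sweeting's characterization of conditional weak convergence with \cite[Theorem 3.1]{ansari2025Cont}, so nothing further is needed there. For the second implication your route is genuinely different, and its central device is correct: starting from the concordance representation of Theorem \ref{Thm:REM:Kendall}, the jittering identity \(\E[\sgn(U_1-U_2)\sgn(U_1'-U_2')]=\E[\sgn(Y_1-Y_2)\sgn(Y_1'-Y_2')]\) does hold (on a tie in one coordinate the corresponding jittered sign is a symmetric \(\pm 1\) variable, conditionally independent of the other factor, so both sides contribute \(0\)), and since \(U_1,U_2\) (resp. \(U_1',U_2'\)) are independent uniforms, the discontinuity set \(\{u_1=u_2\}\cup\{u_1'=u_2'\}\) of the sign product is a null set under the limit law. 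If carried through, this would replace the paper's decomposition of the numerator into a \(\{\leq,\leq\}\)-probability plus the tie probabilities \(\PP(Y_{n,1}=Y_{n,2})\) and \(\PP(Y_{n,1}=Y_{n,2},Y_{n,1}'=Y_{n,2}')\) (handled via Lemma \ref{Cont.Lemma.1A} and Proposition \ref{Cont.Lemma.1C}) and its copula representation with uniform convergence of \(C_n\) on \(\overline{\Ran(F_Y)}\times\overline{\Ran(F_Y)}\) and Lipschitz estimates.

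The genuine gap is the sentence ``weak convergence of the \(U\)-tuples is then obtained from \eqref{Cont.Thm.1.A12} and independence of the \(V\)'s together with \eqref{Cont.Thm.1.A3}'': this is precisely where the entire difficulty of the theorem sits, and it does not follow from a continuous-mapping argument. The transforms \(y\mapsto (F^{-}_{Y_n}(y),F_{Y_n}(y))\) are discontinuous at atoms and vary with \(n\); weak convergence \((Y_n,Y_n')\xrightarrow{d}(Y,Y')\) gives no control of \(F_{Y_n}(y_n)\) along sequences \(y_n\to y\) when \(y\) is an atom of \(Y\), because the corresponding atom of \(Y_n\) may sit at a nearby but different point, and the atoms of \(Y\) carry positive mass in the limit. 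Hence proving \(\E[g(U_{1,n},U_{1,n}')]\to\E[g(U_1,U_1')]\) requires aligning the atoms of \(Y_n\) with those of \(Y\) and matching their masses and quantile positions — exactly what condition \eqref{Cont.Thm.1.A3} is designed for, but which you invoke only rhetorically. Making this step rigorous demands machinery of the same weight as the paper's: an atom-matching result in the spirit of Lemma \ref{Cont.Lemma.1B} (sequences \(x_n\to x\) with \(\PP(Y_n=x_n)\to\PP(Y=x)>0\)), its bivariate extension applied to \((Y_n,Y_n')\) as in Proposition \ref{Cont.Lemma.1C}, and the uniform convergence of the Markov-product copulas on \(\overline{\Ran(F_Y)}\times\overline{\Ran(F_Y)}\) from \citep{ansari2025Cont}; equivalently, one must show uniform convergence of the distributional-transform copulas of \((Y_n,Y_n')\), which is not implied by \eqref{Cont.Thm.1.A12} alone. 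In short, the jittering trick elegantly removes the tie problem from the integrand, but it relocates rather than resolves the main technical issue, and as written the proposal assumes it.
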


As a direct application of Theorem \ref{Cont.Thm.1}, we now verify continuity of $\Lambda$ within the class of multivariate normal distributions.

\begin{corollary}[Continuity of \(\Lambda\) for the multivariate normal distribution]~~\label{Cont.Cor.MVN}
Suppose \((\XX,Y)\sim N(\boldsymbol{\mu},\Sigma)\) and, for \(n\in \N\), let \((\XX_n,Y_n)\sim N(\boldsymbol{\mu}_n,\Sigma_n)\) with \(\Sigma\) and \(\Sigma_n\) being positive definite. 
If \(\Sigma_n \to \Sigma\) then \(\lim_{n \to \infty} \Lambda(Y_n|\XX_n) = \Lambda(Y|\XX)\).
\end{corollary}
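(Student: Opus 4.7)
The plan is to exploit the closed-form expression from Proposition \ref{Prop:REM:Kendall} rather than attempting a direct verification of the hypotheses of Theorem \ref{Cont.Thm.1}, which would require checking asymptotic equicontinuity of the conditional expectations of $u(Y_n)$ given $\XX_n$ --- a nontrivial computation even in the Gaussian case. A first observation is that the stated hypothesis $\Sigma_n \to \Sigma$ imposes no constraint on the means $\boldsymbol{\mu}_n$; this is reconciled by invoking Proposition \ref{Prop.REM:InvB}, since $y \mapsto y - \mu_Y$ is strictly increasing on $\R$ and $\xx \mapsto \xx - \boldsymbol{\mu}_{\XX}$ is a bijection of $\R^p$, so both $\Lambda(Y|\XX)$ and $\Lambda(Y_n|\XX_n)$ are unchanged when the vectors are centred. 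I may therefore assume throughout that $\boldsymbol{\mu} = \mathbf{0}$ and $\boldsymbol{\mu}_n = \mathbf{0}$.

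With the mean-zero reduction in hand, I would apply Proposition \ref{Prop:REM:Kendall} to both $(\XX,Y)$ and $(\XX_n,Y_n)$, which gives
\[
\Lambda(Y|\XX) \;=\; \frac{2}{\pi}\arcsin(\rho^2), \qquad \Lambda(Y_n|\XX_n) \;=\; \frac{2}{\pi}\arcsin(\rho_n^2),
\]
where $\rho^2 = \Sigma_{21}\Sigma_{11}^{-1}\Sigma_{12}/\sigma_Y^2$ and $\rho_n^2$ is defined analogously via the block decomposition of $\Sigma_n$. Continuity of $\Lambda$ along $\Sigma_n \to \Sigma$ therefore reduces to the single verification that $\rho_n^2 \to \rho^2$, followed by continuity of $\arcsin$.

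The key step is the convergence $\rho_n^2 \to \rho^2$. From $\Sigma_n \to \Sigma$ one reads off block-wise convergence of $\Sigma_{11,n}$, $\Sigma_{12,n}$, $\Sigma_{21,n}$ and $\sigma_{Y,n}^2$ to their respective limits. Positive definiteness of $\Sigma$ implies that the principal submatrix $\Sigma_{11}$ is positive definite and hence invertible, and since matrix inversion is continuous on the open set of invertible matrices, $\Sigma_{11,n}^{-1} \to \Sigma_{11}^{-1}$. Combined with continuity of matrix multiplication and with $\sigma_Y^2 > 0$ (so that the denominators are eventually bounded away from zero), this yields $\rho_n^2 \to \rho^2$. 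Both quantities lie in $[0,1)$ --- a direct consequence of positive definiteness via the Schur complement, or equivalently of Proposition \ref{Prop:REM:Kendall} --- and continuity of $\arcsin$ on $[0,1]$ completes the argument. The main obstacle in this route is essentially non-existent: all the heavy lifting is already done by the explicit formula in Proposition \ref{Prop:REM:Kendall}, and what remains amounts to routine continuity of elementary operations on positive definite matrices, once the mean issue is settled via invariance.
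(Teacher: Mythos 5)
Your proof is correct, but it takes a genuinely different route from the paper's. The paper obtains the corollary as an application of the general continuity result, Theorem \ref{Cont.Thm.1}: conditions \eqref{Cont.Thm.1.A1}, \eqref{Cont.Thm.1.A2} and \eqref{Cont.Thm.1.A3} are verified by invoking \citep{ansari2025Cont} for the Gaussian model (in particular the asymptotic equicontinuity of \(\xx \mapsto \E[u(Y_n)|\XX_n=\xx]\)), and condition \eqref{Cont.Thm.1.A4} follows from the continuity of the marginal cdfs. You bypass that machinery entirely: after centring via the invariance in Proposition \ref{Prop.REM:InvB} --- which also cleanly explains why no assumption on \(\boldsymbol{\mu}_n\) is needed --- you read off \(\Lambda(Y_n|\XX_n)=\tfrac{2}{\pi}\arcsin(\rho_n^2)\) from Proposition \ref{Prop:REM:Kendall} and reduce the claim to \(\rho_n^2\to\rho^2\), which is routine continuity of block extraction, matrix inversion at the positive definite \(\Sigma_{11}\), and division by \(\sigma_Y^2>0\), finished by continuity of \(\arcsin\) on \([0,1]\) (with \(\rho_n^2,\rho^2\in[0,1)\) by the Schur complement). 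Your argument is more elementary and self-contained within the paper's own results; the paper's route has the advantage of exercising Theorem \ref{Cont.Thm.1}, whose verification scheme extends to elliptical and \(l^1\)-norm symmetric families where no closed-form expression for \(\Lambda\) is available.
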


Corollary \ref{Cont.Cor.MVN} is a direct consequence of \citep{ansari2025Cont}, verifying conditions \eqref{Cont.Thm.1.A1},\eqref{Cont.Thm.1.A2} and \eqref{Cont.Thm.1.A3} in Theorem \ref{Cont.Thm.1}, and the continuity of the marginal cdfs from which condition \eqref{Cont.Thm.1.A4} in Theorem \ref{Cont.Thm.1} is obtained.
Following \citep{ansari2025Cont}, continuity of $\Lambda$ can even be achieved in the larger class of elliptical distributions and in the class of $l^1$-norm symmetric distributions, where in the latter case \eqref{Cont.Thm.1.A12}--\eqref{Cont.Thm.1.A4} in Theorem \ref{Cont.Thm.1} are used.

As another direct application of Theorem \ref{Cont.Thm.1}, robustness of $\Lambda(Y|\XX)$ against small pertubations of the response $Y$ is established in Corollary \ref{Cont.Cor.RobustY} and illustrated in a simulation study in Section \ref{Sec.Sim.Sc4} in the Supplementary Material.

\begin{corollary}[Robustness of \(\Lambda\) against small pertubations of the response]~~\label{Cont.Cor.RobustY}
Consider a \((p+1)\)-dimensional random vector \((\XX, Y)\) and a sequence \(( \epsilon_n)_{n\in\mathbb{N}}\) of random variables. If
\begin{enumerate}[(i)]
\item\label{Cont.Cor.RobustY.A1}\label{Cont.Cor.RobustY.A2} \(\epsilon_n \xrightarrow{d} 0\) 
and, for every $n \in \N$, $\epsilon_n  \,\perp\, (\XX,Y) $, 
\item\label{Cont.Cor.RobustY.A3} \(F_{Y+ \epsilon_n} \circ F_{Y+ \epsilon_n}^{-1}(t) \to F_Y \circ F_Y^{-1}(t)\) for \(\lambda\)-almost all \(t \in (0,1)\),
\end{enumerate}
then \(\lim_{n \to \infty} \Lambda(Y + \epsilon_n|\XX) = \Lambda(Y|\XX)\).
\end{corollary}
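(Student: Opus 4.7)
The plan is to apply Theorem \ref{Cont.Thm.1} with the choice $\XX_n := \XX$ and $Y_n := Y + \epsilon_n$. With this choice, condition \eqref{Cont.Thm.1.A4} is automatic because $\alpha(\XX_n) = \alpha(\XX)$, and condition \eqref{Cont.Thm.1.A3} is exactly the corollary's hypothesis \eqref{Cont.Cor.RobustY.A3}. Hence it suffices to verify condition \eqref{Cont.Thm.1.A12}, i.e., that the Markov product $(Y_n, Y_n')$ of $(\XX_n, Y_n)$ converges weakly to the Markov product $(Y, Y')$ of $(\XX, Y)$.

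Rather than checking conditions \eqref{Cont.Thm.1.A1} and \eqref{Cont.Thm.1.A2} of Theorem \ref{Cont.Thm.1}, I would verify \eqref{Cont.Thm.1.A12} directly by realizing both Markov products on a common probability space. Let $(Y, Y')$ denote the Markov product of $(\XX, Y)$ in the sense of \eqref{DefMarkovProduct}, and enlarge the probability space to carry, for each $n$, an iid pair $(\epsilon_n, \epsilon_n')$ distributed as $\epsilon_n$ and independent of $(\XX, Y, Y')$; this extension is consistent with hypothesis \eqref{Cont.Cor.RobustY.A1}. Invoking the independence of $(\epsilon_n, \epsilon_n')$ from $(\XX, Y, Y')$, one checks that conditional on $\XX$ the random variables $Y + \epsilon_n$ and $Y' + \epsilon_n'$ are independent and each has the conditional distribution of $Y + \epsilon_n$ given $\XX$. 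Consequently $(Y + \epsilon_n, Y' + \epsilon_n')$ is a valid representative of the Markov product of $(\XX_n, Y_n)$.

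Since convergence in distribution to a constant is equivalent to convergence in probability, hypothesis \eqref{Cont.Cor.RobustY.A1} gives $(\epsilon_n, \epsilon_n') \to (0,0)$ in probability, and the continuous mapping theorem applied to $(y, e, y', e') \mapsto (y+e, y'+e')$ yields $(Y + \epsilon_n, Y' + \epsilon_n') \to (Y, Y')$ in probability, hence in distribution. This is \eqref{Cont.Thm.1.A12}, and Theorem \ref{Cont.Thm.1} delivers the conclusion.

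The hypotheses of the corollary have been designed to plug into Theorem \ref{Cont.Thm.1}, so no deep new input is needed. The only delicate step is the coupling of the two Markov products: it is essential here that $\epsilon_n$ is independent of the whole pair $(\XX, Y)$, for otherwise the conditional distribution of $Y + \epsilon_n$ given $\XX$ would not decompose as the convolution $\PP^{Y \mid \XX} * \PP^{\epsilon_n}$, and the construction of $Y_n'$ as $Y' + \epsilon_n'$ with the required conditional independence would fail.
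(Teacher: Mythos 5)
Your proof is correct, and it reaches the conclusion by a more self-contained route than the paper. The paper's own proof is a one-line citation: it invokes Theorem 3.4 of \citep{ansari2025Cont} to supply the required convergence of the Markov products (equivalently, to verify the hypotheses of the first part of Theorem \ref{Cont.Thm.1}), and then applies the second part of Theorem \ref{Cont.Thm.1} exactly as you do, with \eqref{Cont.Thm.1.A3} given by the corollary's hypothesis and \eqref{Cont.Thm.1.A4} trivial since $\XX_n = \XX$. You instead verify condition \eqref{Cont.Thm.1.A12} directly by an explicit coupling: enlarging the space by an i.i.d.\ pair $(\epsilon_n,\epsilon_n')$ independent of $(\XX,Y,Y')$, checking that conditionally on $\XX$ the quadruple $(Y,Y',\epsilon_n,\epsilon_n')$ factorizes completely (which uses both $\epsilon_n \perp (\XX,Y)$ and the defining properties of the Markov product), so that $(Y+\epsilon_n, Y'+\epsilon_n')$ is a representative of the Markov product of $(\XX, Y+\epsilon_n)$, and then passing from $\epsilon_n \xrightarrow{d} 0$ to convergence in probability to get weak convergence of the Markov products. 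This is legitimate because the second half of Theorem \ref{Cont.Thm.1} only needs \eqref{Cont.Thm.1.A12}, \eqref{Cont.Thm.1.A3} and \eqref{Cont.Thm.1.A4}, not the equicontinuity condition \eqref{Cont.Thm.1.A2} — a usage the paper itself endorses elsewhere. What your argument buys is elementarity and independence from the external reference (no conditional weak convergence machinery, no asymptotic equicontinuity check); what the paper's citation buys is brevity and placement of the corollary inside a general continuity framework. Your closing observation that independence of $\epsilon_n$ from the whole pair $(\XX,Y)$ is what makes the conditional law of $Y+\epsilon_n$ a convolution, and hence the coupling valid, correctly identifies the one delicate point.
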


\begin{remark}[Robustness of \(\Lambda\) in special cases]\label{Cont.Cor.RobustY:Rem} 
Consider a \((p+1)\)-dimensional random vector \((\XX, Y)\) and a sequence \(( \epsilon_n)_{n\in\mathbb{N}}\) of continuous random variables weakly converging to $0$ and such that $\epsilon_n  \,\perp\, (\XX,Y)$ for every $n \in \N$.
\begin{enumerate}
\item 
Corollary \ref{Cont.Cor.RobustY} is directly applicable to random vectors \((\XX,Y)\) with $Y$ having a continuous cdf.
\item 
Robustness of \(\Lambda\) for the BF situation discussed in Example \ref{Ex:REM:BehrensFisher} and Example \ref{Ex:REM:BehrensFisher2} in the Supplementary Material assuming that $Y|X=1$ and $Y|X=2$ have continuous cdfs directly follows from \eqref{special_REM} in Remark \ref{Rem:REM:Main.2}.
\end{enumerate}
\end{remark}

\section{Estimation} \label{Sec:Estimation}
A strongly consistent estimator \(\Lambda_n\) for \(\Lambda\) is introduced which is built upon the concordance representation \eqref{Thm:REM:Kendall.Eq1} given in Theorem \ref{Thm:REM:Kendall} and that relies on a graph-based estimation principle. 
For comparative purposes, a consistent estimator \(\widehat{\Lambda}_n\) for \(\Lambda\) is presented that is based on the classical relative effect estimation.

In the following, consider a $(p+1)$-dimensional random vector $(\XX,Y)$
with i.i.d.~copies $(\XX_1,Y_1)$, $\dots$, $(\XX_n,Y_n)$, $n \geq 2$.
Recall that \(Y\) is assumed to be non-degenerate and that \(\XX=(X_1,\ldots,X_p)\) has at least one non-degenerate coordinate.

As estimator for $\Lambda$ we propose the statistic $\Lambda_n$ given by
\begin{equation} \label{Estimator.Def.Lambda}
  \Lambda_n(Y|\XX)
  := 
       \frac{\binom{n}{2}^{-1} \, \sum_{k < l} \, \sgn(Y_k - Y_l) \, \sgn(Y_{N(k)} - Y_{N(l)})}
            {1 - \binom{n}{2}^{-1} \, \sum_{k < l} \, \mathds{1}_{\{\XX_k = \XX_l\}}}
\end{equation}
where, for each $k \in \{1,\dots,n\}$, the number $N(k)$ denotes the index $l$ such that $\XX_l$ is the nearest neighbor of $\XX_k$ with respect to the Euclidean metric on $\mathbb{R}^p$.
Since there may exist several nearest neighbors of $\XX_k$, ties are broken uniformly at random.

\begin{theorem}[Consistency] \label{Estimator.Thm.Consistency}
It holds that $ \lim_{n \to \infty} \Lambda_n (Y|\XX) = \Lambda(Y|\XX)$ almost surely.
\end{theorem}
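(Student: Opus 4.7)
I would write $\Lambda_n(Y|\XX) = N_n/D_n$ with
\[
N_n := \binom{n}{2}^{-1}\sum_{k<l}\sgn(Y_k-Y_l)\sgn(Y_{N(k)}-Y_{N(l)}),
\qquad
D_n := 1 - \binom{n}{2}^{-1}\sum_{k<l}\mathds{1}_{\{\XX_k=\XX_l\}},
\]
and handle the two factors separately. The denominator is the easy half: $1-D_n$ is a $U$-statistic of order two with bounded symmetric kernel $(\xx,\xx')\mapsto \mathds{1}_{\{\xx=\xx'\}}$, so Hoeffding's strong law for $U$-statistics yields $1-D_n \to \PP(\XX=\XX^\ast)=1-\alpha$ almost surely. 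Hence $D_n \to \alpha > 0$ a.s., and $D_n$ is bounded away from zero for all large $n$.

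For the numerator my target, by Theorem \ref{Thm:REM:Kendall}, is $\alpha\cdot\Lambda(Y|\XX) = \E[\sgn(Y_1-Y_2)\sgn(Y_1'-Y_2')]$, where $(Y_i,Y_i')$ are i.i.d.\ copies of the Markov product $(Y,Y')$ defined in \eqref{DefMarkovProduct}. I would decompose $N_n - \alpha\Lambda = (N_n - \E N_n) + (\E N_n - \alpha\Lambda)$. For the fluctuation term I would apply McDiarmid's bounded-differences inequality to $N_n$ viewed as a function of $(\XX_i, Y_i)_{i=1}^{n}$. The crucial geometric input is that the in-degree of the Euclidean nearest-neighbor graph on $\mathbb{R}^p$ is bounded by a dimension-dependent constant $C_p$, so replacing a single observation alters $N(\cdot)$ at only $O(1)$ indices and thereby perturbs only $O(n)$ of the $\binom{n}{2}$ summands $h_{kl} := \sgn(Y_k-Y_l)\sgn(Y_{N(k)}-Y_{N(l)}) \in \{-1,0,1\}$. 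The bounded-difference constants are therefore $O(1/n)$, and McDiarmid yields $\PP(|N_n-\E N_n|>\varepsilon) \leq 2\exp(-cn\varepsilon^2)$, which is summable, so Borel--Cantelli delivers $N_n - \E N_n \to 0$ almost surely.

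The remaining bias step $\E N_n \to \alpha\Lambda(Y|\XX)$ is the main obstacle and is where the graph-based estimation principle of \citet{chatterjee2021AoS} enters. By exchangeability, $\E N_n = \E[\sgn(Y_1-Y_2)\sgn(Y_{N(1)}-Y_{N(2)})]$. I would first show that the probability of the ``bad'' events $\{N(1)\in\{2,N(2)\}\} \cup \{N(2)=1\}$ tends to zero so that with high probability the four indices $1,2,N(1),N(2)$ are pairwise distinct; on this set, conditionally on $\{\XX_1 \neq \XX_2\}$ (which contributes the normalizing factor $\alpha$ in the limit), the classical a.s.\ convergence $\XX_{N(k)} \to \XX_k$ combined with nearest-neighbor regression consistency implies that, given $(\XX_1,\XX_2)$, the pair $(Y_{N(1)}, Y_{N(2)})$ is asymptotically distributed as the Markov-product kernel at $(\XX_1,\XX_2)$ and is asymptotically independent of $(Y_1,Y_2)$. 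Feeding the bounded measurable map $(y_1,y_2,y_1',y_2') \mapsto \sgn(y_1-y_2)\sgn(y_1'-y_2')$ into this limit and invoking \eqref{DefMarkovProduct} yields $\E N_n \to \E[\sgn(Y_1-Y_2)\sgn(Y_1'-Y_2')] = \alpha\Lambda(Y|\XX)$. The hard part is extending the standard single-nearest-neighbor convergence, which governs sums of the form $\tfrac{1}{n}\sum_k f(\XX_k, Y_k, Y_{N(k)})$, to the present pairwise sum involving two simultaneous nearest-neighbor attractors that must behave asymptotically independently. Combining the two pieces, $\Lambda_n = N_n/D_n \to \alpha\Lambda/\alpha = \Lambda(Y|\XX)$ almost surely.
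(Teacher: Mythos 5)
Your blueprint coincides with the paper's proof in every structural respect: the same split of $\Lambda_n$ into the numerator $\gamma_n$ and denominator $\alpha_n$, the strong law for U-statistics for the denominator (the paper invokes Christofides' SLLN, i.e.\ exactly your Hoeffding-type argument), McDiarmid's bounded-differences inequality with the dimension-dependent nearest-neighbor in-degree bound $C(p)$ for the fluctuation term, Borel--Cantelli, and a bias step based on exchangeability, discarding degenerate index configurations, and two-point nearest-neighbor convergence. So the route is the right one, and the concentration and denominator halves of your plan are complete as stated.

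The gap is that the step you yourself label ``the hard part'' is precisely where the paper's actual work lies, and you do not supply it. Two points. First, no asymptotic-independence statement is needed and your phrasing there can be tightened: conditionally on the $\sigma$-algebra generated by all $\XX_i$ (and the tie-breaking variables), the $Y_i$ are \emph{exactly} independent with $Y_i\sim\PP^{Y|\XX=\XX_i}$, so whenever the four indices $k,l,N(k),N(l)$ are pairwise distinct the conditional expectation of $\sgn(Y_k-Y_l)\sgn(Y_{N(k)}-Y_{N(l)})$ factorizes exactly into the product of two conditional relative-effect terms; the degenerate configurations ($N(k)=N(l)$, or $N(k)=l$, or $N(l)=k$) must then be shown to number only $O(n)$ among the $\binom{n}{2}$ pairs, which again uses the in-degree bound (the paper's Lemma \ref{lem:App.1}). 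Also, no conditioning on $\{\XX_1\neq\XX_2\}$ is required to produce the factor $\alpha$: the limit $\E\bigl[\sgn(Y_1-Y_2)\sgn(Y_1'-Y_2')\bigr]$ equals $\alpha\,\Lambda(Y|\XX)$ outright by Theorem \ref{Thm:REM:Kendall}, since the contribution from $\{\xx_1=\xx_2\}$ vanishes automatically. Second, what remains to be proved is the paired two-point analogue of Azadkia--Chatterjee's Lemmas 11.3, 11.5 and 11.7: almost-sure convergence of the pair of nearest neighbors $(\XX_{n,1},\XX_{n,2})\to(\XX_1,\XX_2)$, an expectation inequality $\E f(\XX_{n,1},\XX_{n,2})\le c_p\,\E f(\XX_1,\XX_2)$ for nonnegative measurable $f$ (needed to pass from continuous to merely measurable $f$ via a Lusin-type approximation), and consequently $f(\XX_{n,1},\XX_{n,2})\to f(\XX_1,\XX_2)$ in probability for the bounded measurable map $(\xx_1,\xx_2)\mapsto\int_{\R\times\R}\bigl(\mathds{1}_{\{y_1<y_2\}}-\mathds{1}_{\{y_1>y_2\}}\bigr)\de\bigl(\PP^{Y|\XX=\xx_1}\otimes\PP^{Y|\XX=\xx_2}\bigr)(y_1,y_2)$, followed by uniform integrability to convert this into convergence of $\E\gamma_n$. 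These are the paper's Lemmas \ref{lem:11.3_m}, \ref{lem:11.5_m} and \ref{lem:11.7_m} and constitute the substance of the argument; your proposal is a correct skeleton of the paper's proof whose decisive ingredient is identified but not established.
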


\begin{remark} \label{Estimator.Remark.General}
\begin{enumerate}
\item 
The estimation principle underlying \(\Lambda_n\), exploiting the nearest neighbor structure formed by the realizations of the predictor variables, is inspired by the elegant estimation procedure used for Azadkia \& Chatterjee's so-called `simple measure of conditional dependence' \(\xi_n\) introduced in \citep{chatterjee2021AoS}.
We make ample use of this connection in the proof of Theorem \ref{Estimator.Thm.Consistency}.

\item \label{Estimator.Remark.General:2}
It is immediately apparent from the definition in \eqref{Estimator.Def.Lambda} that \(\Lambda_n\) mimics the concordance representation presented in \eqref{Thm:REM:Kendall.Eq1}.
The numerator therefore takes the form of a U-statistic including both the observations of the response variables and their nearest neighbors, and the denominator compensates for the occurrence of ties in the predictor variables.
If there exist at least some $k,l \in \{1,\dots,n\}$ with $\XX_k \neq \XX_l$, then the denominator in \eqref{Estimator.Def.Lambda} is strictly positive and \(\Lambda_n(Y|\XX)\) is well-defined.

\item 
The statistic $\Lambda_n$ can be computed in $O(n\log n)$ time.
This is achieved since nearest neighbors can be determined in $O(n\log n)$ time \citep{friedman} and using the function \emph{cor.fk} in the R package \emph{pcaPP}, which allows the numerator in \eqref{Estimator.Def.Lambda} to be determined in $O(n\log n)$ time \citep{filzmoser_R}. 
In situations where only a few treatment groups occur, we use a modified version of the estimator in \eqref{Estimator.Def.Lambda} that excludes comparisons within a treatment group.
This modification has no effect on the consistency of the estimator.

\item \label{Estimator.Remark.General:1}
The invariance results in Subsection \ref{SubSec:Coeff:Invariance} together with the proven strong consistency in Theorem \ref{Estimator.Thm.Consistency} justify transforming the initial data without changing the (asymptotic) dependence value.
This includes, for example, standardizing the predictor variables to ensure that the nearest neighbor search is performed with comparable variable ranges. Alternatively, this can be achieved by replacing the original observations by their ranks (Proposition \ref{Prop.REM:DI}) or by transforming them by strictly monotone functions (Proposition \ref{Prop.REM:InvB}).
In practical applications, data pre-processing procedures such as standardization or rank transformation ensure more robust outcomes.
\end{enumerate}
\end{remark}

\medskip
\begin{remark}[Asymptotic normality]\label{Estimator.Remark.AN}~~
We conjecture that 
\begin{equation} \label{Estimator.Remark.AN:Eq}
  \frac{\Lambda_n - \E(\Lambda_n)}{\sqrt{\Var(\Lambda_n)}} 
\end{equation}
behaves asymptotically normal. 
At this moment, we do not know how to prove this conjecture.
As mentioned in Remark \ref{Estimator.Remark.General}, the estimation principle underlying \(\Lambda_n\) is related to that of $\xi_n$ in \citep{chatterjee2021AoS}, and asymptotic normality of \(\xi_n\) (and related quantities) has been recently studied in a number of publications \citep{deb2020,han2022b,ansari2023MFOCI,shi2021b}. The latter is achieved either under the assumption of independence between \(\XX\) and \(Y\) or by applying a local limit theorem of mean structure for nearest neighbor statistics \cite[Theorem 3.4]{chatterjee2008}. 
As a consequence, to prove the above conjectured asymptotic normality for the statistic in \eqref{Estimator.Remark.AN:Eq} in full generality, a U-statistic-like version of \cite[Theorem 3.4]{chatterjee2008} is required, which is not yet known.
\end{remark}

\medskip
\begin{remark} [Classical rank-based relative effect estimation]~~\label{Estimator.Remark.ClassicalRE}
For $\XX$ discrete with several observations for each possible realization $\xx_1$ and $\xx_2$, an alternative way of estimating $\Lambda$ is to use the canonical rank-based estimator for the relative effect discussed in \citep{Brunner2019} and plug it into \eqref{special_REM}:
Let $y_{11},\ldots , y_{1n_1}$ be all those realizations of $Y$ for which $\XX = \xx_1$ and let $y_{21},\ldots,y_{2n_2}$ be all those realizations of $Y$ for which $\XX = \xx_2$. 
Let further 
\begin{equation*}
  R_{ik} 
  = \frac{1}{2} + 
      \sum_{j=1}^{2} \sum_{l=1}^{n_j} \left( \mathds{1}_{\{y_{jl} < y_{ik}\}} + \frac{1}{2} \mathds{1}_{\{y_{jl} = y_{ik}\}}\right)
\end{equation*}
be the mid rank of $y_{ik}$ among all $y_{11},\ldots,y_{1n_1},y_{21},\ldots,y_{2n_2}$ and let $\bar{R}_{2 \cdot} = \frac{1}{n_2} \sum_{k = 1}^{n_2} R_{2k}$. 
Then, according to \cite[Result 3.1]{Brunner2019},
\begin{equation} \label{Estimator.Def.Naive}
    \Psi_n(\PP^{Y|\XX = \xx_1}, \PP^{Y|\XX = \xx_2}) 
    := \frac{1}{n_1} \left(\bar{R}_{2\cdot} - \frac{n_2 + 1}{2}\right)
\end{equation}
is both an unbiased and consistent estimator for $\Psi(\PP^{Y|\XX = \xx_1}, \PP^{Y|\XX = \xx_2})$.
Furthermore, under some regularity conditions and under the strict assumption of stochastically comparable treatment groups, i.e. $\Psi(\PP^{Y|\XX = \xx_1}, \PP^{Y|\XX = \xx_2}) = 1/2$, the estimator in \eqref{Estimator.Def.Naive} behaves asymptotically normal \citep[Result 3.21]{Brunner2019} and, even for small sample sizes, a $t$-distribution approximates the sample distribution quite well \citep[Result 3.22]{Brunner2019}.
Therefore, and according to \eqref{formula_discrete_x}, for a $(p+1)$-dimensional random vector $(\XX,Y)$ with i.i.d.~copies $(\XX_1,Y_1)$, $\dots$, $(\XX_n,Y_n)$, $n \geq 2$, a suitable candidate for estimating $\Lambda(Y|\XX)$ in the case of discrete \(\XX\) is
\begin{equation}
  \widehat{\Lambda}_n(Y|\XX) 
  := \frac{2 \, \sum_{i=1}^m \sum_{j=i+1}^m \hat{q_i} \, \hat{q_j} \, \left( 2 \, \Psi_n(\PP^{Y|\XX = \xx_i}, \PP^{Y|\XX = \xx_j}) - 1 \right)^2}{1 - \sum_{i=1}^{m} \hat{q_i}^2}
  \label{eq:rank_based}
\end{equation}
with $\hat{q_i} := \frac{1}{n} \sum_{k=1}^{n} \mathds{1}_{\{\XX_k=\xx_i\}}$ and $\Psi_n(\PP^{Y|\XX = \xx_k}, \PP^{Y|\XX = \xx_l})$ being defined as in \eqref{Estimator.Def.Naive}. 
Then, $\widehat{\Lambda}_n (Y|\XX)$ consistently estimates $\Lambda(Y|\XX)$.\pagebreak
\\
Caution is required for the case when one of the coordinates of $\XX$ has a  continuous cdf, since then $n_1 = n_2 = 1$ and therefore $\Psi_n(\PP^{Y|\XX = \xx_1}, \PP^{Y|\XX = \xx_2}) \in \{0, 1\}$, and hence $(2 \Psi_n(\PP^{Y|\XX = \xx_1}, \PP^{Y|\XX = \xx_2}) - 1)^2 = 1$. 
Thus, when attempting to estimate $\Lambda(Y|\XX)$ in this situation via $\widehat{\Lambda}_n(Y|\XX)$ in \eqref{eq:rank_based} one would always obtain a value of $\widehat{\Lambda}_n(Y|\XX) = 1$, making this approach inappropriate for estimating $\Lambda(Y|\XX)$.
\end{remark}

\section{Simulation study} \label{Sec.Sim}
In this section we use synthetic data to evaluate the overall performance and speed of convergence of our estimator \(\Lambda_n\) proposed in \eqref{Estimator.Def.Lambda} in different scenarios (Section \ref{Sec.Sim.Sc1&2}) and when compared to the classical rank based relative effect estimator $\widehat{\Lambda}_n$ proposed in \eqref{eq:rank_based} (Section \ref{Sec.Sim.Sc3}).
Further simulation studies are listed in the Supplementary Material:
In Section \ref{Sec.Sim.Sc4} the robustness of $\Lambda$ against small pertubations of the response variable is illustrated (Scenario 4A) confirming the theoretical findings in Corollary \ref{Cont.Cor.RobustY} and a data example with predictor variables $(X,Z)$ is presented, which shows that the pairwise stochastic comparability of $Y$ relative to $X$ and $Y$ relative to $Z$ does not imply stochastic comparability of $Y$ relative to $(X,Z)$ (Scenario 4B).
Section \ref{App.B.Sc5} provides a comparison of $\Lambda$ with Chatterjee’s correlation coefficient and reveals their potential for examining heteroscedasticity.
Finally, Section \ref{App.Sim.VarSel} presents a model-free variable selection based on $\Lambda_n$ for continuous data according to Section \ref{SubSec:PS.PD:MoD}.

\subsection{Speed of convergence} \label{Sec.Sim.Sc1&2}
To assess the overall performance and speed of convergence of \(\Lambda_n\) we simulate data from a bivariate normal distribution (Scenario 1), and from two treatment groups with underlying normal distributions as described in Example~\ref{Ex:REM:BehrensFisher} (Scenario 2: BF situation).
Sample sizes in both scenarios are $n \in \{30, 100, 500, 1000, 5000, 10000, 50000\}$. 
For each setting we simulate 1000 runs. 
The figures are deferred to the Supplementary Material.

\medskip
{\bf Scenario 1: } We consider $(X,Y) \sim \mathcal{N}({\bf 0}, \Sigma)$ with 
\(\Sigma = \left(
\begin{smallmatrix} 
  1           & \sigma_{XY} \\
  \sigma_{XY} & 1 
\end{smallmatrix}\right)\) and draw an i.i.d.~sample $(X_1, Y_1), \dots, (X_n, Y_n)$ from $(X,Y)$. 
Figure \ref{fig:sim_1} depicts boxplots of the obtained estimates $\Lambda_n(Y|X)$ for different sample sizes and correlation parameters $\rho \in \{0,0.4,0.75,1\}$, and illustrates the fast convergence of $\Lambda_n(Y|X)$ to the true value $\Lambda(Y|X) = \frac{2}{\pi} \arcsin(\sigma_{XY}^2)$ derived in Proposition~\ref{Prop:REM:Kendall}.

\medskip
{\bf Scenario 2:} In this scenario, we generate two treatment groups by setting $\PP(X=0) = 1/2 = \PP(X=1)$ and analyze the two subscenarios:
\begin{itemize}
\item Subscenario A: We consider $Y|X=0 \sim \mathcal{N}(0,1)$ and $Y|X=1 \sim \mathcal{N}(2,1)$.
\item Subscenario B: We consider $Y|X=0 \sim \mathcal{N}(0, 1)$ and $Y|X=1 \sim \mathcal{N}(2,4)$.
\end{itemize}
For each subscenario we draw an i.i.d.~sample $(X_1, Y_1), \dots, (X_n, Y_n)$ from $(X,Y)$ so that $\frac{n}{2}$ observations belong to the case $X=0$ and $\frac{n}{2}$ observations belong to the case $X=1$.
We use the classical relative effect estimator in \eqref{eq:rank_based} for comparison. The results of Scenario 2 are depicted in Figure~\ref{fig:sim_2}. 
As before, the fast convergence of the estimator $\Lambda_n$ in \eqref{Estimator.Def.Lambda} towards the true value can be observed. 
The two estimators behave similarly, except that the variance of $\widehat{\Lambda}_n$ appears to be slightly smaller compared to that of $\Lambda_n$.

\subsection{Comparison with classical relative effect estimation: From discrete to continuous data} \label{Sec.Sim.Sc3}

\begin{figure}[t!]
    \centering
    \includegraphics[width=1.00\linewidth]{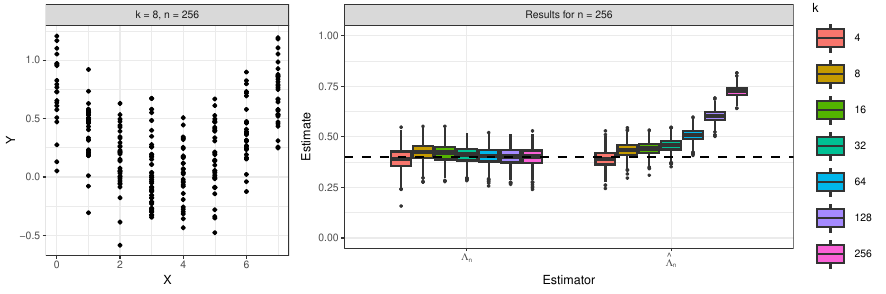}
    \caption{Results of Scenario 3 in Section \ref{Sec.Sim.Sc3} in which variable $X$ is discretized in $k \in \{4, 8, 16, 32, 64, 128, 256\}$ subintervals. An example dataset with $n = 256$ and $8$ subintervals is given in the left panel. The right panel shows boxplots of the obtained estimates $\Lambda_n(Y|X)$ and $\widehat{\Lambda}_n(Y|X)$ for varying $k$. The dashed line indicates the true value $\Lambda(Y|X)$ for the original data without grouping.}
    \label{fig:sim_3}
\end{figure}
{\bf Scenario 3:}  Next we explore how discretization of a continuous predictor variable influences our estimator $\Lambda_n$ and the classical relative effect estimator $\widehat{\Lambda}_n$. 
To this end, we first draw an i.i.d.~sample $X_1, ..., X_n$ from $X \sim \mathcal{U}(-1, 1)$ of size $n=256$ and set $Y_i = X_i^2 + \epsilon_i$ with $\epsilon_1, ..., \epsilon_n \overset{i.i.d}{\sim} \mathcal{N}(0, 0.1^2)$. 
Then we divide the interval $[-1, 1]$ into $k \in \{4, 8, 16, 32, 64, 128, 256\}$ subintervals of equal length and change the original value of the $X_1,...,X_n$ to the number of the subinterval in which the $x$-value of each observation falls. 
The results of Scenario 3 are depicted in Figure~\ref{fig:sim_3}. It can be clearly seen that, as the number of subintervals increases above a certain threshold, the classical relative effect estimator $\widehat{\Lambda}_n$ starts to converge towards $1$, whereas an increase in subintervals does not reduce the precision of $\Lambda_n$.
This behavior is also observed for deviating sample sizes and confirms the theoretical considerations in Remark \ref{Estimator.Remark.ClassicalRE}.
It also speaks against the use of $\widehat{\Lambda}_n$ and in favor of $\Lambda_n$ in case of several treatment groups with only few observations.

\section{Real Data Example} \label{Sec.RealDataEx}

As a dataset to demonstrate the previously described methods we use data from the 2021-2023 National Health and Nutrition Examination Survey (NHANES)~\cite{nhanes2021}. This sample comprises a large number of health related variables from $11,933$ US-citizens. 
We choose age (in years), systolic and diastolic blood pressure (mmHg) from the third measurement and body mass index (BMI, $\frac{kg}{m^2}$) as the variables included in this analysis.  
We filter the dataset for all respondents for which all four variables are present, resulting in $7,416$ complete observations. 
\begin{figure}[t]
    \centering
    \includegraphics[height=0.75\linewidth,width=0.85\linewidth]{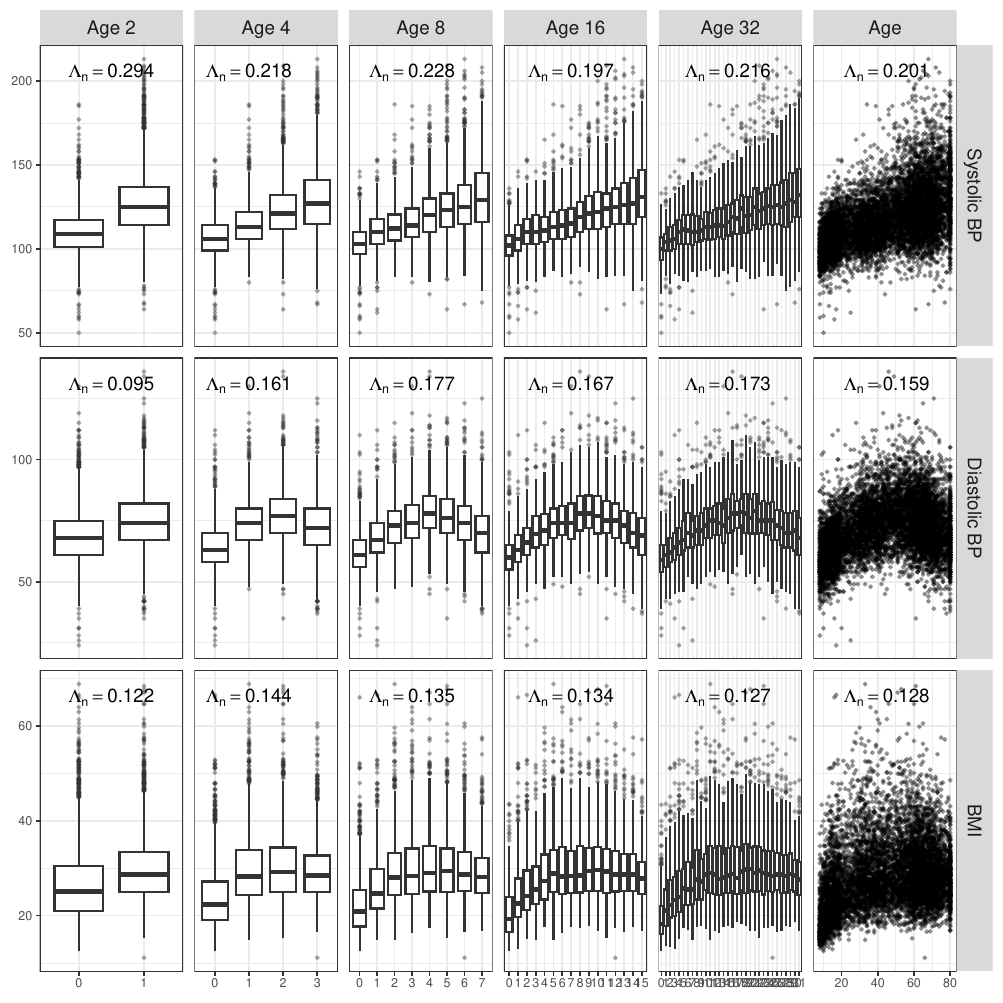}
    \caption{Results of the Nutrition Examination Survey evaluation in Section \ref{Sec.RealDataEx}. 
    The first five columns contain boxplots of the discretized age variables versus the continuous variables Systolic BP, Diastolic BP and BMI. For each level of discretization the estimated values $\Lambda_n$ are given at the top of the panel. 
    The rightmost column depicts the original data and the estimates $\Lambda_n$ in the case of no discretization.}
    \label{fig:pairwise_discrete}
\end{figure}

To study the effects of discretization of continuous variables we group age (ranging from $12$ to $80$ years in our sample) into $2, 4, 8, 16, 32$ equidistant age ranges respectively and refer to these discrete variables as Age 2, Age 4, Age 8, Age 16 and Age 32.
Relationships between all continuous variables (that is, excluding the discretized age variables) as well as the respective estimates $\Lambda_n(Y|X)$ are shown in Figure~\ref{fig:pairwise_plots} in the Supplementary Material. The effect of discretizing age is shown in Figure~\ref{fig:pairwise_discrete}:
Clearly, the discretization into groups compared to the original data set can cause the extent of separation to be both underestimated (the estimates in the case of age to diastolic blood pressure varies between 0.095 in the case of two groups to 0.159 for the original data) and overestimated (the estimates in the case of age to systolic blood pressure varies between 0.294 in the case of two groups to 0.201 for the original data). 
For the sake of completeness, it is noted that, in the case of age to BMI, the estimates remain constant across all discretization variants.
Therefore, when attempting to quantify the extent of the separation, this discovery argues decisively against the discretization of the variables that is often carried out in practice and in favor of using $\Lambda_n$ directly on the original data.
We also conducted permutation tests, testing for deviances of $\Lambda_n(Y|\XX)$ from $0$ under the null hypothesis of equality of all conditional distributions of $Y|X = x$. To this end we simply permuted the $x$-values of the data $N$ times and calculated $\Lambda_n(Y|\XX^{*}_k)$, where $\XX^{*}_k$ is the permuted dataset of $x$-values, $k = 1,...,N$. The permutation p-value was then the number of permuted datasets producing a larger $\Lambda_n(Y|\XX)$ than the original dataset, i.e.
\begin{equation*}
p-value = \frac{|\{k \in \{1,...,N\}:\: \Lambda_n(Y|\XX^{*}_k) \geq \Lambda_n(Y|\XX)\}|}{N}.
\end{equation*}
We ran $10,000$ permutations for each of the pairs of variables depicted in Figure~\ref{fig:pairwise_discrete} but in every case no permutation produced a higher $\Lambda_n(Y|\XX^{*}_k)$ than the original data, leaving us with the conclusion that in each case $p < .0001$.


\section*{Acknowledgment}
The authors gratefully acknowledge the support of the WISS 2025 project 'IDA-lab Salzburg' (20204-WISS/225/197-2019 and 20102-F1901166-KZP). The first and second author further gratefully acknowledge the support of the Austrian Science Fund (FWF) project {P 36155-N} \emph{ReDim: Quantifying Dependence via Dimension Reduction}.


\clearpage

\section*{Supplementary Material}

\begin{appendix}

\section{Additional material for Section \ref{Sec:Coeff}}
\label{App.A}

Example \ref{Ex.REFF.Not1} verifies that there exist Fr{\'e}chet classes with no completely separated representative.

\begin{example}[Fr{\'e}chet classes with no completely separated representative]~~\label{Ex.REFF.Not1} 
Let \(X\) and \(Y\) be two random variables with \(\PP(X=0) = \PP(X=1) = 1/2\) and \(\PP(Y=0) = \PP(Y=1) = \PP(Y=2) = 1/3\)
and joint probabilities given in Table \ref{Ex.REFF.Not1.T}.
\begin{table}[ht] 
  \centering
  \begin{tabular}{|c||c|c||c|}
    \hline
    & \( X=0 \) & \( X=1 \) &  \\
    \hline\hline
    \( Y=0 \) & \( a_{0,0} = \frac{1}{2} - a_{1,0} - a_{2,0}\) & \( a_{0,1} = \frac{1}{3} - a_{0,0}\) & \( \frac{1}{3} \) \\
    \hline
    \( Y=1 \) & \( a_{1,0} \) & \( a_{1,1} = \frac{1}{3} - a_{1,0}\) & \( \frac{1}{3} \) \\
    \hline
    \( Y=2 \) & \( a_{2,0} \) & \( a_{2,1} = \frac{1}{3} - a_{2,0}\) & \( \frac{1}{3} \) \\
    \hline\hline
    & \( \frac{1}{2} \) & \( \frac{1}{2} \) & 1 \\
    \hline 
  \end{tabular}
  \caption{Joint probability distribution of random vector \((X,Y)\) in Example \ref{Ex.REFF.Not1}.}
  \label{Ex.REFF.Not1.T}
\end{table}
Then 
\begin{align*}
    & \Psi \big(\PP^{Y|X=0}, \PP^{Y|X=1} \big)
    \\
    &   =  \int_{\R} \PP(Y < y|X=0) \de \PP^{Y|X=1}(y) + \frac{1}{2} \; \int_{\R} \PP(Y = y|X=0) \de \PP^{Y|X=1}(y)
    \\
    &   =  \PP(Y < 0|X=0) \, 2 \, a_{0,1} + \PP(Y < 1|X=0) \, 2 \, a_{1,1} + \PP(Y < 2|X=0) \, 2 \, a_{2,1} 
    \\
    & \quad + \frac{1}{2} \; \Big( \PP(Y = 0|X=0) \, 2 \, a_{0,1} + \PP(Y = 1|X=0) \, 2 \, a_{1,1} + \PP(Y = 2|X=0) \, 2 \, a_{2,1} \Big) 
    \\
    &   =  \frac{4}{3} \, a_{1,1} + \frac{8}{3} \, a_{2,1} - \frac{1}{6}\,,
\end{align*}
and \eqref{special_REM} finally yields 
\begin{align*}
    \Lambda(Y|X) 
    & 
      = \left( \frac{8}{3} \, a_{1,1} + \frac{16}{3} \, a_{2,1} - \frac{4}{3} \right)^2
      =  \left(\frac{8}{3}\right)^2 (a_{2,1} - a_{0,1})^2
      \leq \left(\frac{8}{9}\right)^2 < 1\,.
\end{align*}
Thus, within this Fr{\'e}chet class there is no completely separated representative.
\end{example}

Remark \ref{Rem:REM:Main.2App} below resumes the discussion in Remark \ref{Rem:REM:Main.2} in which the situation of a finite number of treatment groups is discussed.

\begin{remark}[Discrete predictor variables]~~ 
\label{Rem:REM:Main.2App}
Remark \ref{Rem:REM:Main.2} covers the situation when \(\XX\) is discrete with finite range \(\{\xx_1 \dots, \xx_m\}\), \(m \geq 2\), such that \(\PP(\XX=\xx_i) = q_i\), \(i \in \{1,\dots,m\}\).
It further states that, for $m = 2$, the value of $\Lambda(Y|\XX)$ does not depend on the distribution of $\XX$. 
This is not the case in general. If for example $m = 3$ then~\eqref{formula_discrete_x} simplifies to
\begin{align*}
    \Lambda(Y|\XX) 
    & = \frac{8 [q_1 q_2 (w_{12} - \frac{1}{2})^2 + q_1 q_3 (w_{13} - \frac{1}{2})^2 + q_2 q_3 (w_{23} - \frac{1}{2})^2]}{1 - q_{1}^2 - q_{2}^2 - q_{3}^2}
\end{align*}
where $w_{ij} = \Psi \big(\PP^{Y|\XX=\xx_i}, \PP^{Y|\XX=\xx_j}\big)$. 
Now, choosing
\begin{align*}
      Y|\XX = \xx_1   &\sim \mathcal{U}(0, 1),
    & Y|\XX = \xx_2 &\sim \mathcal{U}((-0.5, 0) \cup (2, 2.5)),
    & Y|\XX = \xx_3 &\sim \mathcal{U}(1, 2),
\end{align*}
gives $w_{12} = w_{23} = 0.5$ and $w_{13} = 1$, further simplifying~\eqref{formula_discrete_x} to
\begin{align*}
    \Lambda(Y|\XX) 
    & = \frac{2 \, q_1 q_3}{1 - q_{1}^2 - q_{3}^2 - (1 - q_1 - q_3)^2}\,.
\end{align*}
For simplicity further setting $q_1 = q_3 = q \in [0,1/2]$ yields
\(\Lambda(Y|\XX) = \frac{q}{2-3q}\)
which is a continuous function in $q$ ranging from $0$ if $q=0$ to $1$ if $q=0.5$. Therefore, in this example, $\Lambda(Y|\XX)$ can take any possible value in $[0, 1]$, depending on $q$. 
\\
From the viewpoint of quantifying the degree of separation of $Y$ relative to $\XX$, this, of course, describes a natural behaviour; however, under the premise of weighting the pairwise relative effects equally, this behaviour appears \emph{paradoxical} and is regularly observed for measures derived from the relative effect. It is due to the relative effect's intransitivity that has been described previously in~\citep{brunner2021ranks}. It can become a problem in statistical analysis when the distribution of groups in a sample varies between experiments and/or is not the same as in the general population of interest. One proposed solution is the use of so-called pseudoranks~\citep{brunner2021ranks, zimmermann2022pseudo} which essentially set $q_i = q = 1/m$ for all $i$, irrespective of the actual distribution of $\XX$. In terms of the functional $\Lambda(Y|\XX)$, this would be equivalent to taking the arithmetic mean over the pairwise relative effects instead of weighting according to the distribution of $\XX$.\end{remark}

The BF situation discussed in Example \ref{Ex:REM:BehrensFisher} is now 
illustrated under alternative assumptions on the distributions of $Y|X=1$ and $Y|X=2$ (Example \ref{Ex:REM:BehrensFisher2}).

\begin{example}[BF situation]\label{Ex:REM:BehrensFisher2}~~
Consider the random variables \(X\) and \(Y\) with $\PP(X = 1) = 1 - \PP(X = 2) = q > 0$ and the two treatment groups represented by the conditional distributions \(\PP^{Y \mid X=1}\) and \(\PP^{Y \mid X=2}\).
\begin{enumerate}[1.]
\item \label{Ex:REM:BehrensFisher2.1} (Uniform distribution)
If $\PP^{Y \mid X=1} = U[0,1]$ and $\PP^{Y \mid X=2} = U[\delta, 1+\delta]$ 
where \(\delta \in [0,1]\), then \eqref{special_REM} yields
\begin{align*}
  \Lambda(Y|X)
  & = \delta^2 \, (2 - \delta)^2\,.
\end{align*}

\item (Bernoulli distribution)
If $\PP^{Y \mid X=1} = B(p_1)$ and $\PP^{Y \mid X=2} = B(p_2)$
where $p_1,p_2 \in [0,1]$, then \eqref{special_REM} yields
\begin{align*}
  \Lambda(Y\mid X) 
  & = \left(p_2-p_1\right)^2\,.
\end{align*}

\item (Exponential distribution)
If $ \PP^{Y \mid X=1} = Exp(\lambda_1)$ and $\PP^{Y \mid X=2} = Exp(\lambda_2)$
where $\lambda_1,\lambda_2 \in (0,\infty)$, then \eqref{special_REM} yields
\begin{align*}
  \Lambda(Y\mid X) 
  & = \left( \frac{\lambda_2-\lambda_1}{\lambda_1+\lambda_2}\right)^2\,.
\end{align*}
\end{enumerate}
\end{example}

\begin{figure}[h]
  \centering
  \includegraphics[scale=0.450]{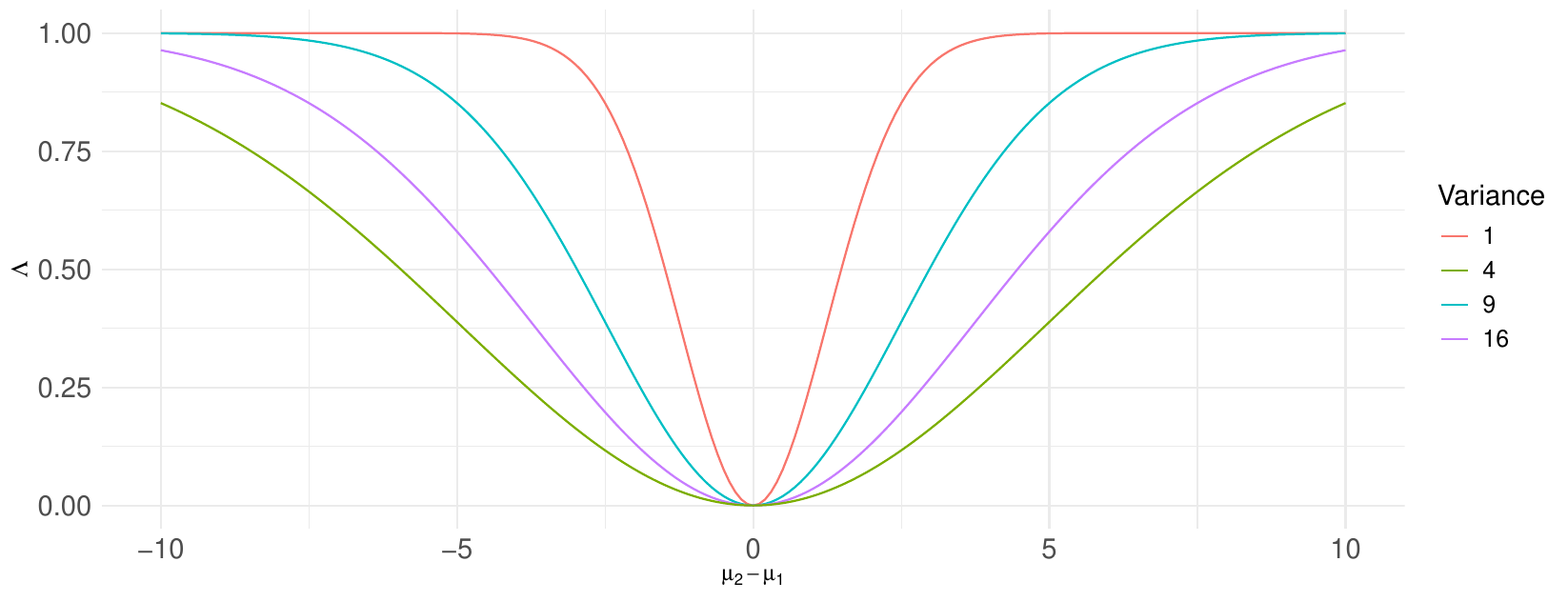}
  \caption{Graph depicting \(\Lambda(Y|X)\) for the BF situation and normally distributed treatment groups $N(\mu_1,\sigma_1^2)$ and $N(\mu_2,\sigma_2^2)$ in Example \ref{Ex:REM:BehrensFisher} as a function of the distance between mean values for various choices of variance \(\sigma^2 = \sigma_1^2 = \sigma_2^2\).}
  \label{Ex.BehrensFisher.Pic}
\end{figure}

Closed-form expressions for \(\Lambda(Y|\XX)\) similar to that in Proposition \ref{Prop:REM:Kendall} but different assumptions on the dependence structure of \((\XX,Y)\) are presented in Example \ref{Ex:REM:Kendall2} below. 
The results are immediate from Remark \ref{Rem:REM:Kendall} in combination with \cite[Example 1]{fuchs2023JMVA}.

\begin{example}[Parametric copula families]\label{Ex:REM:Kendall2}~~
\begin{enumerate}[1.]
\item (Marshall-Olkin copula) 
Assume \(X\) and \(Y\) are continuous with connecting copula $A_{\alpha,\beta}$ that is a Marshall-Olkin copula with parameters $\alpha=1$ and $\beta \in [0,1]$ \citep{durante2016}.
Then $\psi(A_{1,\beta}) = A_{\beta,\beta}$ and 
\begin{align*}
  \Lambda(Y|X)
  & 
	= \tau(A_{\beta,\beta})
	= \frac{\beta}{2-\beta}\,.
\end{align*}
Thus, $\Lambda(Y|X)=0$ if and only if $\beta=0$ if and only if $X$ and $Y$ are independent,
and $\Lambda(Y|X)=1$ if and only if $\beta=1$.
\item (Fr{\'e}chet copula) 
Assume \(X\) and \(Y\) are continuous with connecting copula $A_{\alpha,\beta}$ that is a \emph{Fr{\'e}chet copula} with parameter $\alpha,\beta \in [0,1]$ such that $\alpha+\beta\leq 1$ \citep{durante2016}.
Then $ \psi(A_{\alpha,\beta}) = A_{\alpha^2 + \beta^2,2\alpha\beta} $ and 
\begin{align*}
  \Lambda(Y|X)
  & 
	= \tau(A_{\alpha^2 + \beta^2,2\alpha\beta})
	= \frac{1}{3} \, (\alpha-\beta)^2 \, \big( (\alpha+\beta)^2 + 2 \big)\,.
\end{align*}
Thus, $\Lambda(Y|X)=0$ if and only if $\alpha=\beta$,
and $\Lambda(Y|X)=1$ if and only if $(\alpha,\beta) \in \{\{0,1\},\{1,0\}\}$. 
It is worth mentioning that, if $\alpha=\beta>0$, then $Y$ is stochastically comparable relative to $X$, but $X$ and $Y$ fail to be independent.
\item (EFGM copula) 
Assume \(\XX\) and \(Y\) are continuous with connecting copula $A_\alpha$ that is an EFGM copula with parameter $\alpha \in [-1,1]$ \citep{durante2016}.
Then $\psi(A_\alpha)$ is of EFGM type with parameter $\alpha^2/3^p$, 
and 
\begin{align*}
  \Lambda(Y|\XX)
  & 
	= \tau(A_{\alpha^2/3^p})
	= \frac{2 \, \alpha^2}{3^{p+2}}.
\end{align*}
Thus, $\Lambda(Y|\XX)=0$ if and only if $\alpha=0$ if and only if $X$ and $Y$ are independent, and $\Lambda(Y|\XX) \leq \tfrac{2}{3^{p+2}}$.
\end{enumerate}
\end{example}{}

Example \ref{Ex.REM.PS.vs.PD} verifies that, in general, neither complete separation implies perfect dependence nor vice versa.
Figure \ref{Ex.REM.PS.vs.PD.Pic} depicts scatterplots of the distributions used in Example \ref{Ex.REM.PS.vs.PD}.

\begin{example}[Complete separation \(\neq\) perfect dependence]~~ \label{Ex.REM.PS.vs.PD}
\begin{enumerate}[(i)]
\item 
Consider first the random variables $X$ and $Y$ with $\PP(X = 1) = \PP(X = 2) = 1/2$ and 
\begin{align*}
  \PP^{Y \mid X=1}   & = U([0,1])\,, 
  & \PP^{Y \mid X=2} & = U([1,2])\,.  
\end{align*}
Then, according to Example \ref{Ex:REM:BehrensFisher2} \eqref{Ex:REM:BehrensFisher2.1},
\(\Lambda(Y|X) = 1\), but Chatterjee's rank correlation \citep{chatterjee2020} fulfills \(\xi(Y,X) < 1\).
Thus, due to Theorem \ref{Thm:REM:Main} and \cite[Theorem 1.1]{chatterjee2020}, \(Y\) is completely separated relative to \(X\), but fails to be perfectly dependent on \(X\).

\item 
Now, consider the random variables $X$ and $Y$ with $\PP(Y = 1) = \PP(Y = 2) = 1/2$ and 
\begin{align*}
  \PP^{X \mid Y=1}   & = U([0,1])\,, 
  & \PP^{X \mid Y=2} & = U([1,2])\,.
\end{align*}
Then \(\PP^{X} = U([0,2])\), hence \(\alpha(X)=1\) and \(\Psi \big(\PP^{Y|X = x_1}, \PP^{Y|X = x_2} \big) = 1\) for almost all \(x_1 \in [0,1]\) and \(x_2 \in [1,2]\), and thus
\begin{align*}
  \Lambda(Y|X)
  & = \int_{[0,1] \times [1,2]} \big( 2 \, \Psi \big(\PP^{Y|X=x_1}, \PP^{Y|X=x_2}\big) - 1) \big)^2 \de (\PP^{X} \otimes \PP^{X}) (x_1,x_2) 
  \\
  & \qquad + \int_{[1,2] \times [0,1]} \big( 2 \, \Psi \big(\PP^{Y|X=x_1}, \PP^{Y|X=x_2}\big) - 1 \big)^2 \de (\PP^{X} \otimes \PP^{X}) (x_1,x_2) 
  \\
  & = \int_{[0,1] \times [1,2]} 1 \de (\PP^{X} \otimes \PP^{X}) (x_1,x_2) 
      + \int_{[1,2] \times [0,1]} 1 \de (\PP^{X} \otimes \PP^{X}) (x_1,x_2) 
  \\
  & = \frac{1}{2} < 1\,.
\end{align*}
Instead, Chatterjee's rank correlation \citep{chatterjee2020} equals \(\xi(Y,X) = 1\).
Thus, \(Y\) perfectly depends on \(X\), but, due to Theorem \ref{Thm:REM:Main},
$Y$ fails to be completely separated relative to \(X\).
\end{enumerate}
\begin{figure}[t!]
  \centering
  \vspace{-7mm}
  \includegraphics[scale=0.15]{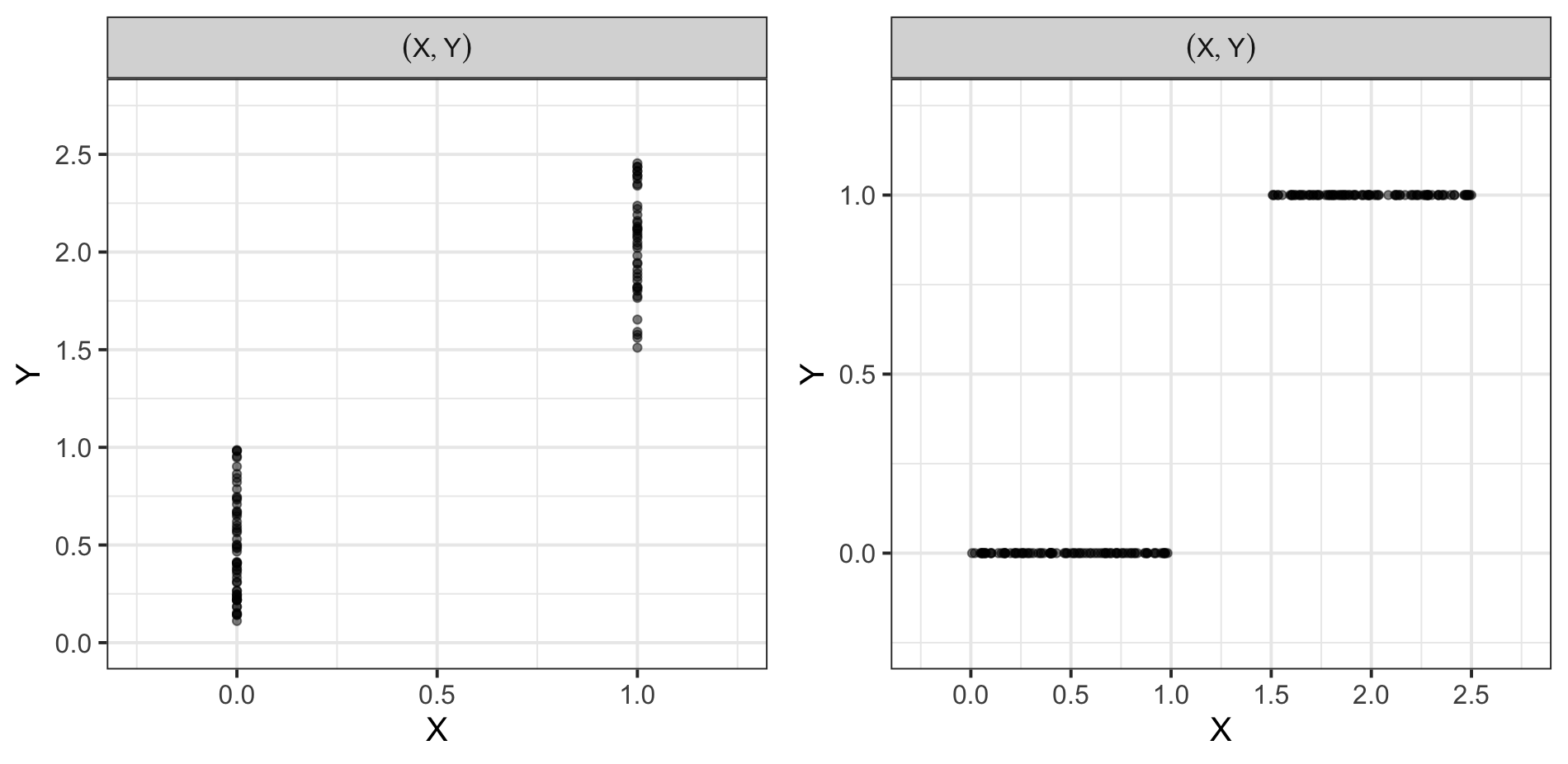}
  \caption{Scatterplots depicting the distributions used in Example \ref{Ex.REM.PS.vs.PD} for verifying that, in general, neither complete separation implies perfect dependence nor vice versa:
  In the left panel $Y$ is completely separated relative to $X$ but fails to be perfectly dependent on $X$.
  In the right panel, instead, $Y$ is perfectly dependent on $X$ but fails to be completely separated relative to $X$.}
  \label{Ex.REM.PS.vs.PD.Pic}
\end{figure}
\end{example}

A slightly more general version of Proposition \ref{Thm.REM:IGI_CI} not assuming continuity is given in Proposition \ref{Prop.REM:IGI_CI} below.

\begin{proposition}[Information gain inequality; conditional independence property]
\label{Prop.REM:IGI_CI}~~
\begin{enumerate}[(i)]
\item (Information gain inequality) 
If \(\alpha(\XX,\ZZ) = \alpha(\XX)\), then the inequality 
\begin{equation}\label{Prop.REM:IGI_CI:Eq1}
  \Lambda(Y|\XX) 
  \leq \Lambda(Y|(\XX,\ZZ))
\end{equation}
holds for all \(\XX,\ZZ\) and \(Y\).

\item (Conditional independence property)
If \(\alpha(\XX,\ZZ) = \alpha(\XX)\), then the identity 
\begin{equation}\label{Prop.REM:IGI_CI:Eq2}
  \Lambda(Y|\XX) 
  = \Lambda(Y|(\XX,\ZZ))
\end{equation}
holds for all \(\XX,\ZZ\) and \(Y\) such that $Y$ and $\ZZ$ are conditionally independent given $\XX$.
\end{enumerate}
\end{proposition}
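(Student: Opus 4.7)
The plan is to exploit the variance representation
\[
  \Lambda(Y|\XX)
  = \frac{4\,\Var\!\left(\E\!\left(\mathds{1}_{\{Y<Y^\ast\}}+\tfrac12\,\mathds{1}_{\{Y=Y^\ast\}}\,\big|\,\XX,\XX^\ast\right)\right)}{1-\PP(\XX=\XX^\ast)}
\]
stated in Remark \ref{Rem:REM:Main}(3), together with the law of total variance. Write $W:=\mathds{1}_{\{Y<Y^\ast\}}+\tfrac12\,\mathds{1}_{\{Y=Y^\ast\}}$ and denote by $(\XX^\ast,\ZZ^\ast,Y^\ast)$ an independent copy of $(\XX,\ZZ,Y)$. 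Define the two conditional expectations
\[
  g(\XX,\XX^\ast):=\E(W\mid\XX,\XX^\ast),\qquad h((\XX,\ZZ),(\XX^\ast,\ZZ^\ast)):=\E(W\mid\XX,\ZZ,\XX^\ast,\ZZ^\ast),
\]
so that by the independence of $(\XX,\ZZ,Y)$ and $(\XX^\ast,\ZZ^\ast,Y^\ast)$ together with Fubini, $g(\xx,\xx')=\Psi(\PP^{Y|\XX=\xx},\PP^{Y|\XX=\xx'})$ and analogously for $h$.

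For part (i), the key observation is the tower identity
\[
  \E\bigl(h((\XX,\ZZ),(\XX^\ast,\ZZ^\ast))\,\big|\,\XX,\XX^\ast\bigr)\;=\;g(\XX,\XX^\ast),
\]
which follows from nesting the two conditioning $\sigma$-algebras. The conditional variance formula then yields
\[
  \Var(h)\;=\;\Var(g)\;+\;\E\!\left(\Var\!\left(h\,\big|\,\XX,\XX^\ast\right)\right)\;\geq\;\Var(g).
\]
Multiplying by $4$ and dividing by $\alpha(\XX)=\alpha(\XX,\ZZ)$ gives the information gain inequality \eqref{Prop.REM:IGI_CI:Eq1}.

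For part (ii), assume $Y\perp\ZZ\mid\XX$. Since $(\XX^\ast,\ZZ^\ast,Y^\ast)$ is an i.i.d.\ copy, one also has $Y^\ast\perp\ZZ^\ast\mid\XX^\ast$, so that $\PP^{Y\mid\XX=\xx,\ZZ=\zz}=\PP^{Y\mid\XX=\xx}$ and $\PP^{Y^\ast\mid\XX^\ast=\xx',\ZZ^\ast=\zz'}=\PP^{Y^\ast\mid\XX^\ast=\xx'}$ on a set of full $\PP^{(\XX,\ZZ)}\otimes\PP^{(\XX^\ast,\ZZ^\ast)}$-measure. Combining this with the Fubini representation of $h$ yields $h((\xx,\zz),(\xx',\zz'))=g(\xx,\xx')$ almost surely, hence $\Var(h)=\Var(g)$, which together with the assumption $\alpha(\XX,\ZZ)=\alpha(\XX)$ gives the identity \eqref{Prop.REM:IGI_CI:Eq2}.

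The only delicate step is the tower identity for $\E(h\mid\XX,\XX^\ast)=g$, which requires handling regular conditional distributions across two independent factors; this is standard but needs to be justified, e.g.\ by first verifying the identity on the $\pi$-system of product indicator events $\mathds{1}_{\{Y\in A\}}\mathds{1}_{\{Y^\ast\in B\}}$ using independence of the two copies and then extending to $W$ by linearity and monotone class arguments.
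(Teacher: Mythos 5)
Your proposal is correct and is essentially the paper's own argument recast in probabilistic language: the tower identity $\E(h\mid\XX,\XX^\ast)=g$ is exactly the paper's disintegration of $\Psi(\PP^{Y|\XX=\xx_1},\PP^{Y|\XX=\xx_2})$ over $\PP^{\ZZ|\XX=\xx_1}\otimes\PP^{\ZZ|\XX=\xx_2}$, and your law-of-total-variance step is the paper's Jensen inequality applied to the squared integrand, with the hypothesis $\alpha(\XX,\ZZ)=\alpha(\XX)$ used in the same place to pass from the numerators to $\Lambda$ itself. The step you flag as delicate is just the standard tower property for the nested $\sigma$-algebras $\sigma(\XX,\XX^\ast)\subseteq\sigma(\XX,\ZZ,\XX^\ast,\ZZ^\ast)$ (plus the variance representation already recorded in Remark \ref{Rem:REM:Main}), so no extra monotone-class argument is really needed.
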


Recall that \(\alpha(\XX,\ZZ) = \alpha(\XX)\) is fulfilled whenever at least one coordinates of \(\XX\) has a continuous cdf.
However, if \(\alpha(\XX,\ZZ) \neq \alpha(\XX)\), then the information gain inequality in \eqref{Prop.REM:IGI_CI:Eq1} can fail, as the following example demonstrates.

\begin{example}\label{Ex.Prop.IGI}
We resume the situation given in Example \ref{Ex.REM.PS.vs.PD} (i) and add the random variable \(Z\) being continuous and independent of \((X,Y)\).
Then \(\alpha(X,Z) = 1 > 0.5 = \alpha(X)\) and hence 
\begin{align*}
  \Lambda(Y|X)
  & = \frac{0.5}{0.5} =  1 > 0.5 = \frac{0.5}{1} = \Lambda(Y|(X,Z))\,,
\end{align*}
i.e.~the information gain inequality in \eqref{Prop.REM:IGI_CI:Eq1} fails.
Since \(Z\) is independent of \((X,Y)\), we have that \(Y\) and \(Z\) are conditionally independent given \(X\) implying that also \eqref{Prop.REM:IGI_CI:Eq2} fails in this situation.
\end{example}

A slightly more general version of Corollary \ref{Cor.REM:DPI} not assuming continuity is presented in Corollary \ref{CorApp.REM:DPI} below.

\begin{corollary}[Data processing inequality; self-equitability]\label{CorApp.REM:DPI}~~
\begin{enumerate}
\item (Data processing inequality)
The inequality
\begin{align}\label{Eq.DPI}
  \Lambda(Y|{\bf h}(\XX))
  \leq \Lambda(Y|\XX)
\end{align}
holds for all $\XX,Y$ and all measurable functions ${\bf h}$ for which \(\alpha({\bf h}(\XX)) = \alpha(\XX)\).

\item (Self-equitability)
The identity
\begin{align}\label{Eq.SE}
  \Lambda(Y|{\bf h}(\XX)) = \Lambda(Y|\XX)
\end{align}
holds for all $\XX,Y$ and all measurable functions ${\bf h}$ for which \(\alpha({\bf h}(\XX)) = \alpha(\XX)\) and such that $Y$ and $\XX$ are conditionally independent given ${\bf h}(\XX)$. 
\end{enumerate}    
\end{corollary}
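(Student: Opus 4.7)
Both parts will be derived as direct consequences of Proposition \ref{Prop.REM:IGI_CI} by introducing the augmented predictor vector $(\mathbf{h}(\XX),\XX)$ and exploiting the fact that it carries the same information as $\XX$. The strategy is to build the chain
\begin{equation*}
\Lambda(Y|\mathbf{h}(\XX)) \;\leq\; \Lambda(Y|(\mathbf{h}(\XX),\XX)) \;=\; \Lambda(Y|\XX),
\end{equation*}
where the first step is supplied by the information gain inequality and the second by an invariance-under-bijection argument. The additional assumption in part (2) will collapse the inequality to an equality via the conditional independence property.

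First I would verify the $\alpha$-compatibility needed to apply Proposition \ref{Prop.REM:IGI_CI}. Since the map $\xx \mapsto (\mathbf{h}(\xx),\xx)$ is injective, an independent copy $\XX^\ast$ of $\XX$ satisfies $\XX = \XX^\ast$ if and only if $(\mathbf{h}(\XX),\XX) = (\mathbf{h}(\XX^\ast),\XX^\ast)$, and hence $\alpha((\mathbf{h}(\XX),\XX)) = \alpha(\XX)$. Combined with the standing hypothesis $\alpha(\mathbf{h}(\XX)) = \alpha(\XX)$, this yields
\begin{equation*}
\alpha((\mathbf{h}(\XX),\XX)) \;=\; \alpha(\mathbf{h}(\XX)),
\end{equation*}
which is precisely the compatibility required by Proposition \ref{Prop.REM:IGI_CI} when $\mathbf{h}(\XX)$ plays the role of the base predictor and $\XX$ plays the role of the augmenting vector. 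Analogously, because $\XX$ and $(\mathbf{h}(\XX),\XX)$ are measurable bijective functions of one another, the conditional distributions $\PP^{Y|\XX = \xx}$ and $\PP^{Y|(\mathbf{h}(\XX),\XX) = (\mathbf{h}(\xx),\xx)}$ agree for $\PP^{\XX}$-almost every $\xx$. Substituting this into the defining integral \eqref{Def:REM} and using $\alpha((\mathbf{h}(\XX),\XX)) = \alpha(\XX)$ (the natural extension of Proposition \ref{Prop.REM:InvB} to an injection changing the dimension) gives $\Lambda(Y|(\mathbf{h}(\XX),\XX)) = \Lambda(Y|\XX)$.

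For part (1), I would then apply Proposition \ref{Prop.REM:IGI_CI}(i) with base predictor $\mathbf{h}(\XX)$ and augmenting vector $\XX$ to obtain $\Lambda(Y|\mathbf{h}(\XX)) \leq \Lambda(Y|(\mathbf{h}(\XX),\XX))$, and combine with the previous identity to conclude \eqref{Eq.DPI}. For part (2), the additional assumption $Y \perp \XX \mid \mathbf{h}(\XX)$ matches exactly the conditional independence hypothesis of Proposition \ref{Prop.REM:IGI_CI}(ii), upgrading the inequality to equality and producing \eqref{Eq.SE}. No real obstacle is expected: the proof is essentially bookkeeping, and the only non-immediate point is the $\alpha$-compatibility, which follows from the injectivity of $\xx \mapsto (\mathbf{h}(\xx),\xx)$ together with the standing hypothesis.
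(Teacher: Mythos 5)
Your proof is correct and follows essentially the same route as the paper: both reduce the corollary to Proposition \ref{Prop.REM:IGI_CI} applied to the augmented predictor $({\bf h}(\XX),\XX)$, using that injectivity of $\xx\mapsto({\bf h}(\xx),\xx)$ gives $\alpha(({\bf h}(\XX),\XX))=\alpha(\XX)=\alpha({\bf h}(\XX))$. The only cosmetic difference is in the step $\Lambda(Y|({\bf h}(\XX),\XX))=\Lambda(Y|\XX)$, which you justify by a direct $\sigma$-algebra/invariance substitution into \eqref{Def:REM}, whereas the paper gets it from Proposition \ref{Prop.REM:IGI_CI}(ii) via the trivial conditional independence $Y\perp{\bf h}(\XX)\mid\XX$; both justifications are fine.
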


However, if \(\alpha({\bf h}(\XX)) \neq \alpha(\XX)\), then the data processing inequality \eqref{Eq.DPI} may fail which is immediate from Example \ref{Ex.Prop.IGI} by letting \(\XX = (X,Z)\) and \({\bf h}(\XX) = X\).

\section{Additional material for Sections \ref{Sec.Sim} and \ref{Sec.RealDataEx}}
\label{App.B}

\begin{figure}[h!]
    \centering
    \includegraphics[width=0.87\linewidth]{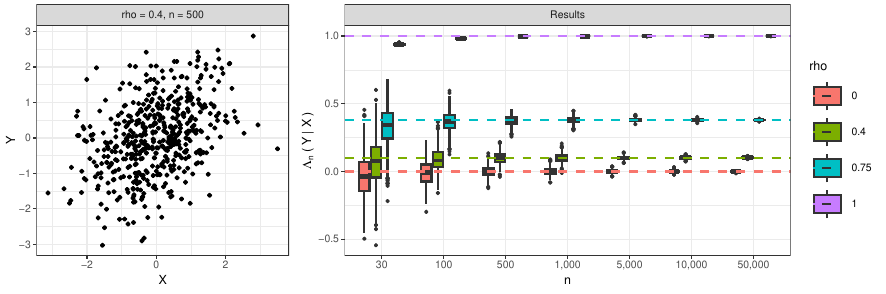}
    \caption{Results of Scenario 1 in Section \ref{Sec.Sim.Sc1&2}. An example dataset with $n = 500$ and $\rho = 0.4$ is given in the left panel. The right panel shows boxplots of the obtained estimates $\Lambda_n(Y|X)$. The dashed lines indicate the true values $\Lambda(Y|X)$.}
    \label{fig:sim_1}
\end{figure}
\begin{figure}[h!]
    \centering
    \includegraphics[width=0.87\linewidth]{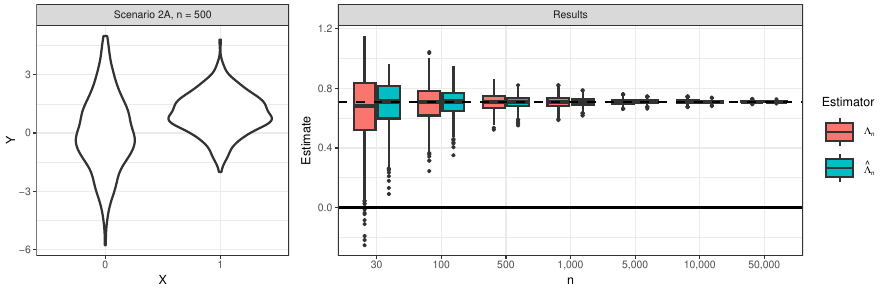}
    \includegraphics[width=0.87\linewidth]{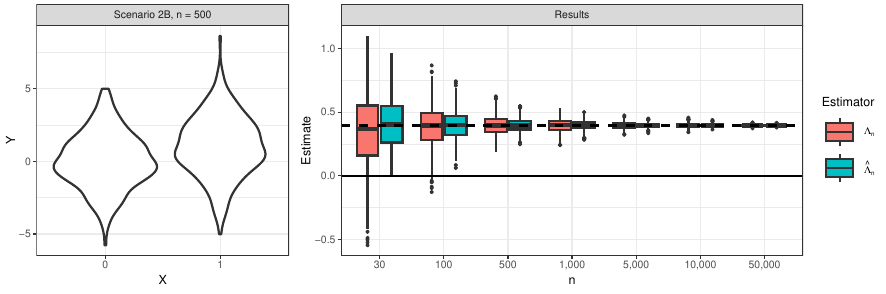}
    \caption{Results of Scenarios 2A (top) and 2B (bottom) in Section \ref{Sec.Sim.Sc1&2}. Example datasets with $n = 500$ are given in the panels on the left. The right panels show boxplots of the obtained estimates $\Lambda_n(Y|X)$ and $\widehat{\Lambda}_n(Y|X)$. The dashed lines indicate the true values $\Lambda(Y|X)$.}
    \label{fig:sim_2}
\end{figure}

\subsection{Multiple predictor variables} \label{Sec.Sim.Sc4}

{\bf Scenario 4:} 
The performance of $\Lambda_n(Y|\XX)$ for multivariate $\XX$ is now examined under the following subscenarios:
\begin{itemize}
    \item Subscenario A (Robustness): 
    We consider the random vector $\XX$ with $\PP(\XX=(k,l)) = 1/4$, $k,l \in \{1,2\}$, and draw an i.i.d.~sample $\XX_1, ..., \XX_n$ from $\XX$.
    Then, for $i \in \{1,\dots,n\}$, we sample $Y_i \sim \mathcal{U}(l(\XX_i), l(\XX_i) + 1) + \sigma \, \mathcal{U}(-1, 1)$ where $l(1, 1) = 0,\, l(1,2) = 1,\, l(2,1) = 2,\, l(2, 2) = 3$. $\sigma$ takes the values in $\{1, \frac{1}{2}, \frac{1}{4}, \frac{1}{8}, 0\}$.
    \item Subscenario B (Stochastic comparability): 
    We consider the independent random vector $(X,Z)$ with $Z \sim \mathcal{U}(0, 2 \pi)$ and $\PP(X=-1) = 1/2 = \PP(X=1)$ and draw i.i.d.~samples $X_1, ..., X_n$ from $X$ and $Z_1, ..., Z_n$ from $Z$.
    Then, for $i \in \{1,\dots,n\}$, we set $Y_{i} = X_{i} \cdot \cos{(X_{i} \cdot Z_{i})} + \epsilon_i$ where $\epsilon_i \overset{i.i.d}{\sim} \mathcal{U}(0, 1)$. 
\end{itemize}
The results of Scenarios 4A and 4B are depicted in Figures \ref{fig:tda1} and \ref{fig:tda2}.
Subscenario 4A covers the situation of complete separation and noise is added to the response variable. It can be clearly seen that the estimates converge quickly and continuously towards the true value $1$ as the noise parameter $\sigma$ decreases, which illustrates and confirms the theoretical findings in Corollary \ref{Cont.Cor.RobustY}.
Subscenario 4B presents a data set in which caution is required as the pairwise comparisons of $(X,Y)$ and $(Z,Y)$ suggest stochastic comparability, however, $Y$ exhibits a high degree of separation relative to $(X,Z)$. Therefore, the determination of the extent of separation of $Y$ relative to $(X,Z)$ cannot be replaced by the pairwise comparisons of $Y$ with the marginals $X$ and $Z$.
\begin{figure}[ht]
    \centering
    \includegraphics[width=1.00\linewidth]{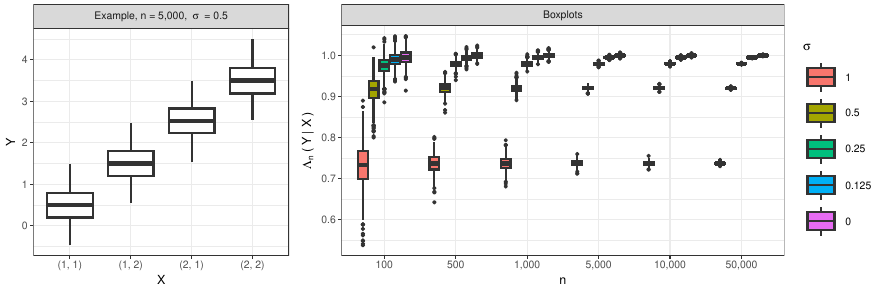}
    \caption{Results of Scenario 4A in Section \ref{Sec.Sim.Sc4}. An example dataset with $n = 5,000$ and noise parameter $\sigma = 0.5$ is given in the left panel. The right panel shows boxplots of the obtained estimates $\Lambda_n(Y|X)$ for varying noise parameter $\sigma \in \{0, 1/8, 1/4, 1/2, 1\}$ and varying sample size.}
    \label{fig:tda1}
\end{figure}

\bigskip
\begin{figure}[ht]
    \centering
    \includegraphics[width=1.00\linewidth]{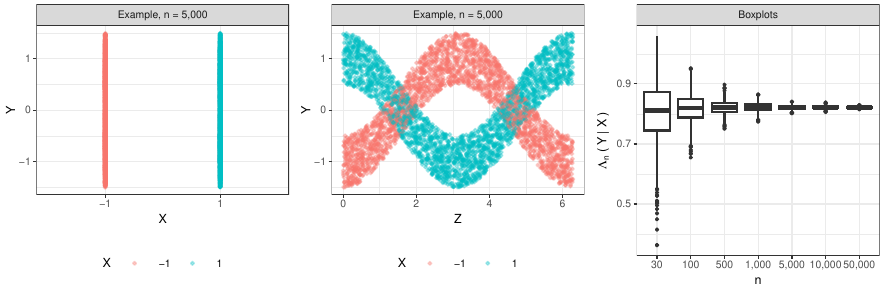}
    \caption{Results of Scenario 4B in Section \ref{Sec.Sim.Sc4}. The left panel shows a scatterplot of $Y$ against $X$, the middle panel of $Y$ against $Z$, both indicating stochastic comparability. The right plot depicts boxplots of the obtained estimates $\Lambda_n(Y|(X, Z))$ for varying sample size.}
    \label{fig:tda2}
\end{figure}

\subsection{Comparison with Chatterjee's correlation coefficient and \\ examination of heteroscedasticity} \label{App.B.Sc5}

{\bf Scenario 5:} 
Since $\Lambda(Y|\XX)$ quantifies the degree of separation of $Y$ relative to $\XX$, 
it captures mostly location effects and is insensitive to scale effects, in so far as they do not influence the extent of separation. 
To illustrate this we create data from four different models and compare the behavior of $\Lambda_n$ with that of Chatterjee's correlation coefficient $\xi_n$ \citep{chatterjee2021AoS}, which directly compares conditional distributions and is thus sensitive to any difference in distribution. 
In all three scenarios we draw an i.i.d.~sample $X_1, \dots, X_n$ from $X \sim \mathcal{U}(0, 1)$.:
\begin{itemize}
    \item Subscenario A: 
    We draw an i.i.d.~sample from $Y \sim \mathcal{U}{(0, 1)}$. 
    \item Subscenario B: 
    For $i \in \{1,\dots,n\}$, we sample $Y_i \sim \mathcal{U}(-|1 - 2X_i|, |1 - 2X_i|)$.
    \item Subscenario C: 
    For $i \in \{1,\dots,n\}$, we set $Y_i = R_{i} \cdot Z_{i}$, where $R_{1},...,R_{n} \overset{i.i.d}{\sim} R$ and $Z_{i} \sim \mathcal{U}(\frac{1}{2} (-X_i^2 + 2X_i)^{0.9}, (-X_i^2 + 2X_i)^{0.9})$.
    Here $R$ is the Rademacher distribution, that is, the discrete probability distribution which takes the values $-1$ and $1$ each with probability $\frac{1}{2}$ and is independent of $Z$.
    \item Subscenario D: Here we first generate data following a simple non-linear regression model. We then fit a linear, and thus misspecified, regression model to the data and determine $\Lambda_n(\epsilon|\XX)$, where $\epsilon_1,...,\epsilon_n$ are the residuals of the linear model.
\end{itemize}
In Subscenario A both coefficients are zero since $X$ and $Y$ are independent and hence neither a location nor a scale effect is present.
In Subscenarios B and C, instead, $Y$ is stochastically comparable relative to $X$, however, since a scale effect is visible the two variables are not independent and knowledge of the value of $X$ reveals information about the value of $Y$ resulting in a strictly positive value of Chatterjee's correlation coefficient.
\begin{figure}[h!]
    \centering
    \includegraphics[width=0.85\linewidth]{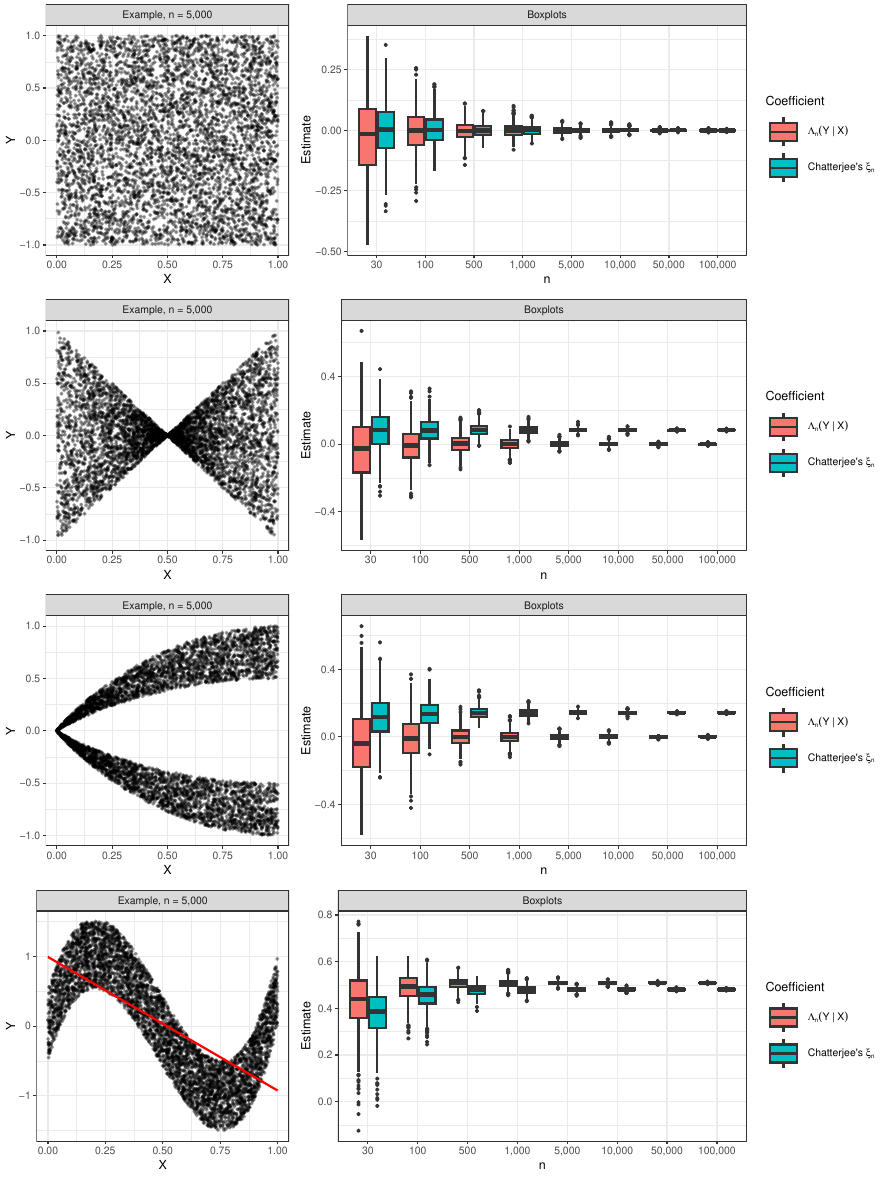}
    \caption{Results of Scenarios 5A (top) - 5D (bottom) in Section \ref{App.B.Sc5}. Example datasets with $n = 5,000$ are given in the left panel. The right panel shows boxplots of the obtained estimates $\Lambda_n(Y|X)$ and Chatterjee's correlation coefficient $\xi_n(X,Y)$ for varying sample sizes.}
    \label{fig:heteroscedasticity}
\end{figure}
Finally, in Subscenario D both coefficients are strictly positive, indicating heteroscedasticity in the residuals, with both location and scale effects.

\subsection{Variable selection} \label{App.Sim.VarSel}

Here we illustrate the use of $\Lambda(Y|\XX)$ as a method for variable selection. To this end we use a high resolution climatological dataset (CHELSA)~\cite{karger2017} containing several climate variables from different locations on the planet. We focus on annual precipitation  (AP\_R), which is our target variable, and the possible predictor variables mean temperature in the warmest (MTWaQ\_R), coldest (MTQC\_R), wettest (MTWeQ\_R) and driest (MTDQ\_R) quarter of the year, as well as the precipitation in the warmest (PWaQ\_R), coldest (PCQ\_R), wettest (PWeQ\_R) and driest (PDQ\_R) quarter of the year.

We conduct both a forward selection as well as a best subsample selection. In the forward selection, starting from an empty set of predictors, we successively add the variable which produces the highest $\Lambda_n(Y|\XX)$ when added to the already selected predictors. The procedure terminates when improvement of $\Lambda_n(Y|\XX)$ is no longer possible. In the best subsample selection we choose among all possible subsamples of the predictors the one which produces the highest $\Lambda_n(Y|\XX)$. Both methods choose exactly the four precipitation parameters with a $\Lambda_n(Y|\XX)$ of $0.9340$. The results of the forward selection are shown in more detail in Table~\ref{tab:forward}. The five best subsets from the best subset selection are shown in Table~\ref{tab:subset}.

\begin{table}[h!]
\label{tab:forward}
\caption{Results of the forward selection procedure in Subsection \ref{App.Sim.VarSel}. Including the precipitation variables successively improves $\Lambda_n(Y|\XX)$, whereas improvement is no longer possible afterwards. Thus the precipitation variables are selected and the temperature variables discarded.}
\bigskip
\centering
\begin{tabular}{llrl}
            \cmidrule{1-3}
            Step & Chosen variable & $\Lambda_n(Y|\XX)$\\
            \cmidrule{1-3}
            1 & PWeQ\_R & $0.8037$ & \multirow{4}{*}{\hspace{-1em}$\left.\begin{array}{l}
                \\
                \\
                \\
                \end{array}\right\rbrace \text{selected}$} \\
            2 & PDQ\_R & $0.9243$&\\
            3 & PCQ\_R & $0.9338$&\\
            4 & PWaQ\_R & $0.9340$\\
            5 & MTWaQ\_R & $0.9278$ & \multirow{4}{*}{\hspace{-1em}$\left.\begin{array}{l}
                \\
                \\
                \\
                \end{array}\right\rbrace \text{discarded}$} \\
            6 & MTCQ\_R & $0.9204$ & \\
            7 & MTWeQ\_R & $0.9156$ &\\
            8 & MTDQ\_R & $0.9066$\\
            \cmidrule{1-3}
        \end{tabular}
\end{table}

\begin{table}[h]
\label{tab:subset}
\caption{Results of the best subset selection in Subsection \ref{App.Sim.VarSel}. Here the five sets of predictors which produce the highest $\Lambda_n(Y|\XX)$ are shown.}
\bigskip
\centering
\begin{tabular}{llr}
\hline
Rank & Included Variables & $\Lambda_n(Y|\XX)$\\
\hline
$255$ & PWeQ\_R, PDQ\_R, PWaQ\_R, PCQ\_R & $0.9340$\\
$254$ & PWeQ\_R, PDQ\_R, PCQ\_R & $0.9338$\\
$253$ & PWeQ\_R, PDQ\_R, PCQ\_R, MTCQ\_R & $0.9286$\\
$252$ & PWeQ\_R, PDQ\_R, PWaQ\_R, PCQ\_R, MTWaQ\_R & $0.9278$\\
$251$ & PWeQ\_R, PDQ\_R, PWaQ\_R, PCQ\_R, MTWeQ\_R & $0.9278$\\
\hline
\end{tabular}
\end{table}

\begin{figure}[h!]
    \centering
    \includegraphics[width=\linewidth]{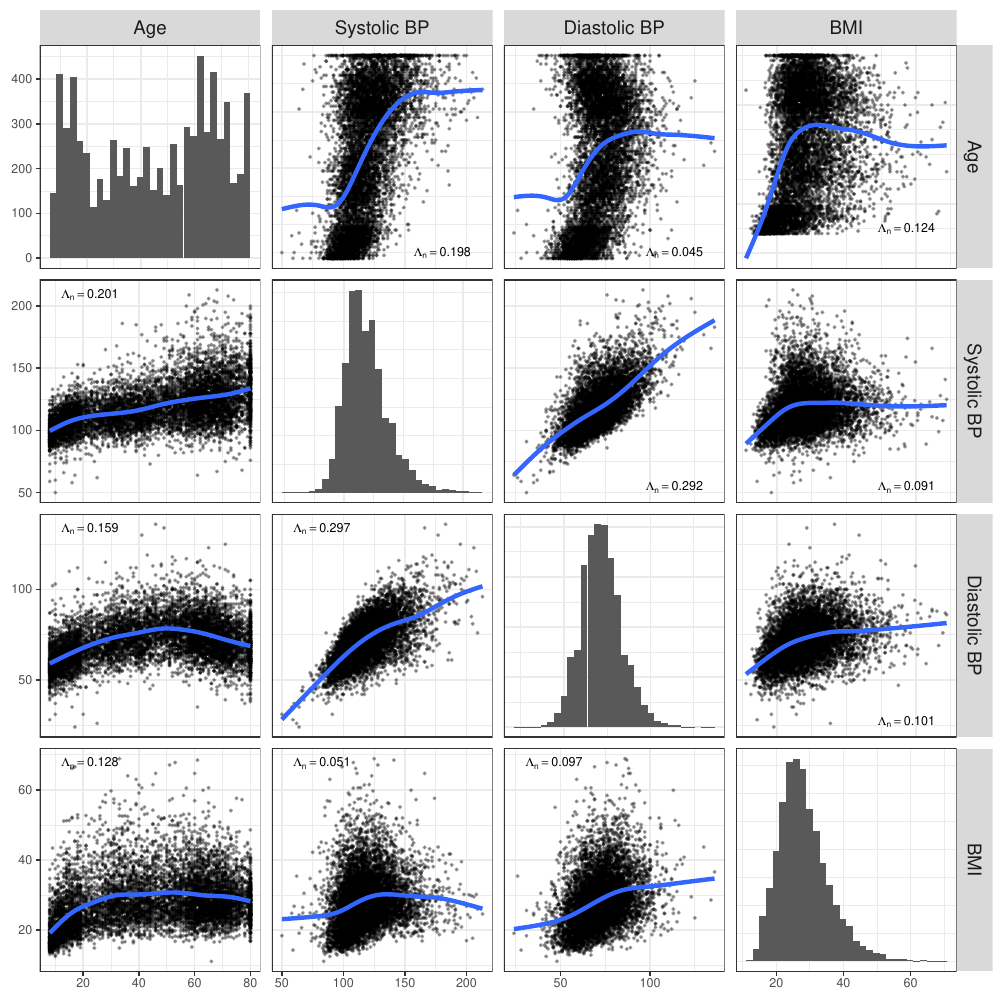}
    \caption{Scatterplots of the data used in Section \ref{Sec.RealDataEx} with loess regression line for every combination of the considered variables. In the diagonals histograms are shown. $\Lambda(Y|X)$ is provided in every scatterplot.}
    \label{fig:pairwise_plots}
\end{figure}

\clearpage
\section{Proofs from Section \ref{Sec:Coeff} and Appendix \ref{App.A}}
\label{Appendx.Proofs:I}

The order of proofs is determined by how they are used in relation to each other. \\

We first repeat some useful facts about the relative effect defined in \eqref{Def.Rel.Eff}.
For the random variable \(Y\) and two realizations \(\xx_1\) and \(\xx_2\) of the random vector \(\XX\), the relative effect of the conditional distributions \(\PP^{Y|\XX=\xx_1} \) and \(\PP^{Y|\XX=\xx_2}\) fulfills
\begin{align}\label{def_REFF}
  \Psi \big(\PP^{Y|\XX=\xx_1}, \PP^{Y|\XX=\xx_2}\big)  
  & = \int_{\R} \PP (Y < y|\XX=\xx_1) + \frac{1}{2} \; \PP (Y = y|\XX=\xx_1) \de \PP^{Y|\XX=\xx_2}(y) 
  \\
  &  = \int_{\R \times \R} \mathds{1}_{\{y_1 < y_2\}} + \frac{1}{2} \; \mathds{1}_{\{y_1 = y_2\}} \de (\PP^{Y|\XX=\xx_1} \otimes \PP^{Y|\XX=\xx_2})(y_1,y_2) \notag
  \\
  &  = \int_{\R \times \R} 1 - \mathds{1}_{\{y_2 < y_1\}} - \frac{1}{2} \; \mathds{1}_{\{y_2 = y_1\}} \de (\PP^{Y|\XX=\xx_2} \otimes \PP^{Y|\XX=\xx_1})(y_2,y_1) \notag
  \\
  &  =  1 - \Psi \big(\PP^{Y|\XX=\xx_2}, \PP^{Y|\XX=\xx_1}\big)\,, \notag
\end{align}
due to Fubini's theorem, and \(\Psi \big(\PP^{Y|\XX=\xx_1}, \PP^{Y|\XX=\xx_2}\big) \in [0,1]\).

\subsection{Proof of Theorem \ref{Thm:REM:Main}}

The following characterization of complete separation paves the way for its quantification.

\begin{lemma}\label{Lemma:Char.PerfSep}~~
Let \(\XX^\ast\) denote an independent copy of \(\XX\).
Then 
\begin{align} \label{Lemma:Char.PerfSep:Eq1}
  & \int_{\R^p \times \R^p} \big( \Psi \big(\PP^{Y|\XX=\xx_1}, \PP^{Y|\XX=\xx_2}\big) - \Psi \big(\PP^{Y|\XX=\xx_2}, \PP^{Y|\XX=\xx_1}\big) \big)^2 
  \de (\PP^\XX \otimes \PP^\XX) (\xx_1,\xx_2) \notag
  \\
  & \leq 1 - \PP(\XX = \XX^\ast)\,, 
\end{align}
with equality if and only if \(Y\) is completely separated relative to \(\XX\).
\end{lemma}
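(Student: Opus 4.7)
My plan is to reduce the integrand using the identity \(\Psi(\PP^{Y|\XX=\xx_1},\PP^{Y|\XX=\xx_2}) + \Psi(\PP^{Y|\XX=\xx_2},\PP^{Y|\XX=\xx_1}) = 1\) coming from \eqref{def_REFF}, which rewrites
\[
\Psi(\PP^{Y|\XX=\xx_1},\PP^{Y|\XX=\xx_2}) - \Psi(\PP^{Y|\XX=\xx_2},\PP^{Y|\XX=\xx_1}) = 2\,\Psi(\PP^{Y|\XX=\xx_1},\PP^{Y|\XX=\xx_2}) - 1,
\]
so the integrand equals \(\bigl(2\,\Psi(\PP^{Y|\XX=\xx_1},\PP^{Y|\XX=\xx_2})-1\bigr)^2\). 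Since \(\Psi\) takes values in \([0,1]\), this quantity is bounded by \(1\), which is the crux of the pointwise bound we need.

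Next, I would split the domain of integration along the diagonal \(D := \{(\xx_1,\xx_2) \in \R^p \times \R^p : \xx_1 = \xx_2\}\). On \(D\), symmetry of \(\Psi\) in its arguments forces \(\Psi(\PP^{Y|\XX=\xx_1},\PP^{Y|\XX=\xx_1}) = 1/2\), so the integrand vanishes; note this uses only the definition of \(\Psi\) applied to a probability measure paired with itself. On the complement \(D^c\), the pointwise bound \(\bigl(2\Psi-1\bigr)^2 \leq 1\) together with \((\PP^\XX \otimes \PP^\XX)(D^c) = 1 - \PP(\XX=\XX^\ast)\) yields
\[
\int_{\R^p\times\R^p} \bigl(2\Psi-1\bigr)^2 \, \de(\PP^\XX \otimes \PP^\XX) = \int_{D^c} \bigl(2\Psi-1\bigr)^2 \, \de(\PP^\XX \otimes \PP^\XX) \leq 1 - \PP(\XX=\XX^\ast),
\]
which is exactly \eqref{Lemma:Char.PerfSep:Eq1}.

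For the equality case, I would argue that equality holds if and only if \(\bigl(2\Psi(\PP^{Y|\XX=\xx_1},\PP^{Y|\XX=\xx_2})-1\bigr)^2 = 1\) for \((\PP^\XX \otimes \PP^\XX)\)-almost all \((\xx_1,\xx_2) \in D^c\), which is equivalent to \(\Psi(\PP^{Y|\XX=\xx_1},\PP^{Y|\XX=\xx_2}) \in \{0,1\}\) for almost all \((\xx_1,\xx_2)\) with \(\xx_1 \neq \xx_2\). By the definition of complete separation stated in (A\ref{REM:MainProp:3}), this is precisely the condition that \(Y\) is completely separated relative to \(\XX\).

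No real obstacle is anticipated here; the lemma is essentially a pointwise-bound-plus-measure-of-off-diagonal argument. The only subtlety to state carefully is that the off-diagonal set \(D^c\) may have full measure (when \(\XX\) has a continuous component), in which case \(1 - \PP(\XX=\XX^\ast) = 1\) and the bound recovers the trivial estimate; the proof proceeds uniformly in all cases.
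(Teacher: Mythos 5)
Your proof is correct and follows essentially the same route as the paper: bound the integrand pointwise by \(\mathds{1}_{\{\xx_1 \neq \xx_2\}}\) (it vanishes on the diagonal and is at most \(1\) off it, since \(\Psi \in [0,1]\) and \(\Psi_1 - \Psi_2 = 2\Psi_1 - 1\) by \eqref{def_REFF}), integrate to get \(1 - \PP(\XX = \XX^\ast)\), and characterize equality by the integrand being \(1\) almost everywhere off the diagonal, which is exactly (A\ref{REM:MainProp:3}). The paper merely prepends a disintegration/Jensen step (giving the trivial lower bound \(0\)) and phrases the diagonal vanishing via the trivial cancellation \(\Psi_1 - \Psi_2 = 0\) when \(\xx_1 = \xx_2\), but the substance is identical to yours.
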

\begin{proof} 
Due to disintegration and Jensen's inequality, we first obtain
\begin{align*}
\hspace{-15mm}
    0 
    &= \left( \int_{\mathbb{R}} \PP(Y < y) \de \PP^{Y}(y) - \int_{\mathbb{R}} \PP(Y < y) \de \PP^{Y}(y) \right)^2 
    \\
    &= \left( \int_{\R^p \times \R^p} \int_{\R} \PP(Y < y|\XX=\xx_1) \de \PP^{Y|\XX=\xx_2}(y) 
            - \int_{\R} \PP(Y < y|\XX=\xx_2) \de \PP^{Y|\XX=\xx_1}(y) 
              \de (\PP^\XX \otimes \PP^\XX)(\xx_1,\xx_2) \right)^2 
    \\
    &\leq \int_{\R^p \times \R^p} \left( \int_{\R} \PP(Y < y|\XX=\xx_1) \de \PP^{Y|\XX=\xx_2}(y) 
        - \int_{\R} \PP(Y < y|\XX=\xx_2) \de \PP^{Y|\XX=\xx_1}(y) \right)^2 
          \de (\PP^\XX \otimes \PP^\XX)(\xx_1,\xx_2)
    \\
    &  = \int_{\R^p \times \R^p} \big( \Psi \big(\PP^{Y|\XX=\xx_1}, \PP^{Y|\XX=\xx_2}\big) - \Psi \big(\PP^{Y|\XX=\xx_2}, \PP^{Y|\XX=\xx_1}\big) \big)^2 
      \de (\PP^\XX \otimes \PP^\XX) (\xx_1,\xx_2)
    \\
    &  = \int_{\R^p \times \R^p} \big( \Psi \big(\PP^{Y|\XX=\xx_1}, \PP^{Y|\XX=\xx_2}\big) - \Psi \big(\PP^{Y|\XX=\xx_2}, \PP^{Y|\XX=\xx_1}\big) \big)^2 \; \mathds{1}_{\{\xx_1 \neq \xx_2\}}
      \de (\PP^\XX \otimes \PP^\XX) (\xx_1,\xx_2)
    \\
    & \leq \int_{\R^p \times \R^p} \mathds{1}_{\{\xx_1 \neq \xx_2\}}
      \de (\PP^\XX \otimes \PP^\XX) (\xx_1,\xx_2) = 1 - \PP(\XX = \XX^\ast)\,,
\end{align*}
where we make use of $\int_{\R} \PP(Y = y|\XX=\xx_1) \de \PP^{Y|\XX=\xx_2}(y) = \int_{\R} \PP(Y = y|\XX=\xx_2) \de \PP^{Y|\XX=\xx_1}(y)$ in the third identity.
This implies \eqref{Lemma:Char.PerfSep:Eq1}. 
\\
We now prove the equivalence. Therefore, recall that \(Y\) is completely separated relative to \(\XX\) if and only if there exists some $G \subseteq \mathcal{B}(\R^p \times \R^p)$ with $(\PP^\XX \otimes \PP^\XX)(G) = 1$ such that identity 
\begin{align} \label{Def.Compl.Sep.III}
  \big( \Psi \big(\PP^{Y|\XX=\xx_1}, \PP^{Y|\XX=\xx_2}\big) - \Psi \big(\PP^{Y|\XX=\xx_2}, \PP^{Y|\XX=\xx_1}\big) \big)^2 
  & = \mathds{1}_{\{\xx_1 \neq \xx_2\}}
\end{align}
holds for all $(\xx_1,\xx_2) \in G$, due to the definition of complete separation in (A\ref{REM:MainProp:3}).
Thus, complete separation of \(Y\) relative to \(\XX\) implies
\begin{align*}
    & \int_{\R^p \times \R^p} \big( \Psi \big(\PP^{Y|\XX=\xx_1}, \PP^{Y|\XX=\xx_2}\big) - \Psi \big(\PP^{Y|\XX=\xx_2}, \PP^{Y|\XX=\xx_1}\big) \big)^2 
    \de (\PP^\XX \otimes \PP^\XX) (\xx_1,\xx_2)
    \\
    & = \int_{G} \mathds{1}_{\{\xx_1 \neq \xx_2\}}
    \de (\PP^\XX \otimes \PP^\XX) (\xx_1,\xx_2)
      = 1 - \PP(\XX = \XX^\ast)\,.
\end{align*}
Now, assume that equality in \eqref{Lemma:Char.PerfSep:Eq1} holds.
Then 
\begin{align*}
    & \int_{\R^p \times \R^p} \left| \mathds{1}_{\{\xx_1 \neq \xx_2\}} - \big( \Psi \big(\PP^{Y|\XX=\xx_1}, \PP^{Y|\XX=\xx_2}\big) - \Psi \big(\PP^{Y|\XX=\xx_2}, \PP^{Y|\XX=\xx_1}\big) \big)^2 \right| 
    \de (\PP^\XX \otimes \PP^\XX) (\xx_1,\xx_2)
    \\
    & = \int_{\R^p \times \R^p} \mathds{1}_{\{\xx_1 \neq \xx_2\}} - \big( \Psi \big(\PP^{Y|\XX=\xx_1}, \PP^{Y|\XX=\xx_2}\big) - \Psi \big(\PP^{Y|\XX=\xx_2}, \PP^{Y|\XX=\xx_1}\big) \big)^2 \, 
    \de (\PP^\XX \otimes \PP^\XX) (\xx_1,\xx_2)
    \\
    & = (1 - \PP(\XX = \XX^\ast)) 
    \\
    & \qquad - \int_{\R^p \times \R^p} \big( \Psi \big(\PP^{Y|\XX=\xx_1}, \PP^{Y|\XX=\xx_2}\big) - \Psi \big(\PP^{Y|\XX=\xx_2}, \PP^{Y|\XX=\xx_1}\big) \big)^2 
        \de (\PP^\XX \otimes \PP^\XX) (\xx_1,\xx_2) 
      = 0\,,
\end{align*}
hence there exists some $G \subseteq \mathcal{B}(\R^p \times \R^p)$ with $(\PP^\XX \otimes \PP^\XX)(G) = 1$ such that \eqref{Def.Compl.Sep.III} holds for all $(\xx_1,\xx_2) \in G$. 
This completes the proof.
\end{proof}

\begin{proof}[Proof of Theorem \ref{Thm:REM:Main}.]~~
The identities are straightforward to verify.
Assertions (A\ref{REM:MainProp:1}) and (A\ref{REM:MainProp:3}) are immediate from Lemma \ref{Lemma:Char.PerfSep}, and the equivalence in (A\ref{REM:MainProp:2}) is evident.
\end{proof}

\subsection{Proof of Theorem \ref{Thm:REM:Kendall}}

We first prove an intermediate step by showing that
\begin{align*}
  & \int_{\R^p \times \R^p}  
    \left(\int_{\R \times \R} \mathds{1}_{\{y_1 < y_2\}} \de (\PP^{Y|\XX=\xx_1} \otimes \PP^{Y|\XX=\xx_2})(y_1,y_2)\right)^2 
    \de (\PP^\XX \otimes \PP^\XX) (\xx_1,\xx_2)
  \\
  &  = \PP (Y_1 < Y_2, Y_1' < Y_2')\,,
\end{align*}
where \((Y_1,Y_1')\) and \((Y_2,Y_2')\) are independent copies of \((Y,Y')\). 
Because of $(Y|\XX) \stackrel{d}{=} (Y'|\XX)$ and  $Y \perp Y' \,|\, \XX$, 
change of coordinates implies
\begin{align*}
  & \int_{\R^p \times \R^p}  
    \left(\int_{\R \times \R} \mathds{1}_{\{y_1 < y_2\}} \de (\PP^{Y|\XX=\xx_1} \otimes \PP^{Y|\XX=\xx_2})(y_1,y_2)\right)^2 
    \de (\PP^\XX \otimes \PP^\XX) (\xx_1,\xx_2)
  \\
  & = \int_{\R^p \times \R^p}  
       \int_{\R \times \R} \mathds{1}_{\{y_1 < y_2\}} \de (\PP^{Y|\XX=\xx_1} \otimes \PP^{Y|\XX=\xx_2})(y_1,y_2)
  \\
  & \qquad \cdot \int_{\R \times \R} \mathds{1}_{\{y'_1 < y'_2\}} \de (\PP^{Y'|\XX=\xx_1} \otimes \PP^{Y'|\XX=\xx_2})(y'_1,y'_2)
  \de (\PP^\XX \otimes \PP^\XX) (\xx_1,\xx_2)
  \\
  &  = \int_{\R^p \times \R^p}  
       \int_{\R^2 \times \R^2} \mathds{1}_{\{y_1 < y_2\}} \mathds{1}_{\{y'_1 < y'_2\}}
       \de (\PP^{(Y,Y')|\XX=\xx_1} \otimes \PP^{(Y,Y')|\XX=\xx_2})((y_1,y_1'),(y_2,y_2'))
  \\
  & \qquad  \de (\PP^\XX \otimes \PP^\XX) (\xx_1,\xx_2)
  \\
  &  = \int_{\R^2 \times \R^2} \mathds{1}_{\{y_1 < y_2\}} \mathds{1}_{\{y'_1 < y'_2\}}
       \de (\PP^{(Y,Y')} \otimes \PP^{(Y,Y')})((y_1,y_1'),(y_2,y_2'))
  \\
  &  = \PP (Y_1 < Y_2, Y_1' < Y_2')\,.
\end{align*}
Analogous representations are obtained for
$\PP (Y_1 > Y_2, Y_1' < Y_2')$, $\PP (Y_1 < Y_2, Y_1' > Y_2')$ and $\PP (Y_1 > Y_2, Y_1' > Y_2')$.
\eqref{def_REFF} and change of coordinates finally yield
\begin{align*}
  & \int_{\R^p \times \R^p} \big( \Psi \big(\PP^{Y|\XX=\xx_1}, \PP^{Y|\XX=\xx_2}\big) - \Psi \big(\PP^{Y|\XX=\xx_2}, \PP^{Y|\XX=\xx_1}\big) \big)^2 \de (\PP^\XX \otimes \PP^\XX) (\xx_1,\xx_2)
  \\
  & = \int_{\R^p \times \R^p} \left( \int_{\R \times \R} \mathds{1}_{\{y_1 < y_2\}} - \mathds{1}_{\{y_1 > y_2\}} \de (\PP^{Y|\XX=\xx_1} \otimes \PP^{Y|\XX=\xx_2})(y_1,y_2) \right)^2 \de (\PP^\XX \otimes \PP^\XX) (\xx_1,\xx_2)
  \\
  & = \PP (Y_1 < Y_2, Y_1' < Y_2') - \PP (Y_1 < Y_2, Y_1' > Y_2') 
      - \PP (Y_1 > Y_2, Y_1' < Y_2') + \PP (Y_1 > Y_2, Y_1' > Y_2')
  \\
  & = \PP \big( (Y_1 - Y_2) (Y_1' - Y_2') > 0 \big) - \PP \big( (Y_1 - Y_2) (Y_1' - Y_2') < 0 \big)\,.
\end{align*}
This completes the proof.

\subsection{Proof of Proposition \ref{Prop.REM:InvB}}

The invariance of $\Lambda(\,\cdot\,|\XX)$ w.r.t. a bijective function ${\bold h}$ of $\XX$ follows from the definition of $\Lambda$ as the \(\sigma\)-algebras generated by \(\XX\) and \({\bf h}(\XX)\) coincide.
The invariance of $\Lambda(Y|\,\cdot\,)$ w.r.t. the strictly increasing function $g$ follows from the representation \eqref{Thm:REM:Kendall.Eq1} of \(\Lambda\) in Theorem \ref{Thm:REM:Kendall}.

\subsection{Proof of Proposition \ref{Prop:REM:Kendall}}

We follow the line of arguments used in \cite[Proof of Proposition 2.7]{ansari2023MFOCI}.

Assume that $\sigma_Y = 1$; otherwise, replace $Y$ by $Y/\sigma_Y$ and use scale invariance of $\Lambda$ according to Proposition \ref{Prop.REM:InvB}.
If $\Sigma_{21}$ is the null matrix, then $\rho = 0$ and \eqref{Prop:REM:Kendall:Eq} yields $\Lambda(Y|\XX) = 0$. This is the correct value since $\Sigma_{21}$ being the null matrix characterizes independence of $Y$ and $\XX$ in the multivariate normal model due to \cite[Proposition 2.8]{ansari2023MFOCI} which implies stochastic comparability of almost all conditional distributions according to Remark \ref{Rem:REM:Main2}.

If $\Sigma_{21}$ is not the null matrix, define the random variable $S:= A{\bf X}$ with $A := \frac{\Sigma_{21} \, \Sigma_{11}^{-1}}{\Sigma_{21} \, \Sigma_{11}^{-1} \, \Sigma_{12}}$. 
According to \cite[Proof of Proposition 2.7]{ansari2023MFOCI}, 
the pair $(S,Y)$ is bivariate normal with 
Pearson correlation \(\rho = \sqrt{\Sigma_{21} \Sigma_{11}^{-1} \Sigma_{12}}\,,\)
and \cite[Example 1]{fuchs2023JMVA} then implies that \((Y,Y')\) is bivariate normal as well, with correlation \(\rho^2\).
Finally, since \(\PP^{Y|\XX=\xx} = \PP^{Y|S=A\xx}\) for \(\PP^\XX\)-almost all $\xx \in \mathbb{R}^p$ due to \cite[Proof of Proposition 2.7]{ansari2023MFOCI}, it hence follows from Remark \ref{Rem:REM:Kendall} in combination with \cite[Example 6.7.2]{durante2016} that 
\(\Lambda(Y|\XX) = \Lambda(Y|S) =  \frac{2}{\pi} \, \arcsin(\rho^2)\).
This proves the result.

\subsection{Proof of Proposition \ref{Prop.REM:IGI_CI}}

We first show an intermediate step and prove a disintegration result for the relative effect in \eqref{Def.Rel.Eff}.
For every \(r\)-dimensional random vector \(\ZZ\) and \(\PP^{\XX} \otimes \PP^{\XX}\)-almost all \(\xx_1,\xx_2\), disintegration yields
\begin{align*}
  \lefteqn{\Psi \big(\PP^{Y|\XX=\xx_1}, \PP^{Y|\XX=\xx_2}\big)}
  \\
  & = \int_{\R \times \R} \mathds{1}_{\{y_1 < y_2\}} + \frac{1}{2} \; \mathds{1}_{\{y_1 = y_2\}} \de (\PP^{Y|\XX=\xx_1} \otimes \PP^{Y|\XX=\xx_2})(y_1,y_2)
  \\
  & = \int_{\R^r \times \R^r} \int_{\R \times \R} \mathds{1}_{\{y_1 < y_2\}} + \frac{1}{2} \; \mathds{1}_{\{y_1 = y_2\}} \de (\PP^{Y|\XX=\xx_1,\ZZ=\zz_1} \otimes \PP^{Y|\XX=\xx_2,\ZZ=\zz_2})(y_1,y_2) 
  \\
  & \qquad \de (\PP^{\ZZ|\XX=\xx_1} \otimes \PP^{\ZZ|\XX=\xx_2})(\zz_1,\zz_2)
  \\
  & = \int_{\R^r \times \R^r} \Psi \big(\PP^{Y|\XX=\xx_1,\ZZ=\zz_1}, \PP^{Y|\XX=\xx_2,\ZZ=\zz_2}\big) \de (\PP^{\ZZ|\XX=\xx_1} \otimes \PP^{\ZZ|\XX=\xx_2})(\zz_1,\zz_2)\,.
\end{align*}
Due to the representation shown in Theorem \ref{Thm:REM:Main} and inequality of Jensen, we then obtain 
\begin{align*}
  \lefteqn{(\alpha(\XX) \cdot \Lambda(Y|\XX) + 1) / 4}
  \\
  &   =  \int_{\R^p \times \R^p} \Psi \big(\PP^{Y|\XX=\xx_1}, \PP^{Y|\XX=\xx_2}\big)^2 \de (\PP^\XX \otimes \PP^\XX) (\xx_1,\xx_2)
  \\
  &   =  \int_{\R^p \times \R^p} \left( \int_{\R^r \times \R^r} \Psi \big(\PP^{Y|\XX=\xx_1,\ZZ=\zz_1}, \PP^{Y|\XX=\xx_2,\ZZ=\zz_2}\big) \de (\PP^{\ZZ|\XX=\xx_1} \otimes \PP^{\ZZ|\XX=\xx_2})(\zz_1,\zz_2) \right)^2 
  \\
  & \qquad \de (\PP^\XX \otimes \PP^\XX) (\xx_1,\xx_2) 
  \\
  & \leq \int_{\R^p \times \R^p} \int_{\R^r \times \R^r} \Psi \big(\PP^{Y|\XX=\xx_1,\ZZ=\zz_1}, \PP^{Y|\XX=\xx_2,\ZZ=\zz_2}\big)^2 \de (\PP^{\ZZ|\XX=\xx_1} \otimes \PP^{\ZZ|\XX=\xx_2})(\zz_1,\zz_2) \\
  & \qquad \de (\PP^\XX \otimes \PP^\XX) (\xx_1,\xx_2) 
  \\
  &   = \int_{\R^{p+r} \times \R^{p+r}} \Psi \big(\PP^{Y|\XX=\xx_1,\ZZ=\zz_1}, \PP^{Y|\XX=\xx_2,\ZZ=\zz_2}\big)^2 \de (\PP^{(\XX,\ZZ)} \otimes \PP^{(\XX,\ZZ)})((\xx_1,\zz_1),(\xx_2,\zz_2))
  \\
  &   = (\alpha(\XX,\ZZ) \cdot \Lambda(Y|(\XX,\ZZ)) + 1)/4\,.
\end{align*}
This proves the first assertion.
Now, assume that $Y$ and $\ZZ$ are conditionally independent given $\XX$.
Then
\begin{align*}
  \lefteqn{(\alpha(\XX,\ZZ) \cdot \Lambda(Y|(\XX,\ZZ)) + 1)/4}
  \\
  &   = \int_{\R^{p+r} \times \R^{p+r}} \Psi \big(\PP^{Y|\XX=\xx_1,\ZZ=\zz_1}, \PP^{Y|\XX=\xx_2,\ZZ=\zz_2}\big)^2 \de (\PP^{(\XX,\ZZ)} \otimes \PP^{(\XX,\ZZ)})((\xx_1,\zz_1),(\xx_2,\zz_2))
  \\
  &   = \int_{\R^{p+r} \times \R^{p+r}} \Psi \big(\PP^{Y|\XX=\xx_1}, \PP^{Y|\XX=\xx_2}\big)^2 \de (\PP^{(\XX,\ZZ)} \otimes \PP^{(\XX,\ZZ)})((\xx_1,\zz_1),(\xx_2,\zz_2)) 
  \\
  &   = \int_{\R^{p} \times \R^{p}} \Psi \big(\PP^{Y|\XX=\xx_1}, \PP^{Y|\XX=\xx_2}\big)^2 \de (\PP^{\XX} \otimes \PP^{\XX})(\xx_1,\xx_2) 
  \\
  &   = (\alpha(\XX) \cdot \Lambda(Y|\XX) + 1)/4
\end{align*}
which proves the second assertion.

\subsection{Proof of Corollary \ref{CorApp.REM:DPI}}

Let ${\bf h}$ be a measurable function for which \(\alpha({\bf h}(\XX)) = \alpha(\XX)\).
Then \(\alpha({\bf h}(\XX)) = \alpha(\XX) = \alpha(\XX,{\bf h}(\XX))\) and $Y$ and ${\bf h}(\XX)$ are conditionally independent given $\XX$. 
Proposition \ref{Prop.REM:IGI_CI} hence yields
\begin{align*}
  \Lambda(Y|{\bf h}(\XX)) 
  & \leq \Lambda(Y|(\XX,{\bf h}(\XX)))
      =  \Lambda(Y|\XX)\,.
\end{align*}
This proves the first assertion, and the second assertion is immediate from the data processing inequality in combination with Proposition \ref{Prop.REM:IGI_CI}.

\subsection{Proof of Proposition \ref{Prop.REM:DI}}

Recall that we denote by \(F_Z\) the cdf of a random variable \(Z\) and let \(F_Z^{-1}\) denote its left-continuous inverse.

We first prove the invariance of the normalizing constant.
Therefore, recall that every random variable \(Z\) fulfills \(F_Z^{-1} \circ F_Z \circ Z = Z\) almost surely which gives 
\begin{align*}
  \PP (F_Z (Z) = F_Z (Z^\ast))
  & \leq \PP (F_Z^{-1} \circ F_Z(Z) = F_Z^{-1} \circ F_Z(Z^\ast))
  \\
  &   =  \int_{\R} \PP (F_Z^{-1} \circ F_Z(Z) = z) \de \PP^{F_Z^{-1} \circ F_Z \circ Z^\ast}(z)
  \\
  &   =  \int_{\R} \PP (Z = z) \de \PP^{Z^\ast}(z)
  \\
  &   =  \PP (Z = Z^\ast)
  \\
  & \leq \PP (F_Z (Z) = F_Z (Z^\ast))\,,
\end{align*}
with \(Z^\ast\) being an independent copy of \(Z\), hence \(\PP (Z = Z^\ast) = \PP (F_Z (Z) = F_Z (Z^\ast))\).
Since the same result applies to random vectors as well, this implies \(\alpha ({\bf F}_\XX (\XX)) = \alpha(\XX)\).
\\
Now, we prove invariance in the predictor variable.
The above almost sure identity in combination with the data processing inequality \eqref{Eq.DPI} given in Corollary \ref{CorApp.REM:DPI} yields
\begin{align*}
  \Lambda \big(Y| {\bf F}_\XX (\XX) \big)
  & \leq \Lambda(Y|\XX)
  \\
  &   =  \Lambda(Y|(F_{X_1}^{-1} \circ F_{X_1} \circ X_1, \dots, F_{X_p}^{-1} \circ F_{X_p} \circ X_p))
  \\
  & \leq \Lambda \big(Y| {\bf F}_\XX (\XX) \big)\,.
\end{align*}

Finally, we prove invariance in the response by applying \eqref{Thm:REM:Kendall.Eq1} in Theorem \ref{Thm:REM:Kendall}.
As shown in the proof of Theorem \ref{Thm:REM:Kendall}, the numerator in \eqref{Thm:REM:Kendall.Eq1} can be decomposed into the four probabilities 
\begin{align}\label{Prop.DI.Proof1}
  &   \PP \big( (Y_1 - Y_2) (Y_1' - Y_2') > 0 \big) - \PP \big( (Y_1 - Y_2) (Y_1' - Y_2') < 0 \big)
  \\
  & = \PP \big( Y_1 < Y_2, Y_1' < Y_2' \big) 
      -\PP \big( Y_1 < Y_2, Y_1' > Y_2' \big)
      - \PP \big( Y_1 > Y_2, Y_1' < Y_2' \big) 
      + \PP \big( Y_1 > Y_2, Y_1' > Y_2' \big) \notag
  \\
  & = \PP \big( Y_1 \leq Y_2, Y_1' \leq Y_2' \big)
      - \PP \big( Y_1 \leq Y_2, Y_1' \geq Y_2' \big)
      - \PP \big( Y_1 \geq Y_2, Y_1' \leq Y_2' \big) 
      + \PP \big( Y_1 \geq Y_2, Y_1' \geq Y_2' \big)\,,\notag
\end{align}
so it is enough to show the identity for each of the above four probabilities.
Using again the fact that every random variable \(Z\) fulfills \(F_Z^{-1} \circ F_Z \circ Z = Z\) almost surely yields
\begin{align*}
  \PP (Y_1 \geq Y_2, Y_1' \geq Y_2')
  & \leq \PP \big( F_Y(Y_1) \geq F_Y(Y_2), Y_1' \geq Y_2' \big)
  \\
  & \leq \PP \big( (F_Y^{-1} \circ F_Y)(Y_1) \geq (F_Y^{-1} \circ F_Y)(Y_2), Y_1' \geq Y_2' \big)
  \\
  &   =  \PP (Y_1 \geq Y_2, Y_1' \geq Y_2')\,,
\end{align*}
where the inequalities follow from the fact that \(F_Y\) and \(F_Y^{-1}\) are nondecreasing.
Applying the same argument to all probabilities in \eqref{Prop.DI.Proof1} eventually gives \(\Lambda(Y|\XX) = \Lambda \big(F_Y(Y)| \XX \big)\).

\subsection{Proof of Theorem \ref{Thm1.REM.PS.vs.PD}}

The proof of Theorem \ref{Thm1.REM.PS.vs.PD} \eqref{Thm1.REM.PS.vs.PD:I4}, in which it is shown that complete separation implies perfect dependence, is a slight modification of \cite[Proof of Theorem 2.2 (iii)]{ansari2023DepM}.

We first prove \eqref{Thm1.REM.PS.vs.PD:I1} by constructing some random vector \((\XX^\circ,Y^\circ)\) with \(\XX^\circ \eqd \XX\) and \(Y^\circ \eqd Y\) such that \(Y^\circ = h(\XX^\circ)\) a.s. for some measurable function \(h\).
Therefore, let $F$ be the cdf of the continuous random variable $X_i$ and let $G$ be that of $Y$.
Then $F(X_i) \sim U[0,1]$. 
Now, define \(\XX^\circ := \XX\) and \(Y^\circ := h(\XX^\circ)\) where \(h(x_1,\dots,x_p) := (G^{-} \circ F) (x_i)\) with $G^{-}$ denoting the left-continuous inverse of $G$.
Then 
$$ 
  \PP(Y^\circ \leq y)
  = \PP((G^{-} \circ F) (X_i) \leq y)
  = \PP(F (X_i) \leq G(y))
  = G(y)
  = \PP(Y \leq y)
$$
from which assertion \eqref{Thm1.REM.PS.vs.PD:I1} immediately follows.

The maximum value for \(\Lambda(Y|X)\) within the Fr{\'e}chet class considered in Example \ref{Ex.REM.PS.vs.PD} (ii) is $1/2$ and is achieved for the joint distribution of $(X,Y)$ used there.
Assertion \eqref{Thm1.REM.PS.vs.PD:I2} hence is a consequence of Example \ref{Ex.REM.PS.vs.PD} (ii).

We now prove \eqref{Thm1.REM.PS.vs.PD:I3}. 
Therefore, assume that \(Y\) is completely separated relative to \(\XX\) from which \(\Lambda(Y|\XX) = 1\) follows by definition, and recall that \(\alpha(\XX)=1\).
Then, by Jensen's inequality 
\begin{align*} 
  \lefteqn{\Lambda(Y|\XX)} 
  \\
  &   =  \int_{\R^p \times \R^p} \big( \Psi \big(\PP^{Y|\XX=\xx_1}, \PP^{Y|\XX=\xx_2}\big) - \Psi \big(\PP^{Y|\XX=\xx_2}, \PP^{Y|\XX=\xx_1}\big) \big)^2 \de (\PP^\XX \otimes \PP^\XX) (\xx_1,\xx_2) 
  \\
  &   =  \int_{\R^p \times \R^p} \left(  \int_{\R \times \R} \mathds{1}_{\{y_1 < y_2\}} - \mathds{1}_{\{y_1 > y_2\}} \de (\PP^{Y|\XX=\xx_1} \otimes \PP^{Y|\XX=\xx_2})(y_1,y_2) \right)^2 \de (\PP^{\XX} \otimes \PP^{\XX}) (\xx_1,\xx_2) 
  \\
  & \leq \int_{\R^p \times \R^p} \int_{\R \times \R} \big( \mathds{1}_{\{y_1 < y_2\}} - \mathds{1}_{\{y_1 > y_2\}} \big)^2 \de (\PP^{Y|\XX=\xx_1} \otimes \PP^{Y|\XX=\xx_2})(y_1,y_2) \de (\PP^{\XX} \otimes \PP^{\XX}) (\xx_1,\xx_2) 
  \\
  &   =  \int_{\R \times \R}  \mathds{1}_{\{y_1 < y_2\}} + \mathds{1}_{\{y_1 > y_2\}} \de (\PP^{Y} \otimes \PP^{Y})(y_1,y_2) 
  \\
  &   =  \PP(Y < Y^\ast) + \PP(Y > Y^\ast) \leq 1 
\end{align*}
where \(Y^\ast\) denotes an independent copy of \(Y\).
For \(\Lambda(Y|\XX) = 1\) the above inequality becomes an equality so that
\(\PP(Y = Y^\ast) 
  = 1 - \PP(Y \neq Y^\ast) 
  = 0\) from which it follows that \(Y\) is continuous.

For proving the second part of \eqref{Thm1.REM.PS.vs.PD:I4} we again assume that \(Y\) is completely separated relative to \(\XX\).
Then there exists some Borel set \(G\subseteq \R^p \times \R^p\) with \((\PP^\XX\otimes \PP^\XX)(G)=1\) such that for \((\xx_1,\xx_2)\in G\) with \(\xx_1 \neq \xx_2\) and \(Z_k\sim \PP^{Y|\XX=\xx_k}\,,\) \(k\in \{1,2\}\,,\) \(Z_1\) and \(Z_2\) independent,
\begin{align}\label{cor_REM_Xcont:P1}
    \text{either} \quad Z_1\leq Z_2 ~\, \text{almost surely} 
    \quad \text{or} \quad Z_1\geq Z_2 ~\, \text{almost surely.}
\end{align}
Letting \(l(\xx)\) and \(u(\xx)\) denote the infimum and the supremum of the support of \(\PP^{Y|\XX=\xx}\,,\) i.e.
\begin{align}
    l(\xx):= \inf(\supp(\PP^{Y|\XX=\xx})) \leq \sup(\supp(\PP^{Y|\XX=\xx}))=:u(\xx)\,,
\end{align}
statement \eqref{cor_REM_Xcont:P1} hence implies that, for each \((\xx_1,\xx_2)\in G\) with \(\xx_1 \neq \xx_2\), the open intervals fulfill \((l(\xx_1),u(\xx_1))\cap (l(\xx_2),u(\xx_2)) = \emptyset\).
\\
For completing the proof, it suffices to demonstrate that the set 
\(V:=\{\xx\in \R^p \colon \PP^{Y|\XX=\xx} \text{ is non-degenerate}\}\) is a null set, i.e.~\(\PP^\XX(V)=0\), since it then follows that \(\PP^{Y|\XX=\xx}\) is degenerate for \(\PP^\XX\)-almost all \(\xx\in \R^p\), thus \(Y\) perfectly depends on \(\XX\).
\\
Let us assume, on the contrary, that \(\PP^\XX(V)>0\). 
Denoting by \(G_\xx:=\{\xx^\circ\in \R^p\colon (\xx,\xx^\circ)\in G\}\) the $\xx$-cut of $G$, defining $W:=\{\xx\in \R^p\colon \PP^\XX(G_\xx)=1\}$ and using disintegration first gives \(\PP^\XX(W)=1\), from which it then immediately follows that 
\begin{align}\label{cor_REM_Xcont:P3}
    \PP^\XX(V\cap W)=\PP^\XX(V)>0\,.
\end{align}
Now, let \(\xx_1\in V\cap W\) be arbitrary but fixed. 
Then, by construction, 
\begin{align}\label{cor_REM_Xcont:P2}
    (l(\xx_1),u(\xx_1))\cap (l(\xx_2),u(\xx_2)) = \emptyset
\end{align}
for all \(\xx_2\in \R^p\) with \((\xx_1,\xx_2)\in G\) such that \(\xx_2 \neq \xx_1\), 
so in particular for all \(\xx_2\in V\cap W \cap G_{\xx_1}\). 
Since \(\PP^\XX(G_{\xx_1})=1=\PP^\XX(W)\) it follows that \(0<\PP^\XX(V) = \PP^\XX(V\cap W \cap G_{\xx_1})\). 
Hence, \eqref{cor_REM_Xcont:P2} holds in particular for \(\PP^\XX\)-almost every \(\xx_2 \in V\cap W\). \\
Finally, let \(q_1,q_2,q_3,\ldots\) be an enumeration of the rational numbers in $\R$.
Then defining \(M_k:=\{\xx\in V\cap W \colon q_k\in (l(\xx),u(\xx))\}\) yields \(\bigcup_{k=1}^\infty M_k = V\cap W\) since the rationals are dense in $\R$. 
Using \eqref{cor_REM_Xcont:P3} as well as sub-\(\sigma\)-additivity yields
\begin{align}
    0 
    & < \PP^\XX(V\cap W)
      = \PP^\XX\Big(\bigcup_{k=1}^\infty M_k\Big) 
      \leq \sum_{k=1}^\infty \PP^\XX(M_k),
\end{align}
so there exists some \(k_0\in \N\) with \(\PP^\XX(M_{k_0})>0\).
The latter in combination with the assumption that $X_i$ is continuous, however, contradicts \eqref{cor_REM_Xcont:P2}, so \(\PP^\XX(V)>0\) can not hold, and 
the proof of \eqref{Thm1.REM.PS.vs.PD:I4} is complete.

Assertion \eqref{Thm1.REM.PS.vs.PD:I5} again is a consequence of Example \ref{Ex.REM.PS.vs.PD} (ii).

\subsection{Proof of Theorem \ref{Thm2.REM.PS.vs.PD}}

We first prove \eqref{Thm2.REM.PS.vs.PD:I1} by constructing some random vector \((\XX^\circ,Y^\circ)\) with \(\XX^\circ \eqd \XX\) and \(Y^\circ \eqd Y\) such that \(Y^\circ\) is completely separated relative to \(\XX^\circ\).
Therefore, let $F$ be the cdf of $X_1$, let $G$ be that of $Y$, and denote by \(\Phi: \R \times [0,1] \mapsto [0,1]\) the generalized probability integral transform given by
\begin{align*}
  \Phi(X_1,R)
  & := F(X_1-) + R \cdot (F(X_1) - F(X_1-))\,,
\end{align*}
where \(R \sim U[0,1]\) is a random variable independent of \(\XX\).
Then \(\Phi(X_1,R) \sim U[0,1]\) and \(G^{-}\) is strictly increasing.
Now, define \(\XX^\circ := \XX\) and \(Y^\circ:= (G^{-} \circ \Phi) (X_1,R)\) so that \(Y^\circ = h(X_1,R)\) with \(h = G^{-} \circ \Phi\).
Then \(h(x_1,r_1) > h(x_2,r_2)\) if and only if either \(x_1=x_2\) and \(r_1>r_2\) or \(x_1>x_2\),
and disintegration yields
\begin{align*}
  \lefteqn{\alpha(\XX^\circ) \, \Lambda(Y^\circ|\XX^\circ)}
  \\*
  & = \int_{\R^p \times \R^p} \big( \Psi \big(\PP^{Y^\circ|\XX^\circ=\xx_1}, \PP^{Y^\circ|\XX^\circ=\xx_2}\big) - \Psi \big(\PP^{Y^\circ|\XX^\circ=\xx_2}, \PP^{Y^\circ|\XX^\circ=\xx_1}\big) \big)^2 \de (\PP^{\XX^\circ} \otimes \PP^{\XX^\circ}) (\xx_1,\xx_2)
  \\
  & = \int_{\R^p \times \R^p} \left(  \int_{\R \times \R} \mathds{1}_{\{y_1 < y_2\}} - \mathds{1}_{\{y_1 > y_2\}} \de (\PP^{Y^\circ|\XX^\circ=\xx_1} \otimes \PP^{Y^\circ|\XX^\circ=\xx_2})(y_1,y_2) \right)^2 \de (\PP^{\XX^\circ} \otimes \PP^{\XX^\circ}) (\xx_1,\xx_2)
  \\
  & = \int_{\R^p \times \R^p} \biggl(  \int_{\R \times \R} \int_{\R \times \R}     \mathds{1}_{\{y_1 < y_2\}} - \mathds{1}_{\{y_1 > y_2\}} \de (\PP^{Y^\circ|\XX^\circ=\xx_1,R=r_1} \otimes \PP^{Y^\circ|\XX^\circ=\xx_2,R=r_2})(y_1,y_2)  
  \\
  & \qquad \de (\PP^{R|\XX^\circ=\xx_1} \otimes \PP^{R|\XX^\circ=\xx_2})(r_1,r_2) \biggr)^2 \de (\PP^{\XX^\circ} \otimes \PP^{\XX^\circ}) (\xx_1,\xx_2)
  \\
  & = \int_{\R^p \times \R^p} \mathds{1}_{\{\xx_1 \neq \xx_2\}} \left(  \int_{\R \times \R} \mathds{1}_{\{h(x_1,r_1) < h(x_2,r_2)\}} - \mathds{1}_{\{h(x_1,r_1) > h(x_2,r_2)\}} \de (\PP^R \otimes \PP^R)(r_1,r_2) \right)^2 
  \\
  & \qquad \de (\PP^{\XX^\circ} \otimes \PP^{\XX^\circ}) (\xx_1,\xx_2)
  \\
  & = \int_{\R^p \times \R^p} \mathds{1}_{\{\xx_1 \neq \xx_2\}} \left(  \int_{\R \times \R} \mathds{1}_{\{x_1 < x_2\}} - \mathds{1}_{\{x_1 > x_2\}} \de (\PP^R \otimes \PP^R)(r_1,r_2) \right)^2 \de (\PP^{\XX^\circ} \otimes \PP^{\XX^\circ}) (\xx_1,\xx_2)
  \\
  & = \int_{\R^p \times \R^p} \mathds{1}_{\{\xx_1 \neq \xx_2\}} \left( \mathds{1}_{\{x_1 < x_2\}} + \mathds{1}_{\{x_1 > x_2\}} \right) \de (\PP^{\XX^\circ} \otimes \PP^{\XX^\circ}) (\xx_1,\xx_2)
  \\
  & = \int_{\R^p \times \R^p} \mathds{1}_{\{\xx_1 \neq \xx_2\}} \de (\PP^{\XX^\circ} \otimes \PP^{\XX^\circ}) (\xx_1,\xx_2) = \alpha(\XX^\circ)\,.
\end{align*}
Thus, \(\Lambda(Y^\circ|\XX^\circ) = 1\) and it follows that \(Y^\circ\) is completely separated relative to \(\XX^\circ\). This proves \eqref{Thm2.REM.PS.vs.PD:I1}.

Assertion \eqref{Thm2.REM.PS.vs.PD:I2} is a consequence of Example \ref{Ex.REM.PS.vs.PD} (i) as within the considered Fr{\'e}chet class there is no function $h$ such that $Y = h(X)$ almost surely.

We now prove \eqref{Thm2.REM.PS.vs.PD:I3}.
Therefore, assume that \(Y\) perfectly depends on \(\XX\), i.e.~\(Y = h(\XX)\) almost surely. 
Then continuity of $Y$ yields
\(0 
  \leq \PP(\XX=\xx)
  \leq \PP(h(\XX)=h(\xx))
    =  \PP(Y=h(\xx))
    =  0\)
for all \(\xx \in \R^p\).
Thus, there exists some $i \in \{1,\dots,p\}$ such that $X_i$ is continuous.

For proving the second part of \eqref{Thm2.REM.PS.vs.PD:I4} we again assume that \(Y\) perfectly depends on \(\XX\) and show that \(\Lambda(Y|\XX) = 1\).
Since there exists some $i \in \{1,\dots,p\}$ such that $X_i$ is continuous we have \(\alpha(\XX)=1\), and change of coordinates in combination with $Y$ being continuous yields
\begin{align*}
  \lefteqn{\Lambda(Y|\XX)}
  \\
  & = \int_{\R^p \times \R^p} \big( \Psi \big(\PP^{Y|\XX=\xx_1}, \PP^{Y|\XX=\xx_2}\big) - \Psi \big(\PP^{Y|\XX=\xx_2}, \PP^{Y|\XX=\xx_1}\big) \big)^2 \de (\PP^\XX \otimes \PP^\XX) (\xx_1,\xx_2) \notag
  \\
  & = \int_{\R^p \times \R^p} \left( \int_{\R \times \R} \mathds{1}_{\{y_1 < y_2\}} - \mathds{1}_{\{y_1 > y_2\}} \de (\PP^{Y|\XX=\xx_1} \otimes \PP^{Y|\XX=\xx_2})(y_1,y_2) \right)^2 \de (\PP^{\XX} \otimes \PP^{\XX}) (\xx_1,\xx_2)
  \\
  & = \int_{\R^p \times \R^p} \left( \mathds{1}_{\{h(\xx_1) < h(\xx_2)\}} - \mathds{1}_{\{h(\xx_1) > h(\xx_2)\}} \right)^2 \de (\PP^{\XX} \otimes \PP^{\XX}) (\xx_1,\xx_2)
  \\
  & = \int_{\R^p \times \R^p} \mathds{1}_{\{h(\xx_1) < h(\xx_2)\}} + \mathds{1}_{\{h(\xx_1) > h(\xx_2)\}} \de (\PP^{\XX} \otimes \PP^{\XX}) (\xx_1,\xx_2)
  \\
  & = \int_{\R \times \R} \mathds{1}_{\{y_1 < y_2\}} + \mathds{1}_{\{y_1 > y_2\}} \de (\PP^{Y} \otimes \PP^{Y}) (y_1,y_2) = 1
\end{align*}
from which it follows that \(Y\) is completely separated relative to \(\XX\).

Assertion \eqref{Thm2.REM.PS.vs.PD:I5} is again a consequence of Example \ref{Ex.REM.PS.vs.PD} (i).

\section{Proofs from Section \ref{Sec:Cont}}
\label{Appendx.Proofs:II}

\subsection{Proof of Proposition \ref{Cont.Prop.NWC}}

Since $(X,Y)$ has independent and continuous marginal cdfs, its corresponding copula is the independence copula $\Pi$ given by $\Pi(u,v)=uv$.
\\
Now, let $\lambda_0 \in [0,1]$ and consider the sequence $(C_n)_{n \in \N}$ of shuffles of Min used in \cite[Theorem 2.1]{buecher2024}, \cite[Theorem 5.2.10]{durante2016} and going back to \citep{mikusinski1992} converging uniformly to $\Pi$ while each $C_n$ is perfectly dependent.
For $n \in \N$, define further the random vector $(X_n,Y_n)$ having the same marginal cdfs than $(X,Y)$ but connecting copula $A_n := \lambda_0 \, \Pi + (1-\lambda_0) C_n$.
Then, the sequence $(X_n,Y_n)_{n \in \N}$ weakly converges to the vector $(X,Y)$.
\\
For evaluating $\Lambda(Y_n|X_n)$ we make use of Remark \ref{Rem:REM:Kendall} stating that $\Lambda(Y_n|X_n) = \tau(\psi(A_n))$ where $\psi$ is the mapping introduced in \citep{fuchs2023JMVA} that transforms every bivariate copula $A$ into its Markov product $A^T \ast A$ with $A^T (u,v) := A(v,u)$.
According to the calculation rules for the Markov product (see, e.g., \citep{durante2016})
\begin{align*}
  & A_n^T \ast A_n
  \\
  & = \lambda_0^2 \, (\Pi^T \ast \Pi)
      + \lambda_0\,(1-\lambda_0) \, (\Pi^T \ast C_n)
      + (1-\lambda_0)\, \lambda_0 \, (C_n^T \ast \Pi)
      + (1-\lambda_0)^2 \, (C_n^T \ast C_n)
  \\
  & = \lambda_0^2 \, \Pi
      + \lambda_0\,(1-\lambda_0) \, \Pi
      + (1-\lambda_0)\, \lambda_0 \, \Pi
      + (1-\lambda_0)^2 \, M
  \\
  & = (1-\beta) \, \Pi + \beta M\,,
\end{align*}
where $\beta = (1-\lambda_0)^2$ and where the second identity follows from the fact that $\Pi$ is a null element of the collection of all copulas equipped with the Markov product and the Markov product $A^T \ast A$ of a perfectly dependent copula $A$ equals the comonotonicity copula $M$ given by $M(u,v) := \min\{u,v\}$.
Due to the symmetry of the copula integral (see, e.g., \citep{fuchs2016}) we eventually obtain
\begin{align*}
  \Lambda(Y_n|X_n) 
  & = \tau \big( (1-\beta) \, \Pi + \beta \,  M \big)
  \\
  & = 4 \, \left( (1-\beta)^2 \int \Pi \de \mu_\Pi + 2 \, \beta \, (1-\beta) \int \Pi \de \mu_M + \beta^2 \int M \de \mu_M \right) - 1
  \\
  & = 4 \, \left( (1-\beta)^2 \frac{1}{4} + 2 \, \beta \, (1-\beta) \frac{1}{3} + \beta^2 \frac{1}{2} \right) - 1
  \\
  & = \frac{(1-\lambda_0)^2}{3} \, \left( 2 + (1-\lambda_0)^2 \right)\,.
\end{align*}
Setting $\lambda := \frac{(1-\lambda_0)^2}{3} \, \left( 2 + (1-\lambda_0)^2 \right) \in [0,1]$ completes the proof.

\bigskip
For a cdf \(F\), 
denote by $F^{-}$ its left-continuous version, i.e.~$F^{-}(x) := \lim_{z \uparrow x} F(z)$ for all $z \in \mathbb{R}$.

\subsection{Proof of Proposition \ref{Cont.Lemma.1C:1}}

The following result addresses the univariate case.

\begin{lemma} \label{Cont.Lemma.1A}
Consider the random variable \(X\) and a sequence of random variables \((X_n)_{n \in \mathbb{N}}\). 
If \(F_{X_n} \circ F_{X_n}^{-1} (t) \to F_{X} \circ F_{X}^{-1}(t)\) for \(\lambda\)-almost all \(t \in (0,1)\),
then 
\begin{align*}
    \PP(X_n=X_n^\ast) \xrightarrow{} \PP(X=X^\ast)\,,
\end{align*}
where \(X^\ast\) and \(X_n^\ast\) denote independent copies of \(X\) and \(X_n\), respectively.
\end{lemma}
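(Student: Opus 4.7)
The plan is to reduce the claim to an elementary dominated-convergence argument via the integral identity
\begin{equation*}
  \PP(Z = Z^\ast) \;=\; 2\int_0^1 \bigl( F_Z \circ F_Z^{-1}(t) - t \bigr) \, \de t
\end{equation*}
valid for any random variable $Z$ (with $Z^\ast$ an independent copy). Once this is established, the conclusion is immediate: the integrands $F_{X_n} \circ F_{X_n}^{-1}(t) - t$ take values in $[0,1]$, they converge to $F_X \circ F_X^{-1}(t) - t$ for $\lambda$-a.e.\ $t \in (0,1)$ by hypothesis, and dominated convergence yields $\PP(X_n = X_n^\ast) \to \PP(X = X^\ast)$.

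The main work is thus verifying the integral identity. First I would recall the standard fact that $\PP(Z = Z^\ast) = \sum_{x} (F_Z(x) - F_Z(x-))^2$, with the sum running over atoms of $F_Z$. Next, I would analyze the set $\{t \in (0,1) : F_Z \circ F_Z^{-1}(t) > t\}$: the pseudoinverse $F_Z^{-1}$ is constant and equal to $x$ on the interval $(F_Z(x-), F_Z(x)]$ for each atom $x$, while $F_Z \circ F_Z^{-1}(t) = t$ otherwise (outside a Lebesgue-null set). Thus
\begin{equation*}
  \{t : F_Z \circ F_Z^{-1}(t) > t\} \;=\; \bigcup_{x \text{ atom}} (F_Z(x-), F_Z(x)),
\end{equation*}
the union being disjoint. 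On each piece, $F_Z \circ F_Z^{-1}(t) - t = F_Z(x) - t$, whose integral over $(F_Z(x-), F_Z(x))$ evaluates to $\tfrac{1}{2}(F_Z(x) - F_Z(x-))^2 = \tfrac{1}{2}(\PP(Z = x))^2$. Summing over atoms and invoking the above representation of $\PP(Z = Z^\ast)$ gives the identity.

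Plugging this identity into both $\PP(X_n = X_n^\ast)$ and $\PP(X = X^\ast)$ and applying dominated convergence (using the uniform bound $0 \leq F \circ F^{-1}(t) - t \leq 1$ on $(0,1)$) completes the argument. I expect the only delicate step to be the careful bookkeeping around the atom intervals and the use of the left-continuous pseudoinverse convention $F_Z^{-1}(t) = \inf\{x : F_Z(x) \geq t\}$; once that is pinned down, everything else is routine.
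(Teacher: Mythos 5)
Your proof is correct, and it takes a slightly different route from the paper's. The paper starts from the representation $\PP(X_n=X_n^\ast)=\int_{(0,1)}\bigl(F_{X_n}\circ F_{X_n}^{-1}(t)-F_{X_n}^{-}\circ F_{X_n}^{-1}(t)\bigr)\de\lambda(t)$, so its integrand involves the left-continuous versions $F_{X_n}^{-}\circ F_{X_n}^{-1}$; to pass to the limit it therefore needs the auxiliary fact (imported from \cite[Lemma 2.17(ii)]{Ansari-2021}) that convergence of $F_{X_n}\circ F_{X_n}^{-1}$ forces convergence of $F_{X_n}^{-}\circ F_{X_n}^{-1}$ as well, and then applies dominated convergence. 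Your identity $\PP(Z=Z^\ast)=2\int_0^1\bigl(F_Z\circ F_Z^{-1}(t)-t\bigr)\de t$ sidesteps this entirely: the integrand depends only on the quantity assumed to converge, so dominated convergence (with the bound $1$) finishes the argument with no reference to left-continuous versions or external lemmas. Your verification of the identity is sound: $F_Z^{-1}\equiv x$ on $(F_Z(x-),F_Z(x)]$ for each atom $x$, the atom intervals are pairwise disjoint, $F_Z\circ F_Z^{-1}(t)=t$ off their union except on a countable (hence null) set of left endpoints, and integrating $F_Z(x)-t$ over each interval gives $\tfrac12\PP(Z=x)^2$, which sums to $\tfrac12\PP(Z=Z^\ast)$. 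Note that your integrand $2\bigl(F_Z\circ F_Z^{-1}(t)-t\bigr)$ and the paper's $F_Z\circ F_Z^{-1}(t)-F_Z^{-}\circ F_Z^{-1}(t)$ differ pointwise but have the same integral over each atom interval, which explains why both routes land on the same quantity; yours buys self-containedness at the cost of the short bookkeeping argument for the identity, while the paper's is shorter given the cited lemma.
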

\begin{proof}
Since \(F_{X_n} \circ F_{X_n}^{-1} (t) \to F_{X} \circ F_{X}^{-1}(t)\) for \(\lambda\)-almost all \(t \in (0,1)\) by assumption, 
we obtain convergence also for the left-continuous versions, i.e.~$F_{X_n}^- \circ F_{X_n}^{-1} (t) \to F_X^-\circ F_X^{-1}(t)$ for $\lambda$-almost all $t\in (0,1)$; see, \cite[Lemma 2.17.(ii)]{Ansari-2021}.
Dominated convergence then yields
\begin{align*}
  \PP(X_n= X_n^\ast)
  & = \int_{\mathbb{R}} F_{X_n} (x) - F^{-}_{X_n} (x) \de \PP^{X_n}(x)
  \\
  & = \int_{(0,1)} F_{X_n}(F_{X_n}^{-1}(t)) - F^{-}_{X_n} (F_{X_n}^{-1}(t)) \de \lambda(t)
  \\
  & \xrightarrow{} \int_{(0,1)} F_X(F_X^{-1}(t)) - F_X^{-}(F_X^{-1}(t)) \de \lambda(t)
    = \PP(X=X^\ast)\,.
\end{align*}
This proves the assertion.
\end{proof}

Proposition \ref{Cont.Lemma.1C:1} is a straightforward extension of Lemma \ref{Cont.Lemma.1A} to an arbitrary number of predictor variables requiring the presence of at least one continuous marginal distribution function.
Its proof is immediate from
\begin{align*}
    \left| \PP(\XX_n=\XX_n^\ast) - \PP(\XX=\XX^\ast)\right|
    &   =  \PP(\XX_n=\XX_n^\ast)
      \leq \PP(X_{k,n} = X_{k,n}^\ast)
      \xrightarrow{} 0
\end{align*}
where convergence follows from Lemma \ref{Cont.Lemma.1A}.

\subsection{Proof of Proposition \ref{Cont.Lemma.1C}}

We shall need the following lemma, which addresses the univariate case.

\begin{lemma} \label{Cont.Lemma.1B}
Consider the random variable \(X\) and a sequence of random variables \((X_n)_{n \in \mathbb{N}}\) with \(X_n \xrightarrow{d} X\) and \(F_{X_n} \circ F_{X_n}^{-1} (t) \to F_{X} \circ F_{X}^{-1}(t)\) for \(\lambda\)-almost all \(t \in (0,1)\).
Then, for each $x \in \mathbb{R}$ with \(\PP(X=x) > 0\),
there exists a sequence \(x_n \xrightarrow{} x\) such that \(\PP(X_n=x_n) \xrightarrow{} \PP(X=x)\).
\end{lemma}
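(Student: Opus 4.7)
The plan is to select $x_n := F_{X_n}^{-1}(t_0)$ for a carefully chosen $t_0$ lying in the jump interval of $F_X$ at $x$. Writing $p := \PP(X=x) > 0$, the cdf $F_X$ jumps by $p$ at $x$, so the interval $I := (F_X^-(x), F_X(x)) \subseteq (0,1)$ has Lebesgue measure $p$ and, for every $t \in I$, simultaneously $F_X^{-1}(t) = x$, $F_X \circ F_X^{-1}(t) = F_X(x)$, and $F_X^- \circ F_X^{-1}(t) = F_X^-(x)$. With this choice of $x_n$, the atom mass reads $\PP(X_n = x_n) = F_{X_n}(x_n) - F_{X_n}^-(x_n)$, so the task reduces to securing three simultaneous convergences on a co-null subset of $I$:
\begin{align*}
  F_{X_n}^{-1}(t) \to x, \quad F_{X_n} \circ F_{X_n}^{-1}(t) \to F_X(x), \quad F_{X_n}^- \circ F_{X_n}^{-1}(t) \to F_X^-(x).
\end{align*}

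I would verify each of the three convergences on a subset of $I$ of full Lebesgue measure and then extract $t_0$ from the common intersection. The first convergence follows from $X_n \xrightarrow{d} X$, which yields pointwise convergence of the quantile functions at every continuity point of $F_X^{-1}$; since $F_X^{-1}$ has only countably many discontinuities, this holds for $\lambda$-almost every $t \in I$. The second convergence is the hypothesis itself. For the third, I would transfer the hypothesis to left-continuous versions exactly as in the proof of Lemma \ref{Cont.Lemma.1A}, invoking \cite[Lemma 2.17(ii)]{Ansari-2021} to conclude $F_{X_n}^- \circ F_{X_n}^{-1}(t) \to F_X^- \circ F_X^{-1}(t)$ for $\lambda$-a.e.\ $t$.

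Having three co-null subsets of $I$ guarantees that their intersection still has positive measure, hence is non-empty. Any $t_0$ therein yields a sequence $x_n := F_{X_n}^{-1}(t_0) \to x$ with $\PP(X_n = x_n) = F_{X_n}(x_n) - F_{X_n}^-(x_n) \to F_X(x) - F_X^-(x) = \PP(X=x)$, which is the desired conclusion. The only mild subtlety is combining three $\lambda$-almost-everywhere statements into a single pointwise one, but this is automatic from $\lambda(I) = p > 0$; all remaining ingredients are either hypothesis, standard quantile facts, or already exploited in the preceding lemma.
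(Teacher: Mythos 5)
Your proposal is correct and follows essentially the same route as the paper: pick $t$ in the jump interval $(F_X^-(x),F_X(x))$ on which the a.e.\ convergences of $F_{X_n}\circ F_{X_n}^{-1}$ and (via the left-continuous transfer from \cite[Lemma 2.17(ii)]{Ansari-2021}) $F_{X_n}^-\circ F_{X_n}^{-1}$ hold, set $x_n:=F_{X_n}^{-1}(t)$, and use quantile convergence at continuity points of $F_X^{-1}$ to get $x_n\to x$ while the atom masses $F_{X_n}(x_n)-F_{X_n}^-(x_n)$ converge to $\PP(X=x)$. The only cosmetic difference is that every $t$ in the open jump interval is automatically a continuity point of $F_X^{-1}$ (it is constant there), so your countability argument for that step is slightly more than needed.
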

\begin{proof}
Since \(F_{X_n} \circ F_{X_n}^{-1} (t) \to F_{X} \circ F_{X}^{-1}(t)\) for \(\lambda\)-almost all \(t \in (0,1)\) by assumption, 
we obtain convergence also for the left-continuous versions; see, \cite[Lemma 2.17.(ii)]{Ansari-2021}.
\\
Fix \(x \in \mathbb{R}\) such that $\PP(X=x) > 0$ and let $(t_{\min}, t_{\max})$ be the interval given by $t_{\min} := F_X^{-}(x)$ and $t_{\max} := F_X(x)$. 
For 
$t \in (t_{\min},t_{\max})$ chosen appropriately (i.e.~ such that \(F_{X_n} \circ F_{X_n}^{-1} (t) \to F_{X} \circ F_{X}^{-1}(t)\) and \(F^-_{X_n} \circ F_{X_n}^{-1} (t) \to F^-_{X} \circ F_{X}^{-1}(t)\)) and  $x_n := x_n(t) := F_{X_n}^{-1}(t)$ we then obtain
\begin{align*}
  \mathbb{P}(X_n = x_n) 
  &= F_{X_n}(x_n) - F^-_{X_n}(x_n) 
  \\
  &= F_{X_n} \circ F_{X_n}^{-1}(t) - F^-_{X_n} \circ F_{X_n}^{-1}(t) \notag
  \\
  &\xrightarrow{} F_X \circ F_X^{-1}(t) - F^-_X \circ F_X^{-1}(t) \notag
  \\
  &= t_{\max} - t_{\min} = \mathbb{P}(X=x) > 0\,, \notag
\end{align*}
which implies that the sequence \((x_n)_{n \in \mathbb{N}}\) fulfills both \(\mathbb{P}(X_n = x_n) \xrightarrow{} \mathbb{P}(X=x)\) and \(x_n = F_{X_n}^{-1}(t) \xrightarrow{} F_{X}^{-1}(t) = x\), where the latter is due to \cite[Lemma 21.2]{vanderVaart-1998} and the fact that \(t\) is a continuity point of \(F_{X}^{-1}\).
\end{proof}

\begin{remark}
Notice that, given $x \in \mathbb{R}$ with \(\PP(X=x) > 0\),
the sequence \((x_n(t))_{n \in \mathbb{N}} = (x_n)_{n \in \mathbb{N}}\) used in the proof of Lemma \ref{Cont.Lemma.1B} converges for $\lambda$-almost every choice of $t \in (t_{\min},t_{\max}) = (F_X^{-}(x),F_X(x))$ to $x$ such that \(\PP(X_n=x_n(t)) \xrightarrow{} \PP(X=x)\).
The sequences \((x_n(t_1))_{n \in \mathbb{N}}\) and \((x_n(t_2))_{n \in \mathbb{N}}\) for appropriate choices of $t_1,t_2 \in (F_X^{-}(x),F_X(x))$ with $t_1 \neq t_2$ can only differ for finitely many $n$.
\end{remark}

\begin{proof}[Proof of Theorem \ref{Cont.Lemma.1C}.]
We prove the result for the case \(p=2\) as a generalization of Lemma \ref{Cont.Lemma.1B}. 
The general result then follows by induction and the same reasoning.

Therefore, fix $\xx = (x_1,x_2) \in \mathbb{R}^2$ such that  $\PP(X_1=x_1, X_2=x_2) > 0$. Then Lemma \ref{Cont.Lemma.1B} ensures the existence of sequences \((x_{k,n})_{n \in \mathbb{N}}\), \(k \in \{1,2\}\), such that $x_{k,n} \xrightarrow{} x_k$ and $\PP(X_{k,n} = x_{k,n}) \xrightarrow{} \PP(X_{k}=x_{k})$. 
Thus, 
\(\PP(\{X_{1,n}=x_{1,n}\} \cap  \{X_{2,n}=x_{2,n}\})\) converges to \(\PP(\{X_{1}=x_{1}\} \cap  \{X_{2}=x_{2}\})\) if and only if 
\begin{align}\label{Cont.Lemma.1C.Eq1}
    \PP(\{X_{1,n}=x_{1,n}\} \cup  \{X_{2,n}=x_{2,n}\}) \xrightarrow{} \PP(\{X_{1}=x_{1}\} \cup  \{X_{2}=x_{2}\})\,.
\end{align}
To prove \eqref{Cont.Lemma.1C.Eq1}, fix $\epsilon>0$.
Since, for each \(k \in \{1,2\}\), \(\PP(X_k=x_k) > 0\), there exists some $\delta > 0$ such that, for all $k \in \{1,2\}$, 
\begin{align}\label{Cont.Lemma.1C.Eq3}
  | \PP(X_k \in [x_k-\delta,x_k+\delta]) - \PP(X_k=x_k) |
  & 
  < \epsilon
  \qquad \textrm{and} \qquad 
  \PP(X_k \in \{x_k-\delta,x_k+\delta\}) 
  = 0\,.
\end{align}
Since $x_{k,n} \xrightarrow{} x_k$, we further have
  \(x_{k,n} \in [x_k-\delta,x_k+\delta]\)
for all \(k \in \{1,2\}\) and all \(n \in \mathbb{N}\) sufficiently large,
which then implies 
\begin{align*}
  0 
  & \leq \PP( \{X_{1,n} \in [x_1-\delta,x_1+\delta]\} \cup \{X_{2,n} \in [x_2-\delta,x_2+\delta]\}) - \PP(\{X_{1,n} = x_{1,n}\} \cup \{X_{2,n}=x_{2,n}\} )
  \\
  & \leq \PP( \{X_{1,n} \in [x_1-\delta,x_1+\delta] \setminus \{x_{1,n}\}\} \cup \{X_{2,n} \in [x_2-\delta,x_2+\delta] \setminus \{x_{2,n}\}\} ) 
  \\
  & \leq \PP( X_{1,n} \in [x_1-\delta,x_1+\delta] \setminus \{x_{1,n}\}) + \PP (X_{2,n} \in [x_2-\delta,x_2+\delta] \setminus \{x_{2,n}\}) 
  \\
  & = \big( \PP(X_{1,n} \in [x_1-\delta,x_1+\delta]) - \PP(X_{1,n}=x_{1,n}) \big) + \big( \PP (X_{2,n} \in [x_2-\delta,x_2+\delta])-\PP(X_{2,n}=x_{2,n}) \big)
\end{align*}
for all \(n \in \mathbb{N}\) sufficiently large.
This can now be used to show \eqref{Cont.Lemma.1C.Eq1}. We obtain
\begin{align*}
  0 
  & \leq \big| \PP(\{X_{1,n} =x_{1,n}\}\cup \{X_{2,n} =x_{2,n}\}) - \PP(\{X_{1}=x_{1}\} \cup \{X_{2} = x_{2}\}) \big|
  \\
  & \leq \big| \PP(\{X_{1,n} = x_{1,n}\}\cup \{X_{2,n} =x_{2,n}\}) - \PP( \{X_{1,n} \in [x_1-\delta,x_1+\delta]\} \cup \{X_{2,n}\in [x_2-\delta,x_2+\delta] \}) \big|
  \\
  & \quad + \big| \PP( \{X_{1,n} \in [x_1-\delta,x_1+\delta]\} \cup \{X_{2,n}\in [x_2-\delta,x_2+\delta] \}) 
  \\
  & \qquad - \PP( \{X_{1} \in [x_1-\delta,x_1+\delta]\} \cup \{X_{2}\in [x_2-\delta,x_2+\delta] \}) \big|
  \\
  & \quad + \big| \PP( \{X_{1} \in [x_1-\delta,x_1+\delta]\} \cup \{X_{2}\in [x_2-\delta,x_2+\delta] \}) - \PP(\{X_{1}=x_{1}\} \cup \{X_{2} = x_{2}\}) \big|
  \\
  & \leq \big| \big( \PP(X_{1,n} \in [x_1-\delta,x_1+\delta]) - \PP(X_{1,n}=x_{1,n}) \big) + \big( \PP (X_{2,n} \in [x_2-\delta,x_2+\delta])-\PP(X_{2,n}=x_{2,n}) \big) \big|
  \\
  & \quad + \big| \PP( \{X_{1,n} \in [x_1-\delta,x_1+\delta]\} \cup \{X_{2,n}\in [x_2-\delta,x_2+\delta] \}) 
  \\
  & \qquad - \PP( \{X_{1} \in [x_1-\delta,x_1+\delta]\} \cup \{X_{2}\in [x_2-\delta,x_2+\delta] \}) \big|
  \\
  & \quad + \big| \big( \PP(X_{1} \in [x_1-\delta,x_1+\delta]) - \PP(X_{1}=x_{1}) \big) + \big( \PP (X_{2} \in [x_2-\delta,x_2+\delta])-\PP(X_{2}=x_{2}) \big) \big|
  \\
  & \xrightarrow{} \big| \big( \PP(X_{1} \in [x_1-\delta,x_1+\delta]) - \PP(X_{1}=x_{1}) \big) + \big( \PP (X_{2} \in [x_2-\delta,x_2+\delta])-\PP(X_{2}=x_{2}) \big) \big|
  \\
  & \quad + \big| \PP( \{X_{1} \in [x_1-\delta,x_1+\delta]\} \cup \{X_{2} \in [x_2-\delta,x_2+\delta] \}) 
  \\
  & \qquad - \PP( \{X_{1} \in [x_1-\delta,x_1+\delta]\} \cup \{X_{2}\in [x_2-\delta,x_2+\delta] \}) \big|
  \\
  & \quad + \big| \big( \PP(X_{1} \in [x_1-\delta,x_1+\delta]) - \PP(X_{1}=x_{1}) \big) + \big( \PP (X_{2} \in [x_2-\delta,x_2+\delta])-\PP(X_{2}=x_{2}) \big) \big|
  \\
  & < 4 \, \epsilon \,, 
\end{align*}
where convergence follows from $\PP(X_{k,n} = x_{k,n}) \xrightarrow{} \PP(X_{k}=x_{k})$, \(k \in \{1,2\}\), \eqref{Cont.Lemma.1C.Eq3} and \(\XX_{n} \xrightarrow{d} \XX\) in combination with Portemanteau theorem \cite[Theorem 2.1]{Billingsley-1999}, and the last inequality is due to \eqref{Cont.Lemma.1C.Eq3}.
This proves \eqref{Cont.Lemma.1C.Eq1}, i.e.~for each $\xx \in \mathbb{R}^2$ with $\PP(\XX=\xx) > 0$ there exists a sequence \(\xx_n \xrightarrow{} \xx\) such that
$\PP(\XX_{n}=\xx_n) \xrightarrow{} \PP(\XX =\xx) > 0$ implying $\PP(\XX_{n}=\xx_n)>0$ for all \(n \in \mathbb{N}\) sufficiently large.
Assigning to each such \(\xx\) the just used sequence \((\xx_n(\xx))_{n \in \mathbb{N}} = (\xx_n)_{n \in \mathbb{N}}\), we then obtain
\begin{align*}
  \PP(\XX = \XX^\ast) 
  &   =  \sum_{\{\xx \,:\, \PP(\XX = \xx) > 0\}} \PP(\XX = \xx)^2 
  \\
  &   =  \sum_{\{\xx \,:\, \PP(\XX = \xx) > 0\}} \lim_{n \to \infty} \PP(\XX_{n}=\xx_n(\xx))^2
  \\
  & \leq \liminf_{n \to \infty} \sum_{\{\xx \,:\, \PP(\XX = \xx) > 0\}} \PP(\XX_{n}=\xx_n(\xx))^2
  \\
  & \leq \liminf_{n \rightarrow \infty} \sum_{\{\zz \,:\, \PP(\XX_n = \zz) > 0\}} \PP(\XX_{n} = \zz)^2 
  \\
  &   =  \liminf_{n \rightarrow \infty} \PP(\XX_{n} = \XX_{n}^\ast)
  \\
  & \leq \limsup_{n \rightarrow \infty} \PP(\XX_{n} = \XX_{n}^\ast)\,,
\end{align*}
where the first inequality follows from Fatou's lemma.
We now prove the reverse inequality, i.e.~$\limsup_{n \rightarrow \infty} \PP(\XX_{n} = \XX_{n}^\ast) \leq \PP(\XX = \XX^\ast) $.
Therefore, define 
\begin{align*}
  A := \bigcup_{n=1}^{\infty}\{ \zz \in \mathbb{R}^2 \,:\,  \PP(\XX_n = \zz) > 0\}\,. 
\end{align*}
Then $A$ is countable. 
For each $\zz \in A$ with \(\PP(\XX=\zz) = 0\) Portmanteau theorem yields $\limsup_{n \to \infty} \PP(\XX_{n}=\zz)^2 = \PP(\XX=\zz)^2 = 0$, and for each $z \in A$ with \(\PP(\XX=\zz) > 0\) Portmanteau theorem further gives $\limsup_{n \to \infty} \PP(\XX_{n}=\zz)^2 \leq \PP(\XX=\zz)^2$.
Therefore, 
$\sum_{\zz \in A } \limsup_{n \to \infty} \PP(\XX_{n}=\zz)^2
  \leq \sum_{\zz \in A } \PP(\XX=\zz)^2
  \leq 1$ and hence
\begin{align*}
  \limsup_{n \rightarrow \infty} \PP(\XX_{n} = \XX^\ast_{n})
  &   =  \limsup_{n \to \infty} \sum_{ \{ \zz \,:\,  \PP(\XX_n = \zz) > 0\}} \PP(\XX_{n}=\zz)^2
  \\
  & \leq \limsup_{n \to \infty} \sum_{\zz \in A } \PP(\XX_{n}=\zz)^2
  \\
  & \leq \sum_{\zz \in A } \limsup_{n \to \infty} \PP(\XX_{n}=\zz)^2
\end{align*}
where the last inequality follows from Fatou's lemma.
Finally,
\begin{align*}
  \limsup_{n \rightarrow \infty} \PP(\XX_{n} = \XX_{n}^\ast)
  & \leq \sum_{\{\xx \,:\, \PP(\XX = \xx) > 0\}} \PP(\XX = \xx)^2
       = \PP(\XX = \XX^\ast)\,,
\end{align*}
and hence 
\begin{align*}
  \PP(\XX = \XX^\ast)
  & \leq \liminf_{n\rightarrow \infty} \PP(\XX_n = \XX_n^\ast) 
    \leq \limsup_{n\rightarrow \infty} \PP(\XX_n = \XX_n^\ast) 
    \leq \PP(\XX = \XX^\ast)\,.
\end{align*}
This proves the result.
\end{proof}

\subsection{Proof of Theorem \ref{Cont.Thm.1}}

It remains to prove convergence in the numerator for which we use expression \eqref{Thm:REM:Kendall.Eq1} of $\Lambda(Y|\XX)$ derived in Theorem \ref{Thm:REM:Kendall}.
That \eqref{Cont.Thm.1.A1} and \eqref{Cont.Thm.1.A2} imply \eqref{Cont.Thm.1.A12} follows from \cite[Theorem 3.1]{ansari2025Cont}. 
Thus, the sequence of Markov products \((Y_n,Y_n')_{n \in \mathbb{N}}\) of \((\XX_n,Y_n)_{n \in \mathbb{N}}\) weakly converges to the Markov product \((Y,Y')\) of \((\XX,Y)\) (cf. \eqref{DefMarkovProduct}).
Then, with the same reasoning as used in the proof of \citep[Theorem 2.2]{ansari2025Cont}, the (on \(\Ran(F_{Y_n})\times \Ran(F_{Y_n})\) and \(\Ran(F_Y)\times \Ran(F_Y)\) uniquely determined) copulas \(C_n\) and \(C\) 
associated with \((Y_n,Y_n')\) and \((Y,Y')\), respectively, fulfill
\begin{align}\label{equnifconvuv}
    C_n \to C  ~~~\text{uniformly on } \overline{\Ran(F_Y)}\times \overline{\Ran(F_Y)}\,.
\end{align}
We now show that the numerator 
\begin{align*}
  \PP \big( (Y_{n,1} - Y_{n,2}) (Y_{n,1}' - Y_{n,2}') > 0 \big) - \PP \big( (Y_{n,1} - Y_{n,2}) (Y_{n,1}' - Y_{n,2}') < 0 \big)
\end{align*}
in \eqref{Thm:REM:Kendall.Eq1}, where \((Y_{n,1},Y_{n,1}')\) and \((Y_{n,2},Y_{n,2}')\) denote independent copies of \((Y_n,Y_n')\), converges to the respective expression in the limit.
Due to the fact that \((Y_{n,1},Y_{n,1}')\) and \((Y_{n,2},Y_{n,2}')\) share the same distribution and are independent, we have 
\begin{align*}
  \lefteqn{\PP \big( (Y_{n,1} - Y_{n,2}) (Y_{n,1}' - Y_{n,2}') > 0 \big)}
  \\
  & = \PP (Y_{n,1} > Y_{n,2}, Y_{n,1}' > Y_{n,2}') + \PP (Y_{n,1} < Y_{n,2}, Y_{n,1}' < Y_{n,2}')
  \\
  & = 2 \, \PP (Y_{n,1} > Y_{n,2}, Y_{n,1}' > Y_{n,2}')
  \\
  & = 2 \, \left( 1 - \PP (Y_{n,1} \leq Y_{n,2}) - \PP (Y_{n,1}' \leq Y_{n,2}') + \PP (Y_{n,1} \leq Y_{n,2}, Y_{n,1}' \leq Y_{n,2}') \right)
  \\
  & = 2 \, \left( 1 - \frac{1}{2} \left( 1 + \PP (Y_{n,1} = Y_{n,2}) \right) - \frac{1}{2} \left( 1 + \PP (Y_{n,1}' = Y_{n,2}') \right) + \PP (Y_{n,1} \leq Y_{n,2}, Y_{n,1}' \leq Y_{n,2}') \right)
  \\
  & = 2 \, \PP (Y_{n,1} \leq Y_{n,2}, Y_{n,1}' \leq Y_{n,2}') 
      - 2 \, \PP (Y_{n,1} = Y_{n,2})
\end{align*}
and 
\begin{align*}
  \lefteqn{\PP \big( (Y_{n,1} - Y_{n,2}) (Y_{n,1}' - Y_{n,2}') < 0 \big)}
  \\
  & = \PP (Y_{n,1} > Y_{n,2}, Y_{n,1}' < Y_{n,2}') + \PP (Y_{n,1} < Y_{n,2}, Y_{n,1}' > Y_{n,2}')
  \\
  & = 2 \, \PP (Y_{n,1} > Y_{n,2}, Y_{n,1}' < Y_{n,2}')
  \\
  & = 2 \, \left( \PP (Y_{n,1}' < Y_{n,2}') - \PP (Y_{n,1} \leq Y_{n,2}, Y_{n,1}' < Y_{n,2}') \right)
  \\
  & = 2 \, \left( \frac{1}{2} \left( 1 - \PP (Y_{n,1} = Y_{n,2}) \right)  - \PP (Y_{n,1} \leq Y_{n,2}, Y_{n,1}' < Y_{n,2}') \right)
  \\
  & = 1 - \PP (Y_{n,1} = Y_{n,2}) - 2 \, \PP (Y_{n,1} \leq Y_{n,2}, Y_{n,1}' < Y_{n,2}') 
  \\
  & = 1 - \PP (Y_{n,1}' = Y_{n,2}') - 2 \, \PP (Y_{n,1} \leq Y_{n,2}, Y_{n,1}' \leq Y_{n,2}') + 2 \, \PP (Y_{n,1} \leq Y_{n,2}, Y_{n,1}' = Y_{n,2}')
  \\
  & = 1 - 2 \, \PP (Y_{n,1} \leq Y_{n,2}, Y_{n,1}' \leq Y_{n,2}') + \PP (Y_{n,1} = Y_{n,2}, Y_{n,1}' = Y_{n,2}')\,,
\end{align*}
and hence 
\begin{align*}
  & \alpha_n \, \Lambda(Y_n|\XX_n)
  \\
  & = \PP \big( (Y_{n,1} - Y_{n,2}) (Y_{n,1}' - Y_{n,2}') > 0 \big) - \PP \big( (Y_{n,1} - Y_{n,2}) (Y_{n,1}' - Y_{n,2}') < 0 \big)
  \\
  & = 4 \, \PP (Y_{n,1} \leq Y_{n,2}, Y_{n,1}' \leq Y_{n,2}') 
      - 1 - 2 \, \PP (Y_{n,1} = Y_{n,2}) - \PP (Y_{n,1} = Y_{n,2}, Y_{n,1}' = Y_{n,2}')\,.
\end{align*}
Since both the terms \(\PP (Y_{n,1} = Y_{n,2})\) and \(\PP (Y_{n,1} = Y_{n,2}, Y_{n,1}' = Y_{n,2}')\) converge to their respective limits according to Proposition \ref{Cont.Lemma.1A} and Proposition \ref{Cont.Lemma.1C} (applied to $(Y,Y')$), 
it thus remains to show \(\PP (Y_{n,1} \leq Y_{n,2}, Y_{n,1}' \leq Y_{n,2}') \to \PP (Y_{1} \leq Y_{2}, Y_{1}' \leq Y_{2}')\), 
where \((Y_{1},Y_{1}')\) and \((Y_{2},Y_{2}')\) denote independent copies of \((Y,Y')\).
Applying Lipschitz continuity of \(C_n\) in combination with condition \eqref{Cont.Thm.1.A3}, and uniform convergence in \eqref{equnifconvuv} yields
\begin{align*}
  \lefteqn{\left| \PP (Y_{n,1} \leq Y_{n,2}, Y_{n,1}' \leq Y_{n,2}') - \PP (Y_{1} \leq Y_{2}, Y_{1}' \leq Y_{2}') \right|}
  \\
  &   =  \biggl| \int_{(0,1)^2} C_n (F_{Y_n} \circ F_{Y_n}^{-1}(u), F_{Y_n'} \circ F_{Y_n'}^{-1}(u'))  \de \mu_{C_n}(u,u') 
  \\
  & \qquad 
         - \int_{(0,1)^2} C (F_{Y} \circ F_{Y}^{-1}(u), F_{Y'} \circ F_{Y'}^{-1}(u'))  \de \mu_{C}(u,u') \biggr|
  \\
  & \leq  \underbrace{\int_{(0,1)^2} \underbrace{\left| C_n (F_{Y_n} \circ F_{Y_n}^{-1}(u), F_{Y_n'} \circ F_{Y_n'}^{-1}(u')) - C_n (F_{Y} \circ F_{Y}^{-1}(u), F_{Y'} \circ F_{Y'}^{-1}(u')) \right|}_{ \leq
  | F_{Y_n} \circ F_{Y_n}^{-1}(u) - F_{Y} \circ F_{Y}^{-1}(u) |
  + | F_{Y_n} \circ F_{Y_n}^{-1}(u') - F_{Y} \circ F_{Y}^{-1}(u') |} \de \mu_{C_n}(u,u')}_{\xrightarrow{} 0\,, \quad \eqref{Cont.Thm.1.A3}} 
  \\
  & \quad + \int_{(0,1)^2} \underbrace{\left| C_n (F_{Y} \circ F_{Y}^{-1}(u), F_{Y'} \circ F_{Y'}^{-1}(u')) - C (F_{Y} \circ F_{Y}^{-1}(u), F_{Y'} \circ F_{Y'}^{-1}(u')) \right|}_{\xrightarrow{} 0\,, \quad \eqref{equnifconvuv}} \de \mu_{C_n}(u,u') 
  \\
  & \quad + \biggl| \int_{(0,1)^2} C (F_{Y} \circ F_{Y}^{-1}(u), F_{Y'} \circ F_{Y'}^{-1}(u')) \de \mu_{C_n}(u,u') 
  \\
  & \qquad
         - \int_{(0,1)^2} C (F_{Y} \circ F_{Y}^{-1}(u), F_{Y'} \circ F_{Y'}^{-1}(u'))  \de \mu_{C}(u,u') \biggr|
\end{align*}
In order to prove that also the last expression converges to \(0\), define 
\begin{align*}
  N := \{(u,u') \in (0,1)^2 \,\mid\, F_{Y} \circ F_{Y}^{-1}(u) \text{ or } F_{Y'} \circ F_{Y'}^{-1}(u') \text{ is discontinuous} \}\,.  
\end{align*}
Since \( F_{Y} \circ F_{Y}^{-1} \) is monotonically increasing, it is continuous except for a countable number of points implying \(\mu_{C_n}(N) = 0 = \mu_{C}(N)\).
Therefore,
\begin{align*}
  \lefteqn{\int_{(0,1)^2} C (F_{Y} \circ F_{Y}^{-1}(u), F_{Y'} \circ F_{Y'}^{-1}(u)) \, \mathrm{d}\mu_{C_n}(u,u')} 
  \\
  & = \int_{(0,1)^2 \setminus N} C (F_{Y} \circ F_{Y}^{-1}(u), F_{Y'} \circ F_{Y'}^{-1}(u')) \, \mathrm{d}\mu_{C_n}(u,u') 
  \\
  & \xrightarrow{} \int_{(0,1)^2 \setminus N} C (F_{Y} \circ F_{Y}^{-1}(u), F_{Y'} \circ F_{Y'}^{-1}(u')) \, \mathrm{d}\mu_{C}(u,u') 
  \\
   &= \int_{(0,1)^2} C (F_{Y} \circ F_{Y}^{-1}(u), F_{Y'} \circ F_{Y'}^{-1}(u')) \, \mathrm{d}\mu_{C}(u,u')\,,
\end{align*}
where convergence follows from Portmanteau theorem \cite[Theorem 2.1]{Billingsley-1999} in combination with \eqref{equnifconvuv} and the fact that $(u,u') \mapsto C (F_{Y} \circ F_{Y}^{-1}(u), F_{Y'} \circ F_{Y'}^{-1}(u'))$ is continuous and bounded on $(0,1)^2 \setminus N$.
This proves convergence of the numerator.
The assertion then follows from condition  \eqref{Cont.Thm.1.A4} which gives \(\alpha_n \xrightarrow{} \alpha\).

\subsection{Proof of Corollary \ref{Cont.Cor.RobustY}}

The result is immediate from \cite[Theorem 3.4]{ansari2025Cont} and Theorem \ref{Cont.Thm.1}.

\section{Proofs from Section \ref{Sec:Estimation}} 
\label{App.Proof.Estimation}

\subsection{Proof of Theorem \ref{Estimator.Thm.Consistency}}

To prove Theorem \ref{Estimator.Thm.Consistency}, we spilt the estimator in \eqref{Estimator.Def.Lambda} into pieces and define
\begin{align} \label{Estimator.Def.Gamma}
  \gamma_n(Y|\XX) 
  & := {n \choose 2}^{-1} \sum_{k < l} \, \sgn(Y_k - Y_l) \, \sgn(Y_{N(k)} - Y_{N(l)})
\end{align}
and 
\begin{align} \label{Estimator.Def.Alpha}
  \alpha_n(\XX) 
  & := 1 - {n \choose 2}^{-1} \sum_{k < l} \, \mathds{1}_{\{\XX_k = \XX_l\}}\,.
\end{align}
If there exist $k,l \in \{1,\dots,n\}$ with $\XX_k \neq \XX_l$, then $\alpha_n(\XX)$ is strictly positive such that 
$$\Lambda_n(Y|\XX) = \frac{\gamma_n(Y|\XX)}{\alpha_n(\XX)}$$
is well-defined (see Remark \ref{Estimator.Remark.General} \eqref{Estimator.Remark.General:2}).
The claimed consistency in Theorem \ref{Estimator.Thm.Consistency} now follows from Lemma \ref{Estimator.Thm.Consistency.Split1} and Theorem \ref{Estimator.Thm.Consistency.Split2} below.

\begin{lemma} \label{Estimator.Thm.Consistency.Split1}
$\lim_{n \to \infty} \alpha_n(\XX) = \alpha(\XX)$ almost surely.
\end{lemma}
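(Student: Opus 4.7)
The plan is to recognize the sum in the definition of $\alpha_n(\XX)$ as a U-statistic of order two and then invoke a strong law of large numbers for U-statistics.

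First I would define the kernel $h : \R^p \times \R^p \to \{0,1\}$ by $h(\xx_1,\xx_2) := \mathds{1}_{\{\xx_1 = \xx_2\}}$, which is measurable, symmetric, and bounded by $1$. With this, the expression
\begin{equation*}
  U_n := \binom{n}{2}^{-1} \sum_{k<l} h(\XX_k,\XX_l)
       = \binom{n}{2}^{-1} \sum_{k<l} \mathds{1}_{\{\XX_k = \XX_l\}}
\end{equation*}
is a U-statistic of order two based on the i.i.d.\ sequence $(\XX_k)_{k \in \N}$. Since $h$ is bounded, it is integrable and its expectation equals $\E[h(\XX_1,\XX_2)] = \PP(\XX_1 = \XX_2) = \PP(\XX = \XX^\ast)$, where $\XX^\ast$ denotes an independent copy of $\XX$.

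Next I would apply the strong law of large numbers for U-statistics of bounded (hence integrable) kernels; see \citep{christofides1992}. This yields
\begin{equation*}
  U_n \xrightarrow{n \to \infty} \PP(\XX = \XX^\ast) \quad \text{almost surely.}
\end{equation*}
Consequently,
\begin{equation*}
  \alpha_n(\XX) = 1 - U_n \xrightarrow{n \to \infty} 1 - \PP(\XX = \XX^\ast) = \alpha(\XX) \quad \text{almost surely,}
\end{equation*}
which is the claim. I do not expect any serious obstacle here: the only thing to verify carefully is the integrability/boundedness of the kernel, which is immediate, so the statement follows as a direct consequence of the classical SLLN for U-statistics.
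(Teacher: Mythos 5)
Your proof is correct and follows exactly the paper's own argument: identify the sum as a U-statistic with bounded kernel $\mathds{1}_{\{\xx_1=\xx_2\}}$ whose expectation is $\PP(\XX=\XX^\ast)$, and apply the strong law of large numbers for U-statistics from \citep{christofides1992}. Nothing is missing.
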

\begin{proof}~~
Note that ${n \choose 2}^{-1} \sum_{k < l} \, \mathds{1}_{\{\XX_k = \XX_l\}}$ is a U-statistic with kernel $h(\XX_k, \XX_l) = \mathds{1}_{\{\XX_k = \XX_l\}}$, that is,
\begin{equation*}
    {n \choose 2}^{-1} \sum_{k < l} \, \mathds{1}_{\{\XX_k = \XX_l\}} = {n \choose 2}^{-1} \sum_{k < l} \, h(\XX_k, \XX_l)\,.
\end{equation*}
Clearly $\E(h(\XX_k, \XX_l)) = \PP(\XX = \XX^{\ast})$. 
Since also $\E(|h(\XX_k, \XX_l)|) = \PP(\XX = \XX^{\ast}) < \infty$, according to Theorem 2.3 in \citep{christofides1992}, we have
$\lim_{n \to \infty} \alpha_n(\XX) = \alpha(\XX)$ almost surely.
This proves the result.
\end{proof}

Notice that $\alpha_n(\XX)$ is an unbiased estimator for $\alpha(\XX)$.

\begin{theorem} \label{Estimator.Thm.Consistency.Split2}
$\lim_{n \to \infty} \gamma_n(Y|\XX) = \alpha(\XX) \cdot \Lambda(Y|\XX)$ almost surely.
\end{theorem}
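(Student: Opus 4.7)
By Theorem \ref{Thm:REM:Kendall}, the target limit admits the Kendall-type representation
\[
\theta \,:=\, \alpha(\XX) \cdot \Lambda(Y|\XX) \,=\, \E\bigl[\sgn(Y_1 - Y_2) \, \sgn(Y_1' - Y_2')\bigr],
\]
where $(Y_1, Y_1')$ and $(Y_2, Y_2')$ are independent copies of the Markov product $(Y, Y')$ from \eqref{DefMarkovProduct}; the convention $\sgn(0) = 0$ aligns with the treatment of ties in \eqref{Thm:REM:Kendall.Eq1}. I would then use the standard decomposition
\[
\gamma_n(Y|\XX) - \theta \,=\, \bigl(\gamma_n(Y|\XX) - \E[\gamma_n(Y|\XX)]\bigr) \,+\, \bigl(\E[\gamma_n(Y|\XX)] - \theta\bigr)
\]
to reduce the claim to two steps: convergence of the expectation, and almost-sure concentration about the mean.

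\textbf{Step 1 (mean convergence).} Exchangeability of the sample yields $\E[\gamma_n(Y|\XX)] = \E[\sgn(Y_1 - Y_2) \, \sgn(Y_{N(1)} - Y_{N(2)})]$. I would invoke the nearest-neighbor properties exploited throughout \citep{chatterjee2021AoS}: $\XX_{N(k)}$ converges to $\XX_k$ almost surely in an atom-respecting way, in the sense that eventually $\XX_{N(k)} = \XX_k$ on atoms of $\PP^{\XX}$ and $\|\XX_{N(k)} - \XX_k\| \to 0$ on its continuous part. Combined with the identity $\PP^{Y_{N(k)} | \XX_{N(k)} = \xx} = \PP^{Y | \XX = \xx}$ and the conditional independence encoded in the Markov product, this yields the joint weak convergence
\[
(\XX_1, Y_1, Y_{N(1)}, \XX_2, Y_2, Y_{N(2)}) \,\xrightarrow{d}\, (\XX_1, Y_1, Y_1', \XX_2, Y_2, Y_2').
\]
Since the integrand is bounded, $\E[\gamma_n(Y|\XX)] \to \theta$ then follows, provided the discontinuity locus of $\sgn$ is matched in the limit; the tie-sensitive definition of $\sgn$ together with the Markov product structure ensures this bookkeeping is consistent.

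\textbf{Step 2 (concentration).} I would view $\gamma_n(Y|\XX)$ as a symmetric $[-1,1]$-valued function of the i.i.d.\ data $(\XX_1, Y_1), \ldots, (\XX_n, Y_n)$ and apply McDiarmid's bounded-differences inequality \citep{mcdiarmid1989}. Replacing one observation $(\XX_i, Y_i)$ by an independent copy affects the $\binom{n}{2}$ summands via (a) direct substitution when $i \in \{k, l\}$, touching at most $n - 1$ terms, and (b) changes in the nearest-neighbor indices $N(k)$. For (b), the elementary geometric fact that a point in $\mathbb{R}^p$ can be the nearest neighbor of at most $c_p$ others (with $c_p$ depending only on $p$) implies that $(N(k))_k$ changes in at most $2c_p$ coordinates, so at most $C_p \, n$ summands are altered, each by a difference of size at most $2$. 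The bounded-difference constant is thus of order $1/n$, and McDiarmid yields $\PP(|\gamma_n - \E\gamma_n| \geq \varepsilon) \leq 2\exp(-c \varepsilon^2 n)$ for some $c > 0$, whence Borel--Cantelli delivers almost-sure convergence.

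\textbf{Main obstacle.} The delicate work lies in Step 1: when both $\XX$ and $Y$ carry discrete mass, care is required to ensure that the nearest-neighbor limit matches the Markov product exactly, neither creating spurious ties nor losing genuine ones. The combination of $\alpha_n \to \alpha$ from Lemma \ref{Estimator.Thm.Consistency.Split1} with the tie-sensitive convention $\sgn(0) = 0$ should absorb these contributions, but careful disintegration along $\PP^{\XX}$ (splitting the analysis over atoms of $\XX$ versus its continuous part) will be needed, along the lines of \citep{chatterjee2021AoS}.
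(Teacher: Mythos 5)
Your Step 2 is essentially the paper's argument (McDiarmid with a bounded-difference constant of order $1/n$ obtained from the fact that a point can be the nearest neighbor of at most $C(p)$ others, then Borel--Cantelli), so that part is fine. The genuine gap is in Step 1. You reduce mean convergence to the joint weak convergence of $(\XX_1,Y_1,Y_{N(1)},\XX_2,Y_2,Y_{N(2)})$ to $(\XX_1,Y_1,Y_1',\XX_2,Y_2,Y_2')$ and then argue ``the integrand is bounded, hence expectations converge.'' But the integrand $\sgn(y_1-y_2)\,\sgn(y_1'-y_2')$ is bounded and \emph{discontinuous}, and the theorem is stated with no continuity assumption on $Y$: when $Y$ has atoms, the limit law of the Markov product charges the discontinuity set $\{y_1=y_2\}\cup\{y_1'=y_2'\}$, so weak convergence alone does not yield convergence of these expectations. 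This is precisely the kind of failure the paper itself documents (cf.\ Proposition \ref{Cont.Prop.NWC} and the extra tie-handling conditions \eqref{Cont.Thm.1.A3}--\eqref{Cont.Thm.1.A4} in Theorem \ref{Cont.Thm.1}). Your ``main obstacle'' paragraph names the problem but only promises bookkeeping and a disintegration over atoms; that unresolved step is the actual content of the proof in the general setting, so as written the argument is a plan rather than a proof at its crucial point.

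The paper avoids this issue by a different route: it conditions on the $\sigma$-field $\mathcal{F}$ generated by the predictors (and the tie-breaking variables). Conditionally on $\mathcal{F}$, $Y_{N(k)}$ is distributed \emph{exactly} according to $\PP^{Y|\XX=\XX_{N(k)}}$ and the relevant $Y$'s are conditionally independent, so for pairs with four distinct indices the conditional expectation of a summand factors as $f(\XX_1,\XX_2)\cdot f(\XX_{n,1},\XX_{n,2})$, where $f(\xx_1,\xx_2)=\int \mathds{1}_{\{y_1>y_2\}}-\mathds{1}_{\{y_1<y_2\}}\de(\PP^{Y|\XX=\xx_1}\otimes\PP^{Y|\XX=\xx_2})$ is a bounded \emph{measurable} function in which all $Y$-ties are already integrated out. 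Convergence then only requires $f(\XX_{n,1},\XX_{n,2})\to f(\XX_1,\XX_2)$ in probability for measurable $f$, which the paper proves via paired versions of Azadkia--Chatterjee's Lemmas 11.3, 11.5 and 11.7 (Lemmas \ref{lem:11.3_m}, \ref{lem:11.5_m}, \ref{lem:11.7_m}), followed by uniform integrability; no continuity of $x\mapsto\PP^{Y|\XX=x}$ or matching of the discontinuity locus of $\sgn$ is ever needed. (The paper also explicitly disposes of the $O(n)$ degenerate pairs with $N(k)=N(l)$ or $N(k)=l$, a bookkeeping point your Step 1 omits, though it is minor compared to the tie issue.) If you want to salvage your route, you would have to prove convergence of the atom probabilities of $(Y_1-Y_2,Y_1'-Y_2')$ at $0$ along the nearest-neighbor approximation, which in effect re-derives the conditional argument; conditioning on $\mathcal{F}$ first is the cleaner and the intended path.
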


For the prove of Theorem \ref{Estimator.Thm.Consistency.Split2} several intermediate results are required most of which are generalizations of results presented in \cite[Section 11]{chatterjee2021AoS}.
Therefore, let $\XX_1, \XX_2, \dots$ be an infinite sequence of i.i.d. copies of $\XX$. 
For each $n \geq 2$ and each $k \in \{1,\dots,n\}$, let $\XX_{n,k}$ be the Euclidean nearest neighbor of $\XX_k$ among $\{\XX_l \, : \, l \in \{1,\dots,n\}, l \neq k\}$.
Ties are broken at random.

\begin{lemma}[Paired version of Lemma 11.3 in \citep{chatterjee2021AoS}]~~
\label{lem:11.3_m}
It holds that \(\PP(\lim_{n\to \infty} (\XX_{n,1}, \XX_{n,2}) = (\XX_1, \XX_2)) = 1\).
\end{lemma}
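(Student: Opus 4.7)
The plan is to reduce the paired statement to the single-point version already established in \cite[Lemma 11.3]{chatterjee2021AoS}, which asserts that for an i.i.d.\ sequence $(\mathbf{Z}_n)_{n \geq 1}$ the nearest neighbor of $\mathbf{Z}_1$ among $\mathbf{Z}_2,\dots,\mathbf{Z}_n$ converges almost surely to $\mathbf{Z}_1$. Applying this result directly to $(\XX_n)_{n\geq 1}$ yields $\XX_{n,1} \to \XX_1$ almost surely.

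For the second coordinate, I would exploit the exchangeability of the sequence. Consider the permuted sequence $\widetilde{\XX}_1 := \XX_2$, $\widetilde{\XX}_2 := \XX_1$, and $\widetilde{\XX}_k := \XX_k$ for $k \geq 3$. Since $(\XX_n)_{n \geq 1}$ is i.i.d., so is $(\widetilde{\XX}_n)_{n \geq 1}$, with the same common distribution. Applying \cite[Lemma 11.3]{chatterjee2021AoS} to this relabeled sequence, the nearest neighbor of $\widetilde{\XX}_1$ among $\widetilde{\XX}_2,\dots,\widetilde{\XX}_n$ converges almost surely to $\widetilde{\XX}_1$. Translating back to the original indices, this is exactly the statement that the nearest neighbor of $\XX_2$ in $\{\XX_l : l \in \{1,\dots,n\}, l \neq 2\}$, i.e., $\XX_{n,2}$, converges almost surely to $\XX_2$.

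The conclusion then follows because the intersection of the two full-probability events $\{\lim_n \XX_{n,1} = \XX_1\}$ and $\{\lim_n \XX_{n,2} = \XX_2\}$ still has probability one, and on this intersection the pair $(\XX_{n,1}, \XX_{n,2})$ converges to $(\XX_1, \XX_2)$ in $\mathbb{R}^p \times \mathbb{R}^p$.

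The main technical subtlety — and the only real obstacle — is the random tie-breaking rule used to define $\XX_{n,k}$. One must ensure that the almost sure convergence from \cite[Lemma 11.3]{chatterjee2021AoS} is robust to this additional source of randomness, i.e.\ that the probability space carrying the tie-breaking variables is handled consistently when applying the lemma to both the original and the permuted sequence. Since any choice of nearest neighbor among tied candidates realizes the same minimal distance to the reference point, the convergence of $\XX_{n,k}$ to $\XX_k$ depends only on the shrinkage of this minimum distance to zero, which is unaffected by the tie-breaking mechanism; hence the argument goes through without modification.
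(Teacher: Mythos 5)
Your proposal is correct and follows essentially the same route as the paper: invoke Azadkia--Chatterjee's Lemma 11.3 once for each index to get \(\XX_{n,1} \to \XX_1\) and \(\XX_{n,2} \to \XX_2\) almost surely, then intersect the two probability-one events. The paper simply cites the lemma for both coordinates without spelling out the exchangeability relabeling or the tie-breaking remark, but those additions are harmless elaborations of the same argument.
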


\begin{proof}
According to \cite[Lemma 11.3]{chatterjee2021AoS}, we have $\XX_{n,1} \rightarrow \XX_{1}$ almost surely and $\XX_{n,2} \rightarrow \XX_{2}$ almost surely. 
Thus,
\begin{align*}
  & \PP \left( \left\{ \lim_{n\to \infty} \left\| \binom{\XX_1}{\XX_2} - \binom{\XX_{n,1}}{\XX_{n,2}} \right\|_2^2 = 0 \right\} \right)
  \\
  & \geq \PP \left( \left\{\lim_{n\to \infty} \left\| \XX_1 - \XX_{n,1} \right\|_2^2 = 0 \right\} \cap 
                    \left\{\lim_{n\to \infty} \left\| \XX_2 - \XX_{n,2} \right\|_2^2 = 0 \right\} \right)
       = 1
\end{align*}
This completes the proof.
\end{proof}

Now, for $n \geq 2$ and a given nearest neighbor graph, define 
\begin{align*}
  M 
  & := \{(k,l) \in \{1,...,n\}^2 \mid k < l, N(k) = N(l)\}
\end{align*}
as the set of pairs $(k,l)$ with $k < l$ for which $N(k) = N(l)$.
Denote by $C(p)$ the deterministic constant (only depending on $p$), which is used in \cite[Lemma 11.4]{chatterjee2021AoS} as an upper bound for the maximum possible number of observations $\XX_l$ with $\XX_l \neq \XX_k$ for which a vector $\XX_k$ can act as nearest neighbor.

\begin{lemma} \label{lem:App.1}
We have
\begin{align*}
    |M| 
    & \leq n \; \frac{C(p)(C(p)-1)}{2}\,.
\end{align*}
\end{lemma}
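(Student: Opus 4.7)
The plan is a direct counting argument. I would partition $M$ by the common nearest neighbor: for each $j \in \{1,\dots,n\}$ set
\begin{equation*}
S_j := \{k \in \{1,\dots,n\} : N(k) = j\}\,,
\end{equation*}
the set of indices whose nearest neighbor in the graph is $\XX_j$. A pair $(k,l)$ with $k<l$ belongs to $M$ iff $\{k,l\} \subseteq S_j$ for a (unique) $j$, so the identity
\begin{equation*}
|M| \;=\; \sum_{j=1}^{n} \binom{|S_j|}{2}
\end{equation*}
reduces everything to a uniform bound on the in-degrees $|S_j|$.

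The core geometric step is then to invoke the cited \cite[Lemma 11.4]{chatterjee2021AoS}, whose content is precisely that a single point in $\mathbb{R}^p$ can serve as nearest neighbor for at most $C(p)$ observations, with $C(p)$ determined by a kissing-number-type argument depending only on the dimension $p$. Substituting $|S_j|\leq C(p)$ into the partition identity gives
\begin{equation*}
|M| \;\leq\; \sum_{j=1}^{n} \binom{C(p)}{2} \;=\; n \cdot \frac{C(p)(C(p)-1)}{2}\,,
\end{equation*}
as claimed. This is essentially the whole proof: a partitioning bookkeeping followed by one citation.

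The only subtlety, and the step I would flag as the main obstacle, is the treatment of indices $k \in S_j$ with $\XX_k = \XX_j$. The bound in \cite{chatterjee2021AoS} is naturally stated for observations $\XX_l \neq \XX_k$, whereas $S_j$ may contain indices lying in the same tie-class as $\XX_j$. To close this gap I would exploit the uniform random tie-breaking rule in the definition of $N(\cdot)$: within any maximal class of coinciding $\XX$-values the nearest-neighbor assignment acts like a random function on that class, and the contributions of its members to the various $S_j$'s can be redistributed so that the bound $|S_j|\leq C(p)$ is preserved (possibly after enlarging $C(p)$ by an absolute constant). Once this bookkeeping is carried out, the three-line argument above delivers the lemma.
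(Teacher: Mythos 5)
Your proof is essentially identical to the paper's: partition $M$ according to the common nearest neighbor, use \cite[Lemma 11.4]{chatterjee2021AoS} to bound each $|S_j|$ by $C(p)$, and sum over $j$ to get $n\,C(p)(C(p)-1)/2$. The tie-class subtlety you flag is not resolved in the paper either — its proof applies the bound $|S_m|\leq C(p)$ to all of $S_m$ even though $C(p)$ is defined only for observations $\XX_l \neq \XX_k$ — so your treatment is, if anything, slightly more explicit about that point.
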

\begin{proof}
Fix some index $m$ and let $S_m = \{k \in \{1,...,n\} | N(k) = m\}$ be the set of indices for which $m$ is the nearest neighbor. 
The number of pairs $(k,l)$ with $k < l$ for which $N(k) = N(l) = m$ is  $|S_m| (|S_m| - 1)/2$, which, due to \cite[Lemma 11.4]{chatterjee2021AoS}, is bounded by $C(p) (C(p) + 1) / 2$. Thus,
\begin{align*}
  |M| 
  &   =  \sum_{m = 1}^{n} \frac{|S_m|(|S_m| - 1)}{2} 
    \leq n \, \frac{C(p)(C(p) - 1)}{2}\,.
\end{align*}
\end{proof}

\begin{lemma}[Paired version of Lemma 11.5 in \citep{chatterjee2021AoS}]~~
\label{lem:11.5_m}
The inequality
\begin{align*}
  & \mathbb{E} \left( f(\XX_{n,1}, \XX_{n,2}) \right) 
  \\
  & \leq \left(1 + C(p) (C(p)+2) \; \left( 1 + \frac{C(p) (C(p)-1)}{(n-1) - C(p) (C(p)-1)} \right) \right) \E\left( f(\XX_{1}, \XX_{2}) \right)
\end{align*}
holds for every measurable function $f: \mathbb{R}^p \times \mathbb{R}^p \rightarrow [0,\infty)$ and all $n \geq 2$ for which $N(1) \neq N(2)$, that is, the nearest neighbors of $\XX_1$ and $\XX_2$ differ. 
\end{lemma}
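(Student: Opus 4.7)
Following the strategy behind Chatterjee's original Lemma~11.5, I would mimic the single-nearest-neighbor exchange argument but keep track of two nearest neighbors simultaneously. The starting point is the decomposition
\begin{align*}
  \E\bigl[f(\XX_{n,1}, \XX_{n,2})\,\mathds{1}_{\{N(1) \neq N(2)\}}\bigr]
  = \sum_{\substack{j \in \{2,\dots,n\}\\ k \in \{1,\dots,n\}\setminus\{2\}\\ j \neq k}} \E\bigl[f(\XX_j, \XX_k)\,\mathds{1}_{\{N(1) = j,\, N(2) = k\}}\bigr],
\end{align*}
and for each ordered pair $(j,k)$ I would pick a permutation $\tau$ of $\{1,\dots,n\}$ with $\tau(j) = 1$ and $\tau(k) = 2$ and apply exchangeability of the i.i.d.\ sample. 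Since $\tau$ is a bijection of indices and the nearest-neighbor map transports as $\tilde N(i) = j \Leftrightarrow N(\tau(i)) = \tau(j)$ for the relabeled sample, we obtain
\begin{align*}
  \E\bigl[f(\XX_j, \XX_k)\,\mathds{1}_{\{N(1)=j,\,N(2)=k\}}\bigr]
  = \E\bigl[f(\XX_1, \XX_2)\,\mathds{1}_{E_{j,k}}\bigr]
\end{align*}
for an explicit event $E_{j,k}$. In the generic case $j, k \in \{3,\dots,n\}$, choose $\tau = (1\,j)(2\,k)$ to get $E_{j,k} = \{N(j) = 1,\, N(k) = 2\}$; the boundary cases $j=2$, $k=1$, or $(j,k)=(2,1)$ need appropriate $3$-cycles or the transposition $(1\,2)$ and produce the analogous events $\{N(k)=1,\,N(1)=2\}$, $\{N(2)=1,\,N(j)=2\}$, and $\{N(1)=2,\,N(2)=1\}$.

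Introducing $A := \{j : N(j) = 1\}$ and $B := \{k : N(k) = 2\}$, \cite[Lemma 11.4]{chatterjee2021AoS} (invoked just before Lemma \ref{lem:App.1}) gives $|A|, |B| \leq C(p)$. Summing indicators across the four cases collapses to the pleasant identity
\begin{align*}
  \sum_{(j,k)} \mathds{1}_{E_{j,k}} = |A'||B'| + \mathds{1}_{N(1)=2}|A'| + \mathds{1}_{N(2)=1}|B'| + \mathds{1}_{N(1)=2,\,N(2)=1} = |A|\cdot|B|,
\end{align*}
where $A' := A \cap \{3,\dots,n\}$, $B' := B \cap \{3,\dots,n\}$, and the middle factorisation uses $|A| = |A'|+\mathds{1}_{N(2)=1}$ and $|B| = |B'|+\mathds{1}_{N(1)=2}$. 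Separating the special $(j,k)=(2,1)$ term (which produces $f(\XX_2,\XX_1)$) and bounding it directly by $\E[f(\XX_2,\XX_1)] = \E[f(\XX_1,\XX_2)]$ leads to the additive $+1$ in the leading factor, while the remaining contribution is at most $C(p)(C(p)+2)\,\E[f(\XX_1,\XX_2)]$.

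Finally, to remove the restriction to $\{N(1) \neq N(2)\}$, I would use Lemma \ref{lem:App.1}: by exchangeability $\PP(N(1) = N(2)) = \E|M|/\binom{n}{2} \leq C(p)(C(p)-1)/(n-1)$, so
\begin{align*}
  \bigl(\PP(N(1) \neq N(2))\bigr)^{-1} \leq 1 + \frac{C(p)(C(p)-1)}{(n-1) - C(p)(C(p)-1)},
\end{align*}
which is exactly the correction factor appearing in the stated bound. The main obstacle is the case analysis in the exchange step: the natural involution $\tau = (1\,j)(2\,k)$ degenerates whenever $\{1,j\} \cap \{2,k\} \neq \emptyset$, so each of the three boundary permutations must be checked by hand, and one must verify that the resulting events $E_{j,k}$ still combine to yield the clean factorisation $|A|\cdot|B|$; once that identity is in place, everything reduces to an application of \cite[Lemma 11.4]{chatterjee2021AoS} and the bound on $|M|$.
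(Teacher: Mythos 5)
Your global relabeling argument is a genuinely different route from the paper's proof, and its core identity is correct: exchangeability (with symmetric tie-breaking) does give $\E\bigl[f(\XX_{n,1},\XX_{n,2})\,\mathds{1}_{\{N(1)\neq N(2)\}}\bigr]=\E\bigl[f(\XX_1,\XX_2)\,|A|\,|B|\bigr]$ with $A=\{j\colon N(j)=1\}$ and $B=\{k\colon N(k)=2\}$ (your boundary permutations and the collapse of the four cases into $|A|\,|B|$ check out), and your handling of the restriction $\{N(1)\neq N(2)\}$ via $\PP(N(1)\neq N(2))^{-1}$ together with Lemma~\ref{lem:App.1} produces exactly the correction factor in the statement. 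The paper proceeds differently: for each pair $(k,l)$ with $N(k)\neq N(l)$ it first splits $f(\XX_{n,k},\XX_{n,l})$ according to whether $\XX_{n,k}=\XX_k$ and/or $\XX_{n,l}=\XX_l$, majorises the remaining pieces by sums over candidate indices $i$ with $\XX_i=\XX_{n,k}$ and $\XX_i\neq\XX_k$, bounds those via $K_{n,i}\le C(p)$, and only then averages over the random set $L$ of good pairs.

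The genuine gap is the step ``$|A|,|B|\le C(p)$ by \cite[Lemma 11.4]{chatterjee2021AoS}''. As the paper recalls just before Lemma~\ref{lem:App.1}, $C(p)$ bounds only the number of observations $\XX_l$ with $\XX_l\neq\XX_k$ for which $\XX_k$ can act as nearest neighbour. The lemma, however, must hold for arbitrary $\XX$ — discrete predictors are a central use case of the paper — and if several sample points share the value $\XX_1$, each of them has its nearest neighbour at distance zero and may, after the uniform tie-break, select index $1$; hence $|A|$ is not bounded by any constant depending only on $p$, and the pointwise bound $|A|\,|B|\le C(p)^2$ fails in exactly the situation that the paper's value-splitting device is designed to handle. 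Your argument is complete (and even yields the sharper leading factor $C(p)^2$) whenever the sample points are a.s.\ distinct, e.g.\ if one coordinate of $\XX$ has a continuous cdf; in general you must either insert the same case analysis on $\{\XX_{n,1}=\XX_1\}$ and $\{\XX_{n,2}=\XX_2\}$ before invoking $C(p)$, or replace the pointwise in-degree bound by a bound on conditional expected in-degrees given $\sigma(\XX_1,\dots,\XX_n)$ (co-located points contribute at most $1$ in conditional expectation, differently-valued ones at most $C(p)$), which is what produces the additive structure $(C(p)+1)^2=1+C(p)(C(p)+2)$ seen in the stated constant.
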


\begin{proof}
Since $f$ is nonnegative, for $k,l \in \{1,\dots,n\}$ with $k \neq l$ and $N(k) \neq N(l)$ we first obtain  
\begin{align*}
  & \E(f(\XX_{n,k}, \XX_{n,l})) 
  \\
  & \leq \E(f(\XX_k, \XX_l)) 
         + \E(f(\XX_{n,k}, \XX_{n,l}) \, \mathds{1}_{\{\XX_{n,k} \neq \XX_{k}\} \cap \{\XX_{n,l} \neq \XX_{l}\}})
  \\
  & \qquad 
         + \E(f(\XX_{n,k}, \XX_{l}) \, \mathds{1}_{\{\XX_{n,k} \neq \XX_{k}\}})
         + \E(f(\XX_{k}, \XX_{n,l}) \, \mathds{1}_{\{\XX_{n,l} \neq \XX_{l}\}})
  \\
  & \leq \E(f(\XX_k, \XX_l)) 
         + \sum_{i=1}^n \sum_{j \neq i} \E\left( f(\XX_{i}, \XX_{j}) \,
          \mathds{1}_{\{\XX_{i} = \XX_{n,k}, \XX_{i} \neq \XX_{k}\}} \mathds{1}_{\{\XX_{j} = \XX_{n,l}, \XX_{j} \neq \XX_{l}\}} \right) 
  \\
  & \qquad   
         + \sum_{i \neq l} \E\left( f(\XX_{i}, \XX_{l}) \,
         \mathds{1}_{\{\XX_{i} = \XX_{n,k}, \XX_{i} \neq \XX_{k}\}} \right) 
         + \sum_{j \neq k} \E\left( f(\XX_{k}, \XX_{j}) \,
           \mathds{1}_{\{\XX_{j} = \XX_{n,l}, \XX_{j} \neq \XX_{l}\}} \right) \,.
\end{align*}
According to \cite[Lemma 11.4]{chatterjee2021AoS}, the number $K_{n,m_1}$ of $m_2$ with $\XX_{m_1}$ being a nearest neighbor of $\XX_{m_2}$ (not necessarily the randomly chosen one) and $\XX_{m_1} \neq \XX_{m_2}$ is bounded from above by $C(p)$, i.e.~$K_{n,m_1} \leq C(p)$.
Further, denote by $L$ the set of pairs $(k,l)$, $k \neq l$, such that $N(k) \neq N(l)$.
Thus,
\begin{align*}
  & \E(f(\XX_{n,1}, \XX_{n,2})) 
    = \frac{1}{|L|} \sum_{k,l \in L} \E(f(\XX_{n,k}, \XX_{n,l}))
  \\
  & \leq \frac{1}{|L|} \sum_{k,l \in L} \E(f(\XX_k, \XX_l)) 
         + \frac{1}{|L|} \sum_{k,l \in L} \sum_{i=1}^n \sum_{j \neq i} \E\left( f(\XX_{i}, \XX_{j}) \,
          \mathds{1}_{\{\XX_{i} = \XX_{n,k}, \XX_{i} \neq \XX_{k}\}} \mathds{1}_{\{\XX_{j} = \XX_{n,l}, \XX_{j} \neq \XX_{l}\}} \right)
  \\
  & \qquad 
         + \frac{1}{|L|} \sum_{k,l \in L} \sum_{i \neq l} \E\left( f(\XX_{i}, \XX_{l}) \,
         \mathds{1}_{\{\XX_{i} = \XX_{n,k}, \XX_{i} \neq \XX_{k}\}} \right)
         + \frac{1}{|L|} \sum_{k,l \in L} \sum_{j \neq k} \E\left( f(\XX_{k}, \XX_{j}) \,
           \mathds{1}_{\{\XX_{j} = \XX_{n,l}, \XX_{j} \neq \XX_{l}\}} \right)
  \\
  & \leq \E(f(\XX_1, \XX_2)) 
         + \frac{1}{|L|} \sum_{i=1}^n \sum_{j \neq i} \E\left( f(\XX_{i}, \XX_{j}) \,
          \sum_{k,l \in L} \mathds{1}_{\{\XX_{i} = \XX_{n,k}, \XX_{i} \neq \XX_{k}\}} \mathds{1}_{\{\XX_{j} = \XX_{n,l}, \XX_{j} \neq \XX_{l}\}} \right)
  \\
  & \qquad 
         + \frac{1}{|L|} \sum_{l=1}^n \sum_{i \neq l} \E\left( f(\XX_{i}, \XX_{l}) \,
         \sum_{k=1}^n \mathds{1}_{\{\XX_{i} = \XX_{n,k}, \XX_{i} \neq \XX_{k}\}} \right) 
         \\
  & \qquad
         + \frac{1}{|L|} \sum_{k=1}^n \sum_{j \neq k} \E\left( f(\XX_{k}, \XX_{j}) \,
         \sum_{l=1}^n \mathds{1}_{\{\XX_{j} = \XX_{n,l}, \XX_{j} \neq \XX_{l}\}} \right)
  \\
  & \leq \E(f(\XX_1, \XX_2)) 
         + \frac{1}{|L|} \sum_{i=1}^n \sum_{j \neq i} \E\left( f(\XX_{i}, \XX_{j}) \,
          K_{n,i} \, K_{n,j} \right)
  \\
  & \qquad 
         + \frac{1}{|L|} \sum_{l=1}^n \sum_{i \neq l} \E\left( f(\XX_{i}, \XX_{l}) \,
         K_{n,i} \right)
         + \frac{1}{|L|} \sum_{k=1}^n \sum_{j \neq k} \E\left( f(\XX_{k}, \XX_{j}) \,
         K_{n,j} \right)
  \\
  & \leq \E(f(\XX_1, \XX_2)) 
         + \left( C(p)^2 + 2 \, C(p)\right)\, \frac{1}{|L|} \sum_{i=1}^n \sum_{j \neq i} \E\left( f(\XX_{i}, \XX_{j}) \right)
  \\
  & \leq \left( 1 + \left( C(p)^2 + 2 \, C(p)\right)\, \frac{n(n-1)}{|L|} \right) \E\left( f(\XX_{1}, \XX_{2}) \right)\,.
\end{align*}
Finally, since $|L| \geq n(n-1) - n \, C(p)(C(p)-1)$, due to Lemma \ref{lem:App.1}, this proves the assertion.
\end{proof}

\begin{lemma}[Paired version of Lemma 11.7 in \citep{chatterjee2021AoS}]
\label{lem:11.7_m}
Suppose that, for each $n \geq 2$, the nearest neighbor graph is such that $N(1) \neq N(2)$, that is, the nearest neighbors of $\XX_1$ and $\XX_2$ differ.
Then, for any measurable function $f: \mathbb{R}^p \times \mathbb{R}^p \to \mathbb{R}$,
$\lim_{n \to \infty} f(\XX_{n,1}, \XX_{n,2}) = f(\XX_1, \XX_2)$ in probability. 
\end{lemma}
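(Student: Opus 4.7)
\textbf{Proof proposal for Lemma \ref{lem:11.7_m}.}

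The plan is to combine the almost sure convergence $(\XX_{n,1},\XX_{n,2}) \to (\XX_1,\XX_2)$ from Lemma \ref{lem:11.3_m} with a Lusin-type approximation of $f$ by a continuous bounded function, and to transport the small exceptional set from the law of $(\XX_1,\XX_2)$ to the law of $(\XX_{n,1},\XX_{n,2})$ uniformly in $n$ via Lemma \ref{lem:11.5_m}. Denote by $K_p$ the multiplicative constant on the right-hand side of Lemma \ref{lem:11.5_m}; under the standing hypothesis $N(1) \ne N(2)$ this constant is bounded by a finite quantity depending only on $p$ for all $n$ large enough that $(n-1) - C(p)(C(p)-1) > 0$, since the $n$-dependent correction term then tends to zero.

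Fix $\epsilon > 0$ and $\eta > 0$. First I would reduce to the bounded case by choosing $M > 0$ with $\PP(|f(\XX_1,\XX_2)| > M) < \eta$ and setting $f_M := f \cdot \mathds{1}_{\{|f| \le M\}}$; applying Lemma \ref{lem:11.5_m} to the nonnegative function $\mathds{1}_{\{|f| > M\}}$ yields $\PP(|f(\XX_{n,1},\XX_{n,2})| > M) \le K_p \eta$ for all sufficiently large $n$. Next I would invoke Lusin's theorem for the Borel (hence Radon) probability measure $\PP^{(\XX_1,\XX_2)}$ on $\mathbb{R}^{2p}$, combined with Tietze's extension theorem, to produce a continuous function $g$ with $|g| \le M$ such that
\[
\PP^{(\XX_1,\XX_2)}(B) < \eta, \qquad B := \{(\xx,\xx') \in \mathbb{R}^{2p} : f_M(\xx,\xx') \ne g(\xx,\xx')\};
\]
a second application of Lemma \ref{lem:11.5_m} then gives $\PP((\XX_{n,1},\XX_{n,2}) \in B) \le K_p \eta$ uniformly in $n$. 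Finally, continuity of $g$ combined with Lemma \ref{lem:11.3_m} and the continuous mapping theorem yields $g(\XX_{n,1},\XX_{n,2}) \to g(\XX_1,\XX_2)$ almost surely, so that $\PP(|g(\XX_{n,1},\XX_{n,2}) - g(\XX_1,\XX_2)| > \epsilon) < \eta$ for all $n$ sufficiently large.

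To assemble the estimate, define the exceptional set
\[
A_n := \{|f(\XX_{n,1},\XX_{n,2})| > M\} \cup \{|f(\XX_1,\XX_2)| > M\} \cup \{(\XX_{n,1},\XX_{n,2}) \in B\} \cup \{(\XX_1,\XX_2) \in B\},
\]
whose probability is at most $(2 + 2K_p)\eta$ by the preceding steps. On $A_n^c$ one has $f_M = f$ and $f_M = g$ evaluated at both $(\XX_{n,1},\XX_{n,2})$ and $(\XX_1,\XX_2)$, hence $f(\XX_{n,1},\XX_{n,2}) = g(\XX_{n,1},\XX_{n,2})$ and $f(\XX_1,\XX_2) = g(\XX_1,\XX_2)$. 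Therefore
\[
\PP(|f(\XX_{n,1},\XX_{n,2}) - f(\XX_1,\XX_2)| > \epsilon) \le \PP(A_n) + \PP(|g(\XX_{n,1},\XX_{n,2}) - g(\XX_1,\XX_2)| > \epsilon) \le (3 + 2K_p)\eta
\]
for all $n$ sufficiently large. Since $\eta > 0$ was arbitrary, convergence in probability follows.

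The main obstacle I anticipate is mostly bookkeeping: verifying that the constant $K_p$ in Lemma \ref{lem:11.5_m} stays bounded uniformly in $n$ (which it does, since the $n$-dependent correction term vanishes), and that the hypothesis $N(1) \ne N(2)$ required by Lemma \ref{lem:11.5_m} coincides with the standing assumption of Lemma \ref{lem:11.7_m} so that the density comparison can be invoked without further modification. The remaining ingredient, Lusin's theorem on $\mathbb{R}^{2p}$ together with a Tietze-type extension preserving the sup norm bound $|g| \le M$, is standard.
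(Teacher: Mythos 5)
Your proposal is correct and follows essentially the same route as the paper: approximate $f$ by a continuous function whose exceptional set has small $\PP^{(\XX_1,\XX_2)}$-measure, transport that smallness to the law of $(\XX_{n,1},\XX_{n,2})$ uniformly in $n$ via Lemma \ref{lem:11.5_m}, and handle the continuous part with Lemma \ref{lem:11.3_m}. The only cosmetic differences are that the paper obtains the continuous approximant by citing Lemma 11.6 of \citep{chatterjee2021AoS} rather than invoking Lusin--Tietze directly, and your preliminary truncation to $f_M$ is harmless but unnecessary.
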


\begin{proof}
Fix $\varepsilon > 0$. 
Then there exists some compactly supported continuous function $g: \mathbb{R}^p \times \mathbb{R}^p \to \mathbb{R}$ (cf. \cite[Lemma 11.6]{chatterjee2021AoS}) such that 
\begin{align}\label{lem:11.7_m:Eq1}
  \PP^{(\XX_1,\XX_2)} (\{(\xx_1,\xx_2) \in \mathbb{R}^p \times \mathbb{R}^p \, : \, f(\xx_1,\xx_2) \neq g(\xx_1,\xx_2)\}) < \varepsilon\,.  
\end{align}
For any $\delta >0$, we then obtain 
\begin{align*} 
  & \PP(|f(\XX_1, \XX_2) - f(\XX_{n,1}, \XX_{n,2})| > \delta)
  \\
  & \leq \PP(|g(\XX_1, \XX_2) - g(\XX_{n,1}, \XX_{n,2})| > \delta)
  \\
  & \qquad + \underbrace{\PP(f(\XX_1, \XX_2) \neq g(\XX_1, \XX_2))}_{< \varepsilon; \, \eqref{lem:11.7_m:Eq1}}
           + \PP(f(\XX_{n,1}, \XX_{n,2}) \neq g(\XX_{n,1}, \XX_{n,2}))\,,
\end{align*}
with $\lim_{n \to \infty} \PP(|g(\XX_1, \XX_2) - g(\XX_{n,1}, \XX_{n,2})| > \delta) = 0$ due to continuity of $g$ and Lemma \ref{lem:11.3_m}.
Lemma \ref{lem:11.5_m} further gives 
\begin{align*}
  & \PP(f(\XX_{n,1}, \XX_{n,2}) \neq g(\XX_{n,1}, \XX_{n,2}))
  \\*
  & \leq \left(1 + C(p) (C(p)+2) \; \left( 1 + \frac{C(p) (C(p)-1)}{(n-1) - C(p) (C(p)-1)} \right) \right) \, \PP(f(\XX_{1}, \XX_{2}) \neq g(\XX_{1}, \XX_{2}))
  \\
  & \leq \left(1 + C(p) (C(p)+2) \; \left( 1 + \frac{C(p) (C(p)-1)}{(n-1) - C(p) (C(p)-1)} \right) \right) \, \varepsilon\,.
\end{align*}
Altogether, this yields 
\begin{align*} 
  \limsup_{n \to \infty} \PP(|f(\XX_1, \XX_2) - f(\XX_{n,1}, \XX_{n,2})| > \delta)
  \leq \left(2 + C(p) (C(p)+2)\right) \, \varepsilon\,.
\end{align*}
Since $\varepsilon$ and $\delta$ are arbitrary, this proves the assertion.
\end{proof}

In Lemma \ref{Estimator.Thm.Consistency.Split1} we have already established that $\alpha_n(\XX)$ is a consistent estimator for $\alpha(\XX)$. 
In order to prove consistency of $\gamma_n(Y|\XX)$ for $\alpha(\XX) \, \Lambda(Y|\XX)$ we will first (Lemma \ref{Estimator.Thm.Consistency.Split2.Lem1}) show that
\begin{equation*}
  \lim_{n \rightarrow \infty} \mathbb{E}(\gamma_n(Y|\XX)) 
  = \alpha(\XX) \, \Lambda(Y|\XX)\,,
\end{equation*}
and then (Lemma \ref{Estimator.Thm.Consistency.Split2.Lem2}) that there exist positive constants $C_1,C_2$ only depending on $p$ such that, for all $t \geq 0$ and all $n \in \mathbb{N}$,
\begin{align*}
  \mathbb{P}(|\gamma_n(Y|\XX) - \mathbb{E}(\gamma_n(Y|\XX))| \geq t) 
  & \leq  C_1 e^{- C_2 \, n \, t^2}\,.
\end{align*}
Combining the two results yields the claimed consistency in Theorem \ref{Estimator.Thm.Consistency.Split2}.

\begin{lemma} \label{Estimator.Thm.Consistency.Split2.Lem1}
It holds that 
$\lim_{n \rightarrow \infty} \mathbb{E}(\gamma_n(Y|\XX)) 
  = \alpha(\XX) \, \Lambda(Y|\XX)$\,.
\end{lemma}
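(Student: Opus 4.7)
The plan is to use exchangeability to reduce $\mathbb{E}(\gamma_n(Y|\XX))$ to a single expectation, then exploit the conditional independence of the $Y_i$'s given the $\XX$-sample together with the paired nearest-neighbor convergence of Lemmas \ref{lem:11.3_m} and \ref{lem:11.7_m}. First, by exchangeability of the i.i.d.\ sample, every summand in \eqref{Estimator.Def.Gamma} has the same expectation, so
\begin{align*}
\mathbb{E}(\gamma_n(Y|\XX)) = \mathbb{E}\bigl(\sgn(Y_1 - Y_2)\,\sgn(Y_{N(1)} - Y_{N(2)})\bigr).
\end{align*}
Introduce the bounded, antisymmetric measurable function
\begin{align*}
g(\xx_1, \xx_2) := \Psi\bigl(\PP^{Y|\XX=\xx_2}, \PP^{Y|\XX=\xx_1}\bigr) - \Psi\bigl(\PP^{Y|\XX=\xx_1}, \PP^{Y|\XX=\xx_2}\bigr),
\end{align*}
which vanishes on the diagonal and, by \eqref{Def:REM}, satisfies $\mathbb{E}(g(\XX_1, \XX_2)^2) = \alpha(\XX)\,\Lambda(Y|\XX)$. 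Since for $i\ne j$ the variables $Y_i$ and $Y_j$ are conditionally independent given $(\XX_1,\ldots,\XX_n)$ with $Y_i \sim \PP^{Y|\XX=\XX_i}$, one checks $\mathbb{E}(\sgn(Y_i - Y_j) \mid \XX_1, \ldots, \XX_n) = g(\XX_i, \XX_j)$.

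Next, introduce $A_n := \{N(1) \neq 2,\; N(2) \neq 1,\; N(1) \neq N(2)\}$, the event that the indices $1, 2, N(1), N(2)$ are pairwise distinct. On $A_n$, the pairs $(Y_1, Y_2)$ and $(Y_{N(1)}, Y_{N(2)})$ involve four disjoint indices and are hence conditionally independent given $\sigma(\XX_1, \ldots, \XX_n)$, so iterated conditioning yields
\begin{align*}
\mathbb{E}\bigl(\sgn(Y_1 - Y_2)\sgn(Y_{N(1)} - Y_{N(2)})\mathds{1}_{A_n}\bigr) = \mathbb{E}\bigl(g(\XX_1, \XX_2)\,g(\XX_{N(1)}, \XX_{N(2)})\,\mathds{1}_{A_n}\bigr).
\end{align*}
The complement has vanishing probability: exchangeability gives $\PP(N(1)=2) = \PP(N(2)=1) = 1/(n-1)$, and Lemma \ref{lem:App.1} combined with the identity $\mathbb{E}(|M|) = \binom{n}{2}\PP(N(1)=N(2))$ (exchangeability again) yields $\PP(N(1)=N(2)) \leq C(p)(C(p)-1)/(n-1)$. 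A union bound shows $\PP(A_n^c) \to 0$, and since the relevant integrand is bounded by $1$ in absolute value, the contribution of $A_n^c$ to $\mathbb{E}(\gamma_n(Y|\XX))$ is at most $\PP(A_n^c) \to 0$.

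To finish, apply Lemma \ref{lem:11.7_m} with $f = g$ (bounded, measurable) to obtain $g(\XX_{N(1)}, \XX_{N(2)}) \to g(\XX_1, \XX_2)$ in probability on $\{N(1) \neq N(2)\}$. Since $\mathds{1}_{A_n} \to 1$ in probability by the above and $|g| \leq 1$, the bounded convergence theorem gives
\begin{align*}
\mathbb{E}\bigl(g(\XX_1, \XX_2)\,g(\XX_{N(1)}, \XX_{N(2)})\,\mathds{1}_{A_n}\bigr) \longrightarrow \mathbb{E}\bigl(g(\XX_1, \XX_2)^2\bigr) = \alpha(\XX)\,\Lambda(Y|\XX),
\end{align*}
concluding the proof. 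The main obstacle is handling the potentially discontinuous $g$ in the last step: the almost-sure convergence $(\XX_{N(1)}, \XX_{N(2)}) \to (\XX_1, \XX_2)$ from Lemma \ref{lem:11.3_m} is not, on its own, enough to transfer through $g$. This is precisely what Lemma \ref{lem:11.7_m} supplies via approximation by continuous compactly-supported functions together with the tightness-type bound of Lemma \ref{lem:11.5_m}, and it is the critical ingredient that makes the argument work without any distributional assumption on $(\XX, Y)$.
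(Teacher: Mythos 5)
Your proposal is correct and follows essentially the same route as the paper: both factor the conditional expectation of $\sgn(Y_k-Y_l)\,\sgn(Y_{N(k)}-Y_{N(l)})$ given the $\XX$-sample into the product $g(\XX_k,\XX_l)\,g(\XX_{N(k)},\XX_{N(l)})$, dismiss the degenerate index configurations via Lemma \ref{lem:App.1}, and pass to the limit with the paired nearest-neighbor Lemma \ref{lem:11.7_m} plus a bounded-convergence argument. The only difference is cosmetic: you collapse the U-statistic to a single term by exchangeability and bound $\PP(A_n^c)$ probabilistically, whereas the paper keeps the sum, conditions on $\mathcal{F}$, and counts the exceptional pairs deterministically.
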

\begin{proof}
Clearly
\begin{align*}
  \mathbb{E}(\gamma_n(Y|\XX)) 
  & = {n \choose 2}^{-1} \sum_{k < l} \, \mathbb{E}(\eta_{k,l} \, \eta_{N(k),N(l)})
\end{align*}
where $\eta_{k,l} := \sgn(Y_k - Y_l) = \mathds{1}_{\{Y_k > Y_l\}} - \mathds{1}_{\{Y_k < Y_l\}}$.
Let $\mathcal{F}$ be the $\sigma$-algebra induced by $\{\XX_1,...,\XX_n\}$ and any random variables used to break ties in the nearest neighbour structure.
\\
By definition, we always have $k \neq l$. 
In full generality we will need to distinguish three cases:
\begin{enumerate}
	\item $N(k) = N(l)$
	\item $N(k) = l$ and/or $N(l) = k$
	\item $N(k) \neq l,\, N(l) \neq k, N(k) \neq N(l)$
\end{enumerate}
In all three cases $|\mathbb{E}(\eta_{k,l} \, \eta_{N(k), N(l)}| \mathcal{F})| \leq 4$. 
We will first show that the number of pairs $(k, l)$ for which one of the first two cases is true, grows with order $n$ and therefore converges to $0$ when divided by ${n \choose 2} = n(n-1)/2$ as $n$ goes to infinity.
\\
For the first case, we obtain from Lemma \ref{lem:App.1}
\begin{align*}
  |\{(k,l) \in \{1,...,n\}^2 \mid k < l, N(k) = N(l)\}| 
  &   =  |M|
    \leq n \, \frac{C(p)(C(p) - 1)}{2}\,.
\end{align*}
For the second case it is easy to see that $N(k) = l$ occurs exactly once for every $k$ and $N(l) = k$ occurs exactly once for every $l$. Therefore the second case can occur at most $2n$ times.
\\
For the third case, i.e.~ when all four indices are distinct, we first get
\begin{align*}
  \mathbb{E}(\eta_{k,l} | \mathcal{F}) 
  & = \E \left( \mathds{1}_{\{Y_k > Y_l\}} - \mathds{1}_{\{Y_k < Y_l\}} | \, \mathcal{F} \right) 
  \\
  & = \int_{\mathbb{R}^2} \mathds{1}_{\{y_k>y_l\}} - \mathds{1}_{\{y_k<y_l\}} \de \mathbb{P}^{Y_k|\XX_k} \otimes \mathbb{P}^{Y_l|\XX_l} (y_k,y_l)
\end{align*}
and analogously for $\eta_{N(k), N(l)}$, hence
\begin{align*}
  &\mathbb{E}(\eta_{k,l} \, \eta_{N(k),N(l)}|\mathcal{F}) 
  \\*
  & = \mathbb{E}(\eta_{k,l}|\mathcal{F}) \, \mathbb{E}(\eta_{N(k),N(l)}|\mathcal{F})
 \\
  & = \left(\int_{\mathbb{R}^2} \mathds{1}_{\{y_1>y_2\}} - \mathds{1}_{\{y_1<y_2\}} \de \mathbb{P}^{Y_1|\XX_1} \otimes \mathbb{P}^{Y_2|\XX_2} (y_1,y_2) \right) \,
  \\
  & \qquad \cdot \left(\int_{\mathbb{R}^2} \mathds{1}_{\{y_1>y_2\}} - \mathds{1}_{\{y_1<y_2\}} \de \mathbb{P}^{Y_{N(1)}|\XX_{N(1)}} \otimes \mathbb{P}^{Y_{N(2)}|\XX_{N(2)}} (y_1,y_2) \right)\,.
\end{align*}
Now, let $c_{1,n}, c_{2,n}, c_{3,n}$  be the number of times the cases 1,2 and 3 occur in a given sample of size $n$ and let $e_{1,n}, e_{2,n}, e_{3,n}$ be $\mathbb{E}(\eta_{k,l}, \eta_{N(k), N(l)} | \mathcal{F})$ in the respective cases. Note that $c_{3,n} = {n \choose 2} - c_{1,n} - c_{2,n}$.
Taking all the above considerations together gives
\begin{align*} 
  & \lim_{n \rightarrow \infty} \mathbb{E}(\gamma_n(Y|\XX) | \mathcal{F}) 
  \\
  & = \lim_{n \rightarrow \infty}  {n \choose 2}^{-1} \left( c_{1,n} e_{1,n} + c_{2,n} e_{2,n} + c_{3,n} e_{3,n} \right) = \lim_{n \rightarrow \infty} {n \choose 2}^{-1} \, c_{3,n} \, e_{3,n}
  \\
  & = \lim_{n \rightarrow \infty} {n \choose 2}^{-1} \, c_{3,n} \, \left(\int_{\mathbb{R}^2} \mathds{1}_{\{y_1>y_2\}} - \mathds{1}_{\{y_1<y_2\}} \de \mathbb{P}^{Y_1|\XX_1} \otimes \mathbb{P}^{Y_2|\XX_2} (y_1,y_2) \right)
  \\
  & \qquad \cdot \left(\int_{\mathbb{R}^2} \mathds{1}_{\{y_1>y_2\}} - \mathds{1}_{\{y_1<y_2\}} \de \mathbb{P}^{Y_{n,1}|\XX_{n,1}} \otimes \mathbb{P}^{Y_{n,2}|\XX_{n,2}} (y_1,y_2) \right)
  \\
  & = \left(\int_{\mathbb{R}^2} \mathds{1}_{\{y_1>y_2\}} - \mathds{1}_{\{y_1<y_2\}} \de \mathbb{P}^{Y_1|\XX_1} \otimes \mathbb{P}^{Y_2|\XX_2} (y_1,y_2) \right)^2 
\end{align*}
in probability, where we use Lemma \ref{lem:11.7_m} for the convergence.
Finally, using uniform integrability of $\mathbb{E}(\gamma_n(Y|\XX) | \mathcal{F})$ in combination with \cite[Theorem 3.5]{Billingsley-1999} yields
\begin{align*}
  & \lim_{n \rightarrow \infty} \mathbb{E}(\gamma_n(Y|\XX)) 
    = \lim_{n \rightarrow \infty}  \mathbb{E} \left( \mathbb{E}(\gamma_n(Y|\XX)| \mathcal{F})\right)
  \\
  & = \int_{\mathbb{R}^p \times \mathbb{R}^p} \left(\int_{\mathbb{R}^2} \mathds{1}_{\{y_1>y_2\}} - \mathds{1}_{\{y_1<y_2\}} \de \mathbb{P}^{Y_1|\XX_1=\xx_1} \otimes \mathbb{P}^{Y_2|\XX_2=\xx_2} (y_1,y_2) \right)^2 \de \PP^{(\XX_1,\XX_2)} (\xx_1,\xx_2)
  \\
  & = \alpha(\XX) \, \Lambda(Y|\XX)\,.
\end{align*}
This proves the assertion.
\end{proof}

\begin{lemma} \label{Estimator.Thm.Consistency.Split2.Lem2}
There are constants $C_1 > 0$ and $C_2 > 0$ depending only on dimension $p$ such that 
\begin{align*}
  \mathbb{P}(|\gamma_n(Y|\XX) - \mathbb{E}(\gamma_n(Y|\XX))| \geq t) 
  & \leq  C_1 e^{- C_2 \, n \, t^2}
\end{align*}
for all $t \geq 0$ and all $n \in \mathbb{N}$.
\end{lemma}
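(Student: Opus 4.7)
The plan is to apply McDiarmid's bounded-differences inequality \citep{mcdiarmid1989}. View $\gamma_n(Y|\XX)$ as a function of the $n$ i.i.d.~triples $W_i := (\XX_i, Y_i, \tau_i)$, where $\tau_i \sim U(0,1)$ are auxiliary variables governing the random tie-breaking in the nearest-neighbor construction. Once we show that replacing a single $W_i$ by an arbitrary $W_i'$ alters $\gamma_n$ by at most $K(p)/(n-1)$ for a constant $K(p)$ depending only on $p$, McDiarmid's inequality yields
\begin{equation*}
  \PP\bigl(|\gamma_n - \E(\gamma_n)|\geq t\bigr)
  \leq 2\exp\!\left(-\frac{2t^2}{n \,(K(p)/(n-1))^2}\right)
  \leq C_1 \exp(-C_2 \, n \, t^2)
\end{equation*}
for $n \geq 2$, which is the claimed bound.

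To establish the $O(1/n)$ bounded-differences constant, let $N$ and $N'$ denote the nearest-neighbor maps before and after the swap $W_i \leftrightarrow W_i'$. A summand $\sgn(Y_k-Y_l)\,\sgn(Y_{N(k)}-Y_{N(l)})$ in \eqref{Estimator.Def.Gamma} differs from its post-swap counterpart only if at least one of the six indices $k$, $l$, $N(k)$, $N(l)$, $N'(k)$, $N'(l)$ equals $i$. Indeed, if none does, then $Y_k, Y_l, Y_{N(k)}, Y_{N(l)}$ are unchanged and $N(k)=N'(k)$, $N(l)=N'(l)$: the latter holds because the distances $d(\XX_m,\XX_j)$ for $j \neq i$ are unaffected, so the nearest neighbor of $\XX_m$ can only change at indices $m$ whose optimal choice was $i$ or becomes $i$. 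The number of surviving pairs is bounded as follows: (a) at most $n-1$ pairs satisfy $k=i$ or $l=i$; (b) by \cite[Lemma 11.4]{chatterjee2021AoS}, in any finite configuration in $\R^p$ each point can serve as the nearest neighbor of at most $C(p)$ other points, so $|\{m: N(m)=i\}|\leq C(p)$ and $|\{m: N'(m)=i\}|\leq C(p)$; hence the pairs $(k,l)$ with $k$ or $l$ in the union of these two sets total at most $4 C(p)(n-1)$. Each surviving summand contributes at most $2$ in absolute value, so
\begin{equation*}
  |\gamma_n(W_1,\dots,W_i,\dots,W_n) - \gamma_n(W_1,\dots,W_i',\dots,W_n)|
  \leq \binom{n}{2}^{-1} \cdot 2 \cdot (4C(p)+1)(n-1) = \frac{K(p)}{n-1}.
\end{equation*}

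The main technical obstacle is the simultaneous control of the \emph{direct} effect of the perturbation (through $i$ appearing as one of the indices $k,l,N(k),N(l)$) and the \emph{indirect} effect caused by the alteration $N \mapsto N'$ of the nearest-neighbor graph itself. Without a geometric bound, the indirect effect could in principle ripple through a large fraction of the graph; here it is tamed by \cite[Lemma 11.4]{chatterjee2021AoS}, which caps the in-degree of the nearest-neighbor graph by the dimension-dependent constant $C(p)$, together with the observation that the in-neighborhood changes only at indices whose optimal neighbor was $\XX_i$ or becomes $\XX_i'$. The same reasoning controls the bounded-differences constant with respect to the tie-breaking variables $\tau_i$, since perturbing $\tau_i$ can alter nearest-neighbor assignments only at indices where $\XX_i$ is tied for closest, and Lemma 11.4 again limits such indices to $O(C(p))$.
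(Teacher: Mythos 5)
Your proof is correct and follows essentially the same route as the paper: McDiarmid's bounded-differences inequality applied to $\gamma_n(Y|\XX)$ viewed as a function of the triples $(\XX_i,Y_i,\text{tie-break variable})$, with the per-coordinate oscillation of order $1/n$ obtained by counting the at most $(1+4C(p))(n-1)$ summands that can change (each by at most $2$), using \cite[Lemma 11.4]{chatterjee2021AoS} to cap the in-degree of the nearest-neighbor graph exactly as the paper does. The only cosmetic slip is your final equality, which evaluates to $4(4C(p)+1)/n$ rather than $K(p)/(n-1)$; this does not affect the resulting bound $C_1 e^{-C_2\,n\,t^2}$.
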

\begin{proof}
Similar as in \citep{chatterjee2021AoS} we use McDiarmid's bounded difference inequality \citep{mcdiarmid1989}. 
Thus, we will need to bound the change in $\gamma_n(Y|\XX)$ as we replace a $(Y_m, \XX_m, U_m)$ with $(Y_m', \XX_m', U_m')$. 
Here the $U_i$'s are uniformly $[0,1]$-distributed variables used to break ties among the $\XX_i$'s to determine the nearest neighbour.
\\
Since $\eta_{k,l}\,\eta_{N(k),N(l)} = \sgn(Y_k - Y_l) \, \sgn(Y_{N(k)} - Y_{N(l)}) \in \{-1, 0, 1\}$ it is immediately obvious that any term in the double sum \eqref{Estimator.Def.Gamma} can change by at most $2$.
When determining the potential change in $\gamma_n(Y|\XX)$ as a whole we only need to count the number of terms $\eta_{k,l}\,\eta_{N(k),N(l)}$ which can change. 
$\eta_{k,l}\,\eta_{N(k),N(l)}$ can change when either $\sgn(Y_k - Y_l)$ changes or when $\sgn(Y_{N(k)} - Y_{N(l)})$ changes.
Clearly when we swap $(Y_m, \XX_m, U_m)$ with $(Y_m', \XX_m', U_m')$ then all terms $\sgn(Y_k - Y_l)$ for which either $k = m$ or $l = m$ can change. These are $(n - 1)$ terms.
Let again $S_m = \{k \in \{1,...,n\} | N(k) = m\}$ be the set of indices for which $m$ is the nearest neighbor.
The terms of the form $\sgn(Y_{N(k)} - Y_{N(l)})$ can change whenever $k \in S_m \cup S_{m'}$ or $l \in S_m \cup S_{m'}$.
Since every index occurs $n-1$ times in the double sum \eqref{Estimator.Def.Gamma}, these are at most $4 C(p) (n-1)$ cases.
In total therefore at most $(n - 1)(1 + 4 C(p))$ terms can change by at most $2$ and thus $\gamma_n(Y|\XX)$ changes by at most
\begin{equation*}
  \binom{n}{2}^{-1} \; 2 \, (n-1)(1 + 4C(p)) = \frac{4 \, (1 + 4C(p))}{n} =: \frac{C_1}{n}\,.
\end{equation*}
The bounded difference inequality finally gives
\begin{align*}
  \mathbb{P}(|\gamma_n(Y|\XX) - \mathbb{E}(\gamma_n(Y|\XX))| \geq t) 
  & \leq 2 \, e^{-\frac{2t^2}{\sum_{i=1}^{n} (C_1/n)^2}}\,.
\end{align*}
This proves the assertion.	
\end{proof}
\end{appendix}

\begin{thebibliography}{}

\bibitem[\protect\citeauthoryear{??}{nha}{2021}]{nhanes2021}
 (2021).
\newblock Centers for {D}isease {C}ontrol and {P}revention ({CDC}). {N}ational
  {C}enter for {H}ealth {S}tatistics ({NCHS}). {N}ational health and nutrition
  examination survey data.
\newblock Available at
  \url{https://wwwn.cdc.gov/nchs/nhanes/continuousnhanes/default.aspx?Cycle=2021-2023}.

\bibitem[\protect\citeauthoryear{Agathocleous, Meacham, Burgess, Piskounova,
  Zhao, Crane, Cowin, Bruner, Murphy, Chen, et~al.}{Agathocleous
  et~al.}{2017}]{agathocleous2017ascorbate}
Agathocleous, M., C.~E. Meacham, R.~J. Burgess, E.~Piskounova, Z.~Zhao, G.~M.
  Crane, B.~L. Cowin, E.~Bruner, M.~M. Murphy, W.~Chen, et~al. (2017).
\newblock Ascorbate regulates haematopoietic stem cell function and
  leukaemogenesis.
\newblock {\em Nature\/}~{\em 549\/}(7673), 476--481.

\bibitem[\protect\citeauthoryear{Ansari and Fuchs}{Ansari and
  Fuchs}{2023}]{ansari2023MFOCI}
Ansari, J. and S.~Fuchs (2023+).
\newblock A direct extension of {A}zadkia \& {C}hatterjee's rank correlation to
  a vector of endogenous variables.
\newblock {\em Available at \url{https://arxiv.org/abs/2212.01621}\/}.

\bibitem[\protect\citeauthoryear{Ansari and Fuchs}{Ansari and
  Fuchs}{2025}]{ansari2025Cont}
Ansari, J. and S.~Fuchs (2025+).
\newblock On continuity of {C}hatterjee's rank correlation and related
  dependence measures.
\newblock {\em Available at \url{https://arxiv.org/abs/2503.11390}\/}.

\bibitem[\protect\citeauthoryear{Ansari, Langthaler, Fuchs, and
  Trutschnig}{Ansari et~al.}{2025}]{ansari2023DepM}
Ansari, J., P.~Langthaler, S.~Fuchs, and W.~Trutschnig (2025).
\newblock Quantifying and estimating dependence via sensitivity of conditional
  distributions.
\newblock {\em Bernoulli, to appear\/}.

\bibitem[\protect\citeauthoryear{Ansari and R{\"u}schendorf}{Ansari and
  R{\"u}schendorf}{2021}]{Ansari-2021}
Ansari, J. and L.~R{\"u}schendorf (2021).
\newblock Sklar's theorem, copula products, and ordering results in factor
  models.
\newblock {\em Depend. Model.\/}~{\em 9}, 267--306.

\bibitem[\protect\citeauthoryear{Augusto and Bo{\v{c}}a}{Augusto and
  Bo{\v{c}}a}{2022}]{augusto2022tree}
Augusto, L. and A.~Bo{\v{c}}a (2022).
\newblock Tree functional traits, forest biomass, and tree species diversity
  interact with site properties to drive forest soil carbon.
\newblock {\em Nat. Commun.\/}~{\em 13\/}(1), 1097.

\bibitem[\protect\citeauthoryear{Azadkia and Chatterjee}{Azadkia and
  Chatterjee}{2021}]{chatterjee2021AoS}
Azadkia, M. and S.~Chatterjee (2021).
\newblock A simple measure of conditional dependence.
\newblock {\em Ann. Stat.\/}~{\em 49\/}(6), 3070--3102.

\bibitem[\protect\citeauthoryear{Billingsley}{Billingsley}{1999}]{Billingsley-1999}
Billingsley, P. (1999).
\newblock {\em Convergence of Probability Measures}.
\newblock John Wiley \& Sons, New York.

\bibitem[\protect\citeauthoryear{Birnbaum and Klose}{Birnbaum and
  Klose}{1957}]{birnbaum1957}
Birnbaum, Z.~W. and O.~M. Klose (1957).
\newblock Bounds for the variance of the {M}ann-{W}hitney statistic.
\newblock {\em Ann. Math. Stat.\/}~{\em 28\/}(4), 933--945.

\bibitem[\protect\citeauthoryear{Brunner, Bathke, and Konietschke}{Brunner
  et~al.}{2019}]{Brunner2019}
Brunner, E., A.~Bathke, and F.~Konietschke (2019).
\newblock {\em Rank and Pseudo-Rank Procedures for Independent Oberservations
  in Factorial Design}.
\newblock Springer, Cham.

\bibitem[\protect\citeauthoryear{Brunner, Konietschke, Bathke, and
  Pauly}{Brunner et~al.}{2021}]{brunner2021ranks}
Brunner, E., F.~Konietschke, A.~C. Bathke, and M.~Pauly (2021).
\newblock Ranks and pseudo-ranks—surprising results of certain rank tests in
  unbalanced designs.
\newblock {\em Int. Stat. Rev.\/}~{\em 89\/}(2), 349--366.

\bibitem[\protect\citeauthoryear{Burchett, Ellis, Harrar, and Bathke}{Burchett
  et~al.}{2017}]{npmv}
Burchett, W.~W., A.~R. Ellis, S.~W. Harrar, and A.~C. Bathke (2017).
\newblock Nonparametric inference for multivariate data: The {R} package
  {npmv}.
\newblock {\em J. Stat. Softw.\/}~{\em 76\/}(4), 1--18.

\bibitem[\protect\citeauthoryear{Bücher and Dette}{Bücher and
  Dette}{2024}]{buecher2024}
Bücher, A. and H.~Dette (2024+).
\newblock On the lack of weak continuity of {C}hatterjee's correlation
  coefficient.
\newblock {\em Available at \url{https://arxiv.org/abs/2410.11418}\/}.

\bibitem[\protect\citeauthoryear{Chatterjee}{Chatterjee}{2008}]{chatterjee2008}
Chatterjee, S. (2008).
\newblock A new method of normal approximation.
\newblock {\em Ann. Probab.\/}~{\em 36\/}(4), 1584–1610.

\bibitem[\protect\citeauthoryear{Chatterjee}{Chatterjee}{2020}]{chatterjee2020}
Chatterjee, S. (2020).
\newblock A new coefficient of correlation.
\newblock {\em J. Amer. Statist. Ass.\/}~{\em 116\/}(536), 2009--2022.

\bibitem[\protect\citeauthoryear{Christofides}{Christofides}{1992}]{christofides1992}
Christofides, T.~C. (1992).
\newblock A strong law of large numbers for {U}-statistics.
\newblock {\em J. Statist. Plann. Inference\/}~{\em 31\/}(2), 133--145.

\bibitem[\protect\citeauthoryear{Cover and Thomas}{Cover and
  Thomas}{2006}]{cover2006}
Cover, T.~M. and J.~A. Thomas (2006).
\newblock {\em Elements of Information Theory}.
\newblock John Wiley \& Sons, Hoboken.

\bibitem[\protect\citeauthoryear{De~Schuymer, De~Meyer, De~Baets, and
  Jenei}{De~Schuymer et~al.}{2003}]{deschuymer2003}
De~Schuymer, B., H.~De~Meyer, B.~De~Baets, and S.~Jenei (2003).
\newblock On the cycle-transitivity of the dice model.
\newblock {\em Theory Decis.\/}~{\em 54}, 261--285.

\bibitem[\protect\citeauthoryear{Deb, Ghosal, and Sen}{Deb
  et~al.}{2020}]{deb2020}
Deb, N., P.~Ghosal, and B.~Sen (2020+).
\newblock Measuring association on topological spaces using kernels and
  geometric graphs.
\newblock {\em Available at \url{https://arxiv.org/abs/2010.01768}\/}.

\bibitem[\protect\citeauthoryear{Durante and Sempi}{Durante and
  Sempi}{2016}]{durante2016}
Durante, F. and C.~Sempi (2016).
\newblock {\em Principles of Copula Theory}.
\newblock CRC Press, Boca Raton FL.

\bibitem[\protect\citeauthoryear{Filzmoser, Fritz, Kalcher, and
  Todorov}{Filzmoser et~al.}{2024}]{filzmoser_R}
Filzmoser, P., H.~Fritz, K.~Kalcher, and V.~Todorov (2024).
\newblock pcapp: Robust {PCA} by projection pursuit. {R} package version 2.0-5.

\bibitem[\protect\citeauthoryear{Friedman, Bentley, and Finkel}{Friedman
  et~al.}{1977}]{friedman}
Friedman, J., J.~Bentley, and R.~Finkel (1977).
\newblock An algorithm for finding best matches in logarithmic expected time.
\newblock {\em ACM Trans. Math. Software\/}~{\em 3}, 209--226.

\bibitem[\protect\citeauthoryear{Fuchs}{Fuchs}{2016}]{fuchs2016}
Fuchs, S. (2016).
\newblock A biconvex form for copulas.
\newblock {\em Depend. Model.\/}~{\em 4}, 63--75.

\bibitem[\protect\citeauthoryear{Fuchs}{Fuchs}{2024}]{fuchs2023JMVA}
Fuchs, S. (2024).
\newblock Quantifying directed dependence via dimension reduction.
\newblock {\em J. Multivariate Anal.\/}~{\em 201}, Article ID 105266.

\bibitem[\protect\citeauthoryear{Fuchs and Schmidt}{Fuchs and
  Schmidt}{2021}]{fuchs2021SPP}
Fuchs, S. and K.~Schmidt (2021).
\newblock On order statistics and {K}endall's tau.
\newblock {\em Statist. Probab. Lett.\/}~{\em 169}, Article ID 108972.

\bibitem[\protect\citeauthoryear{Huang, Deb, and Sen}{Huang
  et~al.}{2022}]{deb2022}
Huang, Z., N.~Deb, and B.~Sen (2022).
\newblock Kernel partial correlation coefficient — a measure of conditional
  dependence.
\newblock {\em J. Mach. Learn. Res.\/}~{\em 23\/}(216), 1--58.

\bibitem[\protect\citeauthoryear{Karger, Conrad, B{\"o}hner, Kawohl, Kreft,
  Soria-Auza, Zimmermann, Linder, and Kessler}{Karger
  et~al.}{2017}]{karger2017}
Karger, D.~N., O.~Conrad, J.~B{\"o}hner, T.~Kawohl, H.~Kreft, R.~W. Soria-Auza,
  N.~E. Zimmermann, H.~P. Linder, and M.~Kessler (2017).
\newblock Climatologies at high resolution for the earth’s land surface
  areas.
\newblock {\em Scientific data\/}~{\em 4\/}(1), 1--20.

\bibitem[\protect\citeauthoryear{Kedor, Freitag, Meyer-Arndt, Wittke, Hanitsch,
  Zoller, Steinbeis, Haffke, Rudolf, Heidecker, et~al.}{Kedor
  et~al.}{2022}]{kedor2022prospective}
Kedor, C., H.~Freitag, L.~Meyer-Arndt, K.~Wittke, L.~G. Hanitsch, T.~Zoller,
  F.~Steinbeis, M.~Haffke, G.~Rudolf, B.~Heidecker, et~al. (2022).
\newblock A prospective observational study of post-covid-19 chronic fatigue
  syndrome following the first pandemic wave in {G}ermany and biomarkers
  associated with symptom severity.
\newblock {\em Nat. Commun.\/}~{\em 13\/}(1), 5104.

\bibitem[\protect\citeauthoryear{Kinney and Atwal}{Kinney and
  Atwal}{2014}]{kinney2014}
Kinney, J. and G.~Atwal (2014).
\newblock Equitability, mutual information, and the maximal information
  coefficient.
\newblock {\em Proc. Natl. Acad. Sci. USA\/}~{\em 111}, 3354--3359.

\bibitem[\protect\citeauthoryear{Konietschke, Friedrich, Brunner, and
  Pauly}{Konietschke et~al.}{2022}]{rankFD}
Konietschke, F., S.~Friedrich, E.~Brunner, and M.~Pauly (2022).
\newblock {\em rankFD: Rank-Based Tests for General Factorial Designs. R
  package version 0.1.1}.

\bibitem[\protect\citeauthoryear{Limbach and Fuchs}{Limbach and
  Fuchs}{2024}]{limbach2024}
Limbach, C. and S.~Fuchs (2024).
\newblock Quantifying directed dependence with {K}endall’s tau.
\newblock In J.~Ansari et~al. (Eds.), {\em Combining, {M}odelling and
  {A}nalyzing {I}mprecision, {R}andomness and {D}ependence}, pp.\  249--255.
  Cham: Springer.

\bibitem[\protect\citeauthoryear{Lin and Han}{Lin and Han}{2022}]{han2022b}
Lin, Z. and F.~Han (2022+).
\newblock Limit theorems of {C}hatterjee's rank correlation.
\newblock {\em Available at \url{https://arxiv.org/abs/2204.08031v2}\/}.

\bibitem[\protect\citeauthoryear{Lucquin, Robson, Eley, Shoda, Veltcheva,
  Gibbs, Heron, Isaksson, Nishida, Taniguchi, et~al.}{Lucquin
  et~al.}{2018}]{lucquin2018impact}
Lucquin, A., H.~K. Robson, Y.~Eley, S.~Shoda, D.~Veltcheva, K.~Gibbs, C.~P.
  Heron, S.~Isaksson, Y.~Nishida, Y.~Taniguchi, et~al. (2018).
\newblock The impact of environmental change on the use of early pottery by
  east asian hunter-gatherers.
\newblock {\em Proc. Natl. Acad. Sci. USA\/}~{\em 115\/}(31), 7931--7936.

\bibitem[\protect\citeauthoryear{Mann and Whitney}{Mann and
  Whitney}{1947}]{mann1947}
Mann, H.~B. and D.~R. Whitney (1947).
\newblock On a test of whether one of two random variables is stochastically
  larger than the other.
\newblock {\em Ann. Math. Stat.\/}, 50--60.

\bibitem[\protect\citeauthoryear{McDiarmid}{McDiarmid}{1989}]{mcdiarmid1989}
McDiarmid, C. (1989).
\newblock On the method of bounded differences.
\newblock In J.~Siemons (Ed.), {\em Surveys in Combinatorics}, pp.\  144--188.
  Cambridge University Press.

\bibitem[\protect\citeauthoryear{McGraw and Wong}{McGraw and
  Wong}{1992}]{mcgraw1992}
McGraw, K.~O. and S.~P. Wong (1992).
\newblock A common language effect size statistic.
\newblock {\em Psychol. Bull.\/}~{\em 111\/}(2), 361.

\bibitem[\protect\citeauthoryear{Mikusinski, Sherwood, and Taylor}{Mikusinski
  et~al.}{1992}]{mikusinski1992}
Mikusinski, P., H.~Sherwood, and M.~D. Taylor (1992).
\newblock Shuffles of min.
\newblock {\em Stochastica\/}~{\em 13}, 34--74.

\bibitem[\protect\citeauthoryear{Navarro, Buono, and Arevalillo}{Navarro
  et~al.}{2023}]{navarro2023}
Navarro, J., F.~Buono, and J.~M. Arevalillo (2023+).
\newblock Transport dependency: Optimal transport based dependency measures.
\newblock {\em Available at \url{https://arxiv.org/abs/2105.02073}\/}.

\bibitem[\protect\citeauthoryear{Noguchi, Gel, Brunner, and
  Konietschke}{Noguchi et~al.}{2012}]{nparLD}
Noguchi, K., Y.~R. Gel, E.~Brunner, and F.~Konietschke (2012).
\newblock {nparLD}: An {R} software package for the nonparametric analysis of
  longitudinal data in factorial experiments.
\newblock {\em J. Stat. Softw.\/}~{\em 50\/}(12), 1--23.

\bibitem[\protect\citeauthoryear{Page-Karjian, Chabot, Stacy, Morgan, Valverde,
  Stewart, Coppenrath, Manire, Herbst, Gregory, et~al.}{Page-Karjian
  et~al.}{2020}]{page2020comprehensive}
Page-Karjian, A., R.~Chabot, N.~I. Stacy, A.~S. Morgan, R.~A. Valverde,
  S.~Stewart, C.~M. Coppenrath, C.~A. Manire, L.~H. Herbst, C.~R. Gregory,
  et~al. (2020).
\newblock Comprehensive health assessment of green turtles chelonia mydas
  nesting in southeastern florida, {USA}.
\newblock {\em Endanger. Species Res.\/}~{\em 42}, 21--35.

\bibitem[\protect\citeauthoryear{Reshef, Reshef, Finucane, Grossman, McVean,
  Turnbaugh, Lander, Mitzenmacher, and Sabeti}{Reshef
  et~al.}{2011}]{reshef2011}
Reshef, D.~N., Y.~A. Reshef, H.~K. Finucane, S.~R. Grossman, G.~McVean, P.~J.
  Turnbaugh, E.~S. Lander, M.~Mitzenmacher, and P.~C. Sabeti (2011).
\newblock Detecting novel associations in large data sets.
\newblock {\em Science\/}~{\em 334\/}(6062), 1518--1524.

\bibitem[\protect\citeauthoryear{Richetin, Costantini, Perugini, and
  Sch{\"o}nbrodt}{Richetin et~al.}{2015}]{richetin2015should}
Richetin, J., G.~Costantini, M.~Perugini, and F.~Sch{\"o}nbrodt (2015).
\newblock Should we stop looking for a better scoring algorithm for handling
  implicit association test data? {T}est of the role of errors, extreme
  latencies treatment, scoring formula, and practice trials on reliability and
  validity.
\newblock {\em PloS ONE\/}~{\em 10\/}(6), e0129601.

\bibitem[\protect\citeauthoryear{Schimke, Marques, Baiocchi, de~Souza~Prado,
  Fonseca, Freire, Rodrigues~Placa, Salerno~Filgueiras, Coelho~Salgado,
  Jansen-Marques, et~al.}{Schimke et~al.}{2022}]{schimke2022severe}
Schimke, L.~F., A.~H. Marques, G.~C. Baiocchi, C.~A. de~Souza~Prado, D.~L.~M.
  Fonseca, P.~P. Freire, D.~Rodrigues~Placa, I.~Salerno~Filgueiras,
  R.~Coelho~Salgado, G.~Jansen-Marques, et~al. (2022).
\newblock Severe covid-19 shares a common neutrophil activation signature with
  other acute inflammatory states.
\newblock {\em Cells\/}~{\em 11\/}(5), 847.

\bibitem[\protect\citeauthoryear{Schiroli, Conti, Ferrari, Della~Volpe, Jacob,
  Albano, Beretta, Calabria, Vavassori, Gasparini, et~al.}{Schiroli
  et~al.}{2019}]{schiroli2019precise}
Schiroli, G., A.~Conti, S.~Ferrari, L.~Della~Volpe, A.~Jacob, L.~Albano,
  S.~Beretta, A.~Calabria, V.~Vavassori, P.~Gasparini, et~al. (2019).
\newblock Precise gene editing preserves hematopoietic stem cell function
  following transient p53-mediated dna damage response.
\newblock {\em Cell Stem Cell\/}~{\em 24\/}(4), 551--565.

\bibitem[\protect\citeauthoryear{Seidel and St{\"u}rmer}{Seidel and
  St{\"u}rmer}{2014}]{seidel2014modeling}
Seidel, T. and K.~St{\"u}rmer (2014).
\newblock Modeling and measuring the structure of professional vision in
  preservice teachers.
\newblock {\em Amer. Educ. Res. J.\/}~{\em 51\/}(4), 739--771.

\bibitem[\protect\citeauthoryear{Shi, Drton, and Han}{Shi
  et~al.}{2024}]{shi2021b}
Shi, H., M.~Drton, and F.~Han (2024).
\newblock On {A}zadkia-{C}hatterjee's conditional dependence coefficient.
\newblock {\em Bernoulli\/}~{\em 30\/}(2), 851--877.

\bibitem[\protect\citeauthoryear{Shih and Chen}{Shih and Chen}{2024}]{shih2024}
Shih, J.-H. and Y.-H. Chen (2024).
\newblock A class of regression association measures based on concordance.
\newblock {\em Amer. Statist., to appear\/}.

\bibitem[\protect\citeauthoryear{Shih and Emura}{Shih and
  Emura}{2021}]{emura2021}
Shih, J.-H. and T.~Emura (2021).
\newblock On the copula correlation ratio and its generalization.
\newblock {\em J. Multivariate Anal.\/}~{\em 182}, Article ID 104708.

\bibitem[\protect\citeauthoryear{Strothmann, Dette, and Siburg}{Strothmann
  et~al.}{2024}]{strothmann2022}
Strothmann, C., H.~Dette, and K.~Siburg (2024).
\newblock Rearranged dependence measures.
\newblock {\em Bernoulli\/}~{\em 30\/}(2), 1055--1078.

\bibitem[\protect\citeauthoryear{Sweeting}{Sweeting}{1989}]{Sweeting-1989}
Sweeting, T.~J. (1989).
\newblock On conditional weak convergence.
\newblock {\em J. Theor. Probab.\/}~{\em 2}, 461--474.

\bibitem[\protect\citeauthoryear{Tasdogan, Faubert, Ramesh, Ubellacker, Shen,
  Solmonson, Murphy, Gu, Gu, Martin, et~al.}{Tasdogan
  et~al.}{2020}]{tasdogan2020metabolic}
Tasdogan, A., B.~Faubert, V.~Ramesh, J.~M. Ubellacker, B.~Shen, A.~Solmonson,
  M.~M. Murphy, Z.~Gu, W.~Gu, M.~Martin, et~al. (2020).
\newblock Metabolic heterogeneity confers differences in melanoma metastatic
  potential.
\newblock {\em Nature\/}~{\em 577\/}(7788), 115--120.

\bibitem[\protect\citeauthoryear{van~der Vaart}{van~der
  Vaart}{1998}]{vanderVaart-1998}
van~der Vaart, A.~W. (1998).
\newblock {\em Asymptotic Statistics}.
\newblock Cambridge Univ. Press.

\bibitem[\protect\citeauthoryear{Wilcoxon}{Wilcoxon}{1945}]{wilcoxon1945}
Wilcoxon, F. (1945).
\newblock Individual comparisons by ranking methods.
\newblock {\em Biometrics Bulletin\/}~{\em 1\/}(6), 80--83.

\bibitem[\protect\citeauthoryear{Zhang, Gao, Cole, and Ricci}{Zhang
  et~al.}{2021}]{zhang2021spread}
Zhang, Y., J.~Gao, S.~Cole, and P.~Ricci (2021).
\newblock How the spread of user-generated contents (ugc) shapes international
  tourism distribution: Using agent-based modeling to inform strategic ugc
  marketing.
\newblock {\em J. Travel Res\/}~{\em 60\/}(7), 1469--1491.

\bibitem[\protect\citeauthoryear{Zimmermann, Brunner, Brannath, Happ, and
  Bathke}{Zimmermann et~al.}{2022}]{zimmermann2022pseudo}
Zimmermann, G., E.~Brunner, W.~Brannath, M.~Happ, and A.~C. Bathke (2022).
\newblock Pseudo-ranks: the better way of ranking?
\newblock {\em Amer. Statist.\/}~{\em 76\/}(2), 124--130.

\end{thebibliography}
\end{document}